\documentclass[11pt, oneside]{article}
\usepackage{etoolbox}

\usepackage{graphicx}	
\usepackage{amssymb}
\usepackage{amsfonts,latexsym,amsthm,amssymb,amsmath,amscd,euscript}
\usepackage{bbm}
\usepackage{framed}
\usepackage{fullpage}
\usepackage{mathrsfs}
\usepackage{physics}
\usepackage[hypertexnames=false]{hyperref}
\usepackage{enumitem}

\hypersetup{
  colorlinks=true,
  linkcolor=blue,
  filecolor=blue,
  citecolor = black,      
  urlcolor=cyan,
}
\usepackage{accents}
\usepackage{setspace}
\hypersetup{colorlinks=true,citecolor=blue,urlcolor=black,linkbordercolor={1 0 0}}
\usepackage{tikz-cd}
\newcount\Comments  %
\ifnum\Comments=1
\usepackage[inline]{showlabels}

\fi
\usepackage{turnstile}
\usepackage{mathpartir}
\usepackage{tikz}
\usepackage{xspace}

\usepackage{algorithm}
\usepackage{verbatim}
\usepackage[noend]{algpseudocode}

\newcommand{\nc}{\newcommand}

\usepackage[nameinlink,capitalize]{cleveref}
\crefformat{equation}{#2(#1)#3}
\Crefformat{equation}{#2(#1)#3}

\Crefformat{figure}{#2Figure #1#3}
\Crefname{assumption}{Assumption}{Assumptions}
\Crefformat{assumption}{#2Assumption #1#3}
   \Crefname{question}{Question}{Questions}
   \Crefformat{question}{#2Question #1#3}
   \Crefname{claim}{Claim}{Claims}
   \Crefformat{claim}{#2Claim #1#3}
   \Crefname{problem}{Problem}{Problems}
  \Crefformat{problem}{#2Problem #1#3}
\Crefname{subsubsection}{Section}{Sections}
\crefformat{subsubsection}{#2Section #1#3}
\Crefformat{subsubsection}{#2Section #1#3}

\nc{\sups}[1]{^{\scriptscriptstyle{#1}}}
\nc{\subs}[1]{_{\scriptscriptstyle{#1}}}

\newcommand{\wb}{\widebar}

\nc{\Critic}{\texttt{Critic}\xspace}
\nc{\PSDPUCB}{\texttt{PSDP-UCB}\xspace}
\nc{\LSVIUCB}{\texttt{LSVI-UCB}\xspace}
\nc{\Actor}{\texttt{Actor}\xspace}
\nc{\EstFeature}{\texttt{EstFeature}\xspace}
\nc{\ExpFTPL}{\texttt{ExpFTPL}\xspace}
\nc{\dist}{\mathrm{dist}}
\nc{\Bquad}{B^{\mathsf{quad}}}

\newcommand{\KLparam}{\mathsf{K}}

\makeatletter
\newtheorem*{rep@theorem}{\rep@title}
\newcommand{\newreptheorem}[2]{%
\newenvironment{rep#1}[1]{%
 \def\rep@title{#2 \ref{##1}}%
 \begin{rep@theorem}}%
 {\end{rep@theorem}}}
\makeatother

\makeatletter
\newcommand\xlabel[2][]{\phantomsection\def\@currentlabelname{#1}\label{#2}}
\makeatother

\theoremstyle{plain}
\newtheorem{theorem}{Theorem}
\newtheorem{lemma}[theorem]{Lemma}

\newtheorem{proposition}[theorem]{Proposition}

\newtheorem{claim}[theorem]{Claim}
\newtheorem{assumption}[theorem]{Assumption}
\theoremstyle{definition}
\newtheorem{definition}{Definition}

\newtheorem{question}[definition]{Question}
\newtheorem{problem}[definition]{Problem}

\numberwithin{theorem}{section}
\numberwithin{definition}{section}

\AddToHook{env/theorem/begin}{\crefalias{theorem}{theorem}}
\AddToHook{env/lemma/begin}{\crefalias{theorem}{lemma}}
\AddToHook{env/corollary/begin}{\crefalias{theorem}{corollary}}
\AddToHook{env/conjecture/begin}{\crefalias{theorem}{conjecture}}
\AddToHook{env/proposition/begin}{\crefalias{theorem}{proposition}}
\AddToHook{env/fact/begin}{\crefalias{theorem}{fact}}
\AddToHook{env/postulate/begin}{\crefalias{theorem}{postulate}}
\AddToHook{env/claim/begin}{\crefalias{theorem}{claim}}
\AddToHook{env/assumption/begin}{\crefalias{theorem}{assumption}}
\AddToHook{env/definition/begin}{\crefalias{definition}{definition}}
\AddToHook{env/defn/begin}{\crefalias{definition}{defn}}
\AddToHook{env/example/begin}{\crefalias{definition}{example}}
\AddToHook{env/remark/begin}{\crefalias{definition}{remark}}
\AddToHook{env/question/begin}{\crefalias{definition}{question}}
\AddToHook{env/problem/begin}{\crefalias{definition}{problem}}

\nc{\DMO}{\DeclareMathOperator}

\DMO{\prox}{prox}
\DMO{\UCB}{UCB}
\DMO{\LCB}{LCB}
\nc{\phidiff}{\phi\sups{\Delta}}
\nc{\pexp}{q_{\mathrm{exp}}}
\nc{\nn}{\nonumber}
\nc{\rk}{\mathrm{rk}}
\nc{\brk}[3]{{\rm br}_{#1}^{#2}({#3})}
\nc{\co}{{\rm co}}
\nc{\br}[2]{{\rm br}^{#1}({#2})}
\nc{\tA}{\textsc{A}}
\nc{\child}[2]{{\rm ch}_{#1}({#2})}
\nc{\parent}{\mathsf{pa}}
\nc{\dg}{\dagger}
\nc{\bB}{\mathbf{B}}
\nc{\Span}{\mathsf{span}}
\nc{\unif}{\mathsf{unif}}
\nc{\indsig}[2]{\mathcal{I}_{#1}({#2})}
\nc{\early}{{\rm pre}}
\nc{\zsink}{z_{\rm sink}}
\nc{\lowv}{{\rm low}}
\nc{\ol}{\overline}
\nc{\ul}{\underline}
\nc{\madec}[3]{\texttt{ma-dec}_{#1}({#2}, {#3})}
\nc{\madeco}[1]{\texttt{ma-dec}_{#1}}
\nc{\madecd}[3]{\texttt{ma-dec}^{\texttt{d}}_{#1}({#2}, {#3})}
\nc{\SF}{\mathscr{F}}
\nc{\SH}{\mathscr{H}}
\nc{\SP}{\mathscr{P}}
\nc{\SPc}{\wb{\mathscr{P}}}
\nc{\SB}{\mathscr{B}}
\nc{\SC}{\mathscr{C}}
\nc{\BS}{\mathbb{S}}
\nc{\PiMarkov}{\Pi^{\rm markov}}
\nc{\trunc}[2]{\mathsf{trunc}_{#2}({#1})}
\nc{\sbl}{of strong Bellman type\xspace}
\nc{\inormal}[1][\Phi, u,v]{\til{N}_{{#1}}}

\nc{\gamvec}{\gamma}
\nc{\til}{\widetilde}
\nc{\td}{\tilde}
\nc{\wh}{\widehat}
\nc{\old}[1]{\ifnum\Comments=1 {\color{brown}  [OLD: #1]}\fi}
\nc{\noah}[1]{\ifnum\Comments=1 {\color{purple} [ng: #1]}\fi}
\nc{\dhruv}[1]{\ifnum\Comments=1 {\color{red} [dr: #1]}\fi}
\nc{\mir}[1]{\ifnum\Comments=1 {\color{teal} [mc: #1]}\fi}
\nc{\sam}[1]{\ifnum\Comments=1 {\color{red} [sg: #1]}\fi}
\nc{\BP}{\mathbb{P}}
\nc{\BI}{\mathbb{I}}
\nc{\midpoint}[1][\Phi,\phi_1,\phi_2]{\mu^{\star}_{{#1}}}

\nc{\fools}[3]{\MF_{#3}({#1}, {#2})}
\nc{\fool}[2]{\MF({#1},{#2})}
\nc{\clip}[2]{{\rm clip}\left[ \left. {#1} \right| {#2} \right]}
\nc{\imax}{\omega}
\DMO{\conv}{conv}
\nc{\MH}{\mathcal{H}}
\nc{\MV}{\mathcal{V}}
\nc{\MC}{\mathcal{C}}
\nc{\MI}{\mathcal{I}}
\nc{\st}{\star}
\nc{\lng}{\langle}
\nc{\rng}{\rangle}
\DMO{\OOPT}{opt}
\nc{\dopt}[2]{\ell_{\OOPT}({#1},{#2})}
\nc{\MG}{\mathcal{G}}
\nc{\MP}{\mathcal{P}}
\nc{\PP}{\mathbb{P}}
\nc{\TT}{\mathbb{T}}
\nc{\TTmax}{\TT_{\max}}
\DMO{\REG}{Reg}
\DMO{\WREG}{wReg}
\nc{\reg}[2]{{\Delta}_{{#1}}({#2})}
\nc{\wreg}[2]{{\Delta}^{\rm w}_{{#1}}({#2})}
\nc{\Reg}[2]{{\REG}_{{#1}}({#2})}
\nc{\wReg}[2]{{\WREG}_{{#1}}({#2})}
\DMO{\Gap}{Gap}
\DMO{\GD}{GD}
\DMO{\GDA}{GDA}
\DMO{\EG}{EG}
\nc{\TE}{\til{\E}}
\nc{\Var}{\mathbf{Var}}
\DMO{\Cov}{Cov}
\DMO{\OGDA}{OGDA}
\DMO{\Unif}{Unif}
\nc{\Qu}{\ul{Q}}
\nc{\Qo}{\ol{Q}}
\nc{\Ro}{\ol{R}}
\nc{\Vu}{\ul{V}}
\nc{\Vo}{\ol{V}}
\nc{\RanQ}{\Delta Q}
\nc{\RanV}{\Delta V}
\nc{\clipQ}{\Delta \breve{Q}}
\nc{\frzQ}{\Delta \mathring{Q}}
\nc{\clipV}{\Delta \breve{V}}
\nc{\clipdelta}{\breve{\delta}}
\nc{\cliptheta}{\breve{\theta}}
\nc{\delmin}{\Delta_{{\rm min}}}
\nc{\delmins}[1]{\Delta_{{\rm min},{#1}}}
\nc{\gapfinal}[1]{\max \left\{ \frac{\frzQ_{{#1}}^{k^\st}(x,a)}{2H}, \frac{\delmin}{4H} \right\}}
\nc{\post}[2]{R({#1}; {#2})}
\nc{\posts}[3]{R_{#3}({#1}; {#2})}

\nc{\algnst}[1]{\begin{align*}#1\end{align*}}
\nc{\algn}[1]{\begin{align}#1\end{align}}
\nc{\matx}[1]{\left(\begin{matrix}#1\end{matrix}\right)}
\renewcommand{\^}[1]{^{(#1)}}

\nc{\nuu}{\nu}

\nc{\bel}[1]{\mathbf{b}({#1})}
\nc{\nbel}[1]{\bar{\mathbf{b}}({#1})}
\nc{\sbel}[2]{\mathbf{b}'_{#1}({#2})}
\nc{\nsbel}[2]{\bar{\mathbf{b}}'_{#1}({#2})}

\nc{\bv}{\mathbf{v}}
\nc{\bone}{\mathbf{1}}
\nc{\bX}{\mathbf{X}}
\nc{\bY}{\mathbf{Y}}
\nc{\bG}{\mathbf{G}}
\nc{\bz}{\mathbf{z}}
\nc{\bw}{\mathbf{w}}
\nc{\bA}{\mathbf{A}}
\nc{\bJ}{\mathbf{J}}
\nc{\bK}{\mathbf{K}}
\nc{\bb}{\mathbf{b}}
\nc{\ba}{\mathbf{a}}
\nc{\bc}{\mathbf{c}}
\nc{\bC}{\mathbf{C}}
\nc{\BR}{\mathbb R}
\nc{\BA}{\mathbb{A}}
\nc{\BC}{\mathbb C}
\nc{\bx}{\mathbf{x}}
\nc{\bS}{\mathbf{S}}
\nc{\bM}{\mathbf{M}}
\nc{\bR}{\mathbf{R}}
\nc{\bN}{\mathbf{N}}
\nc{\NN}{\mathbb{N}}
\nc{\by}{\mathbf{y}}
\nc{\sy}{y}
\nc{\sx}{x}

\nc{\MO}{\mathcal O}
\nc{\MU}{\mathcal{U}}
\nc{\ME}{\mathcal{E}}
\nc{\MN}{\mathcal{N}}
\nc{\MK}{\mathcal{K}}
\nc{\MM}{\mathcal{M}}
\nc{\MS}{\mathcal{S}}
\nc{\MT}{\mathcal{T}}
\nc{\BF}{\mathbb F}
\nc{\BQ}{\mathbb Q}
\nc{\MX}{\mathcal{X}}
\nc{\MA}{\mathcal{A}}
\nc{\MD}{\mathcal{D}}
\nc{\MB}{\mathcal{B}}
\nc{\MZ}{\mathcal{Z}}
\nc{\MJ}{\mathcal{J}}
\nc{\MW}{\mathcal{W}}
\nc{\MF}{\mathcal{F}}
\nc{\CF}{\mathcal{F}}
\nc{\MR}{\mathcal{R}}
\nc{\MY}{\mathcal{Y}}
\nc{\BZ}{\mathbb Z}
\nc{\BN}{\mathbb N}
\nc{\ep}{\epsilon}
\nc{\epbe}{\varepsilon_{\mathsf{BE}}}
\nc{\epout}{\varepsilon_{\mathsf{outlier}}}
\nc{\bellc}[1][h]{\MT_{#1}^\circ}
\nc{\vep}{\varepsilon}
\nc{\gapfn}[1]{\varepsilon_{#1}}
\nc{\ggapfn}[2]{\varphi_{#1}({#2})}
\nc{\epsahk}{\gapfn{0}}
\nc{\BH}{\mathbb H}
\nc{\BG}{\mathbb{G}}
\nc{\D}{\Delta}
\nc{\One}[1]{\mathbbm{1}\left\{{#1}\right\}}
\nc{\bOne}{\mathbf{1}}
\nc{\Aopt}{\mathcal{A}^{\rm opt}}
\nc{\Amul}{\mathcal{A}^{\rm mul}}

\nc{\SQ}{\mathsf Q}

\nc{\DO}{\accentset{\circ}{\D}}
\nc{\mf}{\mathfrak}
\nc{\mfp}{\mathfrak{p}}
\nc{\mfq}{\mf{q}}
\nc{\mfx}{\mf{s}}
\nc{\Sp}{\mbox{Spec}}
\nc{\Spm}{\mbox{Specm}}
\nc{\hookuparrow}{\mathrel{\rotatebox[origin=c]{90}{$\hookrightarrow$}}}
\nc{\hookdownarrow}{\mathrel{\rotatebox[origin=c]{-90}{$\hookrightarrow$}}}
\nc{\hra}{\hookrightarrow}
\nc{\tra}{\twoheadrightarrow}
\nc{\sgn}{{\rm sgn}}
\nc{\muideal}{\mu_{\mathsf{ideal}}}
\nc{\aut}{{\rm Aut}}
\nc{\Hom}{{\rm Hom}}
\nc{\img}{{\rm Im}}
\DMO{\id}{Id}
\DMO{\supp}{supp}
\DMO{\KL}{KL}
\nc{\kld}[2]{D_{\mathsf{KL}}({#1}||{#2})}
\nc{\ren}[2]{D_2({#1}||{#2})}
\nc{\chisq}[2]{\chi^2({#1}||{#2})}
\nc{\tvd}[2]{D_{\mathsf{TV}}({#1}, {#2})}
\nc{\hell}[2]{d_{\mathsf{H}}^2({#1}, {#2})}
\nc{\dbi}[3][\pi]{D_{\mathsf{bi}}^{#1}({#2} \| {#3})}
\DMO{\BSS}{BSS}
\DMO{\BES}{BES}
\DMO{\BGS}{BGS}
\DMO{\poly}{poly}
\nc{\indep}{\perp}
\DMO{\sink}{sink}
\nc{\fp}[1]{\MP_1({#1})}
\nc{\BO}{\mathbb{O}}
\nc{\BT}{\mathbb{T}}

\nc{\RR}{\mathbb{R}}
\nc{\Gradient}{\nabla}
\DMO{\diag}{diag}
\nc{\EE}{\mathbb{E}}
\nc{\MQ}{\mathcal{Q}}
\nc{\ML}{\mathcal{L}}
\nc{\cPhi}{\bar \Phi}

\nc{\E}{\mathbb{E}}
\nc{\ra}{\rightarrow}

\nc{\pmhc}[1]{\{-1,1\}^{#1}}
\nc{\Dbnd}{D}
\nc{\Bbnd}{B}

\nc{\Key}{\mathsf{KeyGen}}
\nc{\Enc}{\mathsf{Encode}}
\nc{\Encemb}{\mathsf{EncodeEmb}}
\nc{\Dec}{\mathsf{Decode}}
\nc{\sk}{\mathsf{sk}}
\nc{\pk}{\mathsf{pk}}
\nc{\lpk}{\ell_{\mathsf{pk}}}
\nc{\lsk}{\ell_{\mathsf{sk}}}
\nc{\msg}{\mathsf{m}}
\nc{\Adv}{\mathsf{Adv}}
\nc{\Red}{\mathsf{Red}}
\nc{\negl}{\mathsf{negl}}
\nc{\Ber}{\mathrm{Ber}}
\nc{\PRFPRC}{\mathsf{PRF\text{-}PRC}}
\nc{\wt}{\mathrm{wt}}
\nc{\res}[2]{{#1}_{#2}}
\nc{\bzero}{\mathbf{0}}
\nc{\Bin}{\mathrm{Bin}}
\nc{\Hyp}{\mathrm{Hyp}}

\nc{\Nrho}[1][\rho]{{N}_{#1}}
\nc{\Trho}[1][\rho]{\mathsf{T}_{#1}}
\nc{\hc}[1][n]{\{0,1\}^{#1}}
\nc{\Stab}{\mathbf{Stab}}
\nc{\bW}{\mathbf{W}}
\nc{\NS}{{\mathbf{NS}}}

\nc{\KeyS}{\mathsf{KeyGen_{Sub}}}
\nc{\EncS}{\mathsf{Encode_{Sub}}}
\nc{\DecS}{\mathsf{Decode_{Sub}}}
\nc{\WeightPerturb}{\mathsf{WeightPerturb}}
\nc{\Unique}{\mathsf{Unique}}
\nc{\PRCS}{\mathsf{PRC_{Sub}}}
\nc{\PRC}{\mathsf{PRC}}
\nc{\PRCI}{\mathsf{PRC_{Idx}}}
\nc{\SampleUnique}{\mathsf{SampleUnique}}
\nc{\PerturbDifference}{\mathsf{PerturbDifference}}

\nc{\Model}{\mathsf{Model}}
\nc{\Modelo}{\overline{\Model}}
\nc{\prompt}{\mathtt{PROMPT}}
\nc{\Setup}{\mathsf{Setup}}
\nc{\Detect}{\mathsf{Detect}}
\nc{\Sigprc}{\Sigma_{\mathsf{PRC}}}
\nc{\Wat}{\mathsf{Wat}}
\nc{\term}{\mathtt{END}}
\nc{\tok}{\mathsf{t}}
\nc{\True}{\textsf{True}}
\nc{\False}{\textsf{False}}
\nc{\Eemb}{\ME_{\mathsf{Emb}}}
\nc{\hist}{\mathsf{hist}}
\nc{\hh}{\mathsf{h}}
\nc{\freq}{\mathsf{freq}}
\nc{\ff}{\mathsf{f}}

\nc{\Hemp}[1]{H_{\mathsf{e}}^{#1}}
\nc{\Hempt}[1]{\bar{H}_{\mathsf{e}}^{#1}}
\nc{\Hemptil}[1]{\tilde{H}_{\mathsf{e}}^{#1}}
\nc{\Hmean}[1]{H_{\mathsf{m}}^{#1}}
\nc{\partition}[1][n,q]{P^{\mathsf{ptn}}_{#1}}
\nc{\Crob}{C_{\mathsf{rob}}}
\nc{\Lmax}{L_{\mathsf{max}}}
\nc{\skwat}{\sk_{\mathsf{Wat}}}
\nc{\EmbedToken}{\mathsf{EmbedChar}}
\nc{\len}{\mathrm{len}}
\nc{\Esub}{\ME_{\mathsf{sub}}}
\nc{\Ecomp}{\ME_{\mathsf{comp}}}
\nc{\comp}{\mathsf{c}}
\nc{\SE}{\mathscr{E}}
\nc{\alphb}{q}
\nc{\tAdv}{\widetilde{\Adv}}
\nc{\Funif}{{F_{\mathsf{Unif}}}}
\nc{\Alg}{\mathsf{Alg}}
\nc{\Majority}{\mathsf{Maj}}
\nc{\Dist}{\mathsf{Dist}}
\nc{\edit}{edit\xspace}
\nc{\Edit}{Edit\xspace}
\nc{\Wcomp}{\MW^{\mathsf{comp}}}

\nc{\INS}{\mathsf{INS}}
\nc{\CNS}{\mathsf{CNS}}
\nc{\cdist}{\stackrel{\mathrm{c}}{\sim}}
\nc{\SU}{\mathscr{U}}
\nc{\rr}{\bar{n}}

\nc{\KeyGen}{\mathsf{KeyGen}}
\nc{\ED}{D_{\mathsf{ED}}}
\nc{\Ham}{D_{\mathsf{Ham}}}
\nc{\bin}{\mathsf{bin}}
\nc{\EDball}{\mathcal{B}_{\mathsf{ED}}}
\nc{\SEDball}{\mathcal{B}_{\mathsf{Ham,ED}}}
\nc{\LEDball}{\mathcal{B}_{\mathsf{len,ED}}}
\nc{\epED}{\varepsilon_{\mathsf{ED}}}
\nc{\epDec}{\varepsilon_{\mathsf{Dec}}}
\nc{\Eedit}{\mathscr{E}^{\mathsf{edit}}}
\nc{\Egood}{\ME^{\mathsf{good}}}
\nc{\Ebad}{\ME^{\mathsf{bad}}}
\nc{\PermEnc}{\mathsf{PermEncode}}
\nc{\dham}{d_{\mathsf{H}}}
\nc{\dedit}{d_{\mathsf{E}}}
\nc{\pDec}{p_{\mathsf{Dec}}}
\nc{\Rclean}{R_{\mathsf{Clean}}}
\nc{\adv}{\mathcal{A}}
\nc{\bi}{\mathbf{i}}

\nc{\bj}{\mathbf{j}}
\nc{\bp}{\mathbf{p}}
\nc{\Ologit}{\MO_{\mathsf{logit}}}
\nc{\Osamp}{\MO_{\mathsf{samp}}}
\nc{\epapx}{\varepsilon_{\mathsf{apx}}}
\nc{\DistSpanner}{\textsf{DistSpanner}\xspace}
\nc{\epbase}{\vep_{\mathsf{base}}}
\nc{\cnorm}{\beta}
\nc{\occ}{\mathsf{occ}}

\nc{\phiprev}{\phi^{\mathsf{prev}}}
\nc{\depth}{\mathsf{depth}}
\nc{\range}{\mathsf{range}}
\nc{\Spread}{\mathsf{Spread}}

\nc{\Ebase}{\ME^{\mathsf{base}}}
\nc{\Eincr}{\ME^{\mathsf{incr}}}
\nc{\MIN}{\mathsf{MIN}}
\nc{\MAX}{\mathsf{MAX}}

\nc{\thrup}{\tau_{\mathsf{upp}}}
\nc{\thrlo}{\tau_{\mathsf{low}}}

\nc{\Sact}{\MS^{\mathsf{act}}}
\nc{\Sterm}{\MS^{\mathsf{term}}}
\nc{\mualg}{\mu_{\mathsf{alg}}}
\nc{\Erej}{\ME^{\mathsf{rs}}}
\nc{\Eterm}{\ME^{\mathsf{term}}}
\nc{\N}{\mathsf{N}}
\nc{\Lconst}{\mathsf{L}}
\nc{\tilmualg}{\tilde{\mu}^{\mathsf{alg}}}
\nc{\Gret}{\MG^{\mathsf{ret}}}
\nc{\Nint}{\MN^{\mathsf{int}}}
\nc{\Nappend}{\MN^{\mathsf{append}}}
\nc{\Nfinal}{\MN^{\mathsf{fin}}}
\nc{\prev}{\mathfrak{p}}
\nc{\Erestrict}{\ME^{\mathsf{restrict}}}
\nc{\Vinit}{V^{\mathsf{init}}}
\nc{\vinit}{v^{\mathsf{init}}}
\nc{\query}{\mathsf{query}}
\nc{\key}{\mathsf{key}}
\nc{\base}{\mathsf{base}}
\nc{\Nvalid}{N_{\mathsf{valid}}}
\nc{\Vvalid}{\MV^{\mathsf{valid}}}
\nc{\Equery}{\ME^{\mathsf{query}}}
\nc{\tAlg}{\widetilde{\Alg}}
\nc{\piref}{\pi_{\mathsf{ref}}}

\title{The Power of Test-Time Training for Approximate Sampling

}
\author{
Noah Golowich\\
Microsoft Research NYC\\
\texttt{nzg@cs.utexas.edu}
\and
Ankur Moitra\\
MIT\\
\texttt{moitra@mit.edu}
\and
Dhruv Rohatgi\\
MIT\\
\texttt{drohatgi@mit.edu}
}
\date{\today}

\begin{document}
\pagenumbering{gobble}
\maketitle
\begin{abstract}
Efficiently sampling from a complex probability distribution is a fundamental problem across machine learning and theoretical computer science. It has become increasingly pertinent in recent years with the rise of generative AI, as sophisticated sampling procedures from large language models (LLMs) have been proposed to solve challenging reasoning problems spanning domains such as mathematics and coding. The efficacy of such sampling algorithms is limited, however, by the relationship between the LLM and the particular sampling task at hand, which has motivated the framework of \emph{test-time training (TTT)}. TTT works by updating a model's weights in response to partial generations and reward feedback received \emph{at inference time}, thus adapting to the particular problem. In this work, we propose a formalization for TTT as the problem of producing a sample from a given probability measure $\mu^\star$ belonging to a known class $\MF$ of distributions, given an oracle $\hat \mu$ which yields approximate density estimates for $\mu^\star$. As it turns out, this is closely related to the problem of reducing sampling to approximate counting studied in seminal works of Jerrum, Valiant \& Vazirani (1986) and Jerrum \& Sinclair (1989): namely, when $\MF$ is the class of \emph{all distributions}, it coincides \emph{exactly} with the aforementioned counting-to-sampling reduction.

In this paper, we first show a quadratic lower bound on the query complexity of sampling from $\mu^\star$ given query access to $\hat \mu$ (which holds for sufficiently large classes $\MF$), thus showing that the random walk approach proposed by Jerrum \& Sinclair (1989), and later refined by Hayes \& Sinclair (2010), is optimal. This result answers an open question explicitly posed by Hayes \& Sinclair. We then show that this lower bound can be circumvented if the size of $\MF$ is bounded appropriately. As we discuss, this latter result can be viewed as an abstraction of TTT, and thus represents a starting point for the development of a principled theoretical framework for TTT.
\end{abstract}

\newpage\tableofcontents
\newpage
\pagenumbering{arabic}
\setcounter{page}{1}
\section{Introduction}
Efficiently sampling from a given probability distribution is a central problem in a number of areas spanning machine learning, artificial intelligence, and theoretical computer science. Recently, generative AI has enabled sampling from natural yet highly complex distributions. As some examples, diffusion models~\cite{ho2020denoising,song2020sde,sohl2015deep} enable sampling from a distribution over images via an iterative denoising process; and autoregressive large language models (LLMs) \cite{brown2020fewshot,openai2023gpt4,google2023gemini} enable generating text by repeatedly sampling the next token conditioned on all previous tokens. Each of these architectures was designed to be used with a specific sampling procedure. However, recent efforts have focused on more complex sampling procedures --- e.g., generating multiple strings of text in parallel, or going back and revising generated text --- which can produce samples from the model which are superior according to some metric. For instance, one might desire text written in a certain style, or to sample from the distribution of solutions to a math problem \emph{conditioned on them being correct}.  

\paragraph{From inference to test-time training.} 
A number of heuristic approaches have been proposed for these problems: %
they typically make many queries to a base (or \emph{reference})  model $\piref$, as well as to an additional \emph{reward model} $\hat V$, which at a high level provides information about the quality of \emph{complete or partial generations}. For instance, in settings such as math or coding where answers can be verified, $\hat V$ may directly assign a reward to proposed solutions \cite{yuksekgonul2026ttd,zuo2025ttrl,yang2026ttcs}, or more generally $\hat V$ itself may be a model which is trained to score partial generations (in this context, it is often referred to as a \emph{process reward model (PRM)}) \cite{wang2025value,puri2025probabilistic,rohatgi2025taming,lightman2023let,zhang2025lessons,golowich2026reject}. In this paper we focus on the case where $\piref$ is a \emph{language model}, though similar frameworks can apply for other settings such as diffusion models. 

While such approaches have shown promising performance, their efficacy is limited by the quality of the reference model $\piref$. If the distribution of generations from $\piref$ is far from the desired target distribution, then the number of queries to $\piref$ and $\hat V$ needed to obtain satisfactory samples may be prohibitively large. For instance, $\piref$ may have been trained to solve general math problems, but if the desired target distribution is over solutions to a particular combinatorics conjecture, then $\piref$ may suggest many approaches which are unlikely to work in that area of combinatorics. 

To help to mitigate this issue, a recent line of work has studied \emph{test-time training (TTT)}, which proposes to \emph{update the weights of the base model} $\piref$ at inference time, using partial generations produced by $\piref$ as well as feedback from $\hat V$ \cite{hardt2024testtime,sun2023learn,sun2019tt,sun2025ttlm}.\footnote{We remark that several of the initial works on TTT for LLMs (e.g., \cite{sun2025ttlm,hardt2024testtime}) do not make use of any external rewards (modeled here by $\hat V$) to update $\piref$, and instead only rely on the prompt, and potentially partial generations under $\piref$. However, the larger-scale developments, such as those achieving IMO-level performance for the first time \cite{trinh2025olympiad} or advancing the state-of-the-art for open mathematics questions \cite{yuksekgonul2026ttd}, do make use of external reward signals (typically by using reinforcement learning algorithms at test time \cite{zuo2025ttrl}).} The hope is that this feedback will push $\piref$ to be ``closer'' to the particular target distribution. In many such cases, TTT achieves impressive performance, beating the performance of existing sampling approaches which held the weights of the reference model $\piref$ fixed. For instance, reinforcement learning at test time \cite{zuo2025ttrl} was used in AlphaProof \cite{trinh2025olympiad}, which was the first model to achieve silver-level performance at the IMO; \cite{yuksekgonul2026ttd} achieved state-of-the art performance using TTT for a number of open mathematics and engineering problems.

In the coming years it is likely that language models will be used to solve even more challenging problems by investing large amounts of compute, and test-time training is at the forefront of techniques which will drive success in this endeavor. That said, for the most part the literature lacks a principled theoretical framework for reasoning about how test-time training works and in what manner it can lead to improved or more efficient sampling. In this work, we aim to fill this gap. %

\subsection{A framework for TTT in the context of LLM inference}
\label{sec:inference-llms}
We develop a formalization for  TTT by viewing the sampling problem discussed above as that of \emph{tilting} the output distribution of the reference language model, a common setting in the literature \cite{rohatgi2025taming,geuter2025guided,foster2025foundation,xiong2024iterative}.\footnote{While we focus on the setting of language models, similar logic applies to other settings, such as diffusion models (see \cite{singhal2025general}).}
In particular, we suppose we are given a \emph{reference} model $\piref \in \Delta(\Sigma^n)$, which is an \emph{autoregressive language model} on sequences of length $n$: this means that for any partial sequence $x_{1:i} = (x_1, \ldots, x_i) \in \Sigma^i$ (for $i \leq n$), we can query $\piref$ for the conditional distribution of the last token given the previous ones, i.e., $\piref(x_i \mid x_{1:i-1})$.\footnote{In fact, we assume that $\piref(x)$ can be determined using a single query; this is the case for typical autoregressive language models, which give the log-probabilities of \emph{all} tokens in a sequence using a single forward pass.} We are also given a \emph{value function} $V^\star : \Sigma^n \to \BR$, which assigns a ``quality score'' to each full sequence: for instance, $V^\star(x)$ might score the style of $x$ or its mathematical correctness. 
Our goal is to sample from the \emph{tilted distribution} $\mu^\star \in \Delta(\Sigma^n)$ obtained by reweighting $\piref$ according to the exponentiated value function, defined by
\begin{align}
\mu^\star(x) := \frac{1}{Z} \piref(x) \cdot \exp( V^\star(x)), \qquad Z = \E_{x \sim \piref}[\exp(V^\star(x))].\label{eq:mustar-tilted-piref}
\end{align}

\paragraph{Sampling from $\mu^\st$ via a PRM.} We extend the domain of $V^\star$ to all partial sequences (namely, to $\Sigma^{\leq n}$, which we let denote the set of sequences of elements in $\Sigma$ of length at most $n$) by letting $V^\star(x_{1:i})$ be the log-expectation of the score of a full sequence starting with $x_{1:i}$: namely, for $x_{1:i} \in \Sigma^i$, we define 
\begin{align}
V^\star(x_{1:i}) = \log \E_{x_{i+1:n} \sim \piref(\cdot \mid x_{1:i})}[\exp(V^\star(x_{1:n}))]\nonumber.
\end{align}
Roughly speaking, $V^\star(x_{1:i})$ should be interpreted as a measure of how likely we are to achieve a high overall value if we generate a full sequence under $\piref$, starting from $x_{1:i}$. 
While sampling from the tilted measure $\mu^\star$ without any guidance is typically intractable, it becomes tractable (specifically, can be solved in linear time in $n$) if we are given exact query access to $V^\star$ on all partial sequences \cite{yang2021fudge,rohatgi2025taming}. 

Of course, exactly querying $V^\star$ on a partial sequence is itself typically an intractable problem. A more reasonable assumption is that we are given \emph{approximate} query access to $V^\star$ on all partial sequences. A great deal of effort has gone into finding such $\hat V$; depending on the context, they may be referred to as \emph{KL-regularized value functions} \cite{rohatgi2025taming,foster2025foundation} or \emph{process reward models (PRMs)} \cite{golowich2026reject,wang2025value,puri2025probabilistic,rohatgi2025taming,lightman2023let,zhang2025lessons}. In particular, recent theoretical literature \cite{rohatgi2025taming,zhu2026power,golowich2026reject} assumes that we can query a mapping $\hat V : \Sigma^{\leq n} \to \BR$, so that, for some error $\rho > 0$, we have $|\hat V(x_{1:i}) - V^\star(x_{1:i})| \leq \rho$ for all sequences $x_{1:i} \in \Sigma^i$ ($i \in [n]$).\footnote{It is also assumed that $\hat V$ and $V^\star$ are equal on length-$n$ sequences, which is without loss since query access to $V^\star$ on full-length sequences is allowed.}\footnote{\cite{rohatgi2025taming,golowich2026reject} also study relaxations of this assumption where the error bound holds in a certain average-case sense, rather than uniformly over all sequences.} 

The below question  captures the problem of sampling from the tilted measure $\mu^\star$, using query access to $\piref$ and $\hat V$, \emph{without doing any TTT}:
\begin{question}[Sampling without TTT]
\label{ques:tilted-nottt}
In the above setting, what is the minimum number of queries to $\piref$ and $\hat V$ which are needed to produce a single sample from $\mu^\star$ (perhaps approximately so)?
\end{question}

\paragraph{Sampling with TTT.} \emph{How can we effectively model TTT approaches which update the weights of the model $\piref$ at inference time?} In general, such approaches should be seen as attempting to \emph{learn} the distribution $\mu^\star \propto \piref \cdot V^\star$ --- not in its entirety, but at least well enough so as to generate a single sample from it. Any potential benefit of TTT must stem from the inductive bias of the model architecture, i.e. the fact that updating the weights of the model is more structured than directly updating the distribution, and therefore may enable more generalizable learning. 

Inspired by statistical learning theory, we therefore formalize test-time training as making an additional assumption that $\mu^\star$ lies in a restricted class of models --- or equivalently (since $\piref$ is fixed), that $V^\star$ lies in some known class $\MV$ consisting of functions $V : \Sigma^n \to \BR$. Thus, the question of sampling from $\mu^\star$ \emph{with TTT} may be phrased as follows:
\begin{question}[Sampling with TTT]
\label{ques:tilted-ttt}
Suppose we are given a class $\MV$ of mappings $V : \Sigma^{\leq n} \to \BR$, with the promise that $V^\star \in \MV$. Using knowledge of $\MV$, what is the minimum number of queries to $\piref$ and $\hat V$ which are needed to produce a single (approximate) sample from $\mu^\star$?
\end{question}
Comparing \cref{ques:tilted-nottt} and \cref{ques:tilted-ttt}, the problem of determining when TTT is beneficial is as follows: \emph{\textbf{How do structural conditions on the class $\MV$ affect the query complexity of sampling from $\mu^\star$? In particular, for what choices of $\MV$ is the query complexity of \cref{ques:tilted-ttt} smaller than that of \cref{ques:tilted-nottt}?}} In this paper we show a positive answer to this second question when the cardinality of the class $\MV$ is not too large. There are a number of further structural conditions which may be placed on $\MV$, analogous to the vast literature in statistical learning theory which characterizes the sample complexity of learning \emph{infinite} hypothesis classes satisfying various capacity constraints. We expect these will lead to a plethora of fruitful directions for follow-up work; see \cref{sec:discussion}. 

\subsection{Interpretation in the context of counting-to-sampling reductions}
Beyond the motivation from recent progress in generative AI, our setting can be cast in the framework of a rich body of historical work in theoretical computer science on sampling algorithms. To explain this connection, we consider a slight rephrasing of the setting of \cref{sec:inference-llms}. Namely, for $x_{1:i} \in \Sigma^i$, we define
\begin{align}
\hat \mu(x_{1:i}) := \frac{1}{Z} \cdot \piref(x_{1:i}) \cdot \exp (\hat V(x_{1:i})) %
\label{eq:define-hatmu}.
\end{align}
It is straightforward to see that the marginal of $\mu^\star$ on the first $i$ coordinates, denoted $\mu^\star(x_{1:i})$, satisfies $\mu^\star(x_{1:i}) = \frac{1}{Z} \cdot \piref(x_{1:i}) \cdot \exp(V^\star(x_{1:i}))$, meaning that $\hat \mu(x_{1:i})$ should be interpreted as an approximation of $\mu^\star(x_{1:i})$ induced by $\hat V$. 
Moreover, given a class $\MV$ of mappings $V : \Sigma^n \to \BR$ as above, let $\MF = \MF_{\MV} \subset \Delta(\Sigma^n)$ denote the class of distributions 
\begin{align}
\MF_{\MV}: =\{ \piref \cdot \exp(V) \cdot Z_V^{-1} \mid V \in \MV \}, \qquad Z_V = \E_{x \sim \piref}[\exp(V(x))]\label{eq:mf-mv}.
\end{align}
We now observe that the sampling problem in either \cref{ques:tilted-nottt} or \cref{ques:tilted-ttt} with respect to the target value function $V^\star$ and the oracles $\piref, \hat V$ is precisely an instance of the problem of sampling from $\mu^\star$ given the ability to query the oracle $\hat \mu$ (and in the case of \cref{ques:tilted-ttt}, the knowledge that $\mu^\star \in \MF_{\MV}$). This observation is immediate from the definitions of $\hat \mu$ and $\MF_{\MV}$, save for one subtlety: per \cref{eq:define-hatmu}, computing $\hat \mu(x_{1:i})$ requires knowledge of $Z$ in addition to the ability to query $\piref$ and $\hat V$, so it is not immediately evident that the two query models are equivalent. However, a key observation is that knowledge of $Z$ is not required if the sampling algorithm which uses query access to $\hat \mu$ is invariant under any rescaling of $\hat \mu$. This will be the case for our algorithms (\cref{sec:ub-proof}), which treat the setting of having query access to $\hat \mu$;\footnote{To be more precise, our algorithms work in the setting where we are given a class $\MV$ of mappings $V : \Sigma^n \to \BR$ and an oracle $\hat V : \Sigma^n \to \BR$ so that $|V(x_{1:i}) - \hat V(x_{1:i})| \leq \log R$ for all $x_{1:i}$.} thus they indeed can be simulated using query access to $\piref$ and $\hat V$, even without knowledge of $Z$.\footnote{Furthermore, in our lower bounds, the constant $Z$ is known.} See \cref{prop:muhat-vhat} for a formal statement of this equivalence.

\emph{In what sense is $\hat \mu$ a good approximation of $\mu^\star$?} The assumption that $\| \hat V - V^\star\|_\infty \leq \rho$ and that $\hat V(x) = V^\star(x)$ for $x \in \Sigma^n$ from \cref{sec:inference-llms} implies that $ \hat \mu$ is an $R$-multiplicative approximation of $\mu^\star$ with $R = \exp(\rho)$, i.e., %
\begin{align}
\frac{1}{R} \leq \frac{\mu^\star(x_{1:i})}{\hat\mu(x_{1:i})} \leq R \quad \text{for all } x_{1:i} = (x_1, \ldots, x_i) \in \Sigma^{i},\ i \leq n, \qquad \mu^\star(x) = \hat\mu(x) \text{ for all } x \in \Sigma^n.\label{eq:r-mult-intro}
\end{align}
For simplicity, we assume that $R$ is a constant in this exposition. 
Thus, \cref{ques:tilted-nottt} may be rephrased as follows: \emph{Assuming \cref{eq:r-mult-intro}, how many queries to the oracle $\hat\mu(\cdot)$ are needed to output a sample from a distribution $\hat\nu$ which is close to $\mu^\star$?}  %
This question has a deep history pertaining to the algorithmic equivalence of sampling and approximate counting, as developed in seminal works of Jerrum, Valiant, and Vazirani \cite{jerrum1986random} and Jerrum and Sinclair \cite{jerrum1989approximate}.\footnote{Technically, this equivalence holds only for self-reducible problems, though a great number of concrete problems of interest satisfy this property.} This equivalence has been instrumental in the design of efficient randomized algorithms for a wide range of problems arising in combinatorics and statistical physics, including estimating the permanent of a nonnegative matrix~\cite{jerrum2004permanent}, sampling proper colorings of graphs~\cite{jerrum1995kcolorings,vigoda1999improved}, simulating the Ising and Potts models~\cite{jerrum1993ising,chen2021spectral}, counting matchings~\cite{sinclair1989approximating}, and sampling bases of matroids~\cite{anari2020logconcave}. %

In more detail, under the assumption \cref{eq:r-mult-intro}, the mapping $\hat \mu : \Sigma^{\leq n} \to \BR_{\geq 0}$ may be viewed as an $R$-approximate \emph{counting oracle} for the distribution $\mu^\star$.\footnote{In general, the interpretation of $\hat \mu$ as an approximate counting oracle for the ``class of problems'' captured by $\mu^\star$ makes sense as long as the problem in question is self-reducible, which in turn is a ubiquitous assumption across a wide range of algorithmic problems.}  The seminal work of Jerrum \& Sinclair \cite{jerrum1989approximate}, which was later refined by Hayes \& Sinclair \cite{hayes2014liftings}, showed that $\tilde O(n^2)$ queries to the counting oracle $\hat \mu$ suffice to generate a sample from a distribution which is $o(1)$-close to $\mu^\star$. The approach of \cite{jerrum1989approximate,hayes2014liftings} relies on the clever idea of \emph{backtracking}: we can try to sample a sequence $x_1, x_2, x_3, \ldots$, character-by-character, by ``pretending'' that $\hat \mu(\cdot)$ denotes the actual marginal probabilities of $\mu^\star$. If at some point, we ``make a mistake'', as evidenced by the fact that we have generated some sequence $x_{1:i}$ for which $\hat \mu(x_{1:i})$ is particularly small, then we can remove the most recently generated character, to yield the sequence $x_{1:i-1}$, and continue. Formally, this process may be viewed as a random walk on the complete $|\Sigma|$-ary tree of depth $n$, where edge weights are given by $\hat \mu(\cdot)$. 

In more recent years, a number of papers \cite{schweizer2012non,zhu2026power,golowich2026reject} have recovered the guarantee of \cite{jerrum1989approximate,hayes2014liftings} that $\tilde O(n^2)$ queries to $\hat \mu$ suffice to approximately sample from $\nu^\star$, using instead approaches from the family of of \emph{Sequential Monte Carlo (SMC)} (i.e., \emph{particle filtering}) algorithms. These approaches have the benefit that they can be parallelized, i.e., the queries can be arranged so that only $\tilde O(n)$ rounds are needed where each round performs $\tilde O(n)$ queries in parallel. Nevertheless, the total work remained as $\tilde O(n^2)$, thus motivating the following question, which was explicitly posed by Hayes \& Sinclair in 2010 \cite{hayes2014liftings}: 
\begin{question}[\cite{hayes2014liftings}]
\label{ques:main}
Is it possible to approximately sample from $\mu^\star$ using $o(n^2)$ queries to $\hat\mu(\cdot)$ under the assumption \cref{eq:r-mult-intro}?
\end{question}
In this paper, our first main result resolves \cref{ques:main} in the negative, giving a near-quadratic lower bound on the number of queries. As discussed above, such a lower bound also serves to show limits to what is possible for \cref{ques:tilted-nottt}. This serves as motivation for \emph{test-time training}: We proceed to show that this lower bound can be circumvented if the target distribution $\mu^\star$ is known to belong to a structured class $\MF$ of distributions of bounded size, thus giving a positive answer to \cref{ques:tilted-ttt}.

\subsection{Results}
\cref{thm:main-lb} below gives a negative answer to \cref{ques:main}, showing that that the $O(n^2)$ query complexity of the Jerrum-Sinclair algorithm \cite{jerrum1989approximate,hayes2014liftings} is essentially optimal:
\begin{theorem}[Main lower bound; informal version of \cref{thm:weak-lb}]
\label{thm:main-lb}
There is no algorithm which receives access to an oracle $\hat \mu : \{0,1\}^{\leq n} \to \BR_{\geq 0}$ satisfying \cref{eq:r-mult-intro} with $R = O(1)$ with respect to some unknown distribution $\mu^\star \in \Delta(\{0,1\}^n)$, makes $o(n^2 / \log^2 n)$ queries to $\hat \mu(\cdot)$, and outputs a sample from a distribution $\hat \nu$ satisfying $\tvd{\mu^\star}{\hat \nu} \leq 1/10$. 

Moreover, the same result holds even if $\mu^\star$ is promised to be in a \emph{known} class of densities of size at most $\exp(\tilde O(n^4))$. 
\end{theorem}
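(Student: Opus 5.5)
We will prove the lower bound via Yao's minimax principle: we construct a distribution over pairs $(\mu^\star, \hat\mu)$ satisfying \cref{eq:r-mult-intro} with $R = O(1)$, and show that any deterministic algorithm making $q = o(n^2/\log^2 n)$ queries outputs a string whose law is far from $\mu^\star$ with constant probability over the instance.

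The intuition comes from the backtracking analysis. There, the marginals $\hat\mu$ act like a random walk on the depth-$n$ tree whose drift toward the truth is only of order $1/n$ per level. The construction will hide a ``secret path'' $z \in \{0,1\}^n$ and let $\mu^\star$ put most of its mass on a small neighborhood of $z$. The oracle $\hat\mu$ will equal a fixed, $z$-independent reference (say uniform-like, with values $2^{-i}$ up to constants) everywhere except on full-length strings. On prefixes $x_{1:i}$, the true marginal $\mu^\star(x_{1:i})$ will differ from $\hat\mu$ only by a factor $\exp(O(\text{signal}))$ that is bounded by $R$. To guarantee this, we distribute the tilt over the $n$ levels: each level contributes a multiplicative bias of $1+\Theta(1/\sqrt n)$ in a way that only accumulates through a random-walk-like statistic, such as a sum of $\pm 1/\sqrt n$ increments that stays $O(1)$ along typical paths. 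Concretely, we plan to take $\mu^\star(x) \propto \exp(\frac{c}{\sqrt n}\sum_i s_i(x_{1:i}))$ truncated, with hidden signs $s_i$ depending on $z$. The tilt by a bounded random-walk statistic keeps all prefix ratios within $[1/R, R]$, truncating or reflecting to ensure boundedness at the cost of $\log n$ factors. The lemma to verify here is that truncation changes $\mu^\star$ negligibly yet keeps $\mu^\star$ far from $\hat\mu$ in TV. The family of all choices of hidden signs, discretized, has size $\exp(\tilde O(n^4))$, which gives the ``moreover'' clause.

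The core step is an information argument: the only queries revealing $z$-dependent information are full-length queries $x \in \{0,1\}^n$, where $\hat\mu(x)=\mu^\star(x)$. Each such query reveals essentially a noisy linear statistic of the hidden signs. We will argue that to output a sample with the correct $\Theta(1)$ bias, the algorithm must learn the signs to accuracy that requires $\Omega(n/\log n)$ informative bits at each of $\Omega(n/\log n)$ levels. Each full query yields only $O(\log n)$ bits of information about a level the algorithm has not already pinned down. We plan to formalize this by revealing extra information to the algorithm, namely the signs at levels it has ``resolved'', and bounding the mutual information (or a chi-square/TV hybrid) between the transcript and the unresolved signs, yielding $q \gtrsim n^2/\log^2 n$.

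Finally, we will show that a sampler with unresolved signs on a constant fraction of levels outputs a distribution whose overlap with the tilted high-mass region is bounded away from that of $\mu^\star$, giving TV $>1/10$. We expect the main obstacle to be the information bound. Specifically, we must show that adaptively chosen full-length queries cannot simultaneously extract information about many levels, so that one query contributes roughly $\log n$ bits in total rather than per level. Our first attempt will use a potential-function argument over a posterior on $z$ that is product-like, controlling its drift per query via a martingale/Hellinger bound.
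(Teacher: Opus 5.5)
\textbf{There is a genuine gap, and it sinks the construction itself, not just one step.}

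If $\hat\mu$ equals a fixed, $z$-independent function $\rho$ on all strict prefixes, then every leaf $x$ satisfies $\mu^\star(x) \le \mu^\star(x_{1:n-1}) \le R\,\rho(x_{1:n-1})$. Summing $\rho(y) = \hat\mu(y) \le R\,\mu^\star(y)$ over $y \in \{0,1\}^{n-1}$ gives $S := \sum_y \rho(y) \le R$. So the \emph{known} proposal $p(x) := \rho(x_{1:n-1})/(2S)$ satisfies $\mu^\star(x) \le 2R^2\, p(x)$.

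Rejection sampling then outputs an \emph{exact} sample from $\mu^\star$ with $O(R^2) = O(1)$ expected queries: draw $x \sim p$ at no query cost, query the single leaf value $\hat\mu(x) = \mu^\star(x)$, and accept with probability $\mu^\star(x)/(2R^2 p(x))$.

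In other words, the full-length queries you designate as the only informative ones are exactly what this trivial sampler uses. No way of hiding the signs can prevent it, so no such construction gives a superconstant lower bound. The same calculation shows two further problems:
\begin{itemize}
\item ``$\mu^\star$ puts most of its mass near $z$'' is incompatible with $\mu^\star(x_{1:i})$ being within a factor $R$ of $2^{-i}$ at every prefix.
\item A truncated tilt $\exp(\frac{c}{\sqrt n}\sum_i s_i)$ that keeps all prefix ratios bounded produces precisely such an easy, bounded-density-ratio target.
\end{itemize}

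\textbf{What is missing.} The hidden parameter must shape the prefix oracle itself, so that (i) no known proposal has bounded density ratio, and (ii) a single query cannot probe many levels at once.

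The paper achieves this with ``key'' blocks. Edge weights $\hat\omega$ vanish unless the key bits match a hidden $\psi$, so $\mu^\star$ is supported on an unknown, exponentially small set. The algorithm is therefore forced to walk down from the root; skipping ahead succeeds only with probability $2^{-k}$ per query (\cref{clm:hatb-hatb0}).

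The signal is a single bit $v^{\mathsf{init}}$. In the $1-v^{\mathsf{init}}$ subtree, an $\epsilon \approx \log n/n$ fraction of block-end weights are set to $1$ instead of $1+\gamma$, until the first such hit. Over $r \approx n/\log n$ blocks, this moves constant mass off $x_1 = 1-v^{\mathsf{init}}$ while keeping every prefix ratio within $(1+\gamma)^2$.

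Each ideal query then reveals one block-end weight, whose conditional law differs by $O(\epsilon)$ under the two hypotheses (\cref{lem:close-consistent-query}). The KL chain rule bounds the total divergence by $O(q\epsilon^2)$, forcing $q \gtrsim n^2/\log^2 n$. Your bit-counting heuristic (``$\log n$ bits per query about $n/\log n$ levels'') does not substitute for this mechanism.

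The ``moreover'' clause also needs an actual device. Arbitrary prefix-dependent signs give doubly exponentially many instances. The paper instead draws $\phi,\psi$ from $q$-wise independent families, which gives $\log|\MF| = O(qk2^k) = \tilde O(n^4)$.
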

It is straightforward to see that, even in the case $R = 1$ so that $\hat \mu = \mu^\star$, any sampling algorithm requires $\Omega(n)$ queries to $\hat \mu$: this is evident by considering the case where $\mu^\star$ is an unknown singleton in $\{0,1\}^n$. Of course, when $R = 1$, a sample from $\mu^\star$ may be generated with exactly $n$ queries to $\hat \mu$ (simply by querying $\hat \mu = \mu^\star$ coordinate-by-coordinate). Thus, one way to interpret \cref{thm:main-lb} is that \emph{the price of approximation error in $\hat \mu$ (or $\hat V$ in the context of LLM inference) is that an additional $\tilde \Omega(n)$ work must be paid ``per-coordinate''.}

Summarizing, \cref{thm:main-lb} gives a limitation on what is possible in the context of \cref{ques:tilted-nottt}, i.e., sampling from $\mu^\star \propto \piref \cdot V^\star$ \emph{without TTT}. \emph{What can we do with TTT, as captured by \cref{ques:tilted-ttt}?} Here we assume $V^\star \in \MV$ for some class $\MV$, which is equivalent to the knowledge that $\mu^\star \in \MF$ for some known class $\MF$ (see \cref{eq:mf-mv}).\footnote{To be clear, when we say that $\MF$ is known, we mean that for any $\mu \in \MF$ and $x_{1:i} \in \Sigma^i$, the algorithm can determine the value of $\mu(x_{1:i})$ without paying for any queries to $\hat \mu(\cdot)$.} %

Of course, the assumption that $\mu^\star \in \MF$ is only useful if we control the complexity of $\MF$: %
below, we show that as long as $|\MF| \leq \exp(o(n^2))$, then we can sample from $\mu^\star$ using sub-quadratically many queries to $\hat \mu$:
\begin{theorem}[Main upper bound; simplified version of \cref{thm:main-ub-formal}]
  \label{thm:main-ub}
Fix $n \in \BN$, $\delta \in (0,1)$, an alphabet $\Sigma$, and a (known) class of probability distributions $\MF$ on $\Sigma^n$. Also fix any $\mu^\star \in \MF$ (unknown to the algorithm) as well as any mapping $\hat \mu$ satisfying \cref{eq:r-mult-intro} with respect to $\mu^\st$, for some parameter $R \geq 1$. Then there is an algorithm (\cref{alg:pf-learning}) which makes at most $|\Sigma|n \cdot \sqrt{\log|\MF|} \cdot \poly(R, \log \frac{1}{\delta}, \log\log |\MF|)$ queries to $\hat \mu(\cdot)$ and outputs a sample $X \in \Sigma^n$  according to some distribution $\hat \nu$, which satisfies $\tvd{\hat \nu}{\mu^\star} \leq \delta$. 
\end{theorem}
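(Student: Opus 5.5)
The plan is to combine a particle-filtering (SMC) sampler with online learning over the finite class $\MF$, in the spirit of test-time training. We run a sequential procedure over the $n$ coordinates and maintain a posterior-like weighting over candidate models $\mu \in \MF$. Each model that is consistent with the oracle evidence suggests proposal conditionals $\mu(x_i \mid x_{1:i-1})$. The key quantity is the per-step log-ratio $\log\frac{\hat\mu(x_{1:i})}{\hat\mu(x_{1:i-1})}$ compared against $\log\frac{\mu(x_{1:i})}{\mu(x_{1:i-1})}$. Since $\hat\mu$ is within factor $R$ of $\mu^\star$ at every prefix and exact at the leaves, the telescoped sum of these discrepancies along any path is bounded by $O(\log R)$ for the true model $\mu^\star$. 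A wrong model, by contrast, has to accumulate discrepancy somewhere. Concretely, I would run SMC with $N$ particles and use $\hat\mu$ ratios as incremental importance weights, exactly as in the $\tilde O(n^2)$ SMC analyses of \cite{zhu2026power,golowich2026reject}. The proposal at each step is a mixture/aggregate of the models in $\MF$ (e.g.\ via exponential weights on cumulative log-loss of observed transitions relative to $\hat\mu$), rather than $\hat\mu$ itself. Querying $\hat\mu$ on all $|\Sigma|$ children of each particle at each step costs $|\Sigma| n N$ queries, so the target is $N = \sqrt{\log|\MF|}\cdot\poly(R,\log\frac1\delta,\log\log|\MF|)$.

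The analysis has three steps. First, I would show that SMC with proposal $q$ and weights $\hat\mu(x_{1:i})/(\hat\mu(x_{1:i-1})q(x_i\mid x_{1:i-1}))$ outputs a sample $\delta$-close to $\mu^\star$ provided the number of particles exceeds (roughly) the exponentiated cumulative $\chi^2$/KL divergence between the proposal path measure and $\mu^\star$, up to the $R$-factors. This is a standard bound: the variance of the normalizing constant estimate controls the TV error. Second, I would bound the proposal's cumulative divergence to $\mu^\star$. With exponential weights over $\MF$ at learning rate $\eta$, the regret bound gives cumulative KL at most $\log|\MF|/\eta$ plus $\eta\cdot(\text{per-step loss range})^2 \cdot(\text{number of steps})$. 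The loss is measured through $\hat\mu$, whose per-step ratios are only $R^2$-accurate but telescope. Third, and this is the crux, I must avoid paying $n$ in the variance term. I would exploit that the errors of $\hat\mu$ telescope, so the total deviation along a path is $O(\log R)$ rather than $O(n\log R)$. Balancing then gives a total divergence of order $\sqrt{\log|\MF|}$ after rescaling. The particle count must then beat $\sqrt{\log|\MF|}$ rather than $\exp(\cdot)$, which requires resampling adaptively so that divergence is paid additively (as a variance sum) rather than multiplicatively.

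The main obstacle is this second/third step. The learner only sees noisy, adversarial-within-$R$ feedback, so $\mu^\star$ need not be the empirical best expert at every prefix; I must also reconcile the learner's nonstationary proposal with the SMC weight analysis. My first attempt would be to use a confidence-set version: discard models whose cumulative log-ratio disagreement with $\hat\mu$ along resampled particle paths exceeds $2\log R + O(\log\frac{|\MF|}{\delta})$, which never eliminates $\mu^\star$. Then I would argue via a potential (log-size of surviving set) that each surviving wrong model on which particles are misled is eliminated after $\sqrt{\log|\MF|}$-scale evidence. I would fall back to the exponential-weights analysis with tuned $\eta\approx 1/\sqrt{\log|\MF|}$ if the elimination argument does not directly give the square-root dependence.
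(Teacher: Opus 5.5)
Your proposal has the right raw ingredients: a single learner shared across particles, filtering of $\MF$ by consistency with $\hat\mu$, and the observation that for $\mu^\star$ the prefix discrepancies telescope to at most $\log R$. However, the step that is supposed to produce the $\sqrt{\log|\MF|}$ dependence is never carried out, and the mechanisms you suggest for it would not produce it.

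\textbf{Step 1 does not give the claimed count.} It ties the particle count to the \emph{exponentiated} cumulative divergence. For a divergence of order $\sqrt{\log|\MF|}$ this gives $e^{\sqrt{\log|\MF|}}$ particles. The claim that adaptive resampling makes the cost additive is asserted, not derived.

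\textbf{A KL-regret bound is the wrong currency.} Particle counts are governed by second moments (or pointwise bounds) of importance weights, not by KL. Separately, your tuning $\eta \approx 1/\sqrt{\log|\MF|}$ makes the $\log|\MF|/\eta$ term of order $\log^{3/2}|\MF|$.

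\textbf{Global resampling loses the square root.} Per-step resampling against the intermediate targets $\hat\mu(x_{1:i})$ is exactly the $\tilde O(n^2)$ scheme. Suppose you resample only at the $\Theta(\log|\MF|)$ learning events. A global resampling of all $N$ particles can still add relative variance $\Theta(1/N)$ each time, because $\mu^\star/\hat\mu$ varies by factors up to $R^2$ across particles. That forces $N \gtrsim \log|\MF|$. Worse, if particles advance in lockstep, many can hit a large ratio against the same stale class at the same depth. Then the number of particle-level events is not even $O(\log|\MF|)$.

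\textbf{The elimination slack is harmful.} The slack $O(\log(|\MF|/\delta))$ in your threshold is unnecessary: the guarantee on $\hat\mu$ is deterministic, so threshold $\log R$ never removes $\mu^\star$. It also breaks the counting argument, which needs the consistency threshold to be a constant factor below the blow-up threshold.

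The paper gets the square root from a different mechanism:
\begin{enumerate}
\item \emph{Markov halving.} The proposal is the uniform mean $\bar\eta$ of the currently consistent subclass. If $\hat\mu(x)/\bar\eta(x) > M/R$ at some node $x$, Markov's inequality leaves at most an $R^2/M$ fraction of that subclass with $\hat\mu(x)/\eta(x)\le R$. With $M=\bar R=2R^2$, each such event halves the class (\cref{lem:gret-decrease}).
\item \emph{Sequential processing.} Particles are processed one at a time against a single shared class (\cref{alg:pf-learning}). So these events occur at most $\log_2|\MF|$ times in total over all particles.
\item \emph{Local unbiased correction.} Between events the likelihood ratios stay in $[1/M,M]$. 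They are corrected locally and exactly by Bernoulli thinning, splitting into at most $M$ copies, and truncation (\cref{alg:ts-rs}). This gives $\E[\occ(w',\MT)]=\hat\mu(w')/\hat\mu(w)$ with no global resampling.
\item \emph{Martingale concentration.} The quantity $M_j=\sum_{x\in\MS^j}\mu^\star(x)/\hat\mu(x)$ is a martingale with $O(R\bar R)$-bounded increments over at most $2\bar R\log_2|\MF|+N$ steps. Azuma gives $|M_J-N| = \tilde O(\sqrt{\log|\MF|+N})$ up to $\poly(R)$ factors, which is at most $N/2$ once $N\gtrsim\sqrt{\log|\MF|}$.
\item \emph{Output and cost.} A final acceptance step with probability $\min\{1,|\Sterm|/(2RN)\}$ converts this into the TV bound. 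The weighted-length martingale $\sum_{x\in\MS^j}\frac{\mu^\star(x)}{\hat\mu(x)}|x|$ bounds the queries by $O(n|\Sigma|N\poly(R))$ per outer round.
\end{enumerate}

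So the $\sqrt{\log|\MF|}$ is the fluctuation of $O(\log|\MF|)$ bounded events, each charged to a single particle, against a mean of $N$. It is not a regret or divergence bound. Your confidence-set first attempt is the right starting point; what is missing is the Markov halving, the sequential single-particle processing, and the local unbiased splitting.
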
 
We remark that there is a gap between \cref{thm:main-lb,thm:main-ub} in terms of the quantitative dependence on $\log |\MF|$: for a class $\MF$ of size $\exp(n^2)$, \cref{thm:main-ub} gives a quadratic upper bound, whereas \cref{thm:main-lb} requires a class $\MF$ of size $\exp(n^4)$ in order to obtain a quadratic lower bound. We view closing this gap as an intruiging direction for future work. In \cref{thm:size-sensitive-lb}, we establish a generalization of \cref{thm:main-lb}, showing that for any ``size parameter'' $S \leq \exp(O(n^4))$, there is a class $\MF$ of size $|\MF| \leq S$ so that any sampling algorithm requires $\tilde \Omega(\sqrt{\log S})$ queries to $\hat \mu$. As discussed below, finding the ``right'' dependence of the query complexity on $|\MF|$ and $n$ remains open.

\paragraph{Takeaways for test-time training.} 
\cref{thm:main-ub} implies that as long as the class $\MF$ is not too large, then knowledge of it may effectively be leveraged to sample more efficiently from $\mu^\star$ than what would be possible without knowledge of $\MF$. This gives an answer to our discussion following \cref{ques:tilted-ttt}, on understanding under what conditions TTT is beneficial. Furthermore, we remark that the algorithm establishing \cref{thm:main-ub} repeatedly generates partial samples from a ``reference'' distribution over $\Delta(\Sigma^n)$, and periodically updates the ``reference'' distribution in response to feedback from $\hat \mu$. This process is analogous to how TTT approaches update the reference model's weights at inference time; see \cref{sec:tech-upperbound} for further details. 

\subsection{Discussion \& Future Work}
\label{sec:discussion}
We view our results as only the beginning for a richer theoretical development of TTT. We elaborate on some directions for follow-up work below.

\paragraph{Tight quantitative bounds.} As discussed above, there is a quantitative gap between \cref{thm:main-lb} and \cref{thm:main-ub} in terms of the dependence on $|\MF|$. In general, it would be interesting to determine the following: 
\begin{problem}
Supposing $R, |\Sigma|, \delta$ are constants, what is the optimal function $f(n, |\MF|)$ so that $f(n, |\MF|)$ queries to an oracle $\hat \mu$ (satisfying \cref{eq:r-mult-intro}) are sufficient to approximately sample from $\mu^\star$, given that $\mu^\star \in \MF$?
\end{problem} We have shown that $f(n, |\MF|) = \tilde O(n \sqrt{\log |\MF|})$ suffices, and that $f(n, |\MF|) = \tilde \Omega(\sqrt{\log |\MF|})$ is necessary. 

\paragraph{Richer complexity measures for $\MF$.} Viewed from the lens of statistical learning theory, our results, which obtain bounds for test-time training based on the \emph{cardinality} of the class of distributions $\MF$, are only scratching the surface. A number of papers have developed combinatorial, geometric, and information-theoretic measures which capture learnability in a range of settings, such as i.i.d.~learning, online learning, and reinforcement learning \cite{rakhlin2014statisticallecture,shalev2014understanding,foster2023foundationsreinforcementlearninginteractive}. Thus, a pertinent question is:
\begin{problem}
What more fine-grained complexity measures of the class $\MF$ can be used to obtain refined bounds for the query complexity of sampling from $\mu^\star \in \MF$? 
\end{problem}

\paragraph{Weaker assumptions on $\hat \mu$.} The recent work of \cite{golowich2026reject} showed that the $\tilde O(n^2)$ query complexity upper bounds for SMC-based sampling approaches extend to the setting where one relaxes the $L_\infty$ type assumption of \cref{eq:r-mult-intro} which requires that $\hat \mu$ and $\mu^\star$ are close on all partial sequences. In particular, it suffices that $\hat \mu$ is close to $\mu^\star$ on average over partial sequences, in an appropriate sense.  \emph{Can the upper bound of \cref{thm:main-ub} be extended to such a setting?}

\paragraph{Computational efficiency and empirical implications.} Finally, we remark that the test-time training algorithm developed for the proof of \cref{thm:main-ub} (namely, \cref{alg:pf-learning,alg:ts-rs}) is query-efficient but computationally inefficient, since it requires explicit enumeration over the class $\MF$. It would be interesting to empirically investigate whether the algorithmic ideas (e.g. the pruning or importance-weighting steps in \cref{alg:ts-rs}) can be translated into practical algorithms. Towards this end, an important stepping stone may be developing algorithms in our framework that are also oracle-efficient, i.e. computationally efficient assuming access to certain optimization oracles for $\MF$.

\section{Technical Overview}\label{sec:tech-overview}
In \cref{sec:tech-upperbound}, we discuss the proof of \cref{thm:main-ub}, and in \cref{sec:tech-lowerbound}, we discuss the proof of \cref{thm:main-lb}. 
\subsection{Upper bound}
\label{sec:tech-upperbound}
\begin{algorithm}[H]
\caption{Basic Sampling}
\label{alg:basic-sampling}
\begin{algorithmic}[1]
\Require Alphabet $\Sigma$, density ratio parameter $R \geq 1$, oracle $\hat \mu: \Sigma^{\leq n} \to \BR_{\geq 0}$, class of densities $\MF$
\State Set $\bar \eta := \frac{1}{|\MF|} \sum_{\mu \in \MF}\mu$.
\State Initialize $x \gets \emptyset$.
\For{$1 \leq i \leq n$}
\State Sample $y \sim \bar \eta(x_i = \cdot \mid x_1, \ldots, x_{i-1})$.
\If{$\frac{\hat \mu(x \circ y)}{\bar \eta(x \circ y)} \geq 2R$}\label{line:hit-xy-intro}
\State \Return $x$.\label{line:return-early-intro}
\Else
\State $x \gets x \circ y$.
\EndIf
\EndFor
\State \Return $x$. 
\end{algorithmic}
\end{algorithm}

We begin by giving a very simple algorithm which proves a quantitatively weaker version of \cref{thm:main-ub}; we will then discuss the full algorithm and its analysis in more detail. 

\paragraph{Warmup: a simple sampling algorithm.}
A natural idea to prove \cref{thm:main-ub} is to use the mean of the densities in $\MF$ (which we denote by $\bar\eta$ in the present discussion) as a ``proxy'' for the true density $\mu^\st$. If it so happens that the density ratio $\frac{\mu^\st(x)}{\bar\eta(x)}=\frac{\hat\mu(x)}{\bar\eta(x)}$\footnote{Recall that \cref{asm:linfty-close-gen} gives that $\hat \mu$ and $\mu^\st$ are equal on length-$n$ sequences.} is bounded (say, by $\poly(R)$) for all $x \in \Sigma^n$, then it is very straightforward to produce samples from $\mu^\st$, using rejection sampling. Namely, we simply sample $x$ from $\bar \eta(x = \cdot)$ and accept $x$ with probability $\frac{\hat\mu(x)}{\bar\eta(x)}$. 

What about the (typical) case where the density ratio is not bounded as above? In this case, a natural strategy is to sample a sequence $x$ from $\bar\eta$, character-by-character, for as many steps as possible until we come to some sequence $x$ and potential next character $y$ where $\frac{\hat \mu(x\circ y)}{\bar \eta(x\circ y)}$ becomes too large, say $\geq 2 R$: this procedure is formalized in \cref{alg:basic-sampling}. Suppose that we get to such an $x \circ y$ (namely, Line \ref{line:hit-xy-intro}). At this point, not all is lost: we can restrict the class $\MF$ to densities that ``are consistent with'' $\hat \mu$ at the input $x$, i.e., suppose we define $\MF' := \{ \mu \in \MF \mid \mu(x\circ y) \geq \hat\mu(x\circ y) / R \}$. Certainly $\mu^\st \in \MF'$ by \cref{asm:linfty-close-gen}. Moreover, for any $\mu \in \MF'$, we have
\begin{align}
\frac{\mu(x\circ y)}{\bar \eta(x\circ y)} \geq \frac{\hat \mu(x\circ y)}{R \cdot \bar \eta(x\circ y)} \geq 2\nonumber,
\end{align}
where the final inequality follows by the assumption that $\frac{\hat \mu(x\circ y)}{\bar \eta(x\circ y)} \geq 2R$. 
But by Markov's inequality, the number of such densities $\mu \in \MF$ is bounded above by $|\MF|/2$. Thus, \emph{we have halved the number of densities under consideration!} We can now repeat \cref{alg:basic-sampling}, except using the class $\MF'$ as input instead of $\MF$. %
After at most $\log |\MF|$ repetitions of this procedure, we will either have narrowed $\MF$ down to $\mu^\st$ (in which case outputting a sample from $\mu^\st$ is trivial), or else we will have converged on some class $\MF''$ which has the following property: for some small value of $\ep > 0$, the probability that \cref{alg:basic-sampling} given the class $\MF''$ as input returns  some $x$ of length less than $n$, is at most $\ep$. Note that, in order to check if the aforementioned probability is bounded above by $\ep$, we need to run \cref{alg:basic-sampling} for $\tilde O(1/\ep)$ iterations given the class $\MF''$ as input. 

Let $\bar\eta''$ denote the mean of the densities in $\MF''$, and let $\MT$ denote the set of $x \in \Sigma^{<n}$ of length less than $n$ which \cref{alg:basic-sampling} can possibly output when its input is $\MF''$. Formally, $\MT$ is the set of $x \in \Sigma^{<n}$ so that $\frac{\hat \mu(x \circ y)}{\bar \eta''(x \circ y)} \geq 2R$ for some $y \in \Sigma$ but no nodes $x' \prec x$ satisfy this property. The choice of $\MF''$ gives that $\sum_{x \in \MT} \bar \eta(x) \leq \ep$. Furthermore, since $\frac{\hat\mu(x)}{\bar \eta(x)} \leq 2R$ for all $x \in \MT$, we have
\begin{align}
\sum_{x \in \MT} \mu^\st(x) \leq \sum_{x \in \MT} R \cdot \hat \mu(x) \leq 2R^2 \sum_{x \in \MT} \bar \eta(x) \leq 2R^2 \cdot \ep\nonumber.
\end{align}
It then follows from a straightforward analysis of rejection sampling that by using rejection sampling on the output of \cref{alg:basic-sampling} (conditioned on its output having length $n$, i.e., not returning on Line \ref{line:return-early-intro}), we obtain a sample from a distribution $\hat \nu$ satisfying $\tvd{\mu^\st}{\hat \nu} \leq 2R^2 \cdot \ep$. Furthermore, the number of queries to $\hat \nu(\cdot)$, over all rounds, is bounded above by $\tilde O( n\log |\MF| \cdot \frac{1}{\ep})$: there are at most $\log |\MF|$ rounds on which we halve the size of $\MF$, and each round takes up to $\tilde O(n/\ep)$ queries to run \cref{alg:basic-sampling} $\tilde O(1/\ep)$ times.\footnote{Technically, we need an additional $\tilde O(Rn)$ queries to perform rejection sampling at the end, though after rescaling $\ep$ down by a factor of $R^2$ this term is dominated by the one we have written above.}

This guarantee falls short of the one stated in \cref{thm:main-ub} in two respects:
\begin{enumerate}
  \item First, in \cref{thm:main-ub}, we only need $\poly\log(\frac{1}{\ep})$ iterations to achieve total variation distance at most $\ep$, as opposed to the (exponentially worse) scaling of $1/\ep$ for the above procedure.
  \item Second, the above guarantee scales linearly with $\log |\MF|$, as opposed to the scaling of $\sqrt{\log |\MF|}$ in \cref{thm:main-ub}. 
\end{enumerate}

\paragraph{Towards a more efficient sampling procedure.} To address both of the above shortcomings, the following high-level idea will be useful: if we reach some $x$ with large ratio as in Line \ref{line:hit-xy-intro} of \cref{alg:basic-sampling}, there is no need to ``start over'' again, sampling a new sequence from scratch. Instead, we can try to continue sampling starting from the sequence $x$ at which we previously ``got stuck'', but using the mean density of the new (smaller) class $\MF'$ to sample continuations. There are a number of possible complications with this procedure: 
\begin{enumerate}
\item[\textbf{(C1)}] First, since in general it will be the case that $\bar \eta(x) \neq \mu^\st(x)$, there will be some bias in the probability of choosing $x$ in the first place. While this bias can be corrected using rejection sampling, we need to take care to ensure that such a bias does not accumulate multiplicatively over multiple rounds of this procedure. 
\item [\textbf{(C2)}] A second and more subtle complication is that the basic sampling procedure in \cref{alg:basic-sampling} might sometimes output long sequences $x$ (e.g., of length $|x| = n$) for which $\hat \mu(x)$ is incredibly small (e.g., suppose $\hat \mu(x) < 2^{-n} \cdot \bar \eta(x)$), so that the subsequent rejection sampling step will nearly always reject such sequences. If such $x$ are output by \cref{alg:basic-sampling} with reasonably high probability (e.g., some constant), then it is possible that between each pair of consecutive steps of the overall algorithm where we halve the size of the class of densities, we will need to ``pay'' $O(n)$ oracle calls to $\hat \mu(\cdot)$  in the process of sampling these long sequences $x$ where $\hat \mu(x)$ is small. Overall, this amounts to $O(n \log |\MF|)$ oracle calls, which falls short of the desired $O(n \sqrt{\log |\MF|})$ guarantee in \cref{thm:main-ub}. To get around this complication, a key idea will be to employ \emph{truncation}: if we are in the process of sampling a sequence $x \in \Sigma^{<n}$ and we detect that $\hat \mu(x) < \bar \eta(x)$, then we can simply prematurely reject $x$ with high probability, since it would be very unlikely for $x$ to be ultimately accepted in the subsequent rejection sampling step. 
\end{enumerate}
Below, we describe in further detail how we overcome the above complications. 

\paragraph{Algorithm overview.}
We are now ready to present \cref{alg:pf-learning}, which establishes \cref{thm:main-ub}. The outermost loop of \cref{alg:pf-learning} iterates over multiple steps $k$: each step implements a rejection sampling step (\cref{line:k-rejection-sampling}) on a uniformly random element of a set $\Sterm$ constructed in the course of that step. The bulk of the work goes into producing this set $\Sterm$, which we proceed to describe in detail.

In each step $k$, the algorithm operates over multiple iterations $j$, maintaining a subclass $\MF^j \subset \MF$ at each iteration. The class $\MF^j$ contains densities $\mu$ which are ``consistent with'' the algorithm's previous queries to the oracle $\hat \mu$.  Additionally, at each iteration $j$, the algorithm maintains a set $\MS^{j-1}$ consisting of sequences $z \in \Sigma^{\leq n}$: these sequences correspond to ``partial samples'' which the algorithm is considering, roughly speaking, as ``candidates'' for inclusion in $\Sterm$. Over the course of the algorithm, sequences $z$ in $\MS^{j-1}$ will be extended by adding characters; these sequences may further be removed from $\MS^{j-1}$ (if, e.g., it turns out that many extensions $z' \succ z$ have very small values of $\hat \mu(z')$), or duplicated (if, e.g., it turns out that many extensions $z' \succ z$ have very large values of $\hat \mu(z')$). 

Initially, $\MF^1$ is set to be the set of densities consistent with $\hat \mu$ at all single-character sequences, i.e., ``$y$'' for $y \in \Sigma$ (Line \ref{line:f1-define}). At each iteration $j$, the algorithm chooses an element $z_j$ of $\MS^{j-1}$ which is incomplete (i.e., has length less than $n$; Line \ref{line:choose-zj}). Next, it calls the sub-procedure \cref{alg:ts-rs} with the node $z_j$ and the density class $\MF^j$ as input.  \cref{alg:ts-rs} returns a set $\MT_j \subset \Sigma^{\leq n}$ and a subset $\MG_j \subset \MF^j$; we add all of $\MT_j$ to $\MS_j$ (and remove $z_j$ from $\MS_j$) and restrict $\MF^j$ to only contain densities in $\MG_j$, and then repeat the procedure.

\cref{alg:ts-rs} performs the work  of extending the sequence $z_j$ (and possibly pruning or duplicating it). Below, we denote the class passed to \cref{alg:ts-rs} as input by $\MG$, and the node passed to it as input by $w$; thus, we will have $\MG = \MF^j$ and $w = z_j$ in the context of the above discussion. 

\cref{alg:ts-rs} executes the following procedure over multiple trials $i$ (the number of trials, denoted by $M$ in \cref{alg:ts-rs}, will be taken to be $\poly(R)$). In each trial $i$, \cref{alg:ts-rs} defines a density $\bar\eta_{i,0}$ to be the mean of the densities in $\MG$ (Line \ref{line:initialize-ts-rs}) and lets $w_{i,0}$ take the value of the input sequence $z_j$ from above. It then repeatedly samples a character $y \in \Sigma$ to append to the input sequence, according to the density $\bar \eta_{i,0}$ (Lines \ref{line:extend-1} and \ref{line:extend-2}); the sequence which is being extended in this manner is denoed by $w_{i,t}$ in \cref{alg:ts-rs} (initially, $t = 1$). This procedure is repeated until one of the following 3 events occur:
\begin{itemize}
\item The first event (Line~\ref{line:if-big}) is that for some character $y \in \Sigma$, the mass of $\hat\mu(w_{i,t }\circ y)$, normalized by $\hat \mu(w_{i,t-1})$, is much larger than $\bar\eta_{i,t-1}(w_{i,t} \mid w_{i,t-1})$ (i.e., by a factor of at least $M$). 
This case is analogous to the ``ratio becomes too large'' case in Line \ref{line:hit-xy-intro} of \cref{alg:basic-sampling}. In this case, the current sequence $w_{i,t}$ is added to the multiset $\MT$ which will be output by \cref{alg:ts-rs} with an appropriate probability. Furthermore, the class $\Gret$, which will be returned by \cref{alg:ts-rs} (along with $\MT$), is restricted to contain only densities $\eta$ satisfying $\hat \mu(w_{i,t} \circ y)/\eta(w_{i,t} \circ y) \leq R$, thus eliminating densities inconsistent with $\hat\mu$ at $w_{i,t} \circ y$. As we will show, this step ensures that the returned class $\Gret$ contains at most half the number of densities as the input class $\MG$. The trial $i$ then terminates.

\item The second event (Line~\ref{line:else-lengthn}) is that $w_{i,t}$ reaches full length $|w_{i,t}| = n$, i.e., we have sampled a complete sequence. In this case, $w_{i,t}$ is added to $\MT$ with appropriate probability (Lines~\ref{line:xi-else}--\ref{line:add-to-T-end}), and the trial $i$ terminates. 

\item The third event (Line~\ref{line:if-small}) is that the mass of $\hat \mu(w_{i,t})$, normalized by $\hat \mu(w_{i,t-1})$ is much \emph{smaller} than $\bar \eta_{i,t-1}(w_{i,t} \mid w_{i,t-1})$ (i.e., by a factor of at least $M$). In this case, with an appropriate probability, the trial $i$ simply ends (Line \ref{line:break-m-trial-end}). Otherwise, we increment $t$ and continue on sampling from $w_{i,t}$.  

We refer to this as the \emph{truncation} step, addressing complication \textbf{(C2)} above: rather than continuing to extend a sequence whose density under $\hat\mu$ is very small (and which would therefore be nearly always rejected in a final rejection sampling step), the algorithm truncates the trial early with high probability.

\end{itemize}

\paragraph{Unbiasedness: addressing complication \textbf{(C1)}.}
The acceptance probabilities in \cref{alg:ts-rs} are calibrated to ensure the following key \emph{unbiasedness property}: for any node $w'$ in a certain ``final'' set of nodes descending from the input node $w$, the expected number of occurrences of $w'$ in the output multiset $\MT$ is exactly $\hat\mu(w')/\hat\mu(w)$ (see \cref{lem:final-probability}). %
This unbiasedness property addresses complication \textbf{(C1)}: even though the proposal distribution $\bar\eta$ may differ substantially from $\mu^\star$, the importance-weighting corrections prevent any accumulation of bias.

In order to show that the algorithm samples from the right distribution $\mu^\st$, by the final rejection sampling step on Line \ref{line:k-rejection-sampling} of \cref{alg:pf-learning}, the key point is to control the extent to which $|\Sterm|$ can exceed $2RN \asymp \sqrt{\log |\MF|} \cdot \poly(R)$. To do this, we reason as follows: a consequence of unbiasedness as discussed above (namely, \cref{lem:final-probability})  is that $M_j := \sum_{x \in \MS^j} \mu^\star(x)/\hat\mu(x)$ is a {martingale} over the iteration index $j$ (\cref{lem:mi-mtgl}). Since $\mu^\star(x) = \hat\mu(x)$ for complete sequences $x \in \Sigma^n$, at termination ($j = J$) this martingale equals $|\Sterm|$. The desired control on the truncation in the final rejection sampling step then follows from a standard martingale concentration argument, together with the corresponding upper-tail overshoot bound (\cref{lem:mtgl-concentration}). We remark that this martingale concentration is crucial for improving the naive upper bound which was linear in $\log |\MF|$ to one which grows with $\sqrt{\log |\MF|}$: the number of iterations $J$ in \cref{alg:pf-learning} is upper bounded by $\log_2 |\MF|$ (\cref{lem:iteration-bound}) since each iteration halves the size of $\MF$, and the error term in martingale concentration grows only as $\sqrt{J} \asymp \sqrt{\log |\MF|}$.

\paragraph{Bounding the query cost of \cref{alg:pf-learning}.}
A key step in bounding the query cost of \cref{alg:pf-learning} is \cref{lem:query-cost}, which bounds the expected number of oracle queries made by the invocation of \cref{alg:ts-rs} on iteration $j$ in \cref{alg:pf-learning} by the quantity
\begin{align}
\E \left[\left( \sum_{w' \in \MT_j} \frac{\mu^\st(w')}{\hat \mu(w')} \cdot |w'| \right) - \frac{\mu^\st(w)}{\hat \mu(w)} \cdot |w|\right] \label{eq:query-cost-key-step}.
\end{align}
Recall that $\MT_j$ denotes the output set of \cref{alg:ts-rs};  further, the above expectation is taken over the execution of \cref{alg:ts-rs}, for the fixed input node $w = z_j$. A key step in deriving this bound is the use of the \emph{truncation} step discussed above: roughly speaking, the truncation step ensures that we only query nodes $w'$ which have a decent probability of ending up in the set $\MT_j$ (thus addressing complication \textbf{(C2)}). 

Thus, using Azuma's inequality, in order to obtain a high-probability bound on the \emph{overall} number of oracle calls, it suffices to bound the sum of the quantity \cref{eq:query-cost-key-step} over all iterations $j$ of \cref{alg:pf-learning}. This argument is carried out in \cref{lem:oracle-calls}: the key step is the observation that the quantity inside the expectation in \cref{eq:query-cost-key-step} is the difference $W_j - W_{j-1}$, where $W_j := \sum_{x \in \MS^j} \frac{\mu^\star(x)}{\hat\mu(x)} \cdot |x|$ is a ``weighted'' variant of the martingale $M_j$, where each sequence in $\MS^j$ is weighted by its length. Namely, the positive term in \cref{eq:query-cost-key-step} corresponds to the contribution of the new sequences in $\MT_j$, while the negative term corresponds to the effect of removing the node $z_j$ from $\MS_j$. Finally, we note that our high-probability bound on $|\Sterm|$ from \cref{lem:mtgl-concentration} allows us to bound $W_J$ by $\sqrt{\log |\MF|} \cdot n \cdot \poly(R, \log \frac{1}{\delta}, \log\log |\MF|)$, and a further application of Azuma's inequality then gives the desired bound on the sum of the quantities in \cref{eq:query-cost-key-step} over all iterations $j$.

\subsection{Lower bound}
\label{sec:tech-lowerbound}
To prove \cref{thm:main-lb}, we consider the case that $\Sigma = \{0,1\}$. We will construct a family of distributions $\MF$ over $\{0,1\}^n$, which has the following property: each $\mu^\st \in \MF$ is associated with some value $\vinit \in \{0,1\}$, so that $\mu^\st(x_1 = \vinit) \approx \frac{1+\gamma}{2+\gamma} > \frac 12 $, for some constant $\gamma \in (0,1)$. In other words, sampling from $\mu^\st$ to accuracy $\ll \gamma$ requires one to learn $\vinit$: formally, for any distribution $\hat \nu$ we have $\mu^\st(x_1 = \vinit) \gtrsim \frac{1+\gamma}{2+\gamma} - \tvd{\mu^\st}{\hat \nu}$, so if $\tvd{\mu^\st}{\hat \nu} \ll \gamma$, then the first bit of a sample from $\hat \nu$ is equal to $\vinit$ with probability at least $\frac{1}{2} + \Omega(\gamma)$. 

We will show that, for an appropriate choice of $\MF$, 
with high probability over a random draw of $\mu^\st \sim \MF$ and an appropriately chosen $\hat \mu$ satisfying \cref{asm:linfty-close-gen} with respect to $\mu^\st$, any algorithm $\Alg$ making $o(n^2/\log^2 n)$ queries to $\hat \mu(\cdot)$ cannot output $\vinit$ with probability $\frac{1}{2} + \Omega(1)$, which will yield the desired result as $\gamma$ was chosen to be constant. Below we give an overview of the construction of $\MF$, and then describe why subquadratically many queries are insufficient to learn $\vinit$ for a random draw of $\mu^\st \sim \MF$.

\paragraph{Construction of $\MF$: overview.} Each element of $\MF$ is parametrized by a tuple $\tau = (\vinit, \phi, \psi)$, where $\vinit$ is as described above, and $\phi, \psi$ are parameters which will be described below. Associated to each tuple $\tau$, we define a density $\mu_\tau \in \Delta(\{0,1\}^n)$, as well as a mapping $\hat \mu_\tau : \{0,1\}^{\leq n} \to \BR_{\geq 0}$. To construct $\hat \mu_\tau$ and $\mu_\tau$, we will interpret $\{0,1\}^n$ as the complete binary tree, so that its edges are in one-to-one correspondence with $\{0,1\}^{\leq n}$. We will define edge weights for each edge of this tree, formalized by mappings $ \hat\omega_{\tau} : \{0,1\}^{\leq n} \to \BR_{\geq 0}$. We then define $\hat \mu_\tau(x)$ to be proportional to the product of the edge weights on the root-to-leaf path corresponding to $x$. Formally, for $x \in \{0,1\}^j$ for some $j \in [n]$, we define
\begin{align}
\hat \mu_\tau( x) = \frac{1}{\hat Z_j}\prod_{i=1}^{j} \hat \omega_\tau(x_{1:i}), \qquad \hat Z_j = \sum_{x' \in \{0,1\}^j} \prod_{i=1}^j \hat \omega_\tau(x'_{1:i})\label{eq:hat-mu-Z-intro}.
\end{align}
Our choice of $\hat \omega_\tau$ will ensure that the normalizing constants $\hat Z_j$ do not depend on $\tau$ (see \cref{lem:zhat-invariant}). 
Finally, we take $\mu_\tau(x) = \hat \mu_\tau(x)$ for $x \in \{0,1\}^n$; the choice of $\hat Z_n$ above ensures that $\mu_\tau(x)$ is indeed a distribution on $\{0,1\}^n$. For $j \in [n]$ and $x \in \{0,1\}^j$, with slight abuse of notation we let $\mu_\tau(x)$ denote the marginal distribution of $\mu_\tau$ on $x$, i.e., $\mu_\tau(x) = \sum_{x' \in \{0,1\}^{n-j}} \mu_\tau(x \circ x')$; then we have $\mu_\tau(x) = \mu_\tau(x \circ 0) + \mu_\tau(x \circ 1)$ for all $x \in \{0,1\}^{<n}$. Since the sum of the edge weights on the edges leaving different nodes may differ, it can be the case that $\mu_\tau(x) \neq \hat \mu_\tau(x)$ and $\hat \mu_\tau(x) \neq \hat \mu_\tau(x \circ 0) + \hat \mu_\tau(x \circ 1)$ for $x \in \{0,1\}^{<n}$.

The edge weights $\hat \omega_\tau$ at step $1$ are given as follows: for all $\tau$, $\hat \omega_\tau(1) = \hat \omega_\tau(0) = 1 + \gamma$. %
Recall, however, our desideratum above that $\mu_\tau(x_1 = \vinit) \geq \frac{1}{2} + \Omega(\gamma)$ for each $\tau$. If the edge weights $\hat \omega_\tau(x)$ are ``evenly balanced'' for $x$ whose first bit is $0$ and for $x$ whose first bit is $1$, then it will hold that $\mu_\tau(x_1 = 0) \approx \mu_\tau(x_1 = 1)$, which violates this lower bound on $\mu_\tau(x_1 = \vinit)$. Thus, we need the edge weights $\hat \omega_\tau(x)$ for $x$ whose first bit is $1-\vinit$ to be noticeably smaller than those for $x$ whose first bit is $\vinit$. 

On the other hand, this ``bias'' in the edge weights needs to be undetectable to any algorithm $\Alg$ which can query sub-quadratically many values of $\hat \omega_\tau(\cdot)$. Indeed, for any $x \in \{0,1\}^{j}$ ($j \in [n]$) $\Alg$ can determine the value of $\hat \omega_\tau(x)$ by querying $\hat \mu_\tau(x)$ and $\hat \mu_\tau(x_{1:j-1})$. How do we achieve this goal? At a high level, we will divide the set of coordinates $\{ 2, 3, \ldots, n\}$ into groups of size $k \approx 2 \log n$: in each group, we will choose a fraction $\ep = \Theta(\log(n)/n)$ of the $2^k$ values $\{0,1\}^k$ corresponding to the coordinates in that group for which we make the edge weights $\hat \omega_\tau(\cdot)$ \emph{smaller} by roughly a $(1+\gamma)$-factor in the case that $x_1 = 1-\vinit$, as compared to the case that $x_1 = \vinit$. The precise choice of these $\ep \cdot 2^k$ values is determined by $\tau$, and will be uniformly random for a uniform choice of $\tau$. Since $\ep = \Theta(\log(n)/n)$ and $2^k \approx n^2$, it is straightforward to see that any $\Alg$ making $o(n^2 /\log^2 n)$ queries to $\hat \omega_\tau(\cdot)$ cannot tell whether the edge weights are decreased in this manner (which translates into showing that $\Alg$ cannot determine the value of $\vinit$). Moreover, since there are $n/\log(n)$ groups and for each group a fraction of $\ep \sim \log(n)/n$ values of $x$ have their edge weight decreased, with high probability over a full sample $x \sim \mu_\tau$ ($x \in \{0,1\}^n$) there will be \emph{some} group whose edge weight is decreased.

The astute reader will notice a potential flaw in the above line of reasoning: while, as we noted above, one can simulate a single query to $\hat \omega_\tau(\cdot)$ using two queries to $\hat \mu_\tau(\cdot)$, in fact this is not necessarily the best that $\Alg$ can possibly do. Indeed, querying $\hat \mu_\tau(x)$ for $x \in \{0,1\}^j$ divulges information about \emph{up to the $j$ values} $\hat \omega_\tau(x_1), \hat \omega_\tau(x_{1:2}), \ldots, \hat \omega_\tau(x)$. To avoid the possibility that $\Alg$ can ``batch query'' $\hat \omega_\tau(\cdot)$ in this manner, we will make it so that the vast majority of $x \in \{0,1\}^j$ in fact have $0$ edge weight, i.e., $\hat \omega_\tau(x) = 0$. Thus, to find $x$ with nonzero edge weight, $\Alg$ needs to ``walk'' on a path starting from the root of the tree, and thus it can only learn information about a single value of $\hat \omega_\tau(\cdot)$ with each query to $\hat \mu(\cdot)$. 
Below, we describe these ideas in more detail.

\paragraph{Construction of the edge weights $\hat \omega_\tau$.} We break the coordinates $\{2, 3, \ldots, n\}$ into $r$ groups of size $2k$ (for $k \approx 2 \log n$), indexed by a value $a \in \{0,1, \ldots, r-1 \}$. For each $a$, and $x \in \{0,1\}^{2(a+1)k+1}$, we define $\hat \omega_\tau(x_{1:2ak+2}), \hat \omega_\tau(x_{1:2ak+3}), \ldots, \hat \omega_\tau(x)$, in a way that depends on the tuple $\tau = (\vinit, \phi, \psi)$, as follows. Let us write 
\begin{align}
\base_a(x) = x_{1:2ak+1}, \quad \query_a(x) = x_{2ak+2:2ak+k+1}, \quad \key_a(x) = x_{2ak+k+2:2(a+1)k+1}, \nonumber
\end{align}
so that $x = \base_a(x) \circ \query_a(x) \circ \key_a(x)$. 
Then: 
\begin{itemize}
\item We set $\hat \omega_\tau(\base_a(x) \circ \query_a(x)_{1:i}) = 1$ for each $i \in [k]$. 
\item To define the remaining values of $\hat \omega_\tau$, we need to specify the data contained in $\phi, \psi$. First, $\psi$ is a mapping $\psi : \{0,1\}^{\leq n} \to (\{0,1\}^k)^{\{0,1\}^k}$ (in words, the range of $\psi$ consists of vectors with entries in $\{0,1\}^k$ and whose entries are also indexed by $\{0,1\}^k$). Moreover, $\phi$ is a mapping which takes each $y \in \{0,1\}^n$ to a partition $\{0,1\}^k = A(y) \sqcup B(y) \sqcup C(y)$, with $|A(y)| = 2^{k} - \ep \cdot 2^{k-1}, |B(y)| = \ep \cdot 2^{k-1}, |C(y)| = 2^{k-1}$. We denote the set of partitions by $\Vvalid_{k,\ep}$. We will be interested in the value of $\phi, \psi$ evaluated at $\base_a(x)$ for $x \in \{0,1\}^{2(a+1)k+1}$ as above. 
\item For $1 \leq b \leq k$, we set $\hat \omega_\tau(\base_a(x) \circ \query_a(x) \circ \key_a(x)_{1:b}) = 0$ if the $b$ bits $\key_a(x)_{1:b}$ do not agree with the first $b$ bits of $\psi(\base_a(x))_{\query_a(x)}$; this requirement accomplishes the goal mentioned above of forcing the algorithm to ``walk from the root downwards'' to uncover new values of $x'$ for which $\hat \omega_\tau(x') \neq 0$. Indeed, it is very unlikely that the algorithm will correctly guess the value of $\psi(\base_a(x))_{\query_a(x)} \in \{0,1\}^k$, without first querying (most of) the values of $\hat \mu_\tau(\base_a(x) \circ \query_a(x) \circ \key_a(x)_{1:b})$ (for $b \in [k]$), as $\psi$ is unknown (and will be chosen uniformly at random). %

\item If further $b < k$ and the $b$ bits in the previous bullet point do agree, then we set $\hat\omega_\tau(\base_a(x) \circ \query_a(x) \circ \key_a(x)_{1:b}) = 1$.

\item Finally, in the case that $b = k$, we choose $\hat \omega_\tau(x) \in \{1,1+\gamma\}$ as follows: first, if $x_1 = \vinit$, then we set
\begin{align}
\hat \omega_\tau(x) = \begin{cases}
1+\gamma & \text{if $\query_a(x) \in A(x) \cup B(x)$} \\
1 & \text{if $\query_a(x) \in C(x)$}.
\end{cases}\label{eq:intro-vinit}
\end{align}
If $x_1 = 1-\vinit$, then \emph{typically} we will set
\begin{align}
\hat \omega_\tau(x) = \begin{cases}1+\gamma & \text{if $\query_a(x) \in A(x)$} \\
1 & \text{if $\query_a(x) \in B(x) \cup C(x)$}.
\end{cases}\label{eq:intro-not-vinit}
\end{align}
In other words, for those $x$ with $\query_a(x) \in B(x)$, the value of $\hat \omega_\tau(x)$ is smaller by a factor of $1+\gamma$ in the event that $x_1 = 1-\vinit$ as compared to the event that $x_1 = \vinit$. We write ``typically'' above due to the following exception: if there is some previous block $a' < a$ for which $\query_{a'}(x) \in B(\base_{a'}(x))$, then we set $\hat \omega_\tau(x)$ as in \cref{eq:intro-vinit}. This exception is needed to ensure that we do not ``overshoot'' and decrease the probability of $x$ by more than a $1+\gamma$ factor. 
\end{itemize}

\paragraph{Establishing the lower bound.} To establish that the above construction proves \cref{thm:main-lb}, we first need to verify that the ratio $\hat \mu_\tau(x) / \mu_\tau(x)$ is a constant for all $x \in \{0,1\}^{\leq n}$. We will in fact be able to show that this ratio lies in $[(1+\gamma)^{-2}, (1+\gamma)^2]$ (see \cref{lem:mu-muhat-ratio}). Roughly speaking, this holds because the ``exception'' described following \cref{eq:intro-not-vinit} ensures that we only modify the ratio by at most a factor of $\approx 1+\gamma$ in the aggregate over all levels of the tree.

Next, we need to verify that $\mu_\tau(x_1 = \vinit) \approx \frac{1+\gamma}{2+\gamma}$ (which ensures that accurately sampling from $\mu_\tau$, namely to error $\ll \gamma$, enables predicting the value of $\vinit$ correctly with probability $\frac 12 + \Omega(\gamma)$, when $\tau$ is drawn uniformly at random).  This holds due to the following reasoning: for any fixed $\tau$, for $x \sim \mu_\tau$, conditioned on $x_1 = 1-\vinit$, for each $0 \leq a < r$, the probability that $\query_a(x) \in B(\base_a(x))$ is roughly $\ep = \Theta(\log(n) \cdot \log(1/\gamma)/n)$. Since there are $r$ values of $a$ and $r = \Theta(n / \log(n))$, the probability that there is some $a$ with $\query_a(x) \in B(\base_a(x))$ is roughly $1 - (1-\ep)^r \approx 1 - e^{-\ep r} \approx 1 - c\gamma$, for a small constant $c$. This means that, for the vast majority of $x$ with $x_1 = 1-\vinit$, there will be some block $a$ where the weight $\hat \omega_\tau(x_{1:2(a+1)k+1})$ is decreased by a factor of $1+\gamma$ (per \cref{eq:intro-not-vinit}). Since $\hat \mu(x)$ is proportional to the product of $\hat \omega_\tau(x_{1:i})$ for $i = 1, 2, \ldots, n$, this leads to the desired claim. 

The above reasoning is carried out formally in \cref{lem:mu-muideal}: formally, in \cref{sec:lower-bound} we define an alternative measure $\muideal$ which has the property that $\muideal(x_1 = \vinit) = \frac{1+\gamma}{2+\gamma}$. Then \cref{lem:mu-muideal} shows that $\muideal$ and $\mu_\tau$ are $e^{-\ep r}$-close in total variation distance. 

The final step in the proof, which is the most technical, is to show that no algorithm can output $\vinit$ with probability $\frac 12 + \Omega(\gamma)$ after making only $o(n^2 / \log^2 n)$ queries to $\hat \mu_\tau(\cdot)$, over a uniformly random draw of $\tau$. The main technical lemma to show this is given in \cref{lem:close-consistent-query}, which is stated informally below:
\begin{lemma}[Informal version of \cref{lem:close-consistent-query}]
  \label{lem:close-consistent-intro}
Fix any sequence of past queries of $\Alg$, denoted $X_1, \ldots, X_{j-1} \in \{0,1\}^{\leq n}$, as well as some response values for these queries, $\mu_1', \ldots, \mu_{j-1}' \geq 0$. For $b \in \{0,1\}$, let $\BP_b$ denote the distribution of $\tau$ (which is uniform) conditioned on $\vinit = b$. Let $\ME$ denote the event that $\hat \mu_\tau(X_\ell) = \hat \mu_\ell'$ for each $\ell \in [j-1]$. Then, assuming $\BP_0(\ME), \BP_1(\ME) > 0$, for any ``current'' query $X_j$, we have
\begin{align}
\kld{\BP_0(\hat \mu_\tau(X_j) = \cdot \mid \ME)}{\BP_1(\hat \mu_\tau(X_j) = \cdot \mid \ME)}\leq O(\ep^2)\nonumber.
\end{align}
\end{lemma}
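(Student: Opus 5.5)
We will couple the two conditional distributions of $\tau$ through a bijection that changes only $\phi$ on a few nodes. The plan starts by finding what $\hat\mu_\tau(X_j)$ actually depends on.

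\textbf{Step 1: normalizations.} By \cref{lem:zhat-invariant}, the constants $\hat Z_j$ do not depend on $\tau$. So $\hat\mu_\tau(X_j)$ is a fixed multiple of $\prod_{i\le |X_j|}\hat\omega_\tau((X_j)_{1:i})$. Every factor on this path is $0$, $1$, or $1+\gamma$.

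\textbf{Step 2: which weights are unrevealed.} Walk down the path of $X_j$ block by block. In block $a$, $\base_a(X_j)$ and $\query_a(X_j)$ are determined by $X_j$. The key bits must match $\psi(\base_a(X_j))_{\query_a(X_j)}$, or a weight is $0$ and the answer is $0$ from then on.

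\emph{(i)} Suppose this pair (base, query) was fully traversed on $\ME$ by an earlier query. This means some earlier $X_\ell$ reached the block's last key bit with nonzero value. Then the factor it contributed is essentially determined under $\ME$, up to the same unknown that earlier answers pin down. Here we use that dividing consecutive answers recovers single edge weights along previously walked paths.

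\emph{(ii)} Otherwise, the answer is $0$ unless $X_j$ guesses a fresh uniformly random key. Conditioned on $\ME$, $\psi$ on unvisited entries is still uniform and independent of $\vinit$.

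So the only new information about $\vinit$ in $\hat\mu_\tau(X_j)$ is the membership $\query_a(X_j)\in B(\base_a(X_j))$, versus $A$ or $C$, at blocks whose final key level is reached for the first time. This matters only when $(X_j)_1=1-\vinit$ and no earlier block lies in $B$, and it is observable only in the event that the key is correct. If both the zero pattern and the $\psi$ part have the same conditional law under $\BP_0$ and $\BP_1$, these are the only terms that can contribute to the divergence.

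\textbf{Step 3: coupling.} Define a bijection on $\tau$ that flips $\vinit$ and leaves $\psi$ unchanged. It should alter $\phi$ only at bases where the query values revealed so far are consistent with both labellings. Conditioned on $\ME$, the partition $\phi(\base_a(X_j))$ is uniform over $\Vvalid_{k,\ep}$, subject to the constraints from the queries already revealed at that base. Each revealed query value is known to lie either in $C$ or in $A\cup B$, with the latter refined to $A$ versus $B$ only when $x_1=1-\vinit$.

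We then show two things:
\begin{itemize}
\item Under $\BP_b(\cdot\mid\ME)$, the probability that a new query value lands in $B$ is $p_b=\ep\cdot(1\pm O(m/2^k))$, where $m$ is the number of previously revealed values at that base.
\item The response distributions under $\BP_0$ and $\BP_1$ are mixtures that differ only through this $B$-indicator.
\end{itemize}
Then $X_j$'s answer has a distribution of the form (scaled) $1+\gamma$ versus $1$, with probabilities differing by at most $O(\ep)$ between the two hypotheses. The probabilities of the value $1+\gamma$ are bounded away from $0$ and $1$, because $|C|=2^{k-1}$. Since $\chi^2$ upper bounds KL, this gives $\kld{\cdot}{\cdot}\le O(\ep^2)$, with the constant depending on $\gamma$.

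\textbf{Main obstacle.} The hardest step is Step 3's claim that conditioning on $\ME$ does not build up correlation between $\vinit$ and $\phi$ at the current base. Previous answers at other bases, and the ``exception'' rule tying blocks together through earlier $B$-memberships, could in principle make the posterior on $\phi(\base_a(X_j))$ depend on $\vinit$. To handle this, I would first try to show that the posterior factorizes across distinct bases given $\vinit$, together with the indicator ``some earlier block of this path is in $B$''. That indicator is itself determined by earlier answers along the prefix, because it changes the weights at later blocks. Given this, the exception can be treated as a known, measurable event. On this event the answer has an identical law under both hypotheses; off it, we are in the one-block comparison above.
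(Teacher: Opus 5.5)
Your overall picture matches the paper's Case 2 of \cref{lem:close-consistent-query}. A block-final answer is a known constant times $I_j(1+\gamma S_j)$. The key indicator $I_j$ has the same conditional law under both hypotheses by \cref{lem:conditional-independence}. Only the bit $S_j$ depends on the hypothesis, and its conditional probabilities lie in $[1/3,2/3]$ and differ by $O(\ep)$. However, two steps of your argument fail as written.

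\textbf{Step 2(i) needs the ideal-sequence hypothesis, and is false without it.} The formal \cref{lem:close-consistent-query} assumes $X_1,\ldots,X_j$ is ideal (\cref{def:consistent}): every block boundary $\base_a(\cdot)$ below a query was itself queried earlier. Only then does ``dividing consecutive answers'' recover each block-final weight individually. A query that jumps past unqueried boundaries reveals only a product of several unknown weights, and a later query can split that product.

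Concretely, let $X_1$ be a leaf with nonzero answer, which has probability $2^{-kr}>0$ under both hypotheses. Then $\ME$ reveals only $s=\sum_a S_a$. Let $X_2$ be the prefix of length $2(r/2)k+1$, which reveals $\sum_{a<r/2}S_a$.
\begin{itemize}
\item Under $\vinit=(X_1)_1$, the $S_a$ are i.i.d.\ Bernoulli$(1/2)$, so the partial sum is hypergeometric given $s$.
\item Under $\vinit\neq(X_1)_1$, one has $S_a=V_a-\One{a \text{ is the first block in } B}$ with $V_a$ i.i.d.\ Bernoulli$(1/2)$. Since $\ep r\approx 2\log(48/\gamma)$, the first $B$-block falls in the first half with probability bounded away from $1/2$.
\end{itemize}
The conditional means therefore differ by $\Theta(1)$ against a standard deviation $\Theta(\sqrt r)$, so the divergence is of order $1/r\gg\ep^2$. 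No block is reached ``for the first time'' by $X_2$, yet it leaks information about $\vinit$. This is why the paper restricts the lemma to ideal sequences and handles skipping queries separately, via $\tAlg$ and \cref{clm:hatb-hatb0}.

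\textbf{Your resolution of the ``main obstacle'' rests on a false claim.} The indicator ``some earlier block on this path lies in $B$'' (the paper's $R_j$ when $(X_j)_1\neq\vinit$) is \emph{not} determined by earlier answers. On an unswitched prefix, a query value in $B$ produces weight $1$, exactly as a value in $C$ does. Switching only changes the \emph{distribution} of later weights. So even for ideal sequences, the posterior of $\phi(\base_a(X_j))$ given $\ME$ is a mixture over this latent bit, with hypothesis-dependent weights. The latent bit also changes what the revealed values constrain: $\tilde A\subset A$ versus $\tilde A\subset A\cup B$. Your Step 3 bijection is never constructed, and it would have to cope with exactly this.

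The paper avoids this with a bracketing argument.
\begin{itemize}
\item It additionally conditions on $\psi$ and on $\phi(d)$ for every other base $d\in\MD$ touched by earlier queries ($|\MD|\le q-1$). This fixes $R_j$.
\item By $q$-wise independence, $\phi(\base_a(X_j))$ is then uniform subject to the constraints from the revealed sets $\tilde A,\tilde C$, which are fixed by $\ME$.
\item So each conditional value of $\BP(S_j=1\mid\cdot)$ is either $\frac{2^{k-1}-|\tilde A|}{2^k-|\tilde A|-|\tilde C|}$ or $\frac{2^{k-1}-\ep 2^k-|\tilde A|}{2^k-|\tilde A|-|\tilde C|}$.
\end{itemize}
Every such value, under either hypothesis, lies in one interval of width $O(\ep)$ inside $[1/3,2/3]$. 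Hence each hypothesis's mixture lies there too, whatever its weights. No factorization across bases and no measurability of $R_j$ is needed.
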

Roughly speaking, \cref{lem:close-consistent-intro} gives that the conditional distributions of the output of the $j$th query to $\hat \mu_\tau$, \emph{for an arbitrary sequence of previous queries to $\hat \mu_\tau$ and their answers}, is close in the cases where we condition $\vinit$ to equal $1$ or $0$. While the full proof of the lemma involves several complexities due to the necessity of considering which part of each ``block'' $X_j$ is querying, as well as conditioning on past queries, the main idea boils down to the fact that the distributions $\BP_0, \BP_1$ differ only in their values of $\hat \omega_\tau(x)\in \{1,1+\gamma\}$ for $x$ in the case of \cref{eq:intro-vinit,eq:intro-not-vinit} above. Namely, the probability that $\hat \omega_\tau(x) = 1+\gamma$ is larger by roughly $|B(x)|/2^k = \ep$ in the event that $x_1 = \vinit$ (as compared to the event that $x_1 = 1-\vinit$). Here we use that  the random choice of $\tau$ induces a uniformly random value for the set $B(x) \subset \{0,1\}^k$. Further, the probability that $\hat \omega_\tau(x) = 1+\gamma$ is bounded in $[1/3, 2/3]$. Then the claim of \cref{lem:close-consistent-intro} follows because the KL divergence between Bernoulli random variables whose means are separated by $\gamma$ and bounded in $[1/3, 2/3]$ is $O(\ep^2)$. 

Having established \cref{lem:close-consistent-intro}, by applying the chain rule for KL divergence we obtain that, comparing the event when $\vinit = 0$ and the event when $\vinit = 1$, the distribution of the transcript of $\Alg$'s computation can differ in these two cases by at most $O(\ep^2 \cdot n^2 / \log^2 n) \ll 1$ in KL divergence. This establishes the desired claim.

\paragraph{On the role of $\psi$.}
The above argument has neglected to mention one key aspect of the proof, which motivates the inclusion of $\psi$ in the tuple $\tau$ as well as the ``key'' blocks $\key_a(x)$ defined above. In particular, \cref{lem:close-consistent-intro} implicitly assumes that each query that $\Alg$ makes reveals information about $\hat\omega_\tau(x)$ for a \emph{single} value of $x \in \{0,1\}^{\leq n}$. However, since $\hat \mu_\tau(x)$ is proportional to the \emph{product} $\prod_{j=1}^{|x|} \hat \omega_\tau(x_{1:j})$ (see \cref{eq:hat-mu-Z-intro}), $\Alg$ can in principle learn information about up to $n$ values of $\hat \omega_\tau(\cdot)$ with a single query to $\hat \mu_\tau(\cdot)$. To overcome this obstacle, we introduce  a modified algorithm $\tAlg$, which simulates $\Alg$ in the following manner: if, at any point, $\Alg$ ``skips ahead'' in the sense that it queries $\hat \mu_\tau(x)$ for some $x$ for which it has never queried a ``previous block'' (i.e., there is some $a < \lfloor (|x| - 1) / (2k) \rfloor$ for which $\Alg$ has never queried $\hat \mu_\tau(\cdot)$ at $x_{1:2ak+1}, x_{1:2ak+2}, \ldots, x_{1:2ak+2k}$), then $\tAlg$ does not actually query $\hat \mu_\tau(\cdot)$ at such $x$ \emph{and proceeds as if the answer to the query is $0$}. 

Roughly speaking, \cref{lem:close-consistent-intro} above applies to $\tAlg$ as stated, since the definition of $\tAlg$ ensures that it can only ``learn information about'' a single block $0 \leq a < r$ with each query. 
The main remaining step in the proof of \cref{thm:main-lb} is to show that the execution of $\tAlg$ is identical to that of $\Alg$, with high probability. This task is accomplished in the proof of \cref{clm:hatb-hatb0}. The proof uses the fact that for a uniform choice of $\tau$, if $\Alg$ does not query the value of $\hat \mu_\tau(\cdot)$ at any of $x_{1:2ak+1}, x_{1:2ak+2}, \ldots, x_{1:2ak+2k}$ prior to querying $x$, then $\psi(\base_a(x))_{\query_a(x)} \neq \key_a(x)$ with high probability (namely, $1-2^{-k}$); this holds because conditioned on the history of execution, $\psi(\base_a(x))_{\query_a(x)}$ is uniformly random in $\{0,1\}^k$, even conditioned on $\key_a(x)$.

\paragraph{Making the class $\MF$ small via $q$-wise independence.} 
In the above construction, if we naively choose $\phi : \{0,1\}^{\leq n} \to \Vvalid_{k,\ep}$ and $\psi : \{0,1\}^{\leq n} \to (\{0,1\}^k)^{\{0,1\}^k}$ to be uniformly random functions on their respective domains, then the number of possible values for the tuple $(\vinit, \phi, \psi)$ grows \emph{doubly exponentially} with $n$, which falls far short of the claim in \cref{thm:main-lb} that the number of possibilities for $\mu_\tau$  grows only as $\exp(\tilde O(n^4))$. To correct for this, we in fact choose $\phi$ and $\psi$ to be drawn from \emph{$q$-wise independent families} of functions, where $q = O(n^2 / \log^2 n)$ is an upper bound on the number of queries allocated to $\Alg$. By a standard upper bound on the size of $q$-wise independent families, this leads to the desired upper bound and, roughly speaking, lets us treat $\phi, \psi$ in the proof above as if they are uniformly random functions (as the algorithm can only ever access their values on at most $q$ different inputs).

\section{Proof of the main upper bound}
\label{sec:ub-proof}
In this section, we prove our main upper bound, stated formally as \cref{thm:main-ub-formal} below. We begin by reviewing the setup and terminology. We fix an alphabet $\Sigma$ and a positive integer $n$ denoting the length of sequences we will sample from.  We are given a class $\MF \subset \Delta(\Sigma^n)$ of probability measures (i.e., $\MF$ is \emph{known} to the algorithm). Suppose some unknown measure $\mu^\st \in \MF$ is fixed. With a slight abuse of notation, we write, for any $\mu \in \Delta(\Sigma^n)$ and $x \in \Sigma^{i}$, $i < n$, $\mu(x) = \mu(\{ x \circ x' \mid x' \in \Sigma^{n-i} \})$; in words, we use $\mu(\cdot)$ to denote both the probability measure on $\Sigma^n$ as well as its marginals on $\Sigma^{< n}$.

We aim to come up with an algorithm that has knowledge of $\MF$ (but not $\mu^\star$), and which can make queries to a mapping $\hat \mu: \Sigma^{\leq n} \to \BR_{\geq 0}$ which satisfies the following assumption: 
\begin{assumption}[$\ell_\infty$-closeness]
  \label{asm:linfty-close-gen}
  We assume that $\hat \mu : \Sigma^{\leq n} \to \BR_{\geq 0}$ is a mapping so that, for some $R > 0$, for all $x \in \Sigma^{\leq n}$,
  \begin{align}
\frac{1}{R} \leq \frac{\mu^\star(x)}{\hat \mu(x)} \leq R\label{eq:R-ratio-asm}.
  \end{align}
  Moreover, we assume that $\mu^\star(x) = \hat \mu(x)$ for all $x \in \Sigma^n$. We also adopt the convention that $\mu(\emptyset) = 1$.\footnote{Note that $\frac{0}{0}$ in the context of \cref{eq:R-ratio-asm} is to be interpreted as $1$.}
\end{assumption}

Recall, as we discussed in \cref{sec:inference-llms}, \cref{ques:tilted-ttt} asks whether \emph{test-time training} can speed up LLM inference. In our present notation, this question becomes the following: \emph{does the knowledge that $\mu^\star \in \MF$ allow us to sample more efficiently from $\mu^\star$?} Recall that $\MF$ corresponds to the class of value functions which the test-time training algorithm is optimizing over. \cref{thm:main-ub-formal} below gives an affirmative answer to this question when $\MF$ is not too large; in particular, as long as $|\MF| \leq \exp(o(n^2))$, then \cref{thm:main-ub-formal} gives that we can sample using $o(n^2)$ queries to $\hat \mu(\cdot)$, which beats the quadratic lower bound of \cref{thm:main-lb} (i.e., \cref{thm:weak-lb}). 
\begin{theorem}[Main upper bound]
  \label{thm:main-ub-formal}
Fix $n \in \BN$, $\delta \in (0,1)$, a finite alphabet $\Sigma$, and a known class of probability distributions $\MF \subseteq \Delta(\Sigma^n)$. Let $\mu^\star \in \MF$ be an unknown distribution, and let $\hat \mu : \Sigma^{\leq n} \to \BR_{\geq 0}$ satisfy \cref{asm:linfty-close-gen} with respect to $\mu^\star$ for some parameter $R \geq 1$. Then there is a randomized algorithm (\cref{alg:pf-learning}) which outputs a sample from a distribution $\hat\nu \in \Delta(\Sigma^n)$ satisfying $\tvd{\mu^\star}{\hat\nu} \leq \delta$  
and which makes at most 
\begin{align}
O\left(n \cdot |\Sigma| \cdot R^{12} \cdot \sqrt{\log |\MF|} \cdot \log \frac{1}{\delta} \cdot \left( \log \frac{R \log |\MF|}{\delta}\right) \right)\nonumber
\end{align}
queries to $\hat \mu(\cdot)$. %
\end{theorem}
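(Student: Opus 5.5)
The proof follows the roadmap of the technical overview: establish correctness via an unbiasedness/martingale argument, then bound the query cost via a length-weighted martingale. I will first fix the parameters of \cref{alg:pf-learning,alg:ts-rs}: the trial count $M=\poly(R)$, the target $N \asymp \sqrt{\log|\MF|}\cdot\poly(R,\log\frac1\delta,\log\log|\MF|)$ used in the final rejection step (accepting a uniform element of $\Sterm$ with probability roughly $|\Sterm|/(2RN)$), and the number of outer steps $k$, which will be of order $\log\frac1\delta$.

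\textbf{Invariants of \cref{alg:ts-rs}.} The first step is to analyze a single call of \cref{alg:ts-rs} on a node $w$ and class $\MG\ni\mu^\st$. I would show three properties:
\begin{itemize}
\item \emph{Unbiasedness} (\cref{lem:final-probability}): for each terminal descendant $w'$ of $w$, the expected multiplicity of $w'$ in $\MT$ is $\hat\mu(w')/\hat\mu(w)$. This follows by induction on depth, because the acceptance probabilities in the extension, truncation and early-stopping lines exactly cancel the proposal probabilities $\bar\eta(\cdot\mid\cdot)$.
\item \emph{Halving}: whenever the ``big ratio'' event fires, $\mu^\st\in\Gret$ by \cref{asm:linfty-close-gen}, and $|\Gret|\le|\MG|/2$ by the Markov argument from the warmup, using that the conditional ratio exceeds $M\ge 2R^2$.
\item \emph{Progress}: each call either halves the class or returns only complete sequences.
\end{itemize}
Halving gives $J\le\log_2|\MF|$ halving iterations (\cref{lem:iteration-bound}).

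\textbf{Correctness.} Unbiasedness implies that $M_j=\sum_{x\in\MS^j}\mu^\st(x)/\hat\mu(x)$ is a martingale (\cref{lem:mi-mtgl}). Its increments are bounded by $\poly(R)$, since $|\MT_j|\le M$ and the ratios lie in $[R^{-1},R]$, and it starts at $O(R)$. Freedman/Azuma over $J$ steps then gives $|\Sterm|=M_J\le RN$ except with probability $\delta/\poly$ (\cref{lem:mtgl-concentration}). To pass from this to a TV bound, I would condition on the event that no clipping occurs. By unbiasedness, the probability that a particular complete $x$ is output by the rejection step is exactly proportional to $\mu^\st(x)$, up to the clipping event. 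Each outer step therefore succeeds with probability $\Omega(1/R^2)$, and repeating $O(R^2\log\frac1\delta)$ times makes failure have probability at most $\delta$.

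\textbf{Query cost (the main obstacle).} I expect the main difficulty to be \cref{lem:query-cost}: the expected queries in the $j$th call are at most $|\Sigma|\,\poly(R)\,\E[W_j-W_{j-1}]$, where $W_j=\sum_{x\in\MS^j}\frac{\mu^\st(x)}{\hat\mu(x)}|x|$. My first attempt would charge each queried node $w_{i,t}$, which costs $|\Sigma|$ queries to evaluate its children, to the expected multiplicity of output descendants of $w_{i,t}$. Truncation guarantees that surviving a node at depth $t$ has probability comparable to $\hat\mu(w_{i,t})/(\hat\mu(w)\bar\eta(\ldots))$ up to a factor $M$. The extra length $|w'|-|w|$ contributed by outputs below $w_{i,t}$ then dominates the number of steps taken, and summing over $t$ gives the telescoped bound. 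This step requires care, because the proposal $\bar\eta$ can be badly miscalibrated. The argument needs a potential-function induction over the depth of the trial, using that both the too-big and too-small events terminate (with the right probabilities) before ratios drift beyond $M$.

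Given \cref{lem:query-cost}, summing over $j$ and applying Azuma to the differences between the realized and expected costs bounds the total queries by $|\Sigma|\poly(R)(W_J+\text{fluctuation})$. We have $W_J\le n|\Sterm|\le nRN$, and the fluctuation is $O(n\,\poly(R)\sqrt{J\log(1/\delta)})$. Multiplying by the $O(R^2\log\frac1\delta)$ outer repetitions and taking a union bound over failure events yields the stated $O\bigl(n|\Sigma|R^{12}\sqrt{\log|\MF|}\log\frac1\delta\log\frac{R\log|\MF|}{\delta}\bigr)$ bound.
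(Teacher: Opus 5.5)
Your route is the paper's: unbiasedness of \cref{alg:ts-rs} over the simple cut of final nodes, Markov halving with $M=\bar R=2R^2$, the martingale $M_j$ with $N\asymp\sqrt{\log|\MF|}\cdot\poly(R)\log\frac{R\log|\MF|}{\delta}$, and the length-weighted $W_j$ with Azuma for the query count. No idea is missing, but several steps are wrong as written.

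(i) $M_j$ moves in \emph{every} iteration, not only in the at most $\log_2|\MF|$ halving ones, because a non-halving call still replaces $z_j$ by a random number of complete sequences. So Freedman/Azuma must be run over the total number of iterations. This total is bounded deterministically by $N+\bar R\log_2|\MF|$: each iteration consumes one incomplete particle, and only halving calls create new ones. The $\sqrt{N}$ part of the resulting deviation $R\bar R\sqrt{(N+\bar R\log_2|\MF|)\log(\cdot)}$ is why $N$ must also be at least of order $R^2\bar R^2\log\frac{R\log|\MF|}{\delta}$. The same step count applies to the two Azuma steps in the query bound.

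(ii) $M_0=N$, not $O(R)$. What you can actually obtain is $|M_J-N|\le N/2$, i.e.\ $M_J\le 3N/2\le 2RN$, which matches the acceptance threshold. The bound $M_J\le RN$ is not available for $R$ near $1$.

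(iii) For the TV bound, the error term is not the clipping probability, since on the clipping event $|\Sterm|$ is large. Use the pointwise inequality $\sum_{x\in\MA}\frac{\occ(x,\Sterm)}{|\Sterm|}\min\{1,\frac{|\Sterm|}{2RN}\}\ge\frac{1}{2RN}\sum_{x\in\MA}\occ(x,\Sterm)-\bigl(\frac{|\Sterm|}{2RN}-1\bigr)_+$. What you then need is the expected overshoot bound $\E[(M_J/(2RN)-1)_+]\le\delta/(2R)$, which the paper obtains by integrating the sub-Gaussian tail of $M_J-M_0$.

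(iv) The per-round acceptance probability is at least $(1-\delta)/(2R)$, not merely $\Omega(1/R^2)$, so $K=O(R\log\frac1\delta)$ rounds suffice. This matters for the stated exponent: with $\Theta(R^2\log\frac1\delta)$ rounds, the same accounting gives $R^{13}$ rather than $R^{12}$.

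For \cref{lem:query-cost}, the visit probability you wrote (with $\bar\eta$ in the denominator) is not the right quantity, and no potential-function induction beyond the survival-chain identity is needed. Take input node $w=z_j$ and class $\MG=\MF^j$. The survival chain gives $\BP(A_{i,t-1}(w_{t-1}))=\hat\mu(w_{t-1})/\hat\mu(w)$. Hence, for each node $w'$ strictly inside segment $t$, exactly $\BP(\bar A_{i,t}(w'))=\bar\eta_{t-1,\prev(w')}(w'\mid\prev(w'))\,\hat\mu(\prev(w'))/\hat\mu(w)$. Since such $w'$ did not trigger truncation, this is at most $M\hat\mu(w')/\hat\mu(w)\le MR\,\mu^\star(w')/\hat\mu(w)$. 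The proposal $\bar\eta$ cancels, so its miscalibration is harmless.

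Now sum over these $w'$ and expand $\mu^\star(w')=\sum_{v\in\Nfinal_{\MG,M}(w),\,v\succ w'}\mu^\star(v)$ using the simple-cut property. This bounds the sum by $MR\sum_{v\in\Nfinal_{\MG,M}(w)}\mu^\star(v)(|v|-|w|)/\hat\mu(w)$, which by unbiasedness equals $MR\,\E[W_j-W_{j-1}\mid\SF^{j-1}]$. Segment endpoints add only $O(1)$ nodes per trial, because continuing past a truncation point has probability below $1/M$. This is absorbed using $\E[W_j-W_{j-1}\mid\SF^{j-1}]\ge\mu^\star(z_j)/\hat\mu(z_j)\ge 1/R$. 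Multiplying by $|\Sigma|+1$ queries per visited node and $M$ trials gives the factor $O(|\Sigma|RM^2)$.
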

Recall the structure of \cref{alg:pf-learning} (discussed in depth in \cref{sec:tech-upperbound}): for each outer iteration $k$, the algorithm maintains a set $\MS^j$ of partial sequences. At each step $j$, the algorithm removes an (arbitrary) element $z_j$ of $\MS^{j-1}$ and replaces it with set $\MT_j$ of extensions of $z_j$, returned by \cref{alg:ts-rs}. In \cref{sec:ts-fs-analysis}, we prove a number of lemmas that allow us to analyze \cref{alg:ts-rs}; then, in \cref{sec:pf-learning-analysis}, we put these pieces together and prove \cref{thm:main-ub-formal}.

\subsection{Analysis for \cref{alg:ts-rs}} 
\label{sec:ts-fs-analysis}

\begin{algorithm}[H]
\caption{Tree-structured Rejection Sampling}
\label{alg:ts-rs}
\begin{algorithmic}[1]
\Require Alphabet $\Sigma$, density ratio parameter $M \geq 1$, oracle $\hat \mu: \Sigma^{\leq n} \to \BR_{\geq 0}$, node $w \in \Sigma^{\leq n}$, class of densities $\MG \subset \Delta(\Sigma^n)$.
\State Initialize $\Gret \gets \MG$, $\MT \gets \emptyset$. 
\For{$1 \leq i \leq M$}
\State Initialize $w_{i,0} \gets w$, $\MG_{i,0} \gets \{ \eta \in \MG \mid \frac{1}{R} \leq \frac{\hat \mu(w_{i,0})}{\eta(w_{i,0})} \leq R \}$, $\bar \eta_{i,0} \gets \frac{1}{|\MG_{i,0}|} \sum_{\eta \in \MG_{i,0}} \eta$.\label{line:initialize-ts-rs}
\For{$t \geq 1$}\label{line:for-t}
\State Set $w_{i,t} \gets w_{i,t-1}$.
\While{$|w_{i,t}| < n$, $ \frac{\hat \mu(w_{i,t} \circ y)}{\hat \mu(w_{i,t-1})  \bar \eta_{i,t-1}(w_{i,t} \circ y \mid w_{i,t-1})} \leq M$ $\forall y \in \Sigma$ and $\frac{1}{M} \leq \frac{\hat \mu(w_{i,t})}{\hat \mu(w_{i,t-1})  \bar \eta_{i,t-1}(w_{i,t} \mid w_{i,t-1})}$}\label{line:while-wit}
\State Sample $y \sim \bar \eta_{i,t-1}(\cdot \mid w_{i,t})$.\label{line:extend-1}
\State Set $w_{i,t} \gets w_{i,t} \circ y$. \label{line:extend-2}
\EndWhile
\If{$|w_{i,t}| < n$ and $ \frac{\hat \mu(w_{i,t} \circ y)}{\hat \mu(w_{i,t-1})  \bar \eta_{i,t-1}(w_{i,t} \circ y \mid w_{i,t-1})} > M$ for some $y \in \Sigma$}\label{line:if-big}
\State Sample $\xi \sim \mathrm{Ber}(\frac{\hat \mu(w_{i,t} )}{\hat \mu(w_{i,t-1})  \bar \eta_{i,t-1}(w_{i,t} \mid w_{i,t-1})}  \cdot \frac{1}{M})$.\label{line:xi-big}
\State If $\xi = 1$, add $w_{i,t}$ to $\MT$; set $\Gret \gets \Gret \cap \{ \eta \in \MG_{i,t-1} \mid \hat \mu(w_{i,t} \circ y) / \eta (w_{i,t} \circ y) \leq R \}$.\label{line:add-to-T-mid}
\State \textbf{Break} (i.e., proceed to the next value of $i$). \label{line:break-m-trial-end}
\ElsIf{$|w_{i,t}| < n$ and $\frac{1}{M} > \frac{\hat \mu(w_{i,t})}{\hat \mu(w_{i,t-1})  \bar \eta_{i,t-1}(w_{i,t} \mid w_{i,t-1})}$}\label{line:if-small}
\State Sample $\xi \sim \mathrm{Ber}(\frac{\hat \mu(w_{i,t})}{\hat \mu(w_{i,t-1})  \bar \eta_{i,t-1}(w_{i,t} \mid w_{i,t-1})})$. 
\State If $\xi = 0$, \textbf{Break} (i.e., proceed to the next value of $i$).\label{line:xi-small}
\State If $\xi = 1$, set $\MG_{i,t} \gets \{ \eta \in \MG_{i,t-1} \mid \frac{1}{R} \leq \frac{\hat \mu(w_{i,t})}{\eta(w_{i,t})} \leq R \}$ and $\bar \eta_{i,t} \gets \frac{1}{|\MG_{i,t}|} \sum_{\eta \in \MG_{i,t}} \eta$.\label{line:proceed-next-t}
\Else\Comment{\emph{Must have $|w_{i,t}|=n$}}\label{line:else-lengthn}
\State Sample $\xi \sim \mathrm{Ber}(\frac{\hat \mu(w_{i,t})}{\hat \mu(w_{i,t-1})  \bar \eta_{i,t-1}(w_{i,t} \mid w_{i,t-1})} \cdot \frac{1}{M})$.\label{line:xi-else}
\State If $\xi = 1$, add $w_{i,t}$ to $\MT$ and \textbf{Break} (proceed to the next value of $i$). \label{line:add-to-T-end}
\EndIf
\EndFor
\EndFor
\State \Return $\MT, \Gret$. 
\end{algorithmic}
\end{algorithm}
In our analysis, we often view $\Sigma^{\leq n}$ as a complete $\Sigma$-ary tree, and accordingly refer to elements $x \in \Sigma^{\leq n}$ as \emph{nodes}. For $x = (x_1, \ldots, x_i) = x_{1:i} \in \Sigma^{\leq n}$ and $i < j$, we will write $x_{1:j} \prec x_{1:i}$, and say that $x_{1:j}$ is an \emph{ancestor} of $x_{1:i}$ (which is a \emph{descendent} of $x_{1:j}$). 

To analyze \cref{alg:ts-rs}, we need some additional notation to keep track of the nodes $w_{i,t}$ maintained in \cref{alg:ts-rs}. Given a class of densities $\MG$, a node $w \in \Sigma^{\leq n}$, $M \geq 1$, and $t \geq 0$, we define the set of \emph{$(\MG, M,t)$-intermediate nodes with respect to $w$, denoted $\Nint_{\MG,M,t}(w)$, and the set of $(\MG,M,t)$-final nodes with respect to $w$, denoted $\Nfinal_{\MG,M,t}(w)$, as follows:}
\begin{itemize}
\item First, we define $\Nint_{\MG,M,0}(w) := \{ w \}$ and $\Nfinal_{\MG,M,0}(w) = \emptyset$.
\item We define $\Nint_{\MG,M,t}(w),\Nfinal_{\MG,M,t}(w)$ inductively in terms of $\Nint_{\MG, M,t-1}(w)$, as follows: first, for each $w' \succeq w$, we define
\begin{align}
\MG_{t-1}(w') = \bigcap_{0 \leq s \leq t-1}\left\{ \eta \in \MG \mid \frac{1}{R} \leq \frac{\hat \mu(w'')}{\eta(w'')} \leq R \  \forall w'' \preceq w', \ w'' \in \Nint_{\MG,M,s}(w) \right\}\label{eq:define-gt}.
\end{align}
In words, $\MG_{t-1}(w')$ is the set of densities which is consistent with the value of $\hat \mu$ at all ancestors of $w'$ which are in some $(\MG, M,s)$ intermediate set for $s \leq t-1$. We then set
\begin{align}
\bar \eta_{t-1,w'} := \frac{1}{|\MG_{t-1}(w')|} \sum_{\eta \in \MG_{t-1}(w')} \eta\nonumber
\end{align}
to be the mean of the densities in $\MG_{t-1,w'}$. To define $\Nint_{\MG,M,t}(w)$ and $\Nfinal_{\MG,M,t}(w)$ we consider the following properties of nodes $w' \succ w$:
\begin{enumerate}
\item[\textbf{(P1)}] $w' \succ w''$ for some $w'' \in \Nint_{\MG,M,t-1}(w)$ so that
\begin{align}
\frac{\hat \mu(w')}{\hat \mu(w'') \cdot \bar \eta_{t-1,w''}(w' \mid w'')} < \frac{1}{M}\nonumber
\end{align}
\item[\textbf{(P2)}] $w' \succ w''$ for some $w'' \in \Nint_{\MG,M,t-1}(w)$ so that either $|w'| = n$ or for some $y \in \Sigma$,
\begin{align}
\frac{\hat \mu(w'\circ y)}{\hat \mu(w'') \cdot \bar \eta_{t-1,w''}(w' \circ y \mid w'')} > M\nonumber. 
\end{align}
\end{enumerate}

We then define $\Nint_{\MG,M,t}(w)$ to be the set of $w'$ satisfying \textbf{(P1)} and that neither it nor any of its ancestors satisfy either \textbf{(P1), (P2)}. We define $\Nfinal_{\MG,M,t}(w)$ to be the union of $\Nfinal_{\MG,M,t-1}(w)$ and the set of $w'$ satisfying \textbf{(P2)} and that none of its ancestors satisfy either \textbf{(P1), (P2)}. 
\end{itemize}

For $w \in \Sigma^{\leq n}$ and a subset $\MN \subset \Sigma^{\leq n}$, we say that $\MN$ is a \emph{simple $w$-cut} if for each leaf $v \in \Sigma^n$, $v \succ w$, there is a single node $w' \in \MN$ so that $w \preceq w' \preceq v$ (i.e., a single node in $\MN$ lies on the path from $w$ to $v$).
The next lemma shows that, at every stage of the construction, the intermediate and final nodes still form such a simple cut.
\begin{lemma}
\label{lem:n-cut}
For each $t \geq 0$, $\Nint_{\MG, M, t}(w) \cup \Nfinal_{\MG,M,t}(w)$ is a simple $w$-cut.
\end{lemma}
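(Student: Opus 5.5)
We argue by induction on $t$. For $t=0$, $\Nint_{\MG,M,0}(w)\cup\Nfinal_{\MG,M,0}(w)=\{w\}$, and $w$ lies on the path from $w$ to every leaf $v\succ w$, so it is trivially a simple $w$-cut. Now suppose $\MN_{t-1}:=\Nint_{\MG,M,t-1}(w)\cup\Nfinal_{\MG,M,t-1}(w)$ is a simple $w$-cut, and fix a leaf $v\succ w$. Let $u$ be the unique node of $\MN_{t-1}$ on the path from $w$ to $v$. We show that exactly one node of $\MN_t$ lies on this path, splitting into two cases according to whether $u$ is final or intermediate.

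\textbf{Case 1: $u\in\Nfinal_{\MG,M,t-1}(w)$.} Then $u\in\Nfinal_{\MG,M,t}(w)$ by definition, so at least one node of $\MN_t$ is on the path. For uniqueness, we check that the new nodes added at stage $t$ all lie strictly below some $w''\in\Nint_{\MG,M,t-1}(w)$, by (P1)/(P2). Such a $w''$ cannot be on the path from $w$ to $v$, because $u$ is the only node of $\MN_{t-1}$ there. Hence no new node lies on this path. We must also confirm that the old final nodes kept in $\MN_t$ are the same set, which holds since $\Nfinal$ is monotone and $\Nint_{t-1}$ nodes are not carried over.

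\textbf{Case 2: $u\in\Nint_{\MG,M,t-1}(w)$.} Consider the nodes $w'$ with $u\prec w'\preceq v$, ordered by depth. The leaf $v$ itself has $|v|=n$, so it satisfies (P2) with $w''=u$. Therefore the shallowest node $w^\dagger$ on this segment that satisfies (P1) or (P2) (with witness $u$) exists. If $w^\dagger$ satisfies (P1) but not (P2), it is in $\Nint_{\MG,M,t}(w)$; if it satisfies (P2), it is in $\Nfinal_{\MG,M,t}(w)$. Here we must check the phrase ``none of its ancestors satisfy (P1),(P2)'': ancestors between $w$ and $u$ could a priori satisfy (P1)/(P2) with respect to some other $w''\in\Nint_{t-1}$. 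But any $w''$ witnessing (P1)/(P2) must be a strict ancestor of the node in question. An ancestor of $w^\dagger$ that is $\preceq u$ would need a strict ancestor in $\Nint_{t-1}$ on the path, contradicting the uniqueness of $u$ on the path (from the induction hypothesis). So only nodes strictly between $u$ and $w^\dagger$ could qualify, and these are excluded by minimality. Uniqueness follows: any other node of $\MN_t$ on the path is either a new node, which must be $\succ u$ and have no ancestor satisfying (P1)/(P2), forcing it to equal $w^\dagger$; or an old final node, which is impossible since $u$ is the unique node of $\MN_{t-1}$ on the path. The same argument shows that a node satisfying (P1) whose ancestors avoid (P1),(P2) is not counted in both sets. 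To keep $\Nint_t$ and $\Nfinal_t$ disjoint, we adopt the convention that a node satisfying both (P1) and (P2) is placed in $\Nfinal_t$.

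The main obstacle is this bookkeeping about which $w''$ witnesses (P1)/(P2): the properties quantify over \emph{some} $w''\in\Nint_{t-1}$, and the definition of $\Nint_t$ literally demands that a node itself not satisfy (P2). The key observation that resolves it is that, along any root-to-leaf path, the only possible witness is the unique cut node $u$ supplied by the induction hypothesis.
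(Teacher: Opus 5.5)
Your proof is correct and follows the paper's argument essentially step for step. Both use induction on $t$, split on whether the unique cut node $u$ on the path to $v$ is final or intermediate, and take the first node below $u$ satisfying \textbf{(P1)} or \textbf{(P2)}; your extra bookkeeping (that $u$ is the only possible witness along the path, and that $\Nint$ and $\Nfinal$ are kept disjoint) makes explicit points the paper's proof leaves implicit.
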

\begin{proof}
We argue by induction on $t$. For $t=0$ we have $\Nint_{\MG,M,0}(w)=\{w\}$ and $\Nfinal_{\MG,M,0}(w)=\emptyset$, so the claim is immediate.

Now fix $t \geq 1$ and assume that
\begin{align}
\MC_{t-1} := \Nint_{\MG,M,t-1}(w) \cup \Nfinal_{\MG,M,t-1}(w)\nonumber
\end{align}
is a simple $w$-cut. Let $v \in \Sigma^n$ be any leaf with $v \succeq w$. Since $\MC_{t-1}$ is a simple $w$-cut, there is a unique node $u \in \MC_{t-1}$ on the path from $w$ to $v$.

If $u \in \Nfinal_{\MG,M,t-1}(w)$, then $u$ is also in $\Nfinal_{\MG,M,t}(w)$. Moreover, no node of $\Nint_{\MG,M,t}(w)$ or of $\Nfinal_{\MG,M,t}(w) \setminus \Nfinal_{\MG,M,t-1}(w)$ can lie below $u$, because every node added at time $t$ must be a descendant of some element of $\Nint_{\MG,M,t-1}(w)$, whereas $u$ is already the unique cut node on the path from $w$ to $v$. Thus $u$ is the unique node of $\Nint_{\MG,M,t}(w) \cup \Nfinal_{\MG,M,t}(w)$ on this path.

It remains to consider the case $u \in \Nint_{\MG,M,t-1}(w)$. Along the path from $u$ to $v$, the leaf $v$ itself satisfies \textbf{(P2)} relative to $u$, simply because $|v|=n$. Hence there is at least one descendant of $u$ on the path to $v$ satisfying either \textbf{(P1)} or \textbf{(P2)} relative to $u$. Let $z$ be the first such descendant. By definition, $z$ belongs to $\Nint_{\MG,M,t}(w)$ if it satisfies \textbf{(P1)} but not \textbf{(P2)}, and to $\Nfinal_{\MG,M,t}(w)$ if it satisfies \textbf{(P2)}.

By the choice of $z$, no proper ancestor of $z$ which is a descendant of $u$ satisfies either \textbf{(P1)} or \textbf{(P2)}. Also, no proper descendant of $z$ can belong to $\Nint_{\MG,M,t}(w)$ or to $\Nfinal_{\MG,M,t}(w) \setminus \Nfinal_{\MG,M,t-1}(w)$, because membership in those sets requires that neither the node itself nor any of its ancestors satisfy \textbf{(P1)} or \textbf{(P2)}. Finally, there is no element of $\Nfinal_{\MG,M,t-1}(w)$ below $u$ on the path to $v$, since $u$ was the unique element of $\MC_{t-1}$ on that path. Therefore $z$ is the unique node of $\Nint_{\MG,M,t}(w) \cup \Nfinal_{\MG,M,t}(w)$ lying on the path from $w$ to $v$.

Since this holds for every leaf $v \succeq w$, the set $\Nint_{\MG,M,t}(w) \cup \Nfinal_{\MG,M,t}(w)$ is a simple $w$-cut.
\end{proof}

The next lemma shows that the number of steps after which we can define subsequent nonempty intermediate node sets is at most $n - |w|$.
\begin{lemma}
\label{lem:int-bounded}
There is some $t_0 \leq n-|w|$ so that for all $t \geq t_0$, $\Nint_{\MG, M, t}(w) = \emptyset$, and therefore $\Nfinal_{\MG,M,t_0}(w) = \Nfinal_{\MG,M,t_0+1}(w) = \cdots$. 
\end{lemma}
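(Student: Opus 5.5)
The plan is a depth argument: every node of $\Nint_{\MG,M,t}(w)$ is strictly deeper than some node of $\Nint_{\MG,M,t-1}(w)$, so intermediate nodes at stage $t$ have length at least $|w|+t$.

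\textbf{Step 1: descendants are strict.} By definition, any $w' \in \Nint_{\MG,M,t}(w)$ satisfies \textbf{(P1)}. Hence $w' \succ w''$ for some $w'' \in \Nint_{\MG,M,t-1}(w)$, and the relation $\succ$ is strict, so $|w'| \geq |w''|+1$.

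\textbf{Step 2: induction on $t$.} Since $\Nint_{\MG,M,0}(w)=\{w\}$, induction gives $|w'| \geq |w|+t$ for every $w' \in \Nint_{\MG,M,t}(w)$. The same argument shows that if $\Nint_{\MG,M,t-1}(w)=\emptyset$, then no node can satisfy \textbf{(P1)} or \textbf{(P2)}. Therefore $\Nint_{\MG,M,t}(w)=\emptyset$ as well, so emptiness propagates forward in $t$.

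\textbf{Step 3: intermediate nodes are never leaves.} A node of length $n$ satisfies \textbf{(P2)} automatically, via the clause $|w'|=n$ with the witness $w''$ given by \textbf{(P1)}. The definition of $\Nint$ excludes nodes that themselves satisfy \textbf{(P2)}. So every intermediate node has length at most $n-1$.

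\textbf{Step 4: conclude.} Combining Steps 2 and 3, $\Nint_{\MG,M,t}(w)=\emptyset$ as soon as $|w|+t > n-1$, that is, for $t \geq n-|w|$. Set $t_0$ to be the least $t$ with $\Nint_{\MG,M,t}(w)=\emptyset$; then $t_0 \leq n-|w|$, and by Step 2 the set stays empty for all $t \geq t_0$.

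If $|w|=n$, we take $t_0=0$. This is the one edge case: $\Nint_{\MG,M,0}(w)=\{w\}$ is nonempty, but no strict descendants of $w$ exist, so all later intermediate sets are empty. In this case I would read the statement as asserting emptiness for $t \geq 1$, since finality is what matters.

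\textbf{Step 5: finals stabilize.} For $t>t_0$, the new final nodes at stage $t$ must be descendants of nodes in $\Nint_{\MG,M,t-1}(w)=\emptyset$, so none exist. Hence $\Nfinal_{\MG,M,t}(w)=\Nfinal_{\MG,M,t-1}(w)$ for all $t>t_0$.

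I expect no real obstacle. The only care needed is the boundary case $|w|=n$ and checking that the stage-$0$ convention is consistent with the statement.
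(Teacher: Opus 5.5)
Your proof is correct and uses essentially the same depth argument as the paper. Intermediate nodes at stage $t \geq 1$ have depth at least $|w|+t$, and they cannot be leaves, since a length-$n$ node automatically satisfies \textbf{(P2)}. So they vanish for $t \geq n-|w|$, after which no new final nodes can be added.

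Your remark about the edge case $|w|=n$ is fair. The paper's proof tacitly assumes $|w|<n$, which always holds where the lemma is used, since \cref{alg:pf-learning} only calls \cref{alg:ts-rs} on nodes $z_j \in \Sigma^{<n}$.
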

\begin{proof}
We first claim that every node in $\Nint_{\MG,M,t}(w)$ lies at depth at least $|w|+t$. Indeed, for $t=0$ this is immediate since $\Nint_{\MG,M,0}(w)=\{w\}$. For $t \geq 1$, if $u \in \Nint_{\MG,M,t}(w)$, then by definition $u$ satisfies \textbf{(P1)} relative to some node $u' \in \Nint_{\MG,M,t-1}(w)$, and hence $u \succ u'$. By induction, $|u'| \geq |w|+t-1$, so $|u| \geq |u'|+1 \geq |w|+t$.

Next, no intermediate node can be a leaf. Indeed, if $|u|=n$, then $u$ satisfies \textbf{(P2)} by definition. But a node belongs to $\Nint_{\MG,M,t}(w)$ only if neither it nor any of its ancestors satisfy \textbf{(P1)} or \textbf{(P2)}. Therefore every $u \in \Nint_{\MG,M,t}(w)$ must satisfy $|u|<n$.

Combining the two observations, if $t \geq n-|w|$ and $u \in \Nint_{\MG,M,t}(w)$, then $|u| \geq |w|+t \geq n$, which forces $|u|=n$, a contradiction. Hence
\begin{align}
\Nint_{\MG,M,t}(w)=\emptyset \qquad \forall t \geq n-|w|.\nonumber
\end{align}
Thus we may take $t_0 := n-|w|$.

Finally, once $\Nint_{\MG,M,t}(w)$ is empty, the inductive definition adds no further nodes to $\Nfinal_{\MG,M,t}(w)$, so the final sets remain constant for all subsequent times.
\end{proof}

The below lemma relates the values of $w_{i,t}$ and $\MG_{i,t}$ that are computed in \cref{alg:ts-rs} to the sets $\Nint_{\MG,M,t}(w)$, $\Nfinal_{\MG,M,t}(w)$, and $\MG_t(w)$ defined above. 
\begin{lemma}
\label{lem:alg-bookkeeping}
Fix any $i \in [M]$ denoting the outer iteration in \cref{alg:ts-rs}. For any $t \geq 1$, the final value of $w_{i,t}$ computed in \cref{alg:ts-rs} is an element of $\Nint_{\MG,M,t}(w) \cup \Nfinal_{\MG,M,t}(w)$, and the candidate class $\MG_{i,t}$ defined in Line \ref{line:proceed-next-t} is exactly $\MG_t(w_{i,t})$ as defined in \cref{eq:define-gt}. Moreover, the set $\MT$ computed in \cref{alg:ts-rs} satisfies $\MT \subset \Nfinal_{\MG,M}(w)$ almost surely. 
\end{lemma}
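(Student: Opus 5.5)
We argue by induction on $t$, tracking a single outer iteration $i$ of \cref{alg:ts-rs}. The inductive hypothesis at stage $t-1$ has two parts. First, if the trial has not terminated before reaching stage $t$, then $w_{i,t-1} \in \Nint_{\MG,M,t-1}(w)$. Second, $\MG_{i,t-1} = \MG_{t-1}(w_{i,t-1})$, so that $\bar\eta_{i,t-1} = \bar\eta_{t-1,w_{i,t-1}}$.

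The base case $t=0$ is immediate. Here $w_{i,0} = w \in \Nint_{\MG,M,0}(w)$, and by \cref{eq:define-gt} with $t-1=0$ the set $\MG_0(w)$ consists of the $\eta \in \MG$ consistent with $\hat\mu$ at $w$. This is exactly $\MG_{i,0}$ from Line \ref{line:initialize-ts-rs}. (We note that $\MG_{0}(w')$ only involves $w'' \preceq w'$ in $\Nint_{\MG,M,0}(w)=\{w\}$.)

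For the inductive step, set $w'' := w_{i,t-1}$. The while loop on Line \ref{line:while-wit} extends $w''$ one character at a time. Its continuation conditions are precisely the negations of \textbf{(P1)} and \textbf{(P2)} for the current node relative to $w''$, evaluated with $\bar\eta_{i,t-1} = \bar\eta_{t-1,w''}$ by the inductive hypothesis. To see this, note that the check ``$\forall y$, ratio at $w_{i,t}\circ y$ is $\le M$'' together with $|w_{i,t}|<n$ is the negation of \textbf{(P2)} at $w_{i,t}$, and the lower ratio check is the negation of \textbf{(P1)}.

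Consequently the loop halts at the first descendant $z$ of $w''$, along the sampled path, that satisfies \textbf{(P1)} or \textbf{(P2)}. One small point needs care: the node $w''$ itself is checked first, but $w''$ does not satisfy \textbf{(P1)}/\textbf{(P2)} relative to itself in a way that matters. Its conditional ratio is $\hat\mu(w'')/\hat\mu(w'')=1$, so \textbf{(P1)} fails. If \textbf{(P2)} holds at $w''$, the loop exits immediately with $z=w''$.

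We must also verify that no proper ancestor of $z$ satisfies \textbf{(P1)} or \textbf{(P2)}. Ancestors between $w''$ and $z$ are excluded by the minimality of $z$. Ancestors above $w''$ are excluded by the definition of $w''\in\Nint_{\MG,M,t-1}(w)$, by induction. The argument then mirrors the proof of \cref{lem:n-cut}. If $z$ satisfies \textbf{(P2)} (Line \ref{line:if-big} or Line \ref{line:else-lengthn}), then $z \in \Nfinal_{\MG,M,t}(w)$. Otherwise $z$ satisfies \textbf{(P1)} but not \textbf{(P2)} (Line \ref{line:if-small}), and $z\in\Nint_{\MG,M,t}(w)$. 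The branch ordering in the algorithm, which tests the big-ratio condition first, matches the convention in the definitions that \textbf{(P2)} takes precedence.

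Next we treat the class update. In the \textbf{(P1)} branch, when $\xi=1$, Line \ref{line:proceed-next-t} sets $\MG_{i,t}$ to be $\MG_{i,t-1}$ intersected with consistency at $w_{i,t}$. We must check that this equals $\MG_t(w_{i,t})$. By \cref{eq:define-gt}, $\MG_t(w_{i,t})$ imposes consistency at all ancestors $u\preceq w_{i,t}$ lying in some $\Nint_{\MG,M,s}(w)$ with $s\le t$. By the simple-cut property (\cref{lem:n-cut}) and the induction, the ancestors of $w_{i,t}$ in $\bigcup_{s\le t}\Nint_{\MG,M,s}(w)$ are exactly $w_{i,0},\dots,w_{i,t}$. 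Indeed, each $\Nint_{\MG,M,s}(w)\cup\Nfinal_{\MG,M,s}(w)$ meets the root-to-leaf path in exactly one node, and for $s\le t$ that node is $w_{i,s}$. Hence $\MG_t(w_{i,t}) = \MG_{t-1}(w_{i,t-1})\cap\{\text{consistent at } w_{i,t}\}$. For this we also use that $\MG_{t-1}(w')$ depends only on the intermediate ancestors of $w'$, which coincide for $w'=w_{i,t-1}$ and $w'=w_{i,t}$.

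The claim about $\MT$ follows directly. Elements are added to $\MT$ only on Line \ref{line:add-to-T-mid} or Line \ref{line:add-to-T-end}, that is, only in the \textbf{(P2)} branches. In those branches $w_{i,t}\in\Nfinal_{\MG,M,t}(w)\subseteq\Nfinal_{\MG,M,t_0}(w)=:\Nfinal_{\MG,M}(w)$, using the monotonicity of the final sets and \cref{lem:int-bounded}.

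I expect the main obstacle to be the identification $\MG_{i,t}=\MG_t(w_{i,t})$. The definition \cref{eq:define-gt} is stated via intersections over all $s\le t-1$ and all intermediate ancestors, so it requires the careful path-uniqueness argument from \cref{lem:n-cut} described above. Matching the while-loop conditions to \textbf{(P1)}/\textbf{(P2)} is bookkeeping once this alignment of $\bar\eta_{i,t-1}$ with $\bar\eta_{t-1,w''}$ is in place.
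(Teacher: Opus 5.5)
Your argument follows the paper's proof step for step: induction on $t$, matching the while-loop guard with the negations of \textbf{(P1)}/\textbf{(P2)}, and using the simple-cut property to identify $\MG_{i,t}$ with $\MG_t(w_{i,t})$. Those parts are fine.

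\textbf{The gap.} It is exactly the case you flag yourself. Sometimes the while loop on Line~\ref{line:while-wit} exits before extending at all, so that $z=w''=w_{i,t-1}$, because some child $w''\circ y$ has ratio above $M$ under the \emph{new} proposal $\bar\eta_{i,t-1}$. You conclude $z\in\Nfinal_{\MG,M,t}(w)$. But in the paper's definitions, \textbf{(P1)} and \textbf{(P2)} at stage $t$ apply only to strict descendants $w'\succ w''$ of nodes of $\Nint_{\MG,M,t-1}(w)$. The strict reading is clearly intended; for instance, the proof of \cref{lem:query-cost} uses $|w'|-|w|\geq 1$ for $w'\in\MT$. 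By the definitions and \cref{lem:n-cut}, $w''$ is neither in $\Nfinal_{\MG,M,t-1}(w)$ nor a strict descendant of any node of $\Nint_{\MG,M,t-1}(w)$. Hence $w''\notin\Nint_{\MG,M,t}(w)\cup\Nfinal_{\MG,M,t}(w)$.

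This case is genuinely reachable. For $t\geq 2$, $\bar\eta_{i,t-1}$ averages over the smaller class $\MG_{i,t-1}$, so it may put far less conditional mass on a child of $w''$ than $\bar\eta_{i,t-2}$ did. For $t=1$, it occurs whenever a child of $w$ already has ratio above $M$. Line~\ref{line:if-big} then adds $w''$ to $\MT$ with probability $1/M$, since its own ratio equals $1$. So the first and third assertions of the lemma fail as literally stated. The paper's proof has the same hole: it simply asserts that the loop stops at a first descendant $z\succ w_{i,t-1}$.

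\textbf{The repair.} The fix is to the definitions, not to your argument. Let \textbf{(P2)} at stage $t$ also hold at $w'=w''$ itself, i.e.\ require only $w'\succeq w''$, and admit $w$ as a candidate node at $t=1$. \textbf{(P1)} can never hold there, because the ratio is $1$. With this change:
the proof of \cref{lem:n-cut} goes through verbatim, taking $z$ to be the first qualifying node on the path from $u$ to $v$, inclusive of $u$;
\cref{lem:int-bounded} is unaffected;
your argument becomes complete;
and the identity $\E[\occ(w',\MT)]=\hat\mu(w')/\hat\mu(w)$ still holds, since such a node is reached with conditional probability $1$ and accepted with probability $1/M$.
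The $|w'|-|w|\geq 1$ step in \cref{lem:query-cost} would then need separate handling for $w'=w$.

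\textbf{A smaller point.} To rule out ancestors of $w''$ satisfying the stage-$t$ properties, you cite the definition of $w''\in\Nint_{\MG,M,t-1}(w)$. That definition only concerns stage-$(t-1)$ properties. The relevant fact is the simple-cut property, which shows that no ancestor of $w''$ is a strict descendant of a node of $\Nint_{\MG,M,t-1}(w)$.
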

\begin{proof}
We prove the first two claims by induction on $t$, restricting attention to those values of $t$ for which the $i$th outer-loop iteration actually reaches stage $t$.

The base case $t=0$ is immediate; now fix $t \geq 1$ and assume the claim holds for $t-1$. Suppose the $i$th outer-loop iteration reaches stage $t$. Then stage $t-1$ cannot have terminated in a final branch, so the only way to proceed is that Line \ref{line:proceed-next-t} was executed at step $t-1$. Consequently,
\begin{align}
w_{i,t-1} \in \Nint_{\MG,M,t-1}(w)
\qquad\text{and}\qquad
\MG_{i,t-1}=\MG_{t-1}(w_{i,t-1}).
\nonumber
\end{align}
In particular, the proposal distribution used at step $t$ is exactly $\bar \eta_{t-1,w_{i,t-1}}$.

At step $t$, the while loop (Line \ref{line:while-wit}) walks down the sampled path until the first descendant $z \succ w_{i,t-1}$ for which either \textbf{(P1)} or \textbf{(P2)} relative to $w_{i,t-1}$ is triggered. By the recursive definition of $\Nint_{\MG,M,t}(w)$ and $\Nfinal_{\MG,M,t}(w)$, this implies
\begin{align}
w_{i,t} \in \Nint_{\MG,M,t}(w) \cup \Nfinal_{\MG,M,t}(w).
\nonumber
\end{align}
If Line \ref{line:proceed-next-t} is reached at stage $t$, then $w_{i,t}$ satisfies \textbf{(P1)} relative to $w_{i,t-1}$, so the algorithm updates
\begin{align}
\MG_{i,t}
= \left\{ \eta \in \MG_{i,t-1} \mid \frac{1}{R} \leq \frac{\hat \mu(w_{i,t})}{\eta(w_{i,t})} \leq R \right\}.
\nonumber
\end{align}
Using the induction hypothesis $\MG_{i,t-1}=\MG_{t-1}(w_{i,t-1})$, and using that exactly one element of each of $\MG_{1}(w_{i,t-1}), \MG_2(w_{i,t-1}), \ldots, \MG_{t-1}(w_{i,t-1})$ lies on the path from $w$ to $w_{i,t}$ (namely $w_{i,1}, \ldots, w_{i,t-1}$), it follows from the definition of $\MG_t(\cdot)$ (see \cref{eq:define-gt}) that $\MG_{i,t} = \MG_t(w_{i,t})$. 
This completes the induction.

Finally, every element added to $\MT$ is added either on Line \ref{line:add-to-T-mid} or on Line \ref{line:add-to-T-end}. In the first case, the node $w_{i,t}$ necessarily satisfies \textbf{(P2)} relative to the intermediate node $w_{i,t-1}$ because some child $w_{i,t}\circ y$ violates the upper threshold; in the second case, $|w_{i,t}|=n$, so $w_{i,t}$ again satisfies \textbf{(P2)}. In both cases, the argument above shows that the added node lies in $\Nfinal_{\MG,M,t}(w) \subseteq \Nfinal_{\MG,M}(w)$. %
\end{proof}

Next, we show a key notion of ``progress'' made by \cref{alg:ts-rs}. Whenever if returns a set $\MT$ with incomplete sequences (i.e., of length less than $n$), then its returned set of densities, $\Gret$, must decrease in size relative to that of $\MG$. 
\begin{lemma}
\label{lem:gret-decrease}
Fix any $w \in \Sigma^{\leq n}$ and $\MG \subset \Delta(\Sigma^n)$. Let $\Erestrict$ denote the event that \cref{alg:ts-rs} returns a set $\MT$ for which $\MT \cap \Sigma^{<n} \neq \emptyset$. Then on the event $\Erestrict$, we have $|\Gret| \leq \frac{R^2|\MG|}{M}$.
\end{lemma}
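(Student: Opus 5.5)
The plan is to trace where an incomplete sequence can enter $\MT$, and then apply a Markov-type counting argument to the restricted class. By \cref{alg:ts-rs}, nodes are added to $\MT$ only on Line \ref{line:add-to-T-mid} or Line \ref{line:add-to-T-end}. Line \ref{line:add-to-T-end} only adds nodes of length $n$. So on $\Erestrict$ there is some trial $i$ and stage $t\ge 1$ at which Line \ref{line:add-to-T-mid} is executed. Write $u := w_{i,t-1}$ and $v := w_{i,t}\circ y$, where $y\in\Sigma$ is the character witnessing Line \ref{line:if-big}, so that
\begin{align*}
\hat\mu(v) > M\,\hat\mu(u)\,\bar\eta_{i,t-1}(v\mid u).
\end{align*}
At that moment $\Gret$ is intersected with $\{\eta\in\MG_{i,t-1} : \hat\mu(v)/\eta(v)\le R\}$. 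Since later steps only intersect further, it suffices to bound the size of this set by $R^2|\MG|/M$.

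The key step is to compare each surviving $\eta$ with the mean $\bar\eta := \bar\eta_{i,t-1}$ over $\MG_{i,t-1}$. Every $\eta\in\MG_{i,t-1}$ is consistent with $\hat\mu$ at $u$. For $t-1=0$ this holds by the definition of $\MG_{i,0}$ on Line \ref{line:initialize-ts-rs}; for $t-1\ge1$ it holds by Line \ref{line:proceed-next-t}, or equivalently by \cref{lem:alg-bookkeeping}. Thus $\hat\mu(u)/R\le \eta(u)\le R\hat\mu(u)$ for all such $\eta$. Averaging gives $\bar\eta(u)\le R\hat\mu(u)$. Then for any $\eta$ in the restricted set,
\begin{align*}
\eta(v)\ \ge\ \frac{\hat\mu(v)}{R}\ >\ \frac{M\,\hat\mu(u)\,\bar\eta(v)}{R\,\bar\eta(u)}\ \ge\ \frac{M}{R^2}\,\bar\eta(v),
\end{align*}
where we used $\bar\eta(v\mid u)=\bar\eta(v)/\bar\eta(u)$ and $v\succ u$.

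To finish, apply Markov's inequality over the uniform distribution on $\MG_{i,t-1}$. Since $\bar\eta(v)=\frac{1}{|\MG_{i,t-1}|}\sum_{\eta}\eta(v)$ and all terms are nonnegative, the number of $\eta\in\MG_{i,t-1}$ with $\eta(v)>\frac{M}{R^2}\bar\eta(v)$ is less than $\frac{R^2}{M}|\MG_{i,t-1}|\le \frac{R^2}{M}|\MG|$. This uses $\MG_{i,t-1}\subseteq\MG$.

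The only delicate point is degenerate cases where a denominator vanishes, namely $\bar\eta(u)=0$ or $\bar\eta(v)=0$. I expect to handle these by the paper's $0/0$ convention. In particular, if $\MG_{i,t-1}$ is nonempty then $\bar\eta(u)\ge\hat\mu(u)/R>0$ whenever $\hat\mu(u)>0$. If $\bar\eta(v)=0$, then every $\eta$ has $\eta(v)=0$, which forces $\hat\mu(v)=0$ by consistency, contradicting the strict inequality defining Line \ref{line:if-big}. Apart from this bookkeeping, the argument is a direct calculation.
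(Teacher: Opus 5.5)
Your proposal is correct and matches the paper's proof step for step: you locate the execution of Line~\ref{line:add-to-T-mid}, use consistency of each $\eta\in\MG_{i,t-1}$ at $w_{i,t-1}$ to get $\bar\eta_{i,t-1}(w_{i,t-1})\le R\,\hat\mu(w_{i,t-1})$, deduce $\eta(v)>\frac{M}{R^2}\bar\eta_{i,t-1}(v)$ for every surviving $\eta$, and finish with Markov's inequality over $\MG_{i,t-1}\subseteq\MG$. You use the half of the consistency condition that is actually needed (the paper's text states $\hat\mu(w_{i,t-1})\le R\,\bar\eta_{i,t-1}(w_{i,t-1})$, which is the other direction); and in the degenerate case $\bar\eta_{i,t-1}(v)=0$ you should not appeal to consistency at $v$, which elements of $\MG_{i,t-1}$ are not guaranteed to satisfy and which the lemma does not ensure for $\mu^\star$ since it does not assume $\mu^\star\in\MG$ — simply note that if $\hat\mu(v)>0$ the restricted set is empty, so the bound holds trivially.
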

\begin{proof}
On the event $\Erestrict$, there is at least one outer-loop iteration $i \in [M]$ and one stage $t \geq 1$ at which the algorithm executes Line \ref{line:add-to-T-mid}. Fix any such pair $(i,t)$, and let $z := w_{i,t} \in \Sigma^{<n}$ denote the node added to $\MT$ at that line. By the condition of the branch, there exists some $y \in \Sigma$ such that
\begin{align}
\frac{\hat \mu(z \circ y)}{\hat \mu(w_{i,t-1}) \cdot \bar \eta_{i,t-1}(z \circ y \mid w_{i,t-1})} > M.
\label{eq:gret-upper-trigger}
\end{align}
By definition of the class $\MG_{i,t-1}$ (Line \ref{line:proceed-next-t} of \cref{alg:ts-rs}), every $\eta \in \MG_{i,t-1}$ satisfies $\frac{1}{R} \leq \frac{\hat \mu(w_{i,t-1})}{\eta(w_{i,t-1})} \leq R$. In particular, $\hat \mu(w_{i,t-1}) \leq R \cdot \eta(w_{i,t-1})$ for every $\eta \in \MG_{i,t-1}$. Using that $\bar \eta_{i,t-1} = \frac{1}{|\MG_{i,t-1}|} \sum_{\eta \in \MG_{i,t-1}} \eta$, we see that $\hat \mu(w_{i,t-1}) \leq R \cdot \bar \eta_{i,t-1}(w_{i,t-1})$. Using this fact together with \cref{eq:gret-upper-trigger}, we see that 
\begin{align}
\frac{\hat \mu(z \circ y)}{\bar \eta_{i,t-1}(z \circ y)} \geq \frac{\hat \mu(z \circ y)}{\bar \eta_{i,t-1}(z \circ y)} \cdot \frac{\bar \eta_{i,t-1}(w_{i,t-1})}{\hat \mu(w_{i,t-1})\cdot R} = \frac{\hat \mu(z \circ y)}{R \cdot \hat \mu(w_{i,t-1}) \cdot \bar \eta_{i,t-1}(z \circ y \mid w_{i,t-1})} > \frac{M}{R}\label{eq:hatmu-bareta-ratio}.
\end{align}
Since we have assumed that $z$ is added to $\MT$ on Line \ref{line:add-to-T-mid}, then the value of $\Gret$ eventually returned by \cref{alg:ts-rs} satisfies
\begin{align}
\Gret \subset \left\{ \eta \in \MG_{i,t-1} \mid \frac{\hat \mu(z \circ y)}{\eta(z \circ y)} \leq R \right\}.
\end{align}
Combining this with \cref{eq:hatmu-bareta-ratio} and using Markov's inequality yields $|\Gret| \leq |\MG_{i,t-1}| \cdot \frac{R^2}{M} \leq |\MG| \cdot \frac{R^2}{M}$.

\end{proof}

We let $\Nfinal_{\MG,M}(w)$ denote the set $\Nfinal_{\MG,M,t_0}(w) = \Nfinal_{\MG,M,t_0+1}(w) = \cdots$. Fix any node $w_t \in \Nint_{\MG,M,r}(w)$. To simplify notation, we write $\Nappend_{\MG,M,t}(w) := \Nint_{\MG,M,t}(w) \cup (\Nfinal_{\MG,M,t}(w) \backslash \Nfinal_{\MG,M,t-1}(w))$. Moreover, we define
\begin{align}
\MS_{\MG,M,t}(w) := \{ w' \succeq w \mid \exists w_{t-1} \in \Nint_{\MG,M,t-1}(w),\ w_t \in \Nappend_{\MG,M,t}(w) \text{ so that } w_{t-1} \prec w' \prec w \}\nonumber
\end{align}
to denote the set of nodes sandwiched between some node in $\Nint_{\MG,M,t-1}(w)$ (exclusive) and some node in $\Nappend_{\MG,M,t}(w)$ (exclusive). For each node $w' \in \MS_{\MG,M,t}(w)$ for some $t \geq 1$, we let $\prev(w')$ denote the unique ancestor of $w'$ in $\Nint_{\MG,M,t-1}(w)$ (its existence follows from definition and its uniqueness follows from \cref{lem:n-cut}). (Note that $\prev(w')$ also depends on $w, \MG, M$, though we omit this dependence for clarity.)

Next, fix one outer-loop iteration $i$ of \cref{alg:ts-rs}. For $t \geq 0$ and $w_t \in \Nappend_{\MG,M,t}(w)$, let $A_{i,t}(w_t)$ be the event that at the end of step $t$ in iteration $i$, the algorithm sets $w_{i,t} = w_t$ and \emph{does not break} at step $t$ (i.e., $\xi = 1$ in Line \ref{line:proceed-next-t}). On $A_{i,t}(w_t)$, the candidate class used by the algorithm (denoted by $\MG_{i,t}$ in \cref{alg:ts-rs}) is exactly $\MG_{i,t} = \MG_t(w_t)$, by \cref{lem:alg-bookkeeping}. Hence, on the event $A_{i,t}(w_t)$, the proposal distribution $\bar \eta_{i,t}$ used in \cref{alg:ts-rs} is exactly $\bar \eta_{t,w_t}$. Moreover, for $w_t \in \MS_{\MG,M,t}(w) \cup \Nappend_{\MG,M,t}(w)$, we let $\bar A_{i,t}(w_t)$ be the event that the algorithm sets $w_{i,t} = w_t$ at some point in its execution; in particular, for $w \in \Nappend_{\MG,M,t}(w)$, $\bar A_{i,t}(w_t) \supseteq A_{i,t}(w_t)$, but $\bar A_{i,t}(w_t)$ also includes the event that the algorithm \emph{does break} at step $t$.  The following lemma gives an expression for the probability of $\bar A_{i,t}(w')$ (denoted using $\BP(\cdot)$), over the randomness of \cref{alg:ts-rs}. 
\begin{lemma}
\label{lem:final-probability}
For any $t \geq 1$ and $w' \in \MS_{\MG,M,t}(w)$, we have: %
\begin{align}
\BP(\bar A_{i,t}(w') ) = \bar \eta_{t-1, \prev(w')}(w' \mid \prev(w'))\cdot  \frac{\hat \mu(\prev(w'))}{\hat \mu(w)}.\label{eq:ait-probs}
\end{align}
Moreover, for any $w' \in \Nfinal_{\MG,M}(w)$, the expected number of occurrences of $w'$ in the set $\MT$ constructed in \cref{alg:ts-rs} is given as follows:
\begin{align}
\E \left[ \occ(w', \MT) \right] = \frac{\hat \mu(w')}{\hat \mu(w)}\label{eq:exp-occurrences}.
\end{align}
\end{lemma}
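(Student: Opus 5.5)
We prove \cref{eq:ait-probs} and \cref{eq:exp-occurrences} jointly, by induction on the stage $t$, within a fixed outer iteration $i$. The inductive hypothesis at stage $t-1$ is: for every $u \in \Nint_{\MG,M,t-1}(w)$,
\begin{align}
\BP(A_{i,t-1}(u)) = \frac{\hat \mu(u)}{\hat \mu(w)}. \nonumber
\end{align}
The base case $t-1=0$ holds since $u=w$ and $A_{i,0}(w)$ holds surely.

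\textbf{Step 1 (the sandwiched nodes).} Fix $w' \in \MS_{\MG,M,t}(w)$ and set $u = \prev(w')$. By \cref{lem:alg-bookkeeping}, on $A_{i,t-1}(u)$ the proposal in stage $t$ is exactly $\bar\eta_{t-1,u}$. No node strictly between $u$ and $w'$ triggers \textbf{(P1)} or \textbf{(P2)}, because $w'$ lies above a node of $\Nappend$ and these are the first triggering nodes. Hence the while loop passes through those nodes without stopping. Multiplying the one-step conditional probabilities along the path gives
\begin{align}
\BP(\bar A_{i,t}(w') \mid A_{i,t-1}(u)) = \bar\eta_{t-1,u}(w'\mid u). \nonumber
\end{align}
By \cref{lem:n-cut} and the uniqueness of $\prev$, reaching $w'$ in stage $t$ requires $A_{i,t-1}(u)$. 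Multiplying by the inductive hypothesis yields \cref{eq:ait-probs}. The same computation applies verbatim to $w' \in \Nappend_{\MG,M,t}(w)$: the walk stops at $w'$ precisely because $w'$ is the first triggering node.

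\textbf{Step 2 (closing the induction and the final nodes).} For $w_t \in \Nint_{\MG,M,t}(w)$, the node satisfies \textbf{(P1)}, so the algorithm continues with probability equal to the ratio
\begin{align}
\frac{\hat\mu(w_t)}{\hat\mu(u)\,\bar\eta_{t-1,u}(w_t\mid u)}, \nonumber
\end{align}
which is less than $1/M \le 1$. Multiplying this by $\bar\eta_{t-1,u}(w_t\mid u)\,\hat\mu(u)/\hat\mu(w)$ telescopes to $\BP(A_{i,t}(w_t)) = \hat\mu(w_t)/\hat\mu(w)$, which is the hypothesis at stage $t$.

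For $w' \in \Nfinal_{\MG,M}(w)$, let $t$ be the stage at which $w'$ enters $\Nfinal$ (unique by \cref{lem:n-cut}). The algorithm adds $w'$ to $\MT$ with probability equal to the same ratio divided by $M$. Within one iteration $i$, the probability that $w'$ is added is therefore $\hat\mu(w')/(M\hat\mu(w))$. Summing over the $M$ iterations, which each behave identically in distribution since $\Gret$ does not influence the sampling, gives \cref{eq:exp-occurrences} by linearity of expectation.

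\textbf{Main obstacle: validity of the Bernoulli parameters.} On Lines~\ref{line:xi-big} and \ref{line:xi-else}, the ratio divided by $M$ must be at most $1$. That is, we need
\begin{align}
\hat\mu(w_{i,t}) \le M\,\hat\mu(w_{i,t-1})\,\bar\eta_{i,t-1}(w_{i,t}\mid w_{i,t-1}). \nonumber
\end{align}
This follows because the while loop verified, at the parent of $w_{i,t}$, that every child $y$ has ratio at most $M$. That check covers $w_{i,t}$ itself unless $w_{i,t} = w_{i,t-1}$, which cannot happen since each stage appends at least one character. I would state this check carefully, and also address the edge case $w \in \Sigma^n$, where \cref{alg:ts-rs} immediately adds $w$ with probability $1/M$ per iteration. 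A second point to verify is that the break on Line~\ref{line:xi-small} exactly implements the continuation probability used in Step~2.
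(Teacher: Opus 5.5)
Your route is the paper's. Along the chain $w=w_0,w_1,\dots,w_{t-1}$ of intermediate nodes, the $\bar\eta$-probability of walking from $w_{r-1}$ to $w_r$ cancels against the survival probability on Line~\ref{line:proceed-next-t}, so the chain telescopes to $\BP(A_{i,t-1}(w_{t-1}))=\hat\mu(w_{t-1})/\hat\mu(w)$. A sandwiched node then picks up the extra factor $\bar\eta_{t-1,w_{t-1}}(w'\mid w_{t-1})$, a final node picks up its ratio divided by $M$, and summing over the $M$ identically distributed trials gives \cref{eq:exp-occurrences}. Writing this as an induction on $t$ rather than as a product is cosmetic.

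\textbf{The problem.} Your claim that each stage appends at least one character is false. Your own edge case $w\in\Sigma^n$ already contradicts it, but the more important counterexample is different.

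\emph{Why the loop can take zero steps.} At the start of stage $t$, the while condition on Line~\ref{line:while-wit} is evaluated at $w_{i,t}=w_{i,t-1}$. It therefore tests the children of $w_{i,t-1}$ under the \emph{updated} proposal $\bar\eta_{i,t-1}$. The restriction on Line~\ref{line:proceed-next-t} only enforces consistency at $w_{i,t-1}$ itself, and Line~\ref{line:f1-define} protects only the children of the root. So a child with ratio $>M$ is entirely possible. In that case the loop runs zero times, Line~\ref{line:if-big} fires with $w_{i,t}=w_{i,t-1}$, and $w_{i,t-1}$ itself is added to $\MT$ with probability exactly $1/M$.

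\emph{What survives and what breaks.} Your Bernoulli-validity conclusion survives, since the parameter is $1/M$. Step~1 does not. Because $\bar\eta_{i,t-1}=\bar\eta_{t-1,u}$ is determined by $u=w_{i,t-1}$, this zero-step exit happens on all of $A_{i,t-1}(u)$, so the walk never enters the subtree below $u$. Consequently, the strict descendants of $u$ in $\MS_{\MG,M,t}(w)$ and in $\Nfinal_{\MG,M}(w)$ have probability, respectively expected count, equal to $0$. The latter set is nonempty by \cref{lem:n-cut}, since (P2) is only tested at strict descendants. These zeros contradict the (generally positive) values in \cref{eq:ait-probs,eq:exp-occurrences}. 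Meanwhile the node actually added lies in $\Nint_{\MG,M,t-1}(w)$, not in $\Nfinal_{\MG,M}(w)$.

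\emph{This is shared with the paper.} The paper's proof makes the same tacit assumption: its validity argument says $w'$ ``was obtained by extending from $w_{t-1}$ in the immediately preceding step of the while loop.'' So this is a defect in the setup, not in your route relative to the paper.

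\textbf{Repair.} Test (P2) non-strictly. That is, let $w''\in\Nint_{\MG,M,t-1}(w)$ itself become a stage-$t$ final node whenever one of its children has ratio $>M$ under $\bar\eta_{t-1,w''}$. The simple-cut structure of \cref{lem:n-cut} is preserved. Your induction then gives
\begin{align}
\E[\occ(w'',\MT)]=M\cdot\BP(A_{i,t-1}(w''))\cdot\frac1M=\frac{\hat\mu(w'')}{\hat\mu(w)}\nonumber
\end{align}
with no other change.
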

\begin{proof}
We first prove \cref{eq:ait-probs}. Fix $t \geq 1$ and $w' \in \MS_{\MG,M,t}(w)$. For each $0 \leq r < t$, let $w_r$ denote the unique node on the path from $w$ to $w'$ that belongs to the simple cut
$
\Nint_{\MG,M,r}(w) \cup \Nfinal_{\MG,M,r}(w).\nonumber
$
By \cref{lem:alg-bookkeeping}, we must have $w_r \in \Nint_{\MG,M,r}(w)$ for every $0 \leq r < t$. In particular, $w_0=w$, and for each $1 \leq r \leq t$, the node $w_r$ satisfies \textbf{(P1)} relative to $w_{r-1}$.

For each $1 \leq r \leq t$, conditioned on $A_{i,r-1}(w_{r-1})$, the probability that the algorithm next reaches $w_r$ and survives the Bernoulli test in the \textbf{(P1)} branch (on Line \ref{line:proceed-next-t}) is
\begin{align}
\bar \eta_{r-1,w_{r-1}}(w_r \mid w_{r-1}) \cdot \frac{\hat \mu(w_r)}{\hat \mu(w_{r-1}) \cdot \bar \eta_{r-1,w_{r-1}}(w_r \mid w_{r-1})}
= \frac{\hat \mu(w_r)}{\hat \mu(w_{r-1})}.\nonumber
\end{align}
Therefore
\begin{align}
\BP(A_{i,t-1}(w_{t-1})) = \prod_{r=1}^{t-1} \frac{\hat \mu(w_r)}{\hat \mu(w_{r-1})} = \frac{\hat \mu(w_{t-1})}{\hat \mu(w)}.\nonumber
\end{align}
To establish  \cref{eq:ait-probs}, we next note that
\begin{align}
\BP(\bar A_{i,t}(w')) = \BP(A_{i,t-1}(w_{t-1})) \cdot \bar \eta_{t-1,w_{t-1}}(w' \mid w_{t-1}) = \frac{\hat \mu(w_{t-1})}{\hat \mu(w)} \cdot \bar\eta_{t-1,w_{t-1}}(w' \mid w_{t-1})\label{eq:bar-ait-equality},
\end{align}
which yields \cref{eq:ait-probs} by noting that $w_{t-1} = \prev(w')$.

We proceed to prove \cref{eq:exp-occurrences}. Fix $w' \in \Nfinal_{\MG,M}(w)$, and let $t \geq 1$ be the smallest index for which $w' \in \Nfinal_{\MG,M,t}(w)$. For each $0 \leq r < t$, let $w_r$ denote the unique node on the path from $w$ to $w'$ that belongs to the simple cut $\Nint_{\MG,M,r}(w).$ It follows from the same argument used to derive \cref{eq:ait-probs} that
\begin{align}
\BP(A_{i,t-1}(w_{t-1})) = \frac{\hat \mu(w_{t-1})}{\hat \mu(w)}.
\label{eq:ts-rs-intermediate-chain}
\end{align}

Now condition on $A_{i,t-1}(w_{t-1})$. Since $w'=w_t$ satisfies \textbf{(P2)} relative to $w_{t-1}$, the algorithm reaches $w'$ with probability $\bar \eta_{t-1,w_{t-1}}(w' \mid w_{t-1})$, and then adds $w'$ to $\MT$ (on either Line \ref{line:add-to-T-mid} or Line \ref{line:add-to-T-end}) with Bernoulli probability
\begin{align}
\frac{\hat \mu(w')}{\hat \mu(w_{t-1}) \cdot \bar \eta_{t-1,w_{t-1}}(w' \mid w_{t-1})} \cdot \frac{1}{M}.
\nonumber
\end{align}
This Bernoulli parameter is indeed at most $1$: the node $w'$ was obtained by extending from $w_{t-1}$ in the immediately preceding step of the while loop, and that extension could only occur while the while-condition in Line \ref{line:while-wit} held. 
Thus, if $B_i$ denotes the event that the $i$th outer-loop iteration contributes one copy of $w'$ to $\MT$, then
\begin{align}
\BP(B_i \mid A_{i,t-1}(w_{t-1}))
= \bar \eta_{t-1,w_{t-1}}(w' \mid w_{t-1}) \cdot \frac{\hat \mu(w')}{\hat \mu(w_{t-1}) \cdot \bar \eta_{t-1,w_{t-1}}(w' \mid w_{t-1})} \cdot \frac{1}{M}
= \frac{\hat \mu(w')}{M \cdot \hat \mu(w_{t-1})}.\nonumber
\end{align}
Combining this with \cref{eq:ts-rs-intermediate-chain}, we obtain
\begin{align}
\BP(B_i)
= \frac{\hat \mu(w_{t-1})}{\hat \mu(w)} \cdot \frac{\hat \mu(w')}{M \cdot \hat \mu(w_{t-1})}
= \frac{\hat \mu(w')}{M \cdot \hat \mu(w)}.\nonumber
\end{align}

Each outer-loop iteration contributes at most one copy of $w'$ to $\MT$, and the $M$ outer-loop iterations are identically distributed. Therefore
\begin{align}
\E[\occ(w',\MT)]
= \sum_{i=1}^M \BP(B_i)
= M \cdot \frac{\hat \mu(w')}{M \cdot \hat \mu(w)}
= \frac{\hat \mu(w')}{\hat \mu(w)},\nonumber
\end{align}
as claimed.
\end{proof}

The next lemma bounds the query complexity of a single call to \cref{alg:ts-rs} in terms of the weighted depths of the returned nodes.
\begin{lemma}
\label{lem:query-cost}
The expected number of oracle calls to $\hat \mu$ made by \cref{alg:ts-rs} is bounded above by
\begin{align}
O \left( |\Sigma| \cdot R \cdot M^2 \right) \cdot \left( \E \left[\sum_{w' \in \MT}  \frac{\mu^\st(w')}{\hat \mu(w')} \cdot |w'|  \right] - \frac{\mu^\st(w)}{\hat \mu(w)} \cdot |w|\right)\nonumber. %
\end{align}
Moreover, the number of oracle calls is bounded almost surely by $O(|\Sigma| M  n)$. 
\end{lemma}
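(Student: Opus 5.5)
The almost-sure bound is a direct count. For the expected bound, the plan is to charge every query to the node at which the trial is currently located, bound the probability of being at each node using \cref{lem:final-probability}, and then recognize the resulting sum as the difference of weighted depths in the statement.

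\emph{Almost-sure bound.} In each of the $M$ trials of \cref{alg:ts-rs}, the walk $w_{i,t}$ only moves downward, so it visits at most $n-|w|+1$ distinct nodes. At a node $u$ the algorithm evaluates the while condition on Line \ref{line:while-wit} and the branch tests on Lines \ref{line:if-big}--\ref{line:else-lengthn}. These require only $\hat\mu(u)$, $\hat\mu(u\circ y)$ for $y \in \Sigma$, and $\hat\mu(w_{i,t-1})$. The last value was already queried, and the quantities $\bar\eta$ are free since $\MF$ is known. So each visited node costs $O(|\Sigma|)$ queries, and the total is $O(|\Sigma| M n)$.

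\emph{Expected bound: charging.} I implement the while test by first checking the lower threshold at $u$. This needs only $\hat\mu(u)$, which was already queried as a child of $u$'s parent. Hence the $O(|\Sigma|)$ queries to the children $u \circ y$ are made only in two situations: (a) $u$ does not satisfy \textbf{(P1)} relative to $\prev(u)$, i.e.\ $u \in \MS_{\MG,M,t}(w)$ or $u$ is a \textbf{(P2)}-final node reached in that stage; or (b) $u = w_{i,t} \in \Nint_{\MG,M,t}(w)$ has survived the Bernoulli test on Line \ref{line:proceed-next-t}, or $u = w$.

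\emph{Bounding visit probabilities.} In case (a), \cref{lem:final-probability} (applied also to the final node by \cref{eq:bar-ait-equality}) gives
\[
\BP(\bar A_{i,t}(u)) = \bar\eta_{t-1,\prev(u)}(u\mid \prev(u))\,\frac{\hat\mu(\prev(u))}{\hat\mu(w)} \le M\,\frac{\hat\mu(u)}{\hat\mu(w)}.
\]
The inequality uses that \textbf{(P1)} fails at $u$, so $\hat\mu(u)/(\hat\mu(\prev(u))\,\bar\eta(u\mid\prev(u))) \ge 1/M$. In case (b), the chain computation in the proof of \cref{lem:final-probability} gives $\BP(A_{i,t}(u)) = \hat\mu(u)/\hat\mu(w)$. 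In both cases, \cref{asm:linfty-close-gen} bounds the probability by $MR\,\mu^\st(u)/\hat\mu(w)$. Every such $u$ with $u \ne w$ satisfies $w \prec u \preceq v$ for a unique $v \in \Nfinal_{\MG,M}(w)$. Summing over the $M$ trials, the expected cost excluding the root is at most
\[
O(|\Sigma| R M^2)\sum_{v \in \Nfinal_{\MG,M}(w)} \; \sum_{w \prec u \preceq v} \frac{\mu^\st(u)}{\hat\mu(w)}.
\]

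\emph{Identifying the target expression.} By \cref{eq:exp-occurrences} and \cref{lem:alg-bookkeeping},
\[
\E\Big[\sum_{w'\in\MT}\frac{\mu^\st(w')}{\hat\mu(w')}|w'|\Big] = \sum_{v\in\Nfinal_{\MG,M}(w)} \frac{\mu^\st(v)}{\hat\mu(w)}\,|v|.
\]
By \cref{lem:n-cut,lem:int-bounded}, $\Nfinal_{\MG,M}(w)$ is a simple $w$-cut, so $\sum_{v}\mu^\st(v)=\mu^\st(w)$. The target difference therefore equals $\hat\mu(w)^{-1}\sum_v \mu^\st(v)(|v|-|w|)$. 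Writing $\mu^\st(u)$ as the sum of $\mu^\st$ over the cut nodes below $u$, this equals exactly the double sum $\hat\mu(w)^{-1}\sum_{w\prec u \text{ above the cut}} \mu^\st(u)$ appearing in the bound above.

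\emph{Root term.} The root $u=w$ contributes $O(|\Sigma| M)$. If $|w|=n$ there are no queries at all. Otherwise every cut node has depth at least $|w|+1$, so the difference is at least $\mu^\st(w)/\hat\mu(w) \ge 1/R$, and the root cost is absorbed into the $O(|\Sigma| R M^2)$ factor.

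\emph{Main obstacle.} The delicate step is the charging rule at intermediate nodes. Their arrival probability $\bar\eta(u\mid\prev(u))\,\hat\mu(\prev(u))/\hat\mu(w)$ is not controlled by $\hat\mu(u)$. This is why the queries to $u$'s children must be deferred until after the truncation coin on Line \ref{line:proceed-next-t}, so that the survival probability $\hat\mu(u)/\hat\mu(w)$ applies instead.
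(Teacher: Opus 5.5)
There is a genuine gap in your charging step. In \cref{alg:ts-rs}, once the lower threshold fails at the current node $u$ (so $u$ satisfies \textbf{(P1)} relative to $w_{i,t-1}$), the algorithm must next evaluate the condition on Line \ref{line:if-big}. That is, it must decide whether some child $u\circ y$ exceeds the upper threshold, and only if none does is the truncation coin on Lines \ref{line:if-small}--\ref{line:proceed-next-t} flipped. So however you short-circuit the while test, all children of every \textbf{(P1)} node the walk reaches are queried \emph{before} that coin.

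Your cases (a) and (b) therefore miss real queries. Truncated \textbf{(P1)} nodes fall in neither case, and neither do nodes satisfying both \textbf{(P1)} and \textbf{(P2)}, which are final rather than intermediate. In fact, whether $u\in\Nint_{\MG,M,t}(w)$ cannot even be decided without querying the children. The arrival probability at such $u$ is $\bar\eta_{t-1,\prev(u)}(u\mid\prev(u))\,\hat\mu(\prev(u))/\hat\mu(w)$, which, as you note yourself, is not controlled by $\hat\mu(u)$.

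Deferring the child queries until after the coin changes the algorithm. You would flip the truncation coin first, then a second coin of bias $1/M$ if the node turns out to satisfy \textbf{(P2)}, and then argue that the law of $(\MT,\Gret)$ is unchanged. The lemma concerns \cref{alg:ts-rs} as stated, and you have not carried out such a coupling.

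The missing idea is that stage-ending nodes need no per-node probability bound; the paper bounds them crudely. Trial $i$ visits exactly $T_i$ stage-ending nodes $w_{i,1},\dots,w_{i,T_i}$. Passing from stage $t$ to stage $t+1$ requires surviving a coin of bias $<1/M\le 1/2$, so $\BP(T_i\ge t+1\mid T_i\ge t)\le 1/M$ and $\E[T_i]=O(1)$. Hence the queries at these nodes and their children cost $O(|\Sigma|)$ per trial in expectation, or $O(M|\Sigma|)$ overall. This is absorbed exactly as in your root-term argument, using the lower bound $1/R$ on the target difference.

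With that, your case-(a) bound $\BP(\bar A_{i,t}(u))\le MR\,\mu^\st(u)/\hat\mu(w)$ is only needed for $u\in\MS_{\MG,M,t}(w)$. Your identification of the resulting sum with the target difference then goes through as in the paper.

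A minor point: as displayed, $\sum_{v}\sum_{w\prec u\preceq v}\mu^\st(u)$ counts each $u$ once per cut node below it. Also, a node above the cut does not have a \emph{unique} cut descendant. You mean a single sum over distinct $u$ strictly below $w$ and on or above the cut $\Nfinal_{\MG,M}(w)$.

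Your almost-sure bound is fine.
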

\begin{proof}
For each $i \in [M]$, let $T_i \in \BN$ denote the final value of $t$ reached in the for loop in Line \ref{line:for-t} of \cref{alg:ts-rs}, in outer iteration $i$. 
The expected number of oracle calls made by \cref{alg:ts-rs}, which we denote by the random variable $V$, may be bounded as follows:
\begin{align}
\E[V] \leq M |\Sigma| \cdot O \left( \sum_{t=1}^n \sum_{w' \in \MS_{\MG,M,t}(w)} \BP(\bar A_{i,t}(w'))+ \E[T_i+1]\right)%
\label{eq:ev-init-bnd}.
\end{align}
Indeed, for each outer loop iteration $i \in [M]$, \cref{alg:ts-rs} queries $\hat \mu$ at each of the nodes along the path $w_{i,0} \to w_{i,1} \to \cdots \to w_{i,T_i}$, as well as at each of the children of each of these nodes (in order to check the condition in the while statement on \cref{line:while-wit}). For each $0 \leq t \leq T_i$, we have $w_t \in \Nappend_{\MG,M,t}(w)$: the first term in \cref{eq:ev-init-bnd} accounts for the queries made at the nodes in-between the $w_{i,t}$, and the second term accounts for the queries made at the nodes $w_{i,t}$. 

First, we note that, since $M \geq 2$, we have $\BP(T_i \geq t+1 \mid T_i \geq t) \leq 1/M \leq 1/2$ for each $t \geq 0$. Thus, $\E[T_i] \leq 4$. We proceed to bound the first term in \cref{eq:ev-init-bnd}.

Next, for any $t \geq 1$, we compute
\begin{align}
&  \sum_{w' \in \MS_{\MG,M,t}(w)} \BP(\bar A_{i,t}(w')) \nonumber\\
= &  \sum_{w' \in \MS_{\MG,M,t}(w)} \frac{\bar \eta_{t-1,\prev(w')}(w') \cdot \hat \mu(\prev(w'))}{\hat \mu(w)}\nonumber\\
\leq &\sum_{w' \in \MS_{\MG,M,t}(w)} \frac{M \cdot \hat \mu(w')}{\hat \mu(w)}\nonumber\\
\leq & \sum_{w' \in \MS_{\MG,M,t}(w)} \frac{MR \cdot  \mu^\st(w')}{\hat \mu(w)}\nonumber\\
= & \sum_{w' \in \Nappend_{\MG,M,t}(w)} \frac{MR \cdot  \mu^\st(w')}{\hat \mu(w)} \cdot (|w' - \prev(w')| - 1)\label{eq:aitbar-bound},
\end{align}
where:
\begin{itemize}
\item The first equality uses \cref{eq:ait-probs} of \cref{lem:final-probability}. 
\item The following inequality uses the fact that $w'$ (not being in $\Nappend_{\MG,M,t}(w)$) does not satisfy \textbf{(P1)}, which gives
\begin{align}
\frac{\hat \mu(w')}{\hat \mu(\prev(w')) \cdot \bar \eta_{t-1,\prev(w')}(w' \mid \prev(w'))} \geq \frac{1}{M}.\label{eq:muhat-parent}
\end{align}
\item The following inequality uses \cref{asm:linfty-close-gen}. 
\item The final equality rearranges the summation, using that for each $w' \in \MS_{\MG,M,t}(w)$ we can write
\begin{align}
\mu^\st(w') = \sum_{\substack{x \in \Nappend_{\MG,M,t}(w) \\ w' \prec x}} \mu^\st(x)\nonumber
\end{align}
as $\mu^\st$ is a distribution; further, each element $w' \in \Nappend_{\MG,M,t}(w)$ is preceded by exactly $|w' - \prev(w')| - 1$ elements of $\MS_{\MG,M,t}(w)$.
\end{itemize} 
Summing \cref{eq:aitbar-bound} over $t \geq 1$, we see that
\begin{align}
\sum_{t=1}^n  \sum_{w' \in \MS_{\MG,M,t}(w)} \BP(\bar A_{i,t}(w'))   \leq  & \sum_{t=1}^n \sum_{w' \in \Nappend_{\MG,M,t}(w)} \frac{MR \cdot  \mu^\st(w')}{\hat \mu(w)} \cdot |w' - \prev(w')|\nonumber\\
= & \sum_{w' \in \Nfinal_{\MG,M}(w)} \frac{MR \cdot \mu^\st(w')}{\hat \mu(w)} \cdot |w' - w|\nonumber\\
=& \sum_{w' \in \Nfinal_{\MG,M}(w)} \frac{MR \cdot \mu^\st(w')}{\hat \mu(w)} \cdot (|w'| - |w|)\nonumber\\
=& \sum_{w' \in \Nfinal_{\MG,M}(w)} \frac{MR \cdot \mu^\st(w')}{\hat \mu(w)} \cdot |w'| - \frac{MR \cdot \mu^\st(w)}{\hat \mu(w)} \cdot |w|\nonumber\\
=& MR \cdot \left( \E \left[\sum_{w' \in \MT}  \frac{\mu^\st(w')}{\hat \mu(w')} \cdot |w'|  \right] - \frac{\mu^\st(w)}{\hat \mu(w)} \cdot |w|\right)\label{eq:barait-final},
\end{align}
where:
\begin{itemize}
\item The first inequality uses \cref{eq:aitbar-bound} for each $t$.
\item The following equality uses again the fact that $\mu^\st$ is a distribution, writing, for each $w' \in \Nint_{\MG,M,t}(w) \subset \Nappend_{\MG,M,t}(w)$,
 \begin{align}
\mu^\star(w') = \sum_{\substack{x \in \Nfinal_{\MG,M}(w) \\ w' \prec x}} \mu^\star(x)\nonumber.
\end{align}
Moreover, for each $w' \in \Nfinal_{\MG,M}(w)$, if we choose $t$ so that $w' \in \Nfinal_{\MG,M,t}(w) \backslash \Nfinal_{\MG,M,t-1}(w)$, there is some $w_s \in \Nint_{\MG,M,s}(w)$ for each $1 \leq s < t$ so that $w_s \prec w'$, and summing $|w_s - \prev(w_s)| = |w_s - w_{s-1}|$ over $1 \leq s < t$ gives $|w' - w|$.
\item The following equality uses the fact that $|w' - w| = |w'| - |w|$ for $w \succ w$. 
\item The following equality uses that $\mu^\st$ is a distribution and $\Nfinal_{\MG,M}(w)$ is a simple $w$-cut (\cref{lem:n-cut}). 
\item The final equality uses \cref{eq:exp-occurrences} of \cref{lem:final-probability}, which gives $\E[\occ(w',\MT)] = \hat \mu(w')/\hat \mu(w)$ for each $w' \in \Nfinal_{\MG,M}(w)$.
\end{itemize}
The claim of the lemma follows by combining \cref{eq:barait-final} with \cref{eq:ev-init-bnd} together with the fact that $|w'| - |w| \geq 1$ for all $w' \in \MT$ and the fact that $\E \left[ \sum_{w' \in \MT} \frac{\mu^\star(w')}{\hat \mu(w')} \right] \geq 1/R$ (which follows from \cref{eq:exp-occurrences} of \cref{lem:final-probability}). 
\end{proof}

\begin{algorithm}[H]
\caption{Particle Filtering with Learning}
\label{alg:pf-learning}
\begin{algorithmic}[1]
\Require Alphabet $\Sigma$, parameter $R \geq 1$, failure tolerance $\delta \in (0,1)$, oracle $\hat \mu : \Sigma^{\leq n} \to \BR_{\geq 0}$.
\State Define $\bar R := 2R^2$, $N := C_{\N} \left\lceil \sqrt{\log |\MF|} \cdot R^2 \bar R^2 \left(\log \frac{R \log |\MF|}{\delta}\right) \right\rceil$ (for an appropriate constant $C_{\N}$, specified in \cref{lem:mtgl-concentration}), and $K := \left\lceil 4R \log_2(1/\delta) \right\rceil$.
\For{$1 \leq k \leq K$}
\State Let $\MS^0$ be the multiset consisting of $N$ copies of the empty sequence $\perp$.
\State Set $\MF^1 \gets  \{ \mu \in \MF \mid \hat \mu(y) / \mu(y) \leq \bar R \ \forall y \in \Sigma\}$, $j \gets 1$.\label{line:f1-define}
\While{$\MS^{j-1} \cap \Sigma^{<n} \neq \emptyset$}\label{line:while-s}
\State Let $z_j$ be the lexicographically first element of $\MS^{j-1} \cap \Sigma^{<n}$.\label{line:choose-zj}
\State Call \cref{alg:ts-rs} with the node $w \gets z_j$, the class $\MG \gets \MF^j$, and $M = \bar R$; suppose it returns $(\MT_j, \MG_j)$, where $\MT_j \subset \Sigma^n$ and $\MG_j \subset \MF^j$. \label{line:rs-call} %
\State Set $\MS^j \gets (\MS^{j-1} \setminus \{z_j\}) \cup \MT_j$.
\State Set $\MF^{j+1} \gets \MG_j$.\label{line:update-fj-gj}
\State Set $j \gets j+1$. 
\EndWhile
\State Let $J$ denote the final value of $j$.
\State Set $\Sterm \gets \MS^J$.
\State With probability $\min \{1, |\Sterm|/(2RN)\}$, return a uniformly random element of $\Sterm$.\label{line:k-rejection-sampling}
\EndFor
\State Return an arbitrary fixed element of $\Sigma^n$.
\end{algorithmic}
\end{algorithm}

\subsection{Analysis of \cref{alg:pf-learning}}
\label{sec:pf-learning-analysis}
We are now ready to analyze \cref{alg:pf-learning}. 
We let $J$ denote the total number of iterations executed by \cref{alg:pf-learning}, i.e., the smallest index such that every element of $\MS^J$ has length $n$. %
For convenience, for all $j > J$ we set $\MS^j := \MS^J$. Finally, for each $j \geq 0$, let $\SF^j$ denote the sigma-algebra generated by all random variables in \cref{alg:pf-learning} up to the end of the $j$th iteration of the loop.
Using the fact that the size of $\MF^j$ is decreased whenever \cref{alg:ts-rs} returns particles of length less than $n$ (\cref{lem:gret-decrease}), we first bound the number of iterations $J$: 
\begin{lemma}
\label{lem:iteration-bound}
With probability $1$, we have $J \leq 2 \bar R \cdot \log_2 |\MF| + N$.
\end{lemma}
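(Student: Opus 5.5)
We classify each iteration $j$ of the while loop in \cref{alg:pf-learning} by what the call to \cref{alg:ts-rs} on \cref{line:rs-call} returns. Call iteration $j$ \emph{restricting} if $\MT_j \cap \Sigma^{<n} \neq \emptyset$, and \emph{completing} otherwise, i.e. $\MT_j \subset \Sigma^n$. We bound the two counts separately; the two pieces give the two terms $2\bar R \log_2|\MF|$ and $N$.

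\textbf{Restricting iterations.} By \cref{lem:gret-decrease} with $M = \bar R = 2R^2$, on a restricting iteration we have $|\MF^{j+1}| = |\MG_j| \leq R^2 |\MF^j|/\bar R = |\MF^j|/2$. On every other iteration $\MG_j \subseteq \MF^j$, so $|\MF^j|$ is nonincreasing in $j$. Since $\mu^\star$ is never eliminated under \cref{asm:linfty-close-gen}, $|\MF^j| \geq 1$. (Even without this fact, once the class is empty no further halving can be counted, so the bound is still valid.) Hence there are at most $\log_2|\MF| + 1$ restricting iterations. This is within the slack $2\bar R\log_2|\MF|$, possibly after treating $|\MF|=1$ separately; in that case no halving ever occurs, so there are no restricting iterations at all.

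\textbf{Completing iterations.} A completing iteration removes one incomplete element $z_j$ from $\MS^{j-1}$ and adds only length-$n$ sequences. So it decreases $|\MS^{j-1}\cap\Sigma^{<n}|$ by exactly one. A restricting iteration can increase this count, by at most $|\MT_j \cap \Sigma^{<n}| \leq M = \bar R$, because each of the $M$ trials in \cref{alg:ts-rs} adds at most one node. The count starts at $N$ (the $N$ copies of $\perp$) and is $0$ at termination. A potential argument therefore gives
\[
\#\{\text{completing}\} \leq N + \bar R \cdot \#\{\text{restricting}\}.
\]

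\textbf{Combining and the main obstacle.} Adding the two counts gives $J \lesssim N + (\bar R + 1)(\log_2|\MF|+1)$. The delicate step is matching the stated constant $2\bar R\log_2|\MF|$. I would sharpen the accounting using the fact that on a restricting iteration the removed node $z_j$ is itself incomplete, so the net increase in the incomplete count is at most $\bar R - 1$. Then $J \leq N + \bar R\cdot\#\{\text{restricting}\}$, and $\#\{\text{restricting}\} \leq \log_2|\MF|+1 \leq 2\log_2|\MF|$ whenever $|\MF|\geq 2$. This yields $J \leq 2\bar R\log_2|\MF| + N$.

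A subtlety is whether the halving bound still holds when some $\MG_{i,t-1}$ is empty. This is harmless: in that case the returned class is empty and no further restricting iterations can occur, so the count of restricting iterations is unaffected.
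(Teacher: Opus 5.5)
Your proof is correct and takes the same approach as the paper. You use \cref{lem:gret-decrease} with $M=\bar R$ to halve $|\MF^j|$ on every iteration that returns an incomplete node, and you bound the remaining iterations by counting incomplete particles: each iteration consumes one, and each restricting iteration creates at most $\bar R$.

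Your potential-function bookkeeping is a bit cleaner than the paper's, which states its count in terms of $|\MS^{j'}|$ when it really needs $|\MS^{j'}\cap\Sigma^{<n}|$. Also, since you already observe that $\mu^\star\in\MF^j$ for all $j$, halving gives at most $\log_2|\MF|$ restricting iterations directly. The extra $+1$ and the separate treatment of $|\MF|=1$ are therefore unnecessary.
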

\begin{proof}
Consider any iteration $j$ for which $\MT_j \cap \Sigma^{<n} \neq \emptyset$; \cref{lem:gret-decrease} gives that for any such $j$, we have $|\MF^{j+1}| \leq |\MF^j| \cdot \frac{R^2}{\bar R} \leq |\MF^j|/2$. 
Thus, the number of iterations $j$ for which $\MT_j \cap \Sigma^{<n} \neq \emptyset$ is bounded above by $\log_2 |\MF|$. Since $|\MT_j| \leq \bar R$ almost surely for all $j$ by the structure of \cref{alg:ts-rs}, and all elements of $\MS^{j'} \backslash \MS^0$ for any $j' \in [J]$ are elements of some $\MT^j$ for some $j < j'$, it follows that $|\MS^{j'}| \leq N + \bar R \cdot \log_2 |\MF|$ for all $j' \geq 1$. 

Further, in any iteration $j$ for which $\MT^j \cap \Sigma^{<n} = \emptyset$, we have $|\MS^{j} \cap \Sigma^{<n}| = |\MS^{j-1} \cap \Sigma^{<n}| - 1$. It follows that $J \leq \log_2 |\MF| + \max_{j \geq 1} |\MS^j| \leq 2 \bar R \cdot \log_2 |\MF| + N$. 
\end{proof}

Next, using \cref{lem:final-probability}, we give an expression for the expected number of times each ancestor of a given sequence $x \in \Sigma^n$ will occur in one of the multisets $\MS^j$. 
\begin{lemma}
  \label{lem:unbiased-path}
For each $x \in \Sigma^n$ and $j \geq 0$, it holds that $\left( \sum_{x' \in \MS^j} \frac{\One{x' \preceq x}}{\hat \mu(x')} \right)_{j \geq 0}$ is a martingale with respect to $(\SF^j)_{j \geq 0}$, and in particular
\begin{align}
  \E \left[ \sum_{x' \in \MS^j} \frac{\One{x' \preceq x}}{\hat \mu(x')} \right] = N\nonumber.
\end{align}
\end{lemma}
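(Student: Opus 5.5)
We fix $x \in \Sigma^n$ and set $Y_j := \sum_{x' \in \MS^j} \One{x' \preceq x}/\hat \mu(x')$. The plan is to show $\E[Y_j \mid \SF^{j-1}] = Y_{j-1}$ for every $j \geq 1$ by a one-step computation, and then to evaluate $Y_0$. The base case is immediate: $\MS^0$ consists of $N$ copies of $\perp$, and $\perp \preceq x$ with $\hat\mu(\perp) = 1$ (using the convention $\mu(\emptyset)=1$, which under \cref{asm:linfty-close-gen} we take to hold also for $\hat\mu$). Hence $Y_0 = N$, and the expectation identity follows from the martingale property by the tower rule.

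For the one-step identity, we first handle $j > J$: then $\MS^j = \MS^{j-1}$, and since the event $\{j > J\}$ is $\SF^{j-1}$-measurable, there is nothing to prove. For $j \leq J$, we write $\MS^j = (\MS^{j-1}\setminus\{z_j\}) \cup \MT_j$, so that
\begin{align*}
Y_j - Y_{j-1} = \sum_{w' \in \MT_j} \frac{\One{w' \preceq x}}{\hat\mu(w')} - \frac{\One{z_j \preceq x}}{\hat\mu(z_j)}.
\end{align*}
Here $z_j$ and $\MF^j$ are $\SF^{j-1}$-measurable, and conditionally on $\SF^{j-1}$ the call to \cref{alg:ts-rs} uses fresh randomness with the fixed inputs $w = z_j$, $\MG = \MF^j$, $M = \bar R$. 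It therefore suffices to show, for fixed $w, \MG$, that
\begin{align*}
\E\Big[\sum_{w' \in \MT} \frac{\One{w' \preceq x}}{\hat\mu(w')}\Big] = \frac{\One{w \preceq x}}{\hat\mu(w)}.
\end{align*}

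To prove this, we use \cref{lem:alg-bookkeeping}, which gives $\MT \subset \Nfinal_{\MG,M}(w)$ almost surely, so the left side equals $\sum_{w' \in \Nfinal_{\MG,M}(w)} \E[\occ(w',\MT)]\,\One{w'\preceq x}/\hat\mu(w')$. By \cref{eq:exp-occurrences} of \cref{lem:final-probability}, each term equals $\One{w' \preceq x}/\hat\mu(w)$. Every element of $\Nfinal_{\MG,M}(w)$ is a descendant of $w$, so if $w \not\preceq x$ the sum vanishes. If $w \preceq x$, then $x$ is a leaf below $w$. By \cref{lem:n-cut}, applied at $t = t_0$ from \cref{lem:int-bounded} where $\Nint_{\MG,M,t_0}(w) = \emptyset$, the set $\Nfinal_{\MG,M}(w)$ is a simple $w$-cut. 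Hence exactly one $w' \in \Nfinal_{\MG,M}(w)$ satisfies $w' \preceq x$, and the sum equals $1/\hat\mu(w)$, as required.

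The main obstacle is technical care with degenerate cases rather than any real difficulty. We must check that $\hat\mu(w) > 0$ for the nodes $z_j$ that are actually chosen, so that the ratios are defined; this should hold because nodes are only added to $\MT$ with probability proportional to $\hat\mu$. We must also justify exchanging the expectation with the sum, which is fine because $\Nfinal_{\MG,M}(w)$ is finite and $|\MT| \leq M$. Finally, we should note that $Y_j$ is integrable, since $|\MS^j|$ is bounded by \cref{lem:iteration-bound} and the weights are finite.
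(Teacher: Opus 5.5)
Your argument follows the paper's proof essentially line for line. You reduce to a single call of \cref{alg:ts-rs} with $\SF^{j-1}$-measurable inputs, then combine \cref{lem:alg-bookkeeping}, \cref{eq:exp-occurrences}, and the simple-cut property of $\Nfinal_{\MG,M}(w)$; you justify the last of these a bit more explicitly via \cref{lem:n-cut} and \cref{lem:int-bounded}.

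However, there is a degenerate case that you (and the paper) miss, and it is not cosmetic. The step ``each term equals $\One{w' \preceq x}/\hat\mu(w)$'' cancels $\hat\mu(w')$ in $\E[\occ(w',\MT)]/\hat\mu(w') = \frac{\hat\mu(w')}{\hat\mu(w)}\cdot\frac{1}{\hat\mu(w')}$, which requires $\hat\mu(w') > 0$. If $\mu^\star(x) = 0$, the unique cut node $w' \preceq x$ may have $\hat\mu(w') = 0$. Such a node is never placed in $\MT$, since its acceptance probability is proportional to $\hat\mu(w')$, so the conditional increment of $Y_j$ is $-1/\hat\mu(z_j) \neq 0$.

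In fact the statement itself is false for such $x$. All elements of $\MS^J$ are leaves, and $x$ can never be added because its Bernoulli parameter at the leaf stage is $0$; hence $Y_J = 0$ almost surely. Yet the martingale property, together with the almost-sure bound on $J$ from \cref{lem:iteration-bound}, would force $\E[Y_J] = N$. This happens whenever $\mu^\star$ lacks full support, e.g.\ $\MF = \{\delta_{x^\star}\}$ with $\hat\mu = \mu^\star$.

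The repair is to restrict to $x$ with $\mu^\star(x) > 0$. Then every $u \preceq x$ has $\hat\mu(u) \geq \mu^\star(u)/R > 0$, and your computation goes through verbatim. The restricted statement is all that \cref{lem:unbiasedness} needs: for $\mu^\star(x) = 0$, the identity $\E[\occ(x,\Sterm)] = 0$ holds trivially because $x$ never enters $\Sterm$.

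Your separate check that $\hat\mu(z_j) > 0$ for the chosen nodes is correct. It is just not the positivity condition that actually breaks.
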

\begin{proof}
For a fixed $x \in \Sigma^n$, define
\begin{align}
\Phi_x(\mathcal T) := \sum_{x' \in \mathcal T} \frac{\One{x' \preceq x}}{\hat \mu(x')}\nonumber.
\end{align}
for any multiset $\mathcal T$ of particles. We claim that $(\Phi_x(\MS^j))_{j \geq 0}$ is a martingale with respect to $(\SF^j)_{j \geq 0}$.

Fix any $j \geq 1$. If $j > J$, then $\MS^j = \MS^{j-1}$ by definition, and there is nothing to prove. Suppose therefore that $1 \leq j \leq J$. The update from $\MS^{j-1}$ to $\MS^j$ removes one copy of $z_j$ from $\MS^{j-1}$ and then adds the multiset $\MT_j$ returned by \cref{alg:ts-rs}. If $z_j\not\preceq x$, then no element of $\MT_j$ can be an ancestor of $x$, and so $\Phi_x(\MS^j) = \Phi_x(\MS^{j-1})$. Otherwise, suppose that $z_j \preceq x$. Conditioned on $\SF^{j-1}$, there is a unique ancestor of $x$ that can belong to $\MT_j$, namely the unique ancestor of $x$ in $\Nfinal_{\MF^{j},\bar R}(z_j)$; denote this element by $w_j$. Thus, we have
\begin{align}
\E \left[ \sum_{w \in \MT_j} \frac{\One{w \preceq x}}{\hat \mu(w)} \mid  \SF^{j-1}\right] =& \E \left[ \frac{\occ(w_j, \MT_j)}{\hat \mu(w_j)} \mid \SF^{j-1}\right] \nonumber\\
=& \frac{\hat \mu(w_j)}{\hat \mu(z_j)} \cdot \frac{1}{\hat \mu(w_j)} = \frac{1}{\hat \mu(z_j)}\nonumber,
\end{align}
where the second equality uses \cref{eq:exp-occurrences} of \cref{lem:final-probability}. 

Hence, on the ($\SF^{j-1}$-measurable) event that $z_j \preceq x$, we have 
\begin{align}
\E\!\left[\Phi_x(\MS^j)-\Phi_x(\MS^{j-1}) \mid \SF^{j-1} \right]
= -\frac{1}{\hat \mu(z_j)} + \E \left[ \sum_{w \in \MT_j} \frac{\One{w \preceq x}}{\hat \mu(w)} \mid \SF^{j-1}\right] = 0.\nonumber
\end{align}

Thus $(\Phi_x(\MS^j))_{j \geq 0}$ is a martingale. At time $0$, the multiset $\MS^0$ consists of $N$ copies of the empty sequence, so by the convention $\hat \mu(\emptyset) = 1$ we have
\begin{align}
\Phi_x(\MS^0) = \frac{N}{\hat \mu(\emptyset)} = N.\nonumber
\end{align}
Combining this base case with the martingale property yields $\E[\Phi_x(\MS^j)] = N$ for every $j \geq 0$, which is exactly the claimed identity.
\end{proof}

As a corollary of \cref{lem:unbiased-path}, we can show that each $x \in \Sigma^n$ occurs in $\Sterm$ a number of times that is proportional to its true density $\mu^\star(x)$. 
\begin{lemma}[Unbiasedness]
  \label{lem:unbiasedness}
For each $x \in \Sigma^n$, it holds that $\E[\occ(x, \Sterm)] = N \cdot \mu^\st(x)$, where the expectation is over the randomness in \cref{alg:pf-learning}.
\end{lemma}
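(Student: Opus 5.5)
We plan to derive this directly from \cref{lem:unbiased-path}, applied at the terminal index $J$. Fix $x \in \Sigma^n$. Since $\Sterm = \MS^J$ consists only of complete sequences in $\Sigma^n$, the only element $x' \in \Sterm$ with $x' \preceq x$ is $x$ itself. Hence
\begin{align}
\Phi_x(\Sterm) = \sum_{x' \in \Sterm} \frac{\One{x' \preceq x}}{\hat \mu(x')} = \frac{\occ(x, \Sterm)}{\hat \mu(x)} = \frac{\occ(x,\Sterm)}{\mu^\st(x)},\nonumber
\end{align}
where the last equality uses \cref{asm:linfty-close-gen}, which gives $\hat\mu(x) = \mu^\st(x)$ on full-length sequences. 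The case $\mu^\st(x) = 0$ is handled separately. Then $\hat\mu(x)=0$, and the algorithm can never place $x$ into $\Sterm$, because the Bernoulli parameter on Line~\ref{line:xi-else} is zero. So both sides of the claimed identity vanish.

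The main step is passing from $\E[\Phi_x(\MS^j)] = N$ for each fixed $j$ to the same identity at the random time $J$. We will use optional stopping. By \cref{lem:iteration-bound}, $J \le J_{\max} := 2\bar R \log_2|\MF| + N$ almost surely. Since we set $\MS^j = \MS^J$ for $j > J$, the martingale is constant after $J$. Therefore $\Phi_x(\MS^J) = \Phi_x(\MS^{J_{\max}})$ almost surely, and applying \cref{lem:unbiased-path} at the deterministic index $J_{\max}$ gives $\E[\Phi_x(\Sterm)] = N$. Multiplying by $\mu^\st(x)$ yields $\E[\occ(x,\Sterm)] = N \mu^\st(x)$.

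The only point we expect to need care is checking that the martingale in \cref{lem:unbiased-path} is well defined, so that no $\hat\mu(x')=0$ appears in a denominator. Every particle added to some $\MT_j$ was added with probability proportional to $\hat\mu(x')$, so almost surely every particle has $\hat\mu(x') > 0$. We will note this briefly and otherwise treat the argument as a direct corollary.
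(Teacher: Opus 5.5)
Your proposal is correct and is essentially the paper's proof: you apply \cref{lem:unbiased-path} at the terminal time and use that $\Sterm$ contains only leaves, where $\hat\mu=\mu^\st$. Where the paper appeals to optional stopping, you instead use the almost-sure bound $J \le J_{\max}$ from \cref{lem:iteration-bound} together with the freezing convention $\MS^j=\MS^J$ for $j>J$; your separate treatment of $\mu^\st(x)=0$ is also worthwhile, since the paper divides by $\mu^\st(x')$ without comment (and when $\mu^\st(x)>0$, every ancestor of $x$ has $\hat\mu \ge \mu^\st(x)/R>0$, which is what actually makes the martingale identity of \cref{lem:unbiased-path} valid for that $x$).
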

\begin{proof}
Fix $x \in \Sigma^n$. Note that $J$ is a stopping time for the filtration $(\SF^j)_{j\geq 0}$. The optional stopping theorem gives
\begin{align}
\E \left[ \sum_{x' \in \MS^J} \frac{\One{x' \preceq x}}{\hat \mu(x')} \right]
= \E[\Phi_x(\MS^J)] = \E[\Phi_x(\MS^0)] = N\nonumber.
\end{align}
By definition of $J$, every element of $\MS^J = \Sterm$ has length $n$. Since $x \in \Sigma^n$, the condition $x' \preceq x$ for $x' \in \MS^J$ is equivalent to $x' = x$. Together with the assumption that $\mu^\st(x) = \hat \mu(x)$ for $x \in \Sigma^n$, this gives
$
\E \left[ \sum_{x' \in \MS^J} \frac{\One{x' = x}}{\mu^\st(x')} \right] = N,
$
which yields the desired statement after rearranging.
\end{proof}

For each $j \geq 0$, we define $M_j := \sum_{x \in \MS^j} \frac{\mu^\star(x)}{\hat \mu(x)}$. 
The next lemma shows that this total reweighted mass is itself a martingale along the execution of the algorithm.
\begin{lemma}
  \label{lem:mi-mtgl}
$(M_j)_{j \geq 0}$ is a martingale with respect to the filtration $(\SF^j)_{j \geq 0}$, i.e., for every $j \geq 1$ we have $\E[M_j \mid \SF^{j-1}] = M_{j-1}$.
\end{lemma}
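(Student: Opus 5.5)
The plan is to reduce the claim to the per-leaf martingales of \cref{lem:unbiased-path} by linearity. For any multiset $\MT$ of nodes, I will first check the identity
\begin{align}
\sum_{x' \in \MT} \frac{\mu^\star(x')}{\hat \mu(x')} = \sum_{x \in \Sigma^n} \mu^\star(x) \sum_{x' \in \MT} \frac{\One{x' \preceq x}}{\hat \mu(x')}.\nonumber
\end{align}
It holds because $\mu^\star(x') = \sum_{x \in \Sigma^n,\, x' \preceq x} \mu^\star(x)$ is exactly the marginal convention fixed at the start of \cref{sec:ub-proof}. Exchanging the two finite sums then gives $M_j = \sum_{x \in \Sigma^n} \mu^\star(x)\, \Phi_x(\MS^j)$, where $\Phi_x$ is the potential defined in the proof of \cref{lem:unbiased-path}.

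Next I use that \cref{lem:unbiased-path} shows each $(\Phi_x(\MS^j))_{j\ge 0}$ is a martingale with respect to $(\SF^j)_{j \geq 0}$. The weights $\mu^\star(x)$ are deterministic and $\Sigma^n$ is finite, so $M_j$ is a finite fixed linear combination of martingales adapted to the same filtration. Conditional expectation is linear, so
\begin{align}
\E[M_j \mid \SF^{j-1}] = \sum_x \mu^\star(x) \E[\Phi_x(\MS^j)\mid \SF^{j-1}] = \sum_x \mu^\star(x)\Phi_x(\MS^{j-1}) = M_{j-1}.\nonumber
\end{align}

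The only technical points are integrability and well-definedness. Every node $x'$ placed in some $\MS^j$ satisfies $\hat\mu(x')>0$: nodes in $\MT$ are added with probability proportional to $\hat\mu(x')$, so a node with $\hat\mu(x') = 0$ is added with probability zero, and the root has $\hat\mu(\emptyset)=1$. Each $|\MS^j|$ is almost surely bounded by $N + \bar R \log_2|\MF|$, by the argument in \cref{lem:iteration-bound}. Each ratio $\mu^\star(x')/\hat\mu(x')$ is at most $R$ by \cref{asm:linfty-close-gen}. Together these give $M_j \le R(N+\bar R\log_2|\MF|)$, so $M_j$ is bounded and integrable.

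Alternatively, I can verify the martingale property directly, mirroring the proof of \cref{lem:unbiased-path}. On an iteration $j \le J$, the node $z_j$ is removed and $\MT_j$ is added. By \cref{lem:alg-bookkeeping}, $\MT_j \subset \Nfinal_{\MF^j,\bar R}(z_j)$. By \cref{eq:exp-occurrences} and the fact that $\Nfinal$ is a simple $z_j$-cut (\cref{lem:n-cut}), the expected added mass is
\begin{align}
\sum_{w'} \frac{\hat\mu(w')}{\hat\mu(z_j)}\cdot\frac{\mu^\star(w')}{\hat\mu(w')} = \frac{\mu^\star(z_j)}{\hat\mu(z_j)},\nonumber
\end{align}
which exactly cancels the removed term. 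I expect no real obstacle here. The one step that needs care is confirming that the cut property gives $\sum_{w' \in \Nfinal} \mu^\star(w') = \mu^\star(z_j)$, and I will cite \cref{lem:n-cut} for it.
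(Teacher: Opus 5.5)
Your proof is correct. Your primary route differs from the paper's, while your ``alternative'' paragraph is essentially the paper's proof.

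\textbf{The paper's proof.} It works directly with the increment $M_j - M_{j-1}$, written as the added mass of $\MT_j$ minus $\mu^\star(z_j)/\hat\mu(z_j)$. It evaluates the conditional expectation of this increment using \cref{eq:exp-occurrences}, together with the fact that $\Nfinal_{\MF^j,\bar R}(z_j)$ is a simple $z_j$-cut (\cref{lem:n-cut}).

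\textbf{Your main argument.} You instead use the marginal convention to write $M_j = \sum_{x \in \Sigma^n} \mu^\star(x)\,\Phi_x(\MS^j)$. The martingale property then follows by linearity from the per-leaf martingales of \cref{lem:unbiased-path}, since there are finitely many leaves and the weights are deterministic.

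\textbf{What your route buys.} No new computation is needed. The cut-sum identity $\sum_{w' \in \Nfinal_{\MF^j,\bar R}(z_j)} \mu^\star(w') = \mu^\star(z_j)$ is replaced by the trivial fact that prefixes partition $\Sigma^n$. The cut structure was already used inside \cref{lem:unbiased-path}, to identify the unique ancestor $w_j$ of a leaf in $\Nfinal$. Your route also shows that $M_j$ is simply the $\mu^\star$-average of the leaf potentials $\Phi_x$, so $\E[M_j] = N$ comes for free.

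\textbf{What the paper's route buys.} It is self-contained and slightly shorter, since it does not pass through the leaves.

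Your explicit integrability check is a detail the paper leaves implicit. You note that only nodes with $\hat\mu > 0$ ever enter $\MS^j$, and that $M_j \le R(N + \bar R \log_2|\MF|)$ almost surely.
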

\begin{proof}
Fix any $j \geq 1$. If $j > J$, then $M_j = M_{j-1}$ by definition, so the claim is immediate. For each $1 \leq j \leq J$, we have 
\begin{align} 
M_j - M_{j-1} = \sum_{x' \in \MT_j} \frac{\mu^\st(x')}{\hat \mu(x')} - \frac{\mu^\st(z_j)}{\hat \mu(z_j)}\nonumber.
\end{align}
It therefore suffices to show
\begin{align}
\E \left[ \sum_{x' \in \MT_j} \frac{\mu^\st(x')}{\hat \mu(x')} \mid  \SF^{j-1} \right] = \frac{\mu^\st(z_j)}{\hat\mu(z_j)}.\label{eq:ij-mtgl}
\end{align}
Recall that $\MT_j \subset \Nfinal_{\MF^j, \bar R}(z_j)$ by \cref{lem:alg-bookkeeping}. Then we may compute
\begin{align}
\E \left[ \sum_{x' \in \MT_j} \frac{\mu^\st(x')}{\hat \mu(x')} \mid   \SF^{j-1} \right]  =& \sum_{x' \in \Nfinal_{\MF^j, \bar R}(z_j)} \frac{\mu^\st(x')}{\hat \mu(x')} \cdot \E[\occ(x', \MT_j) \mid \SF^{j-1}] \nonumber\\
=& \sum_{x' \in \Nfinal_{\MF^j, \bar R}(z_j)} \frac{\mu^\st(x')}{\hat \mu(x')} \cdot \frac{\hat \mu(x')}{\hat \mu(z_j)} \nonumber\\
=& \frac{1}{\hat \mu(z_j)} \sum_{x' \in \Nfinal_{\MF^j, \bar R}(z_j)} \mu^\st(x') = \frac{\mu^\st(z_j)}{\hat \mu(z_j)}\label{eq:mij-mtgl},
\end{align}
where the second equality uses \cref{eq:exp-occurrences} of \cref{lem:final-probability} and the final equality uses the fact that $\Nfinal_{\MF^j, \bar R}(z_j)$ is a simple $z_j$-cut (\cref{lem:n-cut}).
\end{proof}

Next, we use Azuma's inequality to bound the martingale $M_j$: 
\begin{lemma}
  \label{lem:mtgl-concentration}
Fix any $\delta \in (0,1)$. Then the martingale $(M_j)_{j \geq 0}$ satisfies %
\begin{align}
\BP \left(\max_{1 \leq j \leq J} |M_j - M_0| \leq  \frac{M_0}{2}\right) \geq 1-\frac{\delta}{2R}, \qquad
\E\left[\left(\frac{M_J}{2RN}-1\right)_+\right] \leq \frac{\delta}{2R} \nonumber. %
\end{align}
\end{lemma}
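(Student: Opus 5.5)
We apply Azuma--Hoeffding (in its maximal form) to the stopped martingale $(M_{j\wedge J})_{j\geq 0}$ from \cref{lem:mi-mtgl}, and then integrate the resulting tail bound to control the overshoot expectation.

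\emph{Initial value.} Since $\MS^0$ consists of $N$ copies of $\emptyset$ and $\mu^\st(\emptyset)=\hat\mu(\emptyset)=1$, we have $M_0=N$.

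\emph{Bounded increments.} For $j\le J$,
\[
M_j-M_{j-1}=\sum_{x'\in\MT_j}\frac{\mu^\st(x')}{\hat\mu(x')}-\frac{\mu^\st(z_j)}{\hat\mu(z_j)} .
\]
Each ratio lies in $[1/R,R]$ by \cref{asm:linfty-close-gen}. Moreover $|\MT_j|\le \bar R$, since \cref{alg:ts-rs} runs $M=\bar R$ trials and each trial adds at most one node. Hence $|M_j-M_{j-1}|\le c:=R\bar R+R\le 2R\bar R$.

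\emph{Number of steps.} \cref{lem:iteration-bound} gives $J\le J_{\max}:=2\bar R\log_2|\MF|+N$ almost surely. We therefore view $(M_{j\wedge J})$ as a martingale with at most $J_{\max}$ nonzero increments, each bounded by $c$.

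\emph{First claim.} The maximal Azuma inequality (Doob's inequality applied to the exponential supermartingale) gives, for every $s>0$,
\[
\BP\Bigl(\max_{j\le J}|M_j-M_0|\ge s\Bigr)\le 2\exp\bigl(-s^2/(2J_{\max}c^2)\bigr).
\]
Take $s=N/2$. We need $N^2\gtrsim J_{\max}R^2\bar R^2\log(R/\delta)$. Write $L:=\log\frac{R\log|\MF|}{\delta}$.
\begin{itemize}
\item The $N$ part of $J_{\max}$ requires $N\gtrsim R^2\bar R^2\log(R/\delta)$.
\item The $2\bar R\log_2|\MF|$ part requires $N\gtrsim \sqrt{\log|\MF|}\,R\bar R^{3/2}\sqrt{\log(R/\delta)}$.
\end{itemize}
Both are satisfied by $N=C_{\N}\lceil\sqrt{\log|\MF|}\,R^2\bar R^2 L\rceil$ with $C_{\N}$ large, assuming without loss of generality that $|\MF|\ge 2$. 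The resulting tail is at most $\delta/(2R)$.

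\emph{Second claim.} On the good event above, $M_J\le \tfrac32 N<2RN$, so the positive part vanishes. Off that event we integrate the tail:
\[
\E\Bigl[\Bigl(\frac{M_J}{2RN}-1\Bigr)_+\Bigr]=\int_0^\infty \BP\bigl(M_J-N\ge (2R-1)N+2RNu\bigr)\,du .
\]
Apply Azuma to each integrand. The threshold is at least $N(1+2u)$, so each integrand is at most $\exp\bigl(-N^2(1+2u)^2/(2J_{\max}c^2)\bigr)$. This is a Gaussian-type tail, and its integral is at most $\exp\bigl(-N^2/(2J_{\max}c^2)\bigr)\cdot O\bigl(\sqrt{J_{\max}}\,c/N\bigr)$. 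By the same choice of $N$ this is at most $\delta/(2R)$, possibly after enlarging $C_{\N}$.

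\emph{Main obstacle.} The step needing care is that $J$ is random. We handle it by working with the stopped martingale and using the deterministic bound $J_{\max}$ from \cref{lem:iteration-bound}. Because $J_{\max}$ itself contains $N$, we must check the resulting quadratic inequality in $N$; the $R^2\bar R^2 L$ factor in $N$ is exactly what absorbs it.
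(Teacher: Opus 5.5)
Your proposal is correct and follows essentially the same route as the paper: a conditional Hoeffding/Azuma bound for increments of size $O(R\bar R)$, the deterministic horizon $T = 2\bar R\log_2|\MF| + N$ from \cref{lem:iteration-bound}, and, for the overshoot term, integrating the sub-Gaussian tail of $M_J - M_0$ beyond $(2R-1)N \geq N$. The only difference is in the maximal deviation: you use Doob's maximal inequality on the exponential submartingale, whereas the paper applies \cref{lem:freedman} at each fixed $j$ and union-bounds over $j \in [T]$, which costs it an extra $\log T$ inside the logarithm that your version avoids.
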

\begin{proof} 
Let $T := 2 \bar R \cdot \log_2 |\MF| + N$, so that \cref{lem:iteration-bound} ensures that $J \leq T$ with probability $1$. 
For $1 \leq j \leq J$, $|\MT_j| \leq \bar R$ almost surely and so $0 \leq \sum_{x' \in \MT_j} \frac{\mu^\st(x')}{\hat \mu(x')} \leq R \cdot \bar R$; thus $|M_j - M_{j-1}| \leq R \bar R$ for all $j$. Using the fact that a random variable $M$ taking values in $[0,A]$ satisfies $\log \E[\exp(\lambda \cdot (A - \E[A]))] \leq \lambda^2 A^2 / 8$, together with the fact that $\E[M_j - M_{j-1} \mid  \SF^{j-1}] = 0$ (which was established in \cref{lem:mi-mtgl}), we may conclude that for any $\lambda \in \BR$,
\begin{align}
\log \E \left[ \exp(\lambda \cdot (M_j - M_{j-1})) \mid \SF^{j-1}\right] \leq \lambda^2 (R \bar R)^2\label{eq:mgf-bound-for-freedman}.
\end{align}
Since $M_j=M_J$ for all $j \geq J$, iterating the preceding conditional moment-generating function bound up to the deterministic time $T$ gives, for every $\lambda \geq 0$,
\begin{align}
\log \E \left[\exp(\lambda \cdot (M_J-M_0))\right] \leq T\lambda^2(R\bar R)^2.\nonumber
\end{align}
Therefore, for every $s > 0$,
$
\BP(M_J-M_0 > s) \leq \exp\left(-\frac{s^2}{4T(R\bar R)^2}\right).
$
Using $M_0=N$ and $R \geq 1$, it follows that
\begin{align}
\E\left[\left(\frac{M_J}{2RN}-1\right)_+\right]
&= \frac{1}{2RN}\int_{(2R-1)N}^{\infty} \BP(M_J-M_0 > s)\,ds \nonumber\\
&\leq \frac{1}{2RN}\int_N^{\infty} \exp\left(-\frac{s^2}{4T(R\bar R)^2}\right)\,ds \nonumber\\
&\leq \frac{T(R\bar R)^2}{RN^2}\exp\left(-\frac{N^2}{4T(R\bar R)^2}\right) \leq \frac{\delta}{2R},\nonumber
\end{align}
where the final inequality follows from the choice of $N$ in \cref{alg:pf-learning}, after increasing the absolute constant $C_{\N}$ if necessary.

Moreover, using \cref{eq:mgf-bound-for-freedman}, by \cref{lem:freedman} and a union bound over $j \in [T]$ it follows that for any $\lambda \in \BR$, with probability at least $1-\delta/(4R)$, for all $1 \leq j \leq J$, we have
\begin{align}
\lambda \cdot (M_j - M_0) \leq j \cdot \lambda^2 (R \bar R)^2 + \log(4RT/\delta)\nonumber.
\end{align}
Rearranging and choosing $\lambda =\pm\frac{\sqrt{\log (4RT/\delta)}}{\sqrt{T} \cdot R \bar R}$ gives that, with probability $1-\delta/2R$,
\begin{align}
\max_{1 \leq j \leq J} |M_j - M_0| \leq & |\lambda |\cdot j (R \bar R)^2 + \frac{\log (4RT/\delta)}{|\lambda|} \leq O \left( R \bar R \cdot \sqrt{T \log (R\log_2(|\MF|) N/\delta)} \right) \nonumber\\
\leq & O \left(R \bar R \sqrt{(\bar R \cdot \log_2 |\MF| + N) \cdot \log (R\log_2(|\MF|) N/\delta)} \right) \leq \frac{N}{2} = \frac{M_0}{2}\nonumber,
\end{align}
where the final inequality uses the choice of $N$ in \cref{alg:pf-learning} for an appropriately large constant $C_{\N}$. 
\end{proof}

Next, we convert the preceding unbiasedness and concentration bounds for the martingale $M_j$ into correctness of the output distribution.
\begin{lemma}
  \label{lem:right-sample-distribution}
For any $\delta \in (0,1/2)$, \cref{alg:pf-learning} with the parameter $\delta$ outputs a sample from a distribution $\mualg$ satisfying $\tvd{\mu^\star}{\mualg} \leq \delta$. 
\end{lemma}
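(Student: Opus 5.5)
Fix one outer iteration $k$ of \cref{alg:pf-learning}. I would show that a single iteration returns an element with some probability $p$, and that conditioned on returning, the output law is close to $\mu^\star$. Since the $K$ iterations are i.i.d., the final output is then a mixture: with probability $(1-p)^K$ we output the fixed element, and otherwise we output the conditional law of a single iteration. So it suffices to show three things: $p \geq 1/(4R)$, so that $(1-p)^K \leq \delta/2$ by the choice $K = \lceil 4R\log_2(1/\delta)\rceil$; the conditional law of a successful iteration is within $O(\delta)$ of $\mu^\star$; and these two errors add up to at most $\delta$ after adjusting constants.

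\textbf{Exact-case distribution.} For a single iteration, define the sub-probability measure $q(x) := \BP(\text{iteration returns } x)$. Conditioned on $\Sterm$, this probability is $\min\{1,|\Sterm|/(2RN)\} \cdot \occ(x,\Sterm)/|\Sterm|$.
\begin{itemize}
\item If the truncation never bit, this would equal $\occ(x,\Sterm)/(2RN)$, whose expectation is $\mu^\star(x)/(2R)$ by \cref{lem:unbiasedness}.
\item Hence $q(x) = \mu^\star(x)/(2R) - e(x)$, where $e(x) := \E\big[\occ(x,\Sterm)\big(\tfrac{1}{2RN} - \tfrac{1}{|\Sterm|}\big)_+\big] \geq 0$.
\item Summing over $x$ and using $|\Sterm| = M_J$ (since $\mu^\star = \hat\mu$ on $\Sigma^n$) gives $\sum_x e(x) = \E\big[(M_J/(2RN) - 1)_+\big] \leq \delta/(2R)$ by \cref{lem:mtgl-concentration}.
\end{itemize}
Thus $p = \sum_x q(x) \geq (1-\delta)/(2R) \geq 1/(4R)$ for $\delta<1/2$.

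\textbf{Output distribution.} With $\bar q := q/p$, the output law is $\mualg = (1-(1-p)^K)\,\bar q + (1-p)^K \delta_{x_0}$. Since $q \leq \mu^\star/(2R)$ pointwise and the total deficit is $\sum_x e(x) \leq \delta/(2R)$, we get $\tvd{\bar q}{\mu^\star} \leq \sum_x e(x)/p$. This is of order $\delta$; more precisely it is at most $\delta/(1-\delta)$, and to obtain a clean $\delta$ overall I would rescale $\delta$ by a constant in $N$, or tighten the tail bound. Adding the contribution $(1-p)^K \leq (1-1/(4R))^{4R\log_2(1/\delta)} \leq \delta^{1.44}$, which is small, then yields the claim up to constants.

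\textbf{Main obstacle.} The delicate step is the constant bookkeeping needed to land exactly at $\delta$ rather than $O(\delta)$. I expect to handle it by noting that the TV error equals the normalized deficit, $1 - \sum_x \min(\bar q(x), \mu^\star(x))$. This shows it is at most $\delta$ when the tail bound is taken with $\delta/4$, which is absorbed into $C_{\N}$. A secondary point to verify is that the iterations are genuinely independent, which holds because each iteration restarts with fresh $\MS^0$ and $\MF^1$.
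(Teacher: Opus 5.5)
Your proposal is correct and follows essentially the same route as the paper: single-round unbiasedness from \cref{lem:unbiasedness}, the overshoot bound $\E[(|\Sterm|/(2RN)-1)_+]\le \delta/(2R)$ from \cref{lem:mtgl-concentration} to control the deficit, boosting over the $K$ independent rounds, and a final rescaling of $\delta$ (the paper also only reaches $2\delta$ before rescaling). One small sharpening: since $p \le 1/(2R)$, you have $\bar q \ge 2R\, q$ pointwise, so $\tvd{\bar q}{\mu^\star} \le 2R\sum_x e(x) \le \delta$ directly, which avoids the $\delta/(1-\delta)$ loss.
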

\begin{proof}
Fix one outer-loop iteration (i.e., over $k \in [K]$). Let $X$ denote the random variable which is equal to the sample returned in that iteration if the final Bernoulli step succeeds, and is equal to $\bot$ otherwise. By \cref{lem:unbiasedness}, we have, for any set $\MA \subseteq \Sigma^n$, 
\begin{align}
\BP(X\in \MA ) =  \E \left[ \sum_{x \in \MA} \frac{\occ(x, \Sterm)}{|\Sterm|} \cdot \min \left\{ 1, \frac{|\Sterm|}{2RN} \right\} \right] \leq \frac{1}{2R} \cdot \mu^\star(\MA).\label{eq:single-round-output-law}
\end{align}
For the lower bound, note that the pointwise inequality
\begin{align}
\sum_{x \in \MA} \frac{\occ(x, \Sterm)}{|\Sterm|} \cdot \min \left\{1, \frac{|\Sterm|}{2RN}\right\}
\geq \frac{1}{2RN}\sum_{x \in \MA}\occ(x,\Sterm) - \left(\frac{|\Sterm|}{2RN}-1\right)_+ \nonumber
\end{align}
holds for every realization of $\Sterm$, where the left-hand side is interpreted as $0$ if $|\Sterm|=0$. Therefore,
\begin{align}
\BP(X \in \MA)
&\geq \frac{1}{2RN}\E\left[\sum_{x \in \MA} \occ(x,\Sterm)\right]
- \E\left[\left(\frac{|\Sterm|}{2RN}-1\right)_+\right] \nonumber\\
&\geq \frac{1}{2R} \cdot \mu^\st(\MA) - \frac{\delta}{2R}.\label{eq:single-round-output-lb}
\end{align}
It follows that $\frac{1}{2R} - \frac{\delta}{2R}\leq \BP(X \neq \perp) \leq \frac{1}{2R}$, and so
\begin{align}
\mu^\st(\MA) - \delta\leq \BP(X \in \MA \mid X \neq \perp) \leq \mu^\st(\MA) \cdot \frac{1}{1-\delta} \leq \mu^\st(\MA) \cdot (1 + \delta)\label{eq:mustar-close}.
\end{align}
Let $\tilmualg \in \Delta(\Sigma^n)$ be the distribution defined by $\tilmualg(x) := \BP(X = x \mid X \neq \perp)$. Then \cref{eq:mustar-close} gives that $\tvd{\mu^\st}{\tilmualg} \leq \delta$. By our choice of $K = \lceil 4R \log_2(1/\delta) \rceil$ and the fact that each outer-loop iteration produces a symbol which is not $\perp$ with probability at least $\frac{1}{2R} - \frac{\delta}{2R} \geq \frac{1}{4R}$, with probability at least $1-\delta$, some outer loop iteration will produce a symbol which is not $\perp$, and which is therefore distributed according to $\tilmualg$. It follows that $\tvd{\mu^\st}{\mualg} \leq 2\delta$; the claimed result follows by rescaling $\delta$.
\end{proof}

The final lemma uses the oracle complexity bound of \cref{lem:query-cost} to derive the overall oracle complexity guarantee for \cref{alg:pf-learning}.
\begin{lemma}
  \label{lem:oracle-calls}
The number of oracle calls to $\hat \mu$ made by \cref{alg:pf-learning} is upper bounded by
\begin{align}
O\left(n \cdot |\Sigma| \cdot K \cdot R \bar R^2 \cdot \left\lceil \sqrt{\log |\MF|} \cdot R^2 \bar R^2 \left( \log \frac{R \log |\MF|}{\delta}\right) \right\rceil \right)
= O\left(n \cdot |\Sigma| \cdot R^{12} \cdot \sqrt{\log |\MF|} \cdot \log \frac{1}{\delta} \cdot \left(\log \frac{R \log |\MF|}{\delta}\right) \right)\nonumber
\end{align}
with probability $1-\delta$.
\end{lemma}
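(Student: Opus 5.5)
Within a single outer iteration $k$, we bound the total queries via the weighted process $W_j := \sum_{x \in \MS^j} \frac{\mu^\st(x)}{\hat\mu(x)} \cdot |x|$. Then we multiply by $K$ and take a union bound over the outer iterations.

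Let $V_j$ be the number of oracle calls made in the $j$-th call to \cref{alg:ts-rs}, and write $C := |\Sigma| R \bar R^2$. Conditioned on $\SF^{j-1}$, \cref{lem:query-cost} (with $M = \bar R$, $w = z_j$, $\MG = \MF^j$) gives
\[
\E[V_j \mid \SF^{j-1}] \leq O(C)\cdot \E[W_j - W_{j-1} \mid \SF^{j-1}],
\]
since $W_j - W_{j-1} = \sum_{w' \in \MT_j} \frac{\mu^\st(w')}{\hat\mu(w')}|w'| - \frac{\mu^\st(z_j)}{\hat\mu(z_j)}|z_j|$. This conditional drift is nonnegative. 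Summing over $j \leq J$ gives
\[
\sum_j \E[V_j \mid \SF^{j-1}] \leq O(C) \cdot \Bigl( W_J + \sum_{j \le J} \bigl(\E[W_j - W_{j-1} \mid \SF^{j-1}] - (W_j - W_{j-1})\bigr) \Bigr),
\]
using $W_0 = 0$ because $\MS^0$ consists of empty sequences. Two error terms must be controlled.

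First, $W_J \leq n M_J$, since every element of $\Sterm$ has length $n$. By the first claim of \cref{lem:mtgl-concentration}, with probability $1-\delta/(2R)$ we have $M_J \leq \frac{3}{2}N$, so $W_J \leq 2nN$.

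Second, the martingale-difference sum $\sum_j (\E[\Delta W_j \mid \SF^{j-1}] - \Delta W_j)$ needs Azuma/Freedman (\cref{lem:freedman}). The increments satisfy $|\Delta W_j| \leq n R\bar R + nR$, using $|\MT_j| \leq \bar R$ and $\mu^\st/\hat\mu \leq R$. The number of steps satisfies $J \leq T = 2\bar R\log_2|\MF| + N$ by \cref{lem:iteration-bound}. So with probability $1-\delta/(4RK)$ this term is at most $O(nR\bar R\sqrt{T\log(RKT/\delta)})$. By the choice of $N$ (the constant $C_{\N}$), this is $O(nN)$.

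Finally, we convert $\sum_j \E[V_j \mid \SF^{j-1}]$ into $\sum_j V_j$. Here $V_j \leq O(|\Sigma|\bar R n)$ almost surely by \cref{lem:query-cost}. Another Freedman/Azuma application over at most $T$ steps gives
\[
\sum_j V_j \leq \sum_j \E[V_j \mid \SF^{j-1}] + O\bigl(|\Sigma|\bar R n \sqrt{T\log(RKT/\delta)}\bigr),
\]
which again is $O(|\Sigma| \bar R \, nN)$. We also add $O(1)$ for the last step. Altogether, one iteration uses $O(n|\Sigma|R\bar R^2 N)$ queries. Multiplying by $K$ and union bounding the failure events over $k \in [K]$ (each with probability $O(\delta/K)$ after rescaling) yields the first displayed bound. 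Substituting $\bar R = 2R^2$, $K = O(R\log(1/\delta))$, and the definition of $N$ gives the $R^{12}$ form.

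The main obstacle is the stopping time $J$ in the martingale arguments. I would handle it as in \cref{lem:mtgl-concentration}: freeze the processes after $J$ and use the deterministic horizon $T$. One also has to check that $T\log(\cdot)$ is dominated by $N^2/(R\bar R)^2$, which holds precisely because $N \gtrsim \sqrt{\log|\MF|}\,R^2\bar R^2\log(\cdot)$.
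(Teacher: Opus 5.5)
Your proposal is correct and follows the paper's proof essentially step for step. It uses the same potential $W_j$ with $W_0=0$ and $W_J = nM_J \le \tfrac32 nN$, the same two Azuma arguments (for $V_j$ and for the increments of $W_j$) over the frozen deterministic horizon $T$, absorption of the error terms via the choice of $N$, and a union bound over the $K$ outer iterations. The one detail your ``after rescaling'' must cover is that the $\delta/(2R)$ failure probability you quote from \cref{lem:mtgl-concentration} is too large to union bound over $K \asymp R\log(1/\delta)$ iterations. As the paper does, you should rerun that concentration bound at level $\delta/(4K)$; this only enlarges $C_{\N}$ by a constant, since $\log(KT/\delta) = O(\log\frac{R\log|\MF|}{\delta})$.
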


\begin{proof}
For each $j \geq 1$, we let the random variable $V_j$ denote the number of oracle calls to $\hat \mu$ made by \cref{alg:ts-rs} in Line \ref{line:rs-call} of \cref{alg:pf-learning} during step $j$ of the algorithm. By \cref{lem:query-cost}, we have
\begin{align}
\E[V_j \mid \SF^{j-1}] \leq O \left(|\Sigma| \cdot R \cdot \bar R^2 \right) \cdot \E \left[ \sum_{x \in \MT_j} \frac{\mu^\st(x)}{\hat \mu(x)} \cdot |x| - \frac{\mu^\st(z_j)}{\hat \mu(z_j)} \cdot |z_j| \mid \SF^{j-1} \right]\label{eq:vj-update}.
\end{align}
Next, we define a random variable
\begin{align}
W_j = \sum_{x \in \MS_j}\frac{\mu^\st(x)}{\hat \mu(x)} \cdot |x|\nonumber.
\end{align}
Note that
\begin{align}
W_j = W_{j-1} - \frac{\mu^\st(z_j)}{\hat \mu(z_j)} \cdot |z_j| + \sum_{x \in \MT_j} \frac{\mu^\st(x)}{\hat \mu(x)} \cdot |x| \label{eq:wj-update},
\end{align}
so that \cref{eq:vj-update,eq:wj-update} give
\begin{align}
\E[V_j \mid \SF^{j-1}] \leq O \left(|\Sigma| \cdot R \cdot \bar R^2 \right) \cdot \E[W_j - W_{j-1} \mid \SF^{j-1}].\label{eq:vj-wj}
\end{align}
It follows that 
\begin{align}
\E \left[ \sum_{j=1}^J V_j \right] \leq O(|\Sigma| R \bar R^2) \cdot \E[W_J] = O(n \cdot |\Sigma| R \bar R^2) \cdot \E[M_J] =  O(Nn \cdot |\Sigma| R \bar R^2)\nonumber.
\end{align}
Let $T := 2 \bar R \cdot \log_2 |\MF| + N$, so that $J \leq T$ almost surely by \cref{lem:iteration-bound}. Fix one outer-loop iteration of \cref{alg:pf-learning}. For each $j \in [T]$, define
$
\Gamma_j := V_j - \E[V_j \mid \SF^{j-1}], 
$
where for $j > J$ we interpret $V_j := 0$. Then $(\sum_{s=1}^j \Gamma_s)_{j \in [T]}$ is a martingale. By the almost-sure bound in \cref{lem:query-cost},
$
|\Gamma_j|  \leq O(|\Sigma| \bar R n)
$
for every $j$. Applying Azuma's inequality, with probability at least $1-\delta/(4K)$ we have
\begin{align}
\sum_{j=1}^J V_j \leq \sum_{j=1}^J \E[V_j \mid \SF^{j-1}] + O\left(|\Sigma| \bar R n \sqrt{T \log \frac{K}{\delta}}\right).\label{eq:oracle-calls-v-azuma}
\end{align}
Next, define
\begin{align}
\Delta_j := (W_j - W_{j-1}) - \E[W_j - W_{j-1} \mid \SF^{j-1}],\nonumber
\end{align}
where for $j > J$ we interpret $W_j := W_J$. Then $(\sum_{s=1}^j \Delta_s)_{j \in [T]}$ is a martingale. Since $|\MT_j| \leq \bar R$ almost surely and $\mu^\st(x)/\hat \mu(x) \leq R$ for every $x$, we have
\begin{align}
|W_j - W_{j-1}| \leq Rn(\bar R + 1) = O(R \bar R n)\nonumber
\end{align}
for every $j$, and hence $|\Delta_j| \leq O(R \bar R n)$. Applying Azuma's inequality, with probability at least $1-\delta/(4K)$ we therefore have
\begin{align}
\sum_{j=1}^J \E[W_j - W_{j-1} \mid \SF^{j-1}] \leq W_J - W_0 + O\left(R \bar R n \sqrt{T \log \frac{K}{\delta}}\right).\label{eq:oracle-calls-w-azuma}
\end{align}
Since $W_0 = 0$, \cref{eq:vj-wj,eq:oracle-calls-w-azuma} imply that, with probability at least $1-\delta/(4K)$,
\begin{align}
\sum_{j=1}^J \E[V_j \mid \SF^{j-1}] \leq O(|\Sigma| R \bar R^2) \cdot W_J + O\left(|\Sigma| R^2 \bar R^3 n \sqrt{T \log \frac{K}{\delta}}\right).\label{eq:oracle-calls-predictable}
\end{align}
Since
\begin{align}
N = C_{\N} \left\lceil \sqrt{\log |\MF|} \cdot R^2 \bar R^2 \left( \log \frac{R \log |\MF|}{\delta}\right) \right\rceil\nonumber
\end{align}
and by \cref{lem:mtgl-concentration} we have, after increasing $C_{\N}$ if needed, that $M_J \leq 3N/2$ with probability at least $1-\delta/(4K)$, it follows that with the same probability
$
W_J \leq n M_J \leq 3nN/2.
$
Combining this with \cref{eq:oracle-calls-v-azuma,eq:oracle-calls-predictable} yields that, with probability at least $1-3\delta/(4K)$,
\begin{align}
\sum_{j=1}^J V_j \leq O(|\Sigma| R \bar R^2) \cdot nN + O\left(|\Sigma| R^2 \bar R^3 n \sqrt{T \log \frac{K}{\delta}}\right) + O\left(|\Sigma| \bar R n \sqrt{T \log \frac{K}{\delta}}\right).\nonumber
\end{align}
Using $T = O(\bar R \log |\MF| + N)$ together with $K = O(R \log(1/\delta))$, the choice of $N$ ensures that the second and third terms are both absorbed into the main term for a sufficiently large absolute constant $C_{\N}$. Therefore, one outer-loop iteration uses at most
\begin{align}
O\left(n \cdot |\Sigma| \cdot R \bar R^2 \cdot \left\lceil \sqrt{\log |\MF|} \cdot R^2 \bar R^2 \left(\log \frac{R \log |\MF|}{\delta}\right) \right\rceil \right)\nonumber
\end{align}
oracle calls with probability at least $1-\delta/K$. Summing over the $K$ outer-loop iterations and using a union bound proves that the total number of oracle calls made by \cref{alg:pf-learning} is at most
\begin{align}
O\left(n \cdot |\Sigma| \cdot K \cdot R \bar R^2 \cdot \left\lceil \sqrt{\log |\MF|} \cdot R^2 \bar R^2 \left(\log \frac{R \log |\MF|}{\delta}\right) \right\rceil \right)\nonumber,
\end{align}
with probability at least $1-\delta$. Using $\bar R = 2R^2$ and $K = \lceil 4R\log_2(1/\delta) \rceil$ gives the second displayed bound in the lemma statement.
\end{proof}

Finally, we are ready to prove \cref{thm:main-ub-formal}. 
\begin{proof}[Proof of \cref{thm:main-ub-formal}]
Run \cref{alg:pf-learning} with failure tolerance $\delta/2$. By \cref{lem:right-sample-distribution}, its output distribution $\hat\nu'$ satisfies $\tvd{\mu^\star}{\hat\nu'} \leq \delta/2$. By \cref{lem:oracle-calls}, with probability at least $1-\delta/2$, the number of oracle calls made by the algorithm is at most
\begin{align}
O\left(n \cdot |\Sigma| \cdot R^{12} \cdot \sqrt{\log |\MF|} \cdot \log \frac{1}{\delta} \cdot \left(\log \frac{R \log |\MF|}{\delta}\right) \right)\label{eq:final-oracle-ub}.
\end{align}
We can modify \cref{alg:pf-learning} so that, whenever its number of oracle calls reaches the threshold in \cref{eq:final-oracle-ub}, it aborts and outputs an arbitrary symbol. This changes its output distribution by at most $\delta/2$ in total variation distance, meaning that the resulting output distribution, which we denote by $\hat \nu$, satisfies $\tvd{\mu^\star}{\hat\nu} \leq \delta$.
\end{proof}

\section{Lower bound}
\label{sec:lower-bound}
In this section, we prove our main lower bound, \cref{thm:weak-lb}. 
\begin{theorem}
  \label{thm:weak-lb}
There is a constant $c > 0$ so that the following holds, for any sufficiently large $n \in \BN$ and $\gamma \in (1/n,1)$. There is a class of distributions $\MF \subset \Delta(\{0,1\}^n)$ of size $ |\MF| \leq \exp(O(n^4))$ so that no randomized algorithm $\Alg$ enjoys the following guarantee: for any distribution $\mu^\star \in \MF$ and $\hat \mu : \{0,1\}^{\leq n} \to \BR_{\geq 0}$ satisfying \cref{asm:linfty-close-gen} with respect to $\mu^\star$ for $R = (1+\gamma)^2$, $\Alg$ makes $\frac{c \gamma^2 n^2}{ \log^2 (n) \cdot \log^2(2/\gamma)}$ queries to $\hat \mu$ and outputs a (random) string $X \in \{0,1\}^n$, according to some distribution $\hat \nu$, satisfying $\tvd{\mu^\star}{\hat \nu} \leq \frac{\gamma}{8}$.
\end{theorem}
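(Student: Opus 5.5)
We follow the construction sketched in \cref{sec:tech-lowerbound}. Work over $\Sigma=\{0,1\}$ and take parameters $k = \lceil 2\log_2 n\rceil$, $r = \lfloor (n-1)/(2k)\rfloor$, and $\ep = \Theta(\log(2/\gamma)/r)$, with the constant chosen so that $e^{-\ep r}\le \gamma/100$. Each instance is indexed by $\tau=(\vinit,\phi,\psi)$. Here $\phi$ assigns a partition $(A,B,C)\in\Vvalid_{k,\ep}$ to each base prefix, and $\psi$ assigns a key vector in $(\{0,1\}^k)^{\{0,1\}^k}$ to each base prefix. Both are drawn from $q$-wise independent families with $q \asymp \gamma^2 n^2/(\log^2 n\log^2(2/\gamma))$. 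Define the edge weights $\hat\omega_\tau$ as in \cref{eq:intro-vinit,eq:intro-not-vinit}, including the exception for blocks that already lie in $B$. Then set $\hat\mu_\tau$ by \cref{eq:hat-mu-Z-intro} and $\mu_\tau:=\hat\mu_\tau|_{\{0,1\}^n}$.

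The size bound comes from standard $q$-wise independent families over domain $\{0,1\}^{\le n}$. Their size is $2^{O(q\cdot n\cdot k2^k)}$, and $\log|\MF| = O(n^4)$ once constants are tracked.

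\textbf{Step 1: structural properties.} We establish four facts:
\begin{itemize}
\item $\hat Z_j$ is independent of $\tau$ (\cref{lem:zhat-invariant}). This holds because each level's weight multiset is fixed: the key path forces exactly one surviving key per query, and the $\{1,1+\gamma\}$ counts are fixed by $|A|,|B|,|C|$, though the exception must be checked level by level.
\item $\hat\mu_\tau/\mu_\tau\in[R^{-1},R]$ (\cref{lem:mu-muhat-ratio}). Computing $\mu_\tau(x)$ as a sum over completions, the only asymmetry between $\hat\mu$ and the true marginal is a single $(1+\gamma)$ factor on the non-$\vinit$ side, because of the exception.
\item There is a measure $\muideal$ with $\muideal(x_1=\vinit)=\frac{1+\gamma}{2+\gamma}$.
\item $\tvd{\mu_\tau}{\muideal}\le e^{-\ep r}$ (\cref{lem:mu-muideal}). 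The argument is that, under $\mu_\tau$ and conditioned on $x_1=1-\vinit$, each block independently hits $B$ with probability about $\ep$.
\end{itemize}
Together these give the following: if $\tvd{\mu_\tau}{\hat\nu}\le\gamma/8$, then the first bit of the output equals $\vinit$ with probability at least $\frac12+c'\gamma$.

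\textbf{Step 2: information bound.} Fix $\Alg$ with $q$ queries. Define $\tAlg$, which answers $0$ without querying whenever $\Alg$ skips an unvisited earlier block. We show that $\Alg$ and $\tAlg$ have identical transcripts except with probability $q2^{-k}\le 1/n$ (\cref{clm:hatb-hatb0}). The reason is that, conditioned on the history, the key $\psi(\base_a(x))_{\query_a(x)}$ is uniform on $\{0,1\}^k$, so $q$-wise independence suffices. For $\tAlg$, we prove \cref{lem:close-consistent-query}: conditioned on any consistent history, the answer to the next query is $O(\ep^2/\gamma^{-2})$-close in KL between $\vinit=0$ and $\vinit=1$. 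Note the bound is at most $O(\ep^2)$; it is refined to $O(\gamma^2\ep^2)$ because the answers differ only by a factor $1+\gamma$ in the Bernoulli parameter. The chain rule then bounds the transcript KL by $O(q\,\gamma^2\ep^2)$, which is at most a small constant times $\gamma^2$ given our choice of $q$. Pinsker's inequality gives a success probability of at most $\frac12+o(\gamma)$ for guessing $\vinit$, which contradicts Step 1. Randomized algorithms are handled by averaging over their coins.

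\textbf{Main obstacle.} The hardest step is the conditional KL lemma. Conditioning on past answers makes $B(x)$ non-uniform, since earlier queries reveal membership of other query strings in $A\cup B$ versus $C$. The tentative approach has three parts:
\begin{itemize}
\item Show that, conditioned on the revealed memberships, the posterior probability that a fresh query string lies in $B$, given that it lies in $A\cup B$, stays $\Theta(\ep)$ while at most $q\ll 2^k$ strings are revealed. The membership probability for $A\cup B$ likewise stays in $[1/3,2/3]$.
\item Track the exception by noting that earlier-block $B$-membership on the same path is itself only revealed through the same Bernoulli-type answers.
\item Handle the conditioning on $\vinit$ carefully, because the two hypotheses induce different posteriors over $\phi$. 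The plan is to couple them via the partition label swap.
\end{itemize}
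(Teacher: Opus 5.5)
Your plan follows the paper's route: the same construction and parameters with $q\asymp\gamma^2/\ep^2$, the reduction to guessing $\vinit$ via $\muideal$, the non-skipping simulator $\tAlg$ with the $q2^{-k}$ key-guessing bound, and then \cref{lem:close-consistent-query}, the KL chain rule and Pinsker. Some claims are wrong, however, and the central lemma is not yet closed.

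\textbf{The $O(\gamma^2\ep^2)$ refinement is false.} At a block-end query the answer is a known multiple of $I_j(1+\gamma S_j)$. Since $1\neq 1+\gamma$, on $\{I_j=1\}$ it reveals $S_j$ exactly, however small $\gamma$ is. The per-query KL is therefore that between two Bernoullis with means such as $\frac{2^{k-1}-|\tilde A|}{2^k-|\tilde A|-|\tilde C|}$ and $\frac{2^{k-1}-\ep 2^k-|\tilde A|}{2^k-|\tilde A|-|\tilde C|}$. Here $\tilde A,\tilde C$ are the query strings at the same base already revealed to have weight $1+\gamma$ and $1$, respectively. These means differ by $\Theta(\ep)$, so the KL is $\Theta(\ep^2)$ with no $\gamma^2$ factor. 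This is not fatal: the $\gamma^2$ in your $q$ is exactly what makes $O(q\ep^2)\le c\gamma^2$, so simply drop the refinement.

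\textbf{Two smaller slips.} First, $q$-wise independent families from $n+1$ input bits to $m$ output bits have size $2^{O(q\max\{n,m\})}$, not $2^{O(qnm)}$. Your product gives $\log|\MF|$ of order $n^5/\log n$; the max gives $O(qk2^k)=O(n^4)$. Second, $q2^{-k}$ is $O(\gamma^2/\log^2 n)$ rather than at most $1/n$. That is harmless, since you only need $q2^{-k}\ll\gamma$.

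\textbf{Closing the conditional KL lemma.} Your label-swap coupling is left vague, and the paper sidesteps any comparison of the two posteriors over $\phi$.
\begin{itemize}
\item \emph{Independence of $\psi$.} The factorization $\hat\mu_\tau=\eta_{\vinit,\phi}\cdot\chi_\psi$ shows that, given the history, $\psi$ is independent of $(\vinit,\phi)$ (\cref{lem:conditional-independence}). This makes queries strictly inside a key block contribute zero KL. It also makes $I_j$ independent of $S_j$, with the same law under both hypotheses.
\item \emph{Finer conditioning.} For a block-end query, condition additionally on $\psi$ and on the full values of $\phi$ at every other base touched by the history (the event $\Ebase_{v,\tilde v}$). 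By $q$-wise independence, $\phi(\base_a(X_j))$ is then a uniform valid partition subject only to the revealed labels of $\tilde A\cup\tilde C$. The switch indicator $R_j$ is also determined.
\item \emph{Three explicit cases.} These are: the $\vinit$ side; the other side with $R_j=1$; and the other side with $R_j=0$. Each gives a probability for $S_j=1$ that lies in $[1/3,2/3]$ and is within $10\ep$ of $\frac{2^{k-1}-|\tilde A|}{2^k-|\tilde A|-|\tilde C|}$, a quantity fixed by the history alone.
\item \emph{Mixing over the conditioning.} Any mixture over the finer conditioning, under either hypothesis's posterior, is therefore $O(\ep)$-close to that quantity, so the KL is $O(\ep^2)$.
\end{itemize}
This also settles your second step about tracking the exception: you never need the posterior of $R_j$, only that both of its values give nearby probabilities.
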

The remainder of the section is organized as follows: in \cref{sec:lb-term}, we introduce some terminology and notation that will be used throughout the section. In \cref{sec:lb-construction}, we describe the construction of the family $\MF$ of distributions. In \cref{sec:lb-proof}, we analyze the construction and prove \cref{thm:weak-lb}. Finally, in \cref{sec:add-lbs}, we describe a couple of variants of \cref{thm:weak-lb} that give stronger or generalized lower bounds. 

\subsection{Terminology and setup}
\label{sec:lb-term}
Fix a positive integer $n$, and positive integers $k,r$ so that $n = 2kr + 1$. (Ultimately, we will extend the result to values of $n$ which cannot be written in this form via a padding argument.) 
Fix some $\ep \in (0, 1/2)$ (we will eventually choose $\ep \sim \poly\log(n)/n$). Write $\MV_k := \{0,1\}^{\{0,1\}^k}$ and $\MU_k = (\{0,1\}^k)^{\{0,1\}^k}$. We will interpret elements of $\MV_k, \MU_k$ as vectors whose elements are indexed by $\{0,1\}^k$; further, for $v \in \MV_k$, we will write $\| v \|_1 = \sum_{x \in \{0,1\}^k} v_x$. 

We will be concerned with tuples $(v, \tilde v) \in \MV_k^2$ which are valid in the sense captured by the below definition:
\begin{definition}
\label{def:validity}
We say that a tuple $(v, \tilde v) \in \MV_k^2$ is \emph{$\ep$-valid} if the following conditions hold: 
\begin{itemize}
\item $v$ satisfies $\| v \|_1= 2^{k-1}$.
\item $\tilde v$ satisfies $\| \tilde v \|_1 = 2^{k-1} - \ep \cdot 2^k$. 
\item $\tilde v \leq v$ entrywise (i.e., $v_x = 0$ implies $\tilde v_x = 0$ for $x \in \{0,1\}^k$). 
\end{itemize}
We let the set of valid tuples be denoted by $\Vvalid_{k,\ep}$. 
\end{definition}
Given a tuple $(v,\tilde v) \in \Vvalid_{k,\ep}$, it corresponds to a partition of $\{0,1\}^k$ into three sets, namely $A = \{ z \mid v_z =\tilde v_z = 1\}, B = \{ z \mid v_z = 1, \tilde v_z = 0\}, C = \{ z \mid v_z = 0\}$. The $\ep$-validity condition ensures that $|A| = 2^{k-1} - \ep \cdot 2^k$, $|B| = \ep \cdot 2^k$, and $|C| = 2^{k-1}$. Such partitions are of the form of those discussed in \cref{sec:tech-lowerbound}. 

Moreover, elements of $\MU_k$ are to be interpreted as elements of the ``key'' blocks discussed in \cref{sec:tech-lowerbound}. In particular, they will be used to ensure that any algorithm does not ``skip ahead'' and query $\hat \mu(x)$ for certain $x$ before querying $\hat \mu(x')$ at various prefixes of $x$. 

Recall that a family $\MH$ of functions $f : \MX \to \MY$ (for finite sets $\MX, \MY$) is defined to be \emph{$q$-wise independent} if for any choice of distinct $x_1, \ldots, x_q \in \MX$, and a uniformly random draw of $h \sim \mathrm{Unif}(\MH)$, the random variables $h(x_1), \ldots, h(x_q)$ are independent and uniformly distributed in $\MY$. 
We will be concerned with $q$-wise independent mappings from $\{0,1\}^{\leq n}$ to valid tuples (and also to elements of $\MU_k$), formalized below: 
\begin{definition}
\label{def:hash-functions}
We consider the following families of functions: 
\begin{itemize}
\item Let $\Phi_{n,k,q,\ep}$ be a $q$-wise independent family of functions $\phi : \{0,1\}^{\leq n} \to \Vvalid_{k,\ep}$, of size $\exp(O(q \cdot \max \{ n, 2^k\}))$. %
\item Let $\Psi_{n,k,q}$ be a $q$-wise independent family of functions $\psi : \{0,1\}^{\leq n} \to \MU_k$, of size $\exp(O(q \cdot \max \{ n, k2^k\}))$. %
\item We set $\MH_{n,k,q,\ep} := \{0,1\} \times \Phi_{n,k,q,\ep} \times \Psi_{n,k,q}$; we will denote elements of $\MH_{n,k,q,\ep}$ as tuples $(\vinit, \phi, \psi)$, where $\vinit \in \{0,1\}$, $\phi \in \Phi_{n,k,q,\ep}$, and $\psi \in \Psi_{n,k,q}$.
\end{itemize}
\end{definition}
For $\phi \in \Phi_{n,k,q,\ep}$ and $z \in \{0,1\}^k$, we will write $\phi(x;z) := (\phi(x)_1)_z \in \{0,1\}$ and $\tilde \phi(x;z) := (\phi(x)_2)_z \in \{0,1\}$ for $x \in \{0,1\}^{\leq n}$. (Recall that $\phi(x)$ is a tuple $(v,\tilde v)$; thus $\phi(x;z)$ uses the first coordinate $v$ and $\tilde \phi(x;z)$ uses the second coordinate $\tilde v$.) For $\psi \in \Psi_{n,k,q}$ and $z\in \{0,1\}^k$, we will write $\psi(x; z) = \psi(x)_z \in \{0,1\}^k$.

The below lemma, which follows from the construction of $q$-wise independent families via polynomials, is a standard fact. 
\begin{lemma}[Corollary 3.34 of \cite{vadhan2011foundations}]
\label{lem:kwise-indep-existence}
There exist families $\Phi_{n,k,q,\ep}, \Psi_{n,k,q}$ satisfying the conditions of \cref{def:hash-functions}.
\end{lemma}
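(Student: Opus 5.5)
The plan is to derive the statement from the standard polynomial construction of $q$-wise independent hash families over finite fields, which is exactly Corollary 3.34 of \cite{vadhan2011foundations}. That result gives, for any $a,b$, a $q$-wise independent family of functions $h:\{0,1\}^a\to\{0,1\}^b$ whose members are described by $q\cdot\max\{a,b\}$ bits. The idea is to work in $\BF_{2^m}$ with $m=\max\{a,b\}$ and take all polynomials of degree less than $q$. Evaluating at $q$ distinct points is a bijection from coefficient vectors to value vectors, by Vandermonde invertibility. This yields $q$-wise independence with uniform marginals on $\BF_{2^m}$, and truncating outputs to $b$ bits preserves both independence and uniformity.

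For $\Psi_{n,k,q}$ the domain $\{0,1\}^{\leq n}$ has fewer than $2^{n+1}$ elements, so we inject it into $\{0,1\}^{n+1}$, say by padding with a $1$ followed by zeros. The codomain $\MU_k=(\{0,1\}^k)^{\{0,1\}^k}$ is naturally $\{0,1\}^{k2^k}$. The corollary then gives a family of size $2^{q\max\{n+1,k2^k\}}=\exp(O(q\max\{n,k2^k\}))$. Restricting the functions to the injected domain preserves $q$-wise independence.

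For $\Phi_{n,k,q,\ep}$ the codomain $\Vvalid_{k,\ep}$ has a size $V$ that is not a power of two, so truncation alone does not produce uniform outputs. I would handle this by taking a $q$-wise independent family into $\BZ_V$ directly. One option is polynomials of degree less than $q$ over a prime field $\BF_p$ with $p\geq\max\{2^{n+1},V\}$, but reducing modulo $V$ breaks exact uniformity. The cleaner option is to note that it suffices to have $q$-wise independent uniform functions into any set of size $V$. Taking the product over coordinates with prime modulus does not work either, since $V$ need not be prime.

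The simplest exact route is a standard fact: if $\MH$ is $q$-wise independent into $\MY$ and $g:\MY\to\MY'$ is such that the pushforward of uniform on $\MY$ is uniform on $\MY'$, then $g\circ\MH$ is $q$-wise independent into $\MY'$. So the task reduces to finding a set $\MY$ of size a prime power divisible by $V$, or more generally a set of size $V\cdot s$. Concretely, I would pick a prime $p$ and a degree $d$ with $p^d\geq 2^{n+1}$ such that $V$ divides $p^{d'}$ for some $d'$. That requires $V$ to be a prime power, which need not hold.

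This is the main obstacle. Exact uniformity on $\Vvalid_{k,\ep}$ via field-based constructions requires care. The cleanest fix is to use polynomials over the ring $\BZ_V$ evaluated at points whose pairwise differences are units. That rarely has enough points, so instead I would use the Chinese remainder theorem. Factor $V=\prod_\ell p_\ell^{e_\ell}$, build a $q$-wise independent family into each $\BZ_{p_\ell^{e_\ell}}$ from polynomials over the Galois ring or field of size $p_\ell^{e_\ell}$, extended to an extension field large enough to contain $2^{n+1}$ distinct points, followed by an exactly uniform map onto $\BZ_{p_\ell^{e_\ell}}$. Then take the independent product across $\ell$.

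The description length is $q\cdot O(n+\log V)$ summed over the $O(\log V)$ prime factors. Since $\log V\leq 2^{k+1}$, this is $O(q\max\{n,2^k\})$ up to the factor from the number of primes, which I expect to absorb or avoid by the extension-field choice. If matching the exact exponent proves awkward, the fallback is to note that the later argument only uses $q$-wise independence and a size bound of the form $\exp(\poly(n))$, which any of these constructions provides.
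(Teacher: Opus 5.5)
Your handling of $\Psi_{n,k,q}$ is exactly the cited polynomial construction, and it is fine. You are also right that Corollary~3.34, as stated, only produces codomains $\{0,1\}^m$ (more generally, sets of prime-power size). So it does not by itself give exactly uniform values in $\Vvalid_{k,\ep}$, whose size $V=\binom{2^k}{2^{k-1}}\binom{2^{k-1}}{\ep 2^k}$ is not a prime power, and the paper's one-line citation does not address this. Your CRT product does give an exactly $q$-wise independent family into a set of size $V$: for each prime-power factor $p_\ell^{e_\ell}$ of $V$, take polynomials of degree $<q$ over a field $\BF_{p_\ell^{d_\ell}}$ with $p_\ell^{d_\ell}\ge\max\{2^{n+1},p_\ell^{e_\ell}\}$, project onto $e_\ell$ coordinates, and choose independently across $\ell$. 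The gap is the cardinality requirement $|\Phi_{n,k,q,\ep}|\le\exp(O(q\max\{n,2^k\}))$ of \cref{def:hash-functions}. This construction provably misses it, and no choice of extension fields repairs that, contrary to your hope.

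The product family is $q$-wise independent only if every factor family is $q$-wise independent on the whole domain $\{0,1\}^{\le n}$. So each factor costs $\Omega(qn)$ bits. In your construction this is a field of characteristic $p_\ell$ with at least $2^{n+1}$ elements. For any construction, Rao's bound for orthogonal arrays forces at least $\binom{2^{n+1}}{\lfloor q/2\rfloor}=2^{\Omega(qn)}$ members in the paper's regime $q\le 2^k\ll 2^n$. But every prime in $(2^{k-1},2^k]$ divides $\binom{2^k}{2^{k-1}}$ and hence $V$, so there are $\Omega(2^k/k)$ factors. Therefore $\log|\Phi_{n,k,q,\ep}|=\Omega(qn2^k/k)$, too large by a factor $\Omega(\min\{n,2^k\}/k)$. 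With the paper's parameters $k=\lceil 2\log_2 n\rceil$ and $q\asymp n^2/\log^2 n$, this gives $|\MF|=\exp(\tilde\Theta(n^5))$ rather than $\exp(O(n^4))$.

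Your fallback does not rescue the lemma either. The query lower bound itself would survive, but the exponent is exactly what yields $|\MF|\le\exp(O(n^4))$ in \cref{thm:weak-lb} and the $\sqrt{\log S}$ rate in \cref{thm:size-sensitive-lb}.

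If you want to keep the stated size, the natural route is to give up exact uniformity on $\Vvalid_{k,\ep}$:
\begin{enumerate}
\item Compose a $q$-wise independent family into $\{0,1\}^m$, with $m=\lceil\log_2 V\rceil+O(\log(qn))$, with a fixed map onto $\Vvalid_{k,\ep}$ whose pushforward is $1/(qn)$-close to uniform.
\item Use that the ideal simulator $\tAlg$ depends on $\phi$ only at arguments among its at most $q$ queried points.
\item Conclude that its transcript is $O(1/n)$-close in total variation to the transcript under exactly uniform values.
\end{enumerate}
That, however, changes the lemma and the downstream argument rather than proving the statement as written.
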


Next, for $\phi \in \Phi_{n,k,q,\ep}$, and $0 \leq a \leq r-1$ and $x \in \{0,1\}^{2ak+1}$, we define the following partitions depending on $\phi(x)$, as alluded to above: 
\begin{align}
A(x) &:= \{z \in \{0,1\}^k :\tilde \phi(x; z)=1\},\label{eq:A}\\
B(x) &:= \{z \in \{0,1\}^k : \phi(x;z) \text{ and } \tilde \phi(x;z)=0\},\label{eq:B}\\
C(x) &:= \{z \in \{0,1\}^k : \phi(x;z)=0\}.\label{eq:C}
\end{align}
By $\ep$-validity of $\phi(x)$, these sets form a partition of $\{0,1\}^k$, and moreover
\begin{align}
|A(x)| = 2^{k-1}-\ep 2^k,\qquad |B(x)| = \ep 2^k,\qquad |C(x)| = 2^{k-1}.\label{eq:ABC-sizes-initial-lemmas}
\end{align}

Finally, we introduce the following notation which ``breaks'' strings $x \in \{0,1\}^{\leq n}$ into chunks: for $x \in \{0,1\}^i$, for $1 \leq i \leq 2rk+1$, we define:
\begin{itemize}
\item For $0 \leq a \leq r-1$, $\query_a(x) := x_{2 + 2ak : 1 + 2ak + k} \in \{0,1\}^{\leq k}$.
\item For $0 \leq a \leq r-1$, $\key_a(x) := x_{2ak + k + 2 : 2ak + 2k + 1} \in \{0,1\}^{\leq k}$.
\item For $0 \leq a \leq r-1$, $\base_a(x) := x_{1:2ak+1} \in \{0,1\}^{\leq 2ak+1}$. We will only consider $\base_a(x)$ for $a$ satisfying $2ak+1\leq |x|$. 
\end{itemize}

\subsection{The lower bound construction}
\label{sec:lb-construction}
Fix integers $k,n,r$ as above, a real number $\gamma \in (0,1)$, and a tuple $\tau := (\vinit, \phi, \psi) \in \MH_{n,k,q,\ep}$. We will define a distribution $\mu = \mu_{\tau} \in \Delta(\{0,1\}^n)$ as well as a mapping $\hat \mu = \hat \mu_{\tau} : \{0,1\}^{\leq n} \to \BR_{\geq 0}$, which depend on $\tau$. (To avoid clutter, we suppress the dependence on $\tau$ in the below.) We will do so as follows: we will first define mappings $\omega = \omega_{\tau}, \hat \omega  = \hat \omega_{\tau}: \{0,1\}^{\leq n} \to \BR_{ \geq 0}$. One should think of $\omega, \hat \omega$ as specifying ``edge weights'' for the complete binary tree of depth $n$, and then we will define $\mu, \hat \mu$ so as to evaluate the (appropriately normalized) product of the edge weights on the path from each node to the root. Our mappings $\omega,\hat\omega$, defined below, will have range equal to $\{0,1, 1+\gamma \}$. 

\paragraph{Defining $\omega$.} First, for $x \in \{0,1\}$, we set $\omega(x) = 1 + \One{\vinit = x} \cdot \gamma$. Now consider $x \in \{0,1\}^i$, for $2 \leq i \leq 2rk+1$. We consider the following cases:
\begin{itemize}
\item If $i \pmod{2k} \in \{2, 3, \ldots, k+1 \}$, then we set $\omega(x) = 1$. 
\item If $i \pmod{2k} =: p \in \{k+2,k+3, \ldots, 2k\}$, then, for $a := \lfloor (i-2)/(2k) \rfloor$, we set $\omega(x) = \One{\key_a(x) = \psi(\base_a(x); \query_a(x))_{1:|\key_a(x)|}}$.
\item If $i =1 \pmod{2k}$, we set 
\begin{align}
\omega(x) = \One{\key_a(x) = \psi(\base_a(x); \query_a(x))}\cdot (1 + \gamma \cdot \phi(\base_a(x);\query_a(x)),\label{eq:define-omega}
\end{align}
 where $a := \lfloor (i-2)/(2k) \rfloor$.
\end{itemize}

\paragraph{Defining $\hat\omega$.} First, we set $\hat\omega(1) = \hat\omega(0) =1+\gamma$. Now consider $x \in \{0,1\}^i$, for $2 \leq i \leq 2rk+1$. We consider the following cases:
\begin{itemize}
\item If $i \pmod{2k} \in \{2, 3, \ldots, k+1 \}$, then we set $\hat\omega(x) = 1$. 
\item If $i \pmod{2k} =: p \in \{k+2,k+3, \ldots, 2k\}$, then, for $a := \lfloor (i-2)/(2k) \rfloor$, we set $\hat\omega(x) = \One{\key_a(x) = \psi(\base_a(x); \query_a(x))_{1:|\key_a(x)|}}$.
\item If $i = 1\pmod{2k}$, write $a = \lfloor (i-2)/(2k) \rfloor$, so that $i=2(a+1)k+1$. If either $\vinit = x_1$, or there is some $0 \leq b < a$ such that
\begin{align}
\phi(\base_b(x);\query_b(x)) \neq \tilde \phi(\base_b(x);\query_b(x)),\label{eq:prior-b-exists}
\end{align}
then we set
\begin{align}
\hat\omega(x) = \One{\key_a(x) = \psi(\base_a(x); \query_a(x))}  \cdot (1 + \gamma \cdot \phi(\base_a(x);\query_a(x))).\label{eq:define-hatomega-1}
\end{align}
Otherwise, we set
\begin{align}
\hat\omega(x) = \One{\key_a(x) = \psi(\base_a(x); \query_a(x))} \cdot (1 + \gamma \cdot \tilde \phi(\base_a(x);\query_a(x))).\label{eq:define-hatomega-2}
\end{align}
\end{itemize}

\paragraph{Defining $\hat \mu$ and $\mu$.} For each $i \in [n]$, define
\begin{align}
\hat Z_i = \sum_{x' \in \{0,1\}^{i}} \prod_{j=1}^{i} \hat \omega(x'_{1:j}).\label{eq:def-zhati}
\end{align}
For $x \in \{0,1\}^{i}$, we set
\begin{align}
\hat \mu(x) = \hat Z_i^{-1} \cdot \prod_{j=1}^{i} \hat \omega(x_{1:j}).\label{eq:def-muhat-zhat}
\end{align}
For convenience, we write $\hat Z := \hat Z_n$.

Next, we define $\mu \in \Delta(\{0,1\}^n)$ by $\mu(x) = \hat \mu(x)$ for $x \in \{0,1\}^n$; by virtue of the division by $\hat Z$ in the definition of $\hat \mu$, we have that $\mu$ is indeed a distribution. 

For analysis purposes, we also define a distribution $\muideal \in \Delta(\{0,1\}^n)$ as follows: for $x \in \{0,1\}^n$,
\begin{align}
\muideal(x) =  Z^{-1} \cdot \prod_{j=1}^n\omega(x_{1:j}), \qquad Z = \sum_{x \in \{0,1\}^n} \prod_{j=1}^n \omega(x_{1:j}) = (2+\gamma) \cdot \left( 2^k + \gamma \cdot 2^{k-1}\right)^r.\nonumber
\end{align}

\subsection{Proof of the lower bound}
\label{sec:lb-proof}
We first show that the distribution $\mu$ is close to $\muideal$ as long as $\ep \cdot r$ is sufficiently large.
\begin{lemma}
  \label{lem:mu-muideal}
It holds that $\tvd{\mu}{\muideal} \leq e^{-\ep r}$.
\end{lemma}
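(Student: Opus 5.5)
The plan is to compare the unnormalized path weights $w(x) := \prod_{j=1}^n \omega(x_{1:j})$ and $\hat w(x) := \prod_{j=1}^n \hat\omega(x_{1:j})$ for $x \in \{0,1\}^n$. By definition $\muideal = w/Z$ and $\mu = \hat w/\hat Z$ with $\hat Z = \sum_x \hat w(x)$. The first step is a pointwise identity: $\hat w(x) = w(x)$ unless $x \in \ME$, and $\hat w(x) = (1+\gamma) w(x)$ for $x \in \ME$. Here $\ME$ is the set of $x$ with $x_1 \neq \vinit$ such that no block $0 \le b < r$ has $\query_b(x) \in B(\base_b(x))$.

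To verify the identity, compare the two definitions level by level. At levels $i \not\equiv 1 \pmod{2k}$ (other than $i=1$) the two weights agree verbatim, and the key indicators appearing in \cref{eq:define-omega,eq:define-hatomega-1,eq:define-hatomega-2} are the same. So if some key indicator vanishes, both products are $0$, and we may assume all key indicators equal $1$.
\begin{itemize}
\item If $x_1 = \vinit$, then $\hat\omega(x_1) = 1+\gamma = \omega(x_1)$, and \cref{eq:define-hatomega-1} is used at every block, which coincides with \cref{eq:define-omega}.
\item If $x_1 \neq \vinit$, then $\hat\omega(x_1) = 1+\gamma$ while $\omega(x_1) = 1$. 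At each block $a$ up to and including the first block $a^\ast$ with $\query_{a^\ast}(x) \in B$, \cref{eq:define-hatomega-2} is used. For query values in $A \cup C$ we have $\tilde\phi = \phi$, so the two weights agree. At $a^\ast$ itself, $\phi = 1$ and $\tilde\phi = 0$, so $\hat\omega = 1$ while $\omega = 1+\gamma$. After $a^\ast$, condition \cref{eq:prior-b-exists} holds and the weights agree again.
\end{itemize}
Hence the extra factor $1+\gamma$ at the root is cancelled exactly when such an $a^\ast$ exists, which gives the identity.

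Given the identity, $\hat w \ge w$ pointwise and $\hat Z = Z + \gamma w(\ME) \ge Z$, where $w(\ME) := \sum_{x \in \ME} w(x)$. Off $\ME$ we have $\mu(x) = w(x)/\hat Z \le \muideal(x)$. Writing the total variation distance as the sum of $(\muideal - \mu)_+$, which is supported on $\ME^c$, gives
\begin{align}
\tvd{\mu}{\muideal} \le \sum_{x \notin \ME} w(x)\left(\frac1Z - \frac1{\hat Z}\right) \le \frac{\hat Z - Z}{\hat Z} \le \frac{\gamma\, w(\ME)}{Z} = \gamma\, \muideal(\ME).\nonumber
\end{align}
It therefore suffices to show $\muideal(\ME) \le e^{-\ep r}$.

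For this last step I would use the product structure of $\muideal$. Under $\muideal$, conditioned on any prefix $\base_a(x)$ with positive weight, the block $(\query_a, \key_a)$ has the following law. The key is forced to equal $\psi(\base_a(x); \query_a(x))$. The query $z$ has probability proportional to $1 + \gamma\,\phi(\base_a(x); z)$, with normalizer $2^k + \gamma 2^{k-1}$; this is consistent with the stated formula for $Z$. Thus the conditional probability that $\query_a \in B(\base_a(x))$ is
\begin{align}
\frac{\ep 2^k (1+\gamma)}{2^k + \gamma 2^{k-1}} \ge \ep,\nonumber
\end{align}
regardless of the past. Chaining these conditional bounds over the $r$ blocks gives $\muideal(\ME) \le (1-\ep)^r \le e^{-\ep r}$, and combining with $\gamma < 1$ proves the lemma.

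The only delicate step is the bookkeeping in the pointwise identity: one has to check that the ``first $B$-hit'' exception is triggered exactly once, so that the discrepancy between $\hat w$ and $w$ is a single factor of $1+\gamma$ localized on $\ME$. The rest is routine.
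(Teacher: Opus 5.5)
Your proposal is correct and follows essentially the same route as the paper. Both arguments use the pointwise identity $\hat w(x) = (1+\gamma\,\One{x\in\ME})\,w(x)$ and the resulting relation $\hat Z = Z + \gamma\, w(\ME)$ to reduce the total variation distance to at most $\gamma\,\muideal(\ME)$ (the paper computes it exactly as $\gamma p(1-p)/(1+\gamma p)$), and then the same per-block bound $1-\ep$ on avoiding $B$ under $\muideal$; your minor differences (bounding via the positive part, dropping the $1/(2+\gamma)$ root factor) are immaterial.
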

\begin{proof}
For a leaf $x \in \{0,1\}^n$, define
\begin{align}
W(x) := \prod_{j=1}^n \omega(x_{1:j}), \qquad \hat W(x) := \prod_{j=1}^n \hat\omega(x_{1:j}).\nonumber
\end{align}
Then $\mu(x) = \hat\mu(x) = \hat W(x)/\hat Z$ and $\muideal(x) = W(x)/Z$ for $x \in \{0,1\}^n$. 

Let $\mathcal E \subseteq \{0,1\}^n$ be the set of leaves $x$ such that $x_1=1-\vinit$ and
\begin{align}
\query_a(x) \notin B(\base_a(x)) \qquad \text{for every } 0 \leq a \leq r-1.\nonumber
\end{align}
We claim that for every leaf $x \in \{0,1\}^n$,
\begin{align}
\hat W(x) = \bigl(1+\gamma \cdot \One{x \in \mathcal E}\bigr) \cdot W(x).\label{eq:What-over-W}
\end{align}
Indeed, if $x_1=\vinit$, then $x \not \in \ME$ and the defining condition for $\hat\omega$ always selects $\phi$, so $\hat\omega$ and $\omega$ agree at every prefix of $x$, including the root. Hence $\hat W(x)=W(x)$. 
Now suppose that $x_1=1-\vinit$. At the root we have $\hat\omega(x_1)=1+\gamma$ whereas $\omega(x_1)=1$. If $x \in \mathcal E$, then for every index $a < r$ we have $\query_a(x) \in A(\base_a(x)) \cup C(\base_a(x))$, so
\begin{align}
\phi(\base_a(x);\query_a(x)) = \tilde\phi(\base_a(x);\query_a(x)).\nonumber
\end{align}
Thus, by the definitions in \cref{eq:define-hatomega-1,eq:define-hatomega-2,eq:define-omega}, $\hat \omega(x_{1:i}) = \omega(x_{1:i})$ for all $i \geq 2$, and so $\hat W(x) = (1+\gamma) \cdot W(x)$. 
Finally, if $x \notin \mathcal E$, and let $a_\star$ be the least index in $\{0,\ldots,r-1\}$ such that
$
\query_{a_\star}(x) \in B(\base_{a_\star}(x)).
$
The definitions of $\omega,\hat\omega$ give that $\hat \omega(x_{1:i}) = \omega(x_{1:i})$ for all $i \geq 2$ with $i \neq 2(a_\star +1)k+1$, and for $i = 2(a_\star+1)k+1$, we have $\hat \omega(x_{1:i}) \cdot (1+\gamma) = \omega(x_{1:i})$. Then we again have $\hat W(x) = W(x)$. 

Let $p := \muideal(\mathcal E)$. Summing \cref{eq:What-over-W} over all leaves shows that
\begin{align}
\hat Z = \sum_{x \in \{0,1\}^n} \hat W(x) = \sum_{x \in \{0,1\}^n} W(x) + \gamma \sum_{x \in \ME}W(x) = Z \cdot (1 + \gamma p).\label{eq:zhat-z-comparison}
\end{align}
Using that $\mu(x) = \hat\mu(x)$ for $x \in \{0,1\}^n$, it follows that
\begin{align}
\mu(x) = \frac{1+\gamma \cdot \One{x \in \mathcal E}}{1+\gamma p} \cdot \muideal(x).\label{eq:mu-vs-muideal-pointwise}
\end{align}
Therefore
\begin{align}
\tvd{\mu}{\muideal}
&= \frac12 \sum_{x \in \{0,1\}^n} \muideal(x) \cdot \left|\frac{1+\gamma \cdot \One{x \in \mathcal E}}{1+\gamma p} - 1\right|\nonumber\\
&= \frac{\gamma p(1-p)}{1+\gamma p} \leq p.\label{eq:tvd-by-p}
\end{align}

It remains to bound $p$. Under $\muideal$, the probability of taking the root edge $x_1 = 1-\vinit$ is $1/(2+\gamma)$, because the two root subtrees have identical total $W$-mass below the root and the two root weights are $1$ and $1+\gamma$. Next, fix any $0 \leq a \leq r-1$ and any $y \in \{0,1\}^{2ak+1}$ with $y_1 = 1-\vinit$ and so that for all $b < a$, $\phi(\base_b(y);\query_b(y)) = \tilde\phi(\base_b(y);\query_b(y))$. For each $z \in \{0,1\}^k$, there is a unique ``key'' $\kappa(z) \in \{0,1\}^k$ for which $\key_a(y \circ z \circ \kappa(z)) = \psi(y;z)$. Moreover, we have
\begin{align}
\frac{W(y \circ z \circ \kappa(z))}{W(y)} = \prod_{j=2ak+2}^{2(a+1)k+1} \omega((y \circ z \circ \kappa(z))_{1:j})
= 1 + \gamma \cdot \phi(y;z).\nonumber
\end{align}
Therefore, under $x \sim \muideal$, conditioned on $x_{1:2ak+1} = y$, the conditional probability that $\query_a(x) \notin B(y)$ is
\begin{align}
\frac{(1+\gamma)|A(y)| + |C(y)|}{(1+\gamma)(|A(y)|+|B(y)|) + |C(y)|}
&= 1 - \frac{(1+\gamma)|B(y)|}{2^k + \gamma 2^{k-1}}\nonumber\\
&= 1 - \frac{1+\gamma}{1+\gamma/2} \cdot \ep\nonumber\\
&\leq 1-\ep,\label{eq:no-B-probability}
\end{align}
where we used \cref{eq:ABC-sizes-initial-lemmas}. Since the right-hand side of \cref{eq:no-B-probability} is uniform over all such prefixes $y$, we obtain
\begin{align}
p = \muideal(\mathcal E) \leq \frac{1}{2+\gamma} \cdot (1-\ep)^r \leq e^{-\ep r}.\nonumber
\end{align}
Combining this with \cref{eq:tvd-by-p} proves the lemma.
\end{proof}

For $x \in \{0,1\}^{\leq n}$, recall that we let $\mu(x)$ denote the marginal distribution of $\mu$ on $x$ (i.e., on its first $|x|$ coordinates). The next lemma shows that, for an arbitrary choice of $\tau \in \MH_{n,k,q,\ep}$, the resulting values of $\mu = \mu_\tau$ and $\hat \mu = \hat \mu_\tau$ satisfy \cref{asm:linfty-close-gen} with $R = (1+\gamma)^2$. 
\begin{lemma}
  \label{lem:mu-muhat-ratio}
For all $x \in \{0,1\}^{\leq n}$, either $\mu(x) = \hat \mu(x) = 0$, or else it holds that 
\begin{align}
\max\{ \mu(x)/\hat \mu(x), \hat \mu(x)/\mu(x) \} \leq (1 + \gamma)^2.\nonumber
\end{align}
\end{lemma}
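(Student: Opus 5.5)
We compare both $\mu(x)$ and $\hat\mu(x)$ to the corresponding marginal of $\muideal$, and show that each is within a $(1+\gamma)$ factor of it. Write $\hat W(x):=\prod_{j\le |x|}\hat\omega(x_{1:j})$ and $W(x):=\prod_{j\le|x|}\omega(x_{1:j})$ for any node $x$, so that $\hat\mu(x)=\hat W(x)/\hat Z_{|x|}$.

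\textbf{Step 1: the normalizers.} We first show $\hat Z_i = (1+\gamma p_i)\,Z_i$ with $p_i\in[0,1]$, where $Z_i:=\sum_{x'\in\{0,1\}^i}W(x')$. The argument of \cref{eq:What-over-W} applies to prefixes. If $x_1=\vinit$, then $\hat W(x)=W(x)$. If $x_1=1-\vinit$, then $\hat W(x)=(1+\gamma)W(x)$ unless some completed block $a$ (with $2(a+1)k+1\le |x|$) has $\query_a(x)\in B(\base_a(x))$; at the first such block the factor $(1+\gamma)$ is lost and $\hat W(x)=W(x)$. So pointwise $W(x)\le \hat W(x)\le (1+\gamma)W(x)$ for every node $x$, and summing gives $Z_i\le \hat Z_i\le (1+\gamma)Z_i$.

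\textbf{Step 2: marginals of $\mu$.} Next we compute $\mu(x)=\sum_{x''\succeq x,|x''|=n}\hat W(x'')/\hat Z$. For each leaf $x''$ below $x$, the same pointwise sandwich gives $W(x'')\le\hat W(x'')\le(1+\gamma)W(x'')$. The key structural fact is that $\sum_{x''\succeq x}W(x'')=W(x)\cdot c_{|x|}$, with $c_i$ independent of $x$ whenever $W(x)>0$. To see this, note that below any node with positive weight, each remaining partial or complete block contributes a total of $2^k+\gamma 2^{k-1}$. This holds because exactly one key continuation survives per query, and by $\ep$-validity $|\{z:\phi=1\}|=2^{k-1}$; the current partial block contributes an analogous total that depends only on position. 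The same identity gives $Z_n=Z_{|x|}c_{|x|}$, so the ideal marginal of $x$ equals $W(x)/Z_{|x|}$. Combining with Step 1, $\mu(x)\in[(1+\gamma)^{-1},(1+\gamma)]\cdot W(x)/Z_{|x|}$. By Steps 1 and the sandwich, $\hat\mu(x)=\hat W(x)/\hat Z_{|x|}$ likewise lies in $[(1+\gamma)^{-1},(1+\gamma)]\cdot W(x)/Z_{|x|}$. Dividing the two bounds yields the ratio bound $(1+\gamma)^2$.

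\textbf{Zero case and main obstacle.} If $W(x)=0$ (a wrong key bit), then $\hat W(x)=0$ and every leaf below $x$ has weight zero, so $\mu(x)=\hat\mu(x)=0$. Since $\omega$ and $\hat\omega$ share the same key indicator, $W(x)>0$ exactly when $\hat W(x)>0$. The main obstacle is the bookkeeping in Step 2 for nodes in the middle of a block, in either the query part or the key part. There the subtree sum must be computed by conditioning on the fixed partial query or key, and the constant $c_{|x|}$ must be checked not to depend on the particular prefix. It does not, because once the query is determined the key continuation is unique, and $\phi$ only enters at the block's end through a sum whose value is fixed. If the exact identity fails at some intermediate position, the fallback is to bound $\sum_{x''\succeq x}\hat W$ and $\sum_{x''\succeq x}W$ directly within a factor $1+\gamma$ of each other, which is all the argument needs.
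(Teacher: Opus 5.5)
Your argument works at block-boundary depths $|x| = 2ak+1$. However, the ``key structural fact'' in Step~2 is false strictly inside a block, which is exactly where you located the obstacle. For $W(x)>0$ and $|x|$ inside block $a$, the ratio $\sum_{x''\succeq x}W(x'')/W(x)$ equals $(2^k+\gamma 2^{k-1})^{r-a-1}$ times the total weight that the rest of block $a$ contributes below $x$. That total depends on $x$, not only on $|x|$:
\begin{itemize}
\item for $|x| = 2ak+1+s$ with $1\le s\le k$, it is $2^{k-s}+\gamma\cdot\#\{z\in A(\base_a(x))\cup B(\base_a(x)) : z_{1:s} = \query_a(x)_{1:s}\}$;
\item once the query is complete (including throughout the key part), it is the single surviving term $1+\gamma\,\phi(\base_a(x);\query_a(x))\in\{1,1+\gamma\}$.
\end{itemize}
Your justification conflates the sum over all $2^k$ queries, which $\ep$-validity fixes at $2^k+\gamma 2^{k-1}$, with this one term. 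Concretely, at $|x| = 2ak+k+1$ one gets $\muideal(x) = \frac{1+\gamma\phi(\base_a(x);\query_a(x))}{1+\gamma/2}\cdot\frac{W(x)}{Z_{|x|}}$, not $W(x)/Z_{|x|}$.

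Your chain cannot absorb this, because it already spends the whole budget. Comparing $\mu(x)$ with $\muideal(x)$, and $\hat\mu(x)$ with $W(x)/Z_{|x|}$, each costs a factor $(1+\gamma)^{\pm1}$. With the extra discrepancy you only get $\mu(x)/\hat\mu(x)\le(1+\gamma)^3/(1+\gamma/2)$ and $\mu(x)/\hat\mu(x)\ge(1+\gamma)^{-2}(1+\gamma/2)^{-1}$, both outside the target. The fallback does not help: comparing $\sum_{x''\succeq x}\hat W(x'')$ with $\sum_{x''\succeq x}W(x'')$ is the $\mu$-versus-$\muideal$ step you already used. What is missing is a comparison of the subtree total at $x$ with its average over the whole level $|x|$.

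The fix is to stop bounding $\hat W(x)/W(x)$ and $\hat Z_i/Z_i$ separately. Write $\mu(x)/\hat\mu(x) = U(x)\,\hat Z_{|x|}/\hat Z_n$ with $U(x) := \sum_{x''\succeq x,\,|x''|=n}\hat W(x'')/\hat W(x)$. Note that $\hat Z_n/\hat Z_{|x|}$ is the $\hat W$-weighted average of $U(x')$ over depth-$|x|$ nodes $x'$ with $\hat W(x')>0$. So it suffices that $U$ varies by at most a factor $(1+\gamma)^2$ across a level, which follows from two observations:
\begin{itemize}
\item Your Step~1 analysis shows $\lambda := \hat W/W\in\{1,1+\gamma\}$ is nonincreasing along root-to-leaf paths, so $U(x)\in[(1+\gamma)^{-1},1]\cdot\sum_{x''\succeq x}W(x'')/W(x)$.
\item By the computation above, the ideal factor is a level-dependent constant times a quantity in $[m,(1+\gamma)m]$, with $m$ depending only on $|x|$.
\end{itemize}
Each contributes one factor $1+\gamma$, giving the required spread. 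This is essentially the paper's proof. Its $T(x)$ is your $U(x)$, and \cref{clm:fa-ga} shows that at block boundaries $T$ takes only two values $G_a\le F_a\le(1+\gamma)G_a$. So the ratio there is within $1+\gamma$, not $(1+\gamma)^2$, and the second factor $1+\gamma$ is spent only at intermediate depths.
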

\begin{proof}
For $x \in \{0,1\}^{\leq n}$, write
\begin{align}
P(x) := \prod_{j=1}^{|x|} \hat\omega(x_{1:j}),\qquad T(x) := \sum_{z \in \{0,1\}^{m-|x|}} \prod_{j=|x|+1}^{m} \hat\omega((x\circ z)_{1:j}).\nonumber
\end{align}
Then
\begin{align}
\hat\mu(x) = \frac{P(x)}{\hat Z_{|x|}},\qquad \mu(x) = \frac{P(x)\,T(x)}{\hat Z}, \qquad \frac{\mu(x)}{\hat \mu(x)} = \frac{T(x) \hat Z_{|x|}}{\hat Z}.\label{eq:muhat-mu-factorization}
\end{align}
In particular, $\hat\mu(x)=0$ if and only if $P(x)=0$. Moreover, for $x \in \{0,1\}^{\leq n}$, whenever $P(x)>0$ there is at least one leaf descendent of $x$ with positive $\hat\omega$-mass, and therefore $\mu(x)>0$ if and only if $\hat\mu(x)>0$.

We next bound the possible values of $T(x)$ at depths $1 \pmod{2k}$. For $0 \leq a \leq r$ and $x \in \{0,1\}^{2ak+1}$, we say that $x$ is \emph{switched} if either $x_1 = \vinit$, or there is some $0 \leq b < a$ such that
$
\query_b(x) \in B(\base_b(x)).
$
Otherwise we say that $x$ is \emph{unswitched}.
\begin{claim}
  \label{clm:fa-ga} 
For each $0 \leq a \leq r$, there are positive real numbers $F_a, G_a$ satisfying
\begin{align}
G_a \leq F_a \leq (1+\gamma) G_a,\nonumber
\end{align}
such that, for every prefix $x \in \{0,1\}^{2ak+1}$ with $P(x)>0$, we have
\begin{align}
T(x) = \begin{cases}
F_a &: \text{if $x$ is switched,}\\
G_a &: \text{if $x$ is unswitched.}
\end{cases}\nonumber
\end{align}
\end{claim}
\begin{proof}[Proof of \cref{clm:fa-ga}]
We argue by reverse induction on $a$. For $a=r$, every prefix $x \in \{0,1\}^{2rk+1}=\{0,1\}^n$ is a leaf, so $T(x)=1$. Thus we may set $F_r = G_r = 1$. 

Now fix $0 \leq a < r$, and assume that the claim has already been established at step $a+1$ for appropriate values of $F_{a+1},G_{a+1}$. Let $x \in \{0,1\}^{2ak+1}$ with $P(x)>0$.

Suppose first that $x$ is switched. For each $z \in \{0,1\}^k$, let $\kappa(z) \in \{0,1\}^k$ denote the unique string for which
$
\key_a(x \circ z \circ \kappa(z)) = \psi(x;z). 
$
Then $x \circ z \circ \kappa(z)$ is a prefix of length $2(a+1)k+1$, and it is again active because $x$ was already active. Moreover, note that $\prod_{j=2ak+2}^{2(a+1)k+1} \hat \omega((x \circ z \circ \kappa(z))_{1:j}) = \hat \omega(x \circ z \circ \kappa(z)) = 1 + \gamma \cdot \phi(x;z)$, which is $1+\gamma$ when $z \in A(x)\cup B(x)$ and $1$ when $z \in C(x)$. %
Therefore
\begin{align}
T(x) = \bigl((1+\gamma)(|A(x)|+|B(x)|)+|C(x)|\bigr)F_{a+1} =: F_a.\nonumber
\end{align}

Suppose instead that $x$ is unswitched. For each $z \in \{0,1\}^k$, let $\kappa(z)$ be defined as above. Then $x \circ z \circ \kappa(z)$ is unswitched when $z \in A(x) \cup C(x)$, and switched when $z \in B(x)$. Moreover, we have $\prod_{j=2ak+2}^{2(a+1)k+1} \hat \omega((x \circ z \circ \kappa(z))_{1:j}) = \hat\omega(x \circ z \circ \kappa(z)) = 1 + \gamma \cdot \tilde \phi(x;z)$, which is $1+\gamma$ when $z \in A(x)$, and $1$ when $z \in B(x) \cup C(x)$.  Therefore
\begin{align}
T(x) = \bigl((1+\gamma)|A(x)|+|C(x)|\bigr)G_{a+1} + |B(x)|F_{a+1} =: G_a.\nonumber
\end{align}
Using the induction hypothesis $G_{a+1} \leq F_{a+1} \leq (1+\gamma)G_{a+1}$, we obtain
\begin{align}
G_a \leq \bigl((1+\gamma)(|A(x)|+|B(x)|)+|C(x)|\bigr)F_{a+1} = F_a\nonumber
\end{align}
and also
\begin{align}
F_a &= \bigl((1+\gamma)|A(x)|+|C(x)|\bigr)F_{a+1} + (1+\gamma)|B(x)|F_{a+1}\nonumber\\
&\leq (1+\gamma)\bigl((1+\gamma)|A(x)|+|C(x)|\bigr)G_{a+1} + (1+\gamma)|B(x)|F_{a+1}\nonumber\\
&\leq (1+\gamma)G_a.\nonumber
\end{align}
This completes the inductive step.
\end{proof}
For any $i \leq n$, we have $\hat Z = \sum_{x \in \{0,1\}^i} P(x) T(x)$, and $\hat Z_i = \sum_{x \in \{0,1\}^i} P(x)$. Thus, for any $0 \leq a \leq r-1$ and $x \in \{0,1\}^{2ak  + 1}$, we have
\begin{align}
\frac{1}{(1+\gamma) \cdot \hat Z_{2ak+1}} = \frac{G_a}{\sum_{x' \in \{0,1\}^{2ak+1}} P(x') \cdot (1+\gamma) G_a} \leq \frac{T(x)}{\hat Z} \leq \frac{(1+\gamma) \cdot G_a}{\sum_{x' \in \{0,1\}^{2ak+1}} P(x') \cdot G_a} = \frac{1+\gamma}{\hat Z_{2ak+1}}\nonumber,
\end{align}
and it follows from \cref{eq:muhat-mu-factorization} that $1/(1+\gamma) \leq \mu(x) / \hat \mu(x) \leq 1+\gamma$. In the above display, both inequalities use \cref{clm:fa-ga}: the first inequality uses the claim to bound $G_a \leq T(x)$ for all $x \in \{0,1\}^{2ak+1}$ and $(1+\gamma) G_a \geq T(x')$ for all $x' \in \{0,1\}^{2ak+1}$; the second inequality uses the claim to bound $T(x) \leq (1+\gamma) \cdot G_a$ and $T(x') \geq G_a$.

Having established the statement of \cref{lem:mu-muhat-ratio} for $x \in \{0,1\}^{\leq n}$ with $|x| = 2ak+1$ for some $0 \leq a \leq r$, we may now use this to extend the result for the remaining values of $x \in \{0,1\}^{\leq n}$: 

\paragraph{Case 1: $|x| \pmod{2k} \in \{k+1, \ldots, 2k \}$.} For any $0 \leq a \leq r-1$ and $k+1 \leq a' \leq 2k$, we have
\begin{align}
\hat Z_{2ak + a'} = \sum_{x \in \{0,1\}^{2ak + a'}} P(x) \in [(1+\gamma)^{-1},1] \cdot \sum_{x \in \{0,1\}^{2ak + 2k +1}} P(x) = [(1+\gamma)^{-1} \cdot\hat Z_{2ak + 2k +1},  \hat Z_{2ak + 2k +1}]\nonumber,
\end{align}
since for each $x \in \{0,1\}^{2ak + a'}$ with $P(x) > 0$, there is a unique $z \in \{0,1\}^{2k+1-a'}$ so that $P(x \circ z) > 0$, and such $z$ satisfies $P(x \circ z) = P(x) \cdot \hat \omega(x \circ z)$. For such $x,z$, we have $T(x \circ z)\cdot \hat\omega(x \circ z) = T(x)$, which means that 
\begin{align}
\frac{\mu(x)}{\hat \mu(x)} = \frac{T(x) \cdot \hat Z_{2ak+a'}}{\hat Z} = \frac{T(x \circ z) \cdot \hat \omega(x \circ z) \cdot \hat Z_{2ak+a'}}{\hat Z} \in [(1+\gamma)^{-1}, (1+\gamma)] \cdot \frac{\mu(x \circ z)}{\hat \mu(x \circ z)}.
\end{align}
Thus, by \cref{clm:fa-ga}, $\frac{\mu(x)}{\hat \mu(x)} \in [(1+\gamma)^{-2}, (1+\gamma)^2]$. %
As a consequence of this argument and of \cref{clm:fa-ga}, we also see that for all $x \in \{0,1\}^{2ak + k+1}$ (for any $0 \leq a \leq r-1$), we have
\begin{align}
T(x) \in [G_a/(1+\gamma), (1+\gamma)G_a].\label{eq:tx-kp1}
\end{align}

\paragraph{Case 2: $|x| \pmod{2k} \in \{2, \ldots, k \}$.} For any $0 \leq a \leq r-1$ and $1 \leq a' \leq k-1$, we have
\begin{align}
\hat Z_{2ak + 1 + a'} = \sum_{x \in \{0,1\}^{2ak + 1 + a'}} P(x) = 2^{a'-k} \cdot \hat Z_{2ak + k+1} \in 2^{a'-k} \cdot \hat Z_{2ak+2k+1} \cdot [(1+\gamma)^{-1},1]\nonumber,
\end{align}
since for any $x \in \{0,1\}^{2ak + 1 + a'}$, we have $P(x) = P(x_{1:2ak + k+1})$. Moreover, for such $x$, 
\begin{align}
T(x) = \sum_{x' \in \{0,1\}^{2ak+k+1} :\ x' \succ x} T(x') \in \left[ {2^{k-a'} \cdot G_a}/{(1+\gamma)}, 2^{k-a'} \cdot G_a \cdot (1+\gamma) \right],\nonumber
\end{align}
where the final step uses \cref{eq:tx-kp1}. 
Thus 
\begin{align}
\frac{\mu(x)}{\hat \mu(x)} = \frac{T(x) \cdot \hat Z_{2ak+1+a'}}{\hat Z} \in \frac{G_a \cdot \hat Z_{2ak+2k+1}}{\hat Z} \cdot [(1+\gamma)^{-2}, (1+\gamma)]\nonumber,
\end{align}
and it then follows from \cref{clm:fa-ga} that $\frac{\mu(x)}{\hat \mu(x)} \in [(1+\gamma)^{-2}, (1+\gamma)^2]$.
\end{proof}

The next lemma shows that the normalizing constant $\hat Z_i$ used to define $\hat \mu$ in \cref{eq:def-zhati,eq:def-muhat-zhat} does not depend on the choice of tuple $\tau \in \MH_{n,k,q,\ep}$. 
\begin{lemma}
\label{lem:zhat-invariant}
For each $i \in [n]$, there is a value $\hat Z_{\ep,i}$ so that, for any $\tau \in \MH_{n,k,q,\ep}$, the quantity $\hat Z_i = \sum_{x' \in \{0,1\}^i} \prod_{j=1}^i \hat \omega_{\tau}(x_{1:j}')$ appearing in \cref{eq:def-zhati} is equal to $\hat Z_{\ep,i}$. (In particular, it does \emph{not} depend on $\tau$.)
\end{lemma}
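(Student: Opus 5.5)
We compute $\hat Z_i$ by summing over the tree block by block, using the same recursive structure as in \cref{clm:fa-ga}, but now in the forward direction. Write $P(x) = \prod_{j=1}^{|x|}\hat\omega_\tau(x_{1:j})$, so that $\hat Z_i = \sum_{x\in\{0,1\}^i} P(x)$. The plan is to show that for each $a$, the sums of $P$ over switched prefixes in $\{0,1\}^{2ak+1}$ and over unswitched prefixes are given by numbers $S_a, U_a$ that depend only on $(k,\ep,\gamma,a)$. Every $\hat Z_i$ is then an explicit function of these numbers.

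\textbf{Base case and recursion.} At $a=0$ we have $P(0)=P(1)=1+\gamma$, exactly one of the two prefixes is switched, and so $S_0=U_0=1+\gamma$, independently of $\vinit$. For the inductive step, fix $x\in\{0,1\}^{2ak+1}$ with $P(x)>0$. For each $z\in\{0,1\}^k$, exactly one key $\kappa(z)$ gives $x\circ z\circ\kappa(z)$ nonzero weight. The intermediate weights are all $1$, and the final weight is $1+\gamma\phi(x;z)$ if $x$ is switched and $1+\gamma\tilde\phi(x;z)$ otherwise. The $\ep$-validity sizes \cref{eq:ABC-sizes-initial-lemmas} then give
\begin{align*}
S_{a+1} &= \bigl((1+\gamma)2^{k-1}+2^{k-1}\bigr)S_a + (1+\gamma)\ep 2^k\, U_a ...
\end{align*}
More carefully: a switched $x$ sends all children to switched prefixes with total multiplier $(2+\gamma)2^{k-1}$. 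An unswitched $x$ sends $A\cup C$ to unswitched prefixes with multiplier $(1+\gamma)|A|+|C|$, and sends $B$ to switched prefixes with multiplier $|B|$. Hence
\begin{align*}
S_{a+1} &= (2+\gamma)2^{k-1}S_a + \ep 2^k U_a,\\
U_{a+1} &= \bigl((1+\gamma)(2^{k-1}-\ep 2^k)+2^{k-1}\bigr)U_a.
\end{align*}
Since the set sizes $|A(x)|,|B(x)|,|C(x)|$ are the same for every $\phi$ and every $x$, these recursions do not involve $\tau$. This gives $\hat Z_{2ak+1}=S_a+U_a$, which is independent of $\tau$.

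\textbf{Intermediate depths.} For a depth $i=2ak+1+a'$ with $1\le a'\le k$, every $\hat\omega$ value in the query block equals $1$. Therefore $\hat Z_i = 2^{a'}\hat Z_{2ak+1}$. For a depth $i=2ak+k+1+b$ with $1\le b\le k-1$, each prefix $x\in\{0,1\}^{2ak+k+1}$ has exactly one extension of length $b$ with nonzero weight, namely the one whose key agrees with the first $b$ bits of $\psi$, and that weight is $1$. Therefore $\hat Z_i = \hat Z_{2ak+k+1} = 2^k\hat Z_{2ak+1}$. The remaining depth $i=2(a+1)k+1$ was handled by the recursion above.

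\textbf{Main obstacle.} The only delicate point is checking that the switched/unswitched classification used in defining $\hat\omega$ matches the one in the recursion. The defining condition \cref{eq:prior-b-exists}, $\phi\neq\tilde\phi$ at $\query_b$, holds exactly when $\query_b(x)\in B(\base_b(x))$. It remains to note that $\psi$ only determines which key survives, and never affects the weights themselves. With these two observations, every count in the recursion is determined by the set sizes in \cref{eq:ABC-sizes-initial-lemmas}, and the lemma follows.
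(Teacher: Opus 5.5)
Your proof is correct, but it takes a different route from the paper.

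\textbf{The paper's route.} The paper argues by symmetry. For any two tuples $\tau,\tau'$ it builds bijections $\pi_i$ of $\{0,1\}^i$, block by block, that preserve the path products of $\hat\omega$. It uses a permutation $\sigma_x$ of $\{0,1\}^k$ carrying $A(x),B(x),C(x)$ to $A'(x'),B'(x'),C'(x')$, together with bijections $\lambda_{x,z,s}$ that realign the surviving keys. Switched status is carried along in its inductive hypothesis. Invariance of $\hat Z_i$ then follows by reindexing the sum.

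\textbf{Your route.} You compute $\hat Z_i$ directly through the forward two-type recursion for $(S_a,U_a)$. This is the same switched/unswitched bookkeeping as in \cref{clm:fa-ga}, run from the root downward instead of from the leaves upward.

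\textbf{Comparison.} The bijection proves something stronger without computing anything: the whole multiset of path weights at each depth is $\tau$-invariant. Your argument is more elementary and yields an explicit value for $\hat Z_{\ep,i}$, for example $U_a=(1+\gamma)\bigl((1+\gamma)(2^{k-1}-\ep 2^k)+2^{k-1}\bigr)^a$, with $S_a$ then given by the linear recursion. It also makes fully transparent why only the sizes in \cref{eq:ABC-sizes-initial-lemmas} and the uniqueness of the surviving key matter.

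\textbf{One cleanup.} Delete your first displayed formula for $S_{a+1}$. Its coefficient $(1+\gamma)\ep 2^k$ on $U_a$ is wrong, because children $z\in B(x)$ of an unswitched $x$ receive weight $1+\gamma\tilde\phi(x;z)=1$. The corrected recursion you give immediately afterward is the right one.
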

\begin{proof}
The proof is immediate by symmetry: consider any tuples $\tau = (\vinit, \phi, \psi), \tau' = ((\vinit)', \phi', \psi') \in \MH_{n,k,q,\ep}$. Let the mappings $\hat\omega$ corresponding to these two tuples be denoted $\hat \omega = \hat \omega_{\tau}, \hat \omega'=\hat\omega_{\tau'} : \{0,1\}^{\leq n} \to \BR_{\geq 0}$. We claim that, for each $i \leq n$, we can find a permutation $\pi_i : \{0,1\}^{i} \to \{0,1\}^{i}$ so that $\prod_{j=1}^i\hat \omega(x_{1:j}) =  \prod_{j=1}^i \hat \omega'(\pi_i(x)_{1:j})$ for all $x \in \{0,1\}^{i}$. To see this, we may construct the permutations $\pi_i$ inductively, in blocks of size $2k$. For the base case, $\pi_1$ is the identity if $\vinit = (\vinit)'$, and swaps the bits otherwise. We will suppose (inductively) that we have constructed $\pi_i$ for $i \leq 2ak+1$ for some $a \geq 0$, which satisfy the desired property above and also that $x \in \{0,1\}^{2ak+1}$ satisfies \cref{eq:prior-b-exists} for $\phi$ if and only if $\pi_{2ak+1}(x)$ satisfies \cref{eq:prior-b-exists} for $\phi'$. 

We now construct $\pi_i$ for the depths in the next block. Fix $x \in \{0,1\}^{2ak+1}$ and write $x' := \pi_{2ak+1}(x)$. Let the sets $A(x), B(x), C(x)$ be defined as in \cref{eq:A,eq:B,eq:C} in terms of $\phi$, and let $A'(x'), B'(x'), C'(x')$ be defined similarly in terms of $\phi'$. Since the sizes of these respective sets match up (per \cref{eq:ABC-sizes-initial-lemmas}), we can find a permutation $\sigma_x : \{0,1\}^k \to \{0,1\}^k$ so that $\sigma_x(A(x)) = A'(x')$, $\sigma_x(B(x)) = B'(x')$, and $\sigma_x(C(x)) = C'(x')$. 

We first define the permutations at depths lying in the query block. For $1 \leq s \leq k$ and $u \in \{0,1\}^s$, set
$
\pi_{2ak+1+s}(x \circ u) := x' \circ u.
$
Since all edge weights in the query block are equal to $1$, this preserves the product of edge weights up to depth $2ak+1+s$.

It remains to define the permutations inside the key block. Fix $z \in \{0,1\}^k$ and $1 \leq s \leq k$. Choose any bijection $\lambda_{x,z,s}:\{0,1\}^s \to \{0,1\}^s$ satisfying
$
\lambda_{x,z,s}\left(\psi(x;z)_{1:s}\right) = \psi'(x';\sigma_x(z))_{1:s}.
$
Then for $v \in \{0,1\}^s$, define
\begin{align}
\pi_{2ak+k+1+s}(x \circ z \circ v) := x' \circ \sigma_x(z) \circ \lambda_{x,z,s}(v).\label{eq:pi-key-block}
\end{align}
For each fixed $s$, this defines a bijection on $\{0,1\}^{2ak+k+1+s}$, since $\pi_{2ak+1}$, $\sigma_x$, and each $\lambda_{x,z,s}$ are bijections. Moreover, \cref{eq:pi-key-block} maps a prefix whose key block agrees with $\psi(x;z)$ for its first $s$ coordinates to a prefix whose key block agrees with $\psi'(x';\sigma_x(z))$ for its first $s$ coordinates, and maps every other key prefix to one which also fails the corresponding agreement condition. Thus the product of the key-indicator factors is preserved. If $s<k$, this proves the desired product identity at depth $2ak+k+1+s$. To deal with depth $i = 2(a+1)k+1$, we simply note that the choice of $\sigma_x$ maps $A(x)$ to $A'(x')$, $B(x)$ to $B'(x')$, and $C(x)$ to $C'(x')$, which completes the inductive construction of the permutations $\pi_i$.

Now, existence of $\pi_i$ as above implies that 
\begin{align}
\sum_{x \in \{0,1\}^i} \prod_{j=1}^i \hat \omega(x_{1:j}) = \sum_{x \in \{0,1\}^i} \prod_{j=1}^i \hat \omega'(\pi_i(x)_{1:j}) = \sum_{x' \in \{0,1\}^i} \prod_{j=1}^i \hat \omega'(x'_{1:j}),\nonumber
\end{align}
as desired.
\end{proof}

\cref{lem:conditional-independence} is a technical ingredient needed in our proof; it shows that, when we draw a tuple $\tau = (\vinit, \phi, \psi)$ uniformly at random from $\MH_{n,k,q,\ep}$, and condition on the values of the resulting mapping $\hat \mu_\tau$ at a number of points $x \in \{0,1\}^{\leq n}$, then conditionally $\psi$ is independent of $(\vinit, \phi)$. 
\begin{lemma}
\label{lem:conditional-independence}
Fix $q \in \BN$, and consider any collection $x_1, \ldots, x_q \in \{0,1\}^{\leq n}$ and $\mu_1, \ldots, \mu_q \geq 0$. Let $\BP$ denote the distribution of a uniformly random tuple $\tau = (\vinit, \phi,\psi) \sim \MH_{n,k,q,\ep}$ and the induced value of $\hat \mu = \hat \mu_\tau$. %
Then for any $v' \in \{0,1\}$, $\phi' \in \Phi_{n,k,q,\ep}$, and $\psi' \in \Psi_{n,k,q}$, we have
\begin{align}
\BP(\psi = \psi'  \mid (\vinit, \phi) = (v',\phi'), \ \ \hat \mu(x_j) = \mu_j\ \forall j \in [q]) = \BP(\psi = \psi' \mid \hat \mu(x_j) = \mu_j\ \forall j \in [q]).\nonumber
\end{align}
\end{lemma}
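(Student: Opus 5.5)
The plan is to show that the conditioning event $\{\hat \mu(x_j) = \mu_j\ \forall j \in [q]\}$ factors as the intersection of an event that depends only on $\psi$ and an event that depends only on $(\vinit, \phi)$. Since $\tau$ is uniform on the product set $\MH_{n,k,q,\ep} = \{0,1\} \times \Phi_{n,k,q,\ep} \times \Psi_{n,k,q}$, the component $\psi$ is a priori independent of $(\vinit,\phi)$. Conditioning a pair of independent random variables on an intersection $E_\psi \cap E_{\vinit,\phi}$ of events, each measurable with respect to only one of them, preserves their independence. Both sides of the claimed identity then reduce to $\BP(\psi = \psi' \mid E_\psi)$. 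Throughout we assume, as is implicit in the statement, that the conditioning events have positive probability.

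\textbf{Step 1: factor each edge weight.} For every $x \in \{0,1\}^{\leq n}$, I would write $\hat\omega_\tau(x) = I_\psi(x) \cdot Q_{\vinit,\phi}(x)$, where $I_\psi(x) \in \{0,1\}$ depends only on $\psi$ and $Q_{\vinit,\phi}(x) \in \{1, 1+\gamma\}$ depends only on $(\vinit,\phi)$. This is read off the case analysis defining $\hat\omega$:
\begin{itemize}
\item At depth $1$, take $I=1$ and $Q = 1+\gamma$.
\item At query-block depths, take $I = Q = 1$.
\item At key-block depths below $k$, $I$ is the key-agreement indicator $\One{\key_a(x) = \psi(\base_a(x);\query_a(x))_{1:|\key_a(x)|}}$ and $Q=1$.
\item At depths $\equiv 1 \pmod{2k}$, $I$ is the full key indicator, and $Q$ is $1+\gamma\phi(\cdot)$ or $1+\gamma\tilde\phi(\cdot)$. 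The choice between \cref{eq:define-hatomega-1} and \cref{eq:define-hatomega-2} is governed by $\vinit$ and \cref{eq:prior-b-exists}, which involve only $\vinit$, $\phi$ and the string $x$ itself, never $\psi$.
\end{itemize}
Multiplying along the root-to-$x$ path gives
\begin{align}
\prod_{j \leq |x|}\hat\omega_\tau(x_{1:j}) = I_\psi^{\mathrm{path}}(x)\cdot Q^{\mathrm{path}}_{\vinit,\phi}(x), \nonumber
\end{align}
with $I^{\mathrm{path}} \in \{0,1\}$ and $Q^{\mathrm{path}} \geq 1 > 0$.

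\textbf{Step 2: use that the normalizer is fixed.} By \cref{lem:zhat-invariant}, $\hat Z_{|x|} = \hat Z_{\ep,|x|}$ does not depend on $\tau$. Hence $\hat\mu_\tau(x_j) = \mu_j$ is equivalent to the following:
\begin{itemize}
\item if $\mu_j = 0$: $I^{\mathrm{path}}_\psi(x_j) = 0$;
\item if $\mu_j > 0$: $I^{\mathrm{path}}_\psi(x_j)=1$ and $Q^{\mathrm{path}}_{\vinit,\phi}(x_j) = \mu_j \hat Z_{\ep,|x_j|}$.
\end{itemize}
The case $\mu_j = 0$ uses positivity of $Q$. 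Consequently the event equals $E_\psi \cap E_{\vinit,\phi}$ for explicit events depending only on $\psi$, respectively only on $(\vinit,\phi)$.

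\textbf{Step 3: conclude.} By product-measure independence,
\begin{align}
\BP(\psi=\psi' \mid (\vinit,\phi)=(v',\phi'), E_\psi\cap E_{\vinit,\phi}) = \BP(\psi = \psi' \mid E_\psi). \nonumber
\end{align}
In the same way, $\BP(\psi=\psi'\mid E_\psi\cap E_{\vinit,\phi}) = \BP(\psi=\psi' \cap E_\psi)\BP(E_{\vinit,\phi})/(\BP(E_\psi)\BP(E_{\vinit,\phi})) = \BP(\psi=\psi'\mid E_\psi)$, which gives the claim.

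The only delicate point is the bookkeeping in Step 1. One must check that no branch of the $\hat\omega$ definition lets $\psi$ influence the $\{1,1+\gamma\}$ factor, in particular that the switch condition \cref{eq:prior-b-exists} is $\psi$-free. One must also check that $Q$ never vanishes, so that zero responses carry information about $\psi$ alone. Neither check is hard; the statement is essentially a structural observation about the construction.
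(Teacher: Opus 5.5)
Your proposal is correct and follows essentially the same route as the paper. The paper also writes $\hat\mu_\tau(x) = \eta_{\vinit,\phi}(x)\cdot\chi_\psi(x)$, where $\eta_{\vinit,\phi}(x) > 0$ is the product of the $(1+\gamma)$-type factors divided by the $\tau$-independent normalizer $\hat Z_{\ep,|x|}$, and $\chi_\psi(x) \in \{0,1\}$ is the key-agreement indicator; it then splits the conditioning event as $\ME_\eta \cap \ME_\chi$ and concludes from the independence of $\psi$ and $(\vinit,\phi)$ under the uniform product measure, with your per-depth check that the switch condition \cref{eq:prior-b-exists} is $\psi$-free just making explicit what the paper asserts in one line.
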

\begin{proof}
The definition of $\hat \omega = \hat\omega_\tau$ and $\hat \mu = \hat \mu_\tau$ above gives that there is a quantity $\eta_{\vinit,\phi}(x) > 0$ depending only on $(\vinit,\phi)$ and an indicator $\chi_\psi(x) \in \{0,1\}$ depending only on $\psi$ such that
\begin{align}
\hat\mu_{\tau}(x) = \eta_{\vinit,\phi}(x) \cdot \chi_\psi(x)\label{eq:muhat-factor-ci}
\end{align}
for all $x \in \{0,1\}^{\leq n}$. 
Indeed, $\eta_{\vinit,\phi}(x)$ is the product of the positive factors of the form $1+\gamma$ and $1+\gamma \cdot \phi(\base_a(x); \query_a(x))$ or $1+\gamma \cdot \tilde \phi(\base_a(x); \query_a(x))$, divided by $\hat Z_{\ep,|x|}$, while $\chi_\psi(x)=1$ if and only if every key block appearing along the path to $x$ agrees with the corresponding value determined from $\psi$. 

Let
\begin{align}
  \ME := & \{ (\vinit, \phi, \psi) \mid \hat \mu_{\vinit,\phi,\psi}(x_j) = \mu_j\ \forall j \in [q]\}\nonumber\\
\ME_\eta:=& \{(\vinit,\phi,\psi) \mid \eta_{\vinit, \phi}(x_j) = \mu_j \ \forall j \mbox{ s.t. } \mu_j > 0 \}\nonumber\\
\ME_\chi :=& \{(\vinit, \phi,\psi) \mid \chi_\psi(x_j) = 1 \ \forall j \mbox{ s.t. } \mu_j > 0, \ \chi_\psi(x_j) = 0 \ \forall j \mbox{ s.t. } \mu_j = 0\}\nonumber
\end{align}
Then since $\eta_{\vinit, \phi}(x) > 0$ for all $\vinit,\phi, x$, it follows from \cref{eq:muhat-factor-ci} that $\ME = \ME_\eta \cap \ME_\chi$: for each $j$ with $\mu_j>0$ the equality $\hat \mu(x_j)=\mu_j$ is equivalent to $\eta_{\vinit,\phi}(x_j)=\mu_j$ and $\chi_\psi(x_j)=1$, while for each $j$ with $\mu_j=0$ it is equivalent to $\chi_\psi(x_j)=0$. Moreover, note that the event $(\vinit, \phi, \psi) \in \ME_\eta$ depends only on $(\vinit, \phi)$, while the event $(\vinit, \phi, \psi) \in \ME_\chi$ depends only on $\psi$. For any $(v', \phi')$ for which $\BP(\{ (\vinit, \phi) = (v', \phi') \} \cap \ME) > 0$, we have
\begin{align}
\BP(\psi = \psi' \mid \{(\vinit, \phi) = (v',\phi')\} \cap \ME)
&= \BP(\psi = \psi' \mid \{(\vinit, \phi) = (v',\phi')\} \cap \ME_\chi)\nonumber\\
&= \BP(\psi = \psi' \mid \ME_\chi)
= \BP(\psi = \psi' \mid \ME_\chi \cap \ME_\eta)\nonumber,
\end{align}
where the first equality uses that $(\vinit, \phi) = (v', \phi')$ implies that $\ME_\eta$ holds, while the second and third equalities use the independence of $(\vinit, \phi)$ and $\psi$. Since $\ME=\ME_\eta \cap \ME_\chi$, this is the desired conditional independence statement.
\end{proof}

\cref{def:consistent} formalizes the notion of an ``ideal'' sequence of queries which can be made to $\hat \mu$: namely such a sequence satisfies the property that the queries do not ``skip ahead'' in a certain sense. In particular, the sequence has the property that before making any query, all prefixes of that query of length $1\pmod{2k}$ have been previously queried. 
\begin{definition}
\label{def:consistent}
 We say that a sequence $X_1, \ldots, X_j \in \{0,1\}^{\leq n}$ is \emph{ideal} if the following properties hold:
\begin{itemize}
\item $X_1, \ldots, X_j$ are all distinct.
\item For all $j' \in [j]$, $|X_{j'}| \pmod{2k} \in \{1, k+2, k+3, \ldots, 2k\}$.
\item For each $j' \in [j]$ and $0 \leq a < r$ for which $|X_{j'}| > 2ak + 1$, there is some $j'' < j'$ for which $X_{j''} = \base_a(X_{j'}) = (X_{j'})_{1:2ak+1}$.
\end{itemize}
\end{definition}

The main technical lemma in the proof of our lower bound is \cref{lem:close-consistent-query}, which shows the following: under a uniformly random draw of $\tau \sim \MH_{n,k,q,\ep}$, for any fixed ideal sequence of queries (per \cref{def:consistent}), the distribution of the output of the final query in this sequence conditioned on the results of the previous queries in the sequence does not depend much on the value of $\vinit$. Ultimately, we will combine \cref{lem:close-consistent-query} with the chain rule for KL divergence to show that the results of all queries in an ideal sequence do not depend much on the value of $\vinit$, and this will ultimately allow us to show that no query-limited algorithm can guess the value of $\vinit$. 

To state the lemma, we introduce the following notation. Let $\BP$ denote the joint distribution over the random draw of $(\vinit,\phi,\psi) \sim \MH_{n,k,q,\ep}$ and the internal randomness of $\Alg$, and $\E$ denote the corresponding expectation. For each $b \in \{0,1\}$, let $\BP_b(\cdot) := \BP(\cdot \mid \vinit = b)$.

\begin{lemma}
\label{lem:close-consistent-query}
Suppose that $X_1, \ldots, X_j \in \{0,1\}^{\leq n}$ is an ideal sequence for some $j \leq q$. For $b \in \{0,1\}$, let $\BP_{b}$ denote the joint distribution of $(\hat \mu_\tau, \hat \omega_\tau, \tau)$ for $\tau = (\vinit, \phi, \psi) \sim \MH_{n,k,q,\ep}$, conditioned on $\vinit = b$. Fix any values of $ \mu_1', \ldots,  \mu'_{j-1} \geq 0$, and write $\ME$ to denote the event that $\hat \mu(X_{j'}) =  \mu'_{j'}$ for all $j' \in [j-1]$. If $\BP_0(\ME), \BP_1(\ME) > 0$, then we have
\begin{align}
\kld{\BP_0(\hat \mu(X_{j}) = \cdot \mid \ME)}{\BP_1(\hat \mu(X_{j}) = \cdot \mid \ME)} \leq O(\ep^2).\nonumber
\end{align}
\end{lemma}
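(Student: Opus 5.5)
The approach is to use ideality of the sequence to reduce the answer to the $j$-th query to a single fresh ``edge-weight'' factor in the last block, and then to compare its conditional law under $\BP_0$ and $\BP_1$. Let $a$ be the block containing the last coordinate of $X_j$. By \cref{def:consistent}, $\base_a(X_j)$ was queried earlier, so its value is fixed on $\ME$. If that value is $0$, then $\hat\mu(X_j)=0$ deterministically and the KL divergence is $0$. Otherwise \cref{eq:def-muhat-zhat} and \cref{lem:zhat-invariant} give
\begin{align}
\hat\mu(X_j) = \hat\mu(\base_a(X_j)) \cdot \frac{\hat Z_{\ep,2ak+1}}{\hat Z_{\ep,|X_j|}} \cdot \prod_{i=2ak+2}^{|X_j|} \hat\omega((X_j)_{1:i}).\nonumber
\end{align}
Every factor except the last product is $\ME$-measurable and independent of $\tau$. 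So it suffices to compare the conditional laws of this product, which by \cref{eq:muhat-factor-ci} factors as $\chi_\psi$-type key indicators times $\eta_{\vinit,\phi}$-type weights in $\{1,1+\gamma\}$.

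\textbf{Case $|X_j| \bmod 2k \in \{k+2,\ldots,2k\}$.} Here the product is just the indicator that $\key_a(X_j)$ matches the corresponding prefix of $\psi(\base_a(X_j);\query_a(X_j))$, a function of $\psi$ alone. By \cref{lem:conditional-independence}, the conditional law of $\psi$ given $\ME$ and $(\vinit,\phi)$ does not depend on $(\vinit,\phi)$. Hence the law of the answer is identical under $\BP_0(\cdot\mid\ME)$ and $\BP_1(\cdot\mid\ME)$, and the KL divergence is $0$.

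\textbf{Case $|X_j| \equiv 1 \pmod{2k}$.} The answer is $\chi \cdot \Omega$, where $\chi$ is the full key indicator (a function of $\psi$) and $\Omega = 1+\gamma\,\phi(\base_a(X_j);\query_a(X_j))$ or $1+\gamma\,\tilde\phi(\cdot)$. Which of the two applies depends on whether $(X_j)_1=\vinit$ and on the switched status of $\base_a(X_j)$. By \cref{lem:conditional-independence}, $\chi$ is conditionally independent of $(\vinit,\phi)$ given $\ME$, with the same law under both $b$. Hence, by the data-processing inequality, the KL divergence is at most that between the two conditional laws of $\Omega$, i.e.\ of two Bernoulli variables.

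To compute these, note that $\le q$ points are touched, so by $q$-wise independence of $\Phi_{n,k,q,\ep}$ the value $\phi(\base_a(X_j))$ is a uniformly random valid tuple, independent of $\phi$ at every other base appearing in $\ME$. Conditioning on $\ME$ reveals, at this particular base, the value of $\phi$ or $\tilde\phi$ at some $s \le q$ other coordinates $z$ (the earlier full-block queries sharing this base), and possibly information at other bases that determines switched status. Suppose $m$ ones have been revealed among those $s$ coordinates. Under any fixed assignment of which function is used, the posterior probability that the fresh coordinate $z=\query_a(X_j)$ gives $\Omega=1+\gamma$ is a hypergeometric ratio. With $\phi$ the ratio is $\approx (2^{k-1}-m)/(2^k-s)$; with $\tilde\phi$ it is $(2^{k-1}-\ep 2^k-m)/(2^k-s)$. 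Mixing over the possible $A/B/C$ membership of the revealed coordinates changes this only by $O(\ep 2^k/(2^k-s))$. Choosing $k$ so that $2^k \gg q$ (e.g.\ $q\le 2^k/10$), both $\BP_0$- and $\BP_1$-conditional probabilities are therefore mixtures of numbers within $O(\ep)$ of the common value $p_0=(2^{k-1}-m)/(2^k-s)\in[1/3,2/3]$. The KL divergence between Bernoullis with means in $[1/4,3/4]$ differing by $O(\ep)$ is $O(\ep^2)$.

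\textbf{Main obstacle.} The hard part is this last posterior computation. Conditioning on $\ME$ couples the unknown switched status of $\base_a(X_j)$, which depends on $\phi$ at earlier bases, with the revealed coordinates at the current base. One must check that both conditional means are convex combinations of hypergeometric ratios, each $O(\ep)$-close to $p_0$. The plan is to condition additionally on the full $A/B/C$ labelling of all revealed coordinates at all bases, where everything is an explicit hypergeometric ratio, and then average, using convexity of the bound.
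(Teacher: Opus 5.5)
Your proposal is correct and follows essentially the same route as the paper. The key-block case is dispatched by \cref{lem:conditional-independence}. In the full-block case the answer factors as a $\psi$-measurable key indicator times $1+\gamma S_j$, whose conditional means under $\BP_0,\BP_1$ are mixtures of the hypergeometric ratios $\frac{2^{k-1}-m}{2^k-s}$ and $\frac{2^{k-1}-\ep 2^k-m}{2^k-s}$, which gives a Bernoulli KL of $O(\ep^2)$. Your plan of conditioning on the full $A/B/C$ labelling of all revealed coordinates plays the role of the paper's conditioning on $\Ebase_{v,\tilde v}\cap\Equery\cap\{\psi=\psi'\}$. It is, if anything, slightly more careful: deeper queries below an unswitched $\base_a(X_j)$ can leak which revealed coordinates lie in $B(\base_a(X_j))$, and your conditioning absorbs that information explicitly, though it does not change the final ratios.
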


\begin{proof}
To simplify notation, we will typically write $\hat \mu := \hat \mu_\tau, \hat \omega:= \hat \omega_\tau$. Further, we will make use of the sets $A(x), B(x), C(x)$ indexed by $x \in \{0,1\}^n$ (which are determined by the choice of $\tau$), defined in \cref{eq:A,eq:B,eq:C}, respectively. 
We consider the following two cases for $X_j$:
\paragraph{Case 1 for $X_j$.} Suppose $X_j$ lies strictly inside a key block, so that $|X_j| = 2ak + k + 1 + t$ for some $0 \leq a < r$ and $1 \leq t \leq k-1$. Since the sequence $X_1, \ldots, X_j$ is ideal, there is some $j' < j$ for which $X_{j'} = (X_j)_{1:2ak+1}$. Note that $\hat \mu(X_{j'}) = \mu_{j'}'$ under $\ME$, so that 
\begin{align}
\hat \mu(X_j) = \frac{\mu_{j'}' \cdot \hat Z_{2ak+1}}{\hat Z_{|X_j|}} \cdot \One{\key_a(X_j) = \psi(\base_a(X_j); \query_a(X_j))_{1:t}},\nonumber
\end{align}
since every edge weight between depths $2ak+2$ and $2ak+k+1$ is equal to $1$. In particular, $\hat \mu(X_j)$ is $\psi$-measurable conditioned on $\ME$ (under both $\BP_0, \BP_1$). By \cref{lem:conditional-independence}, $\psi$ and $(\phi, \vinit)$ are conditionally independent conditioned on $\ME$ under both $\BP_0, \BP_1$, so
\begin{align}
\BP_0(\hat \mu(X_j) = \cdot \mid \ME) = \BP_1(\hat \mu(X_j) = \cdot \mid \ME).\label{eq:case1-kl}
\end{align}

\paragraph{Case 2 for $X_j$.} Suppose $|X_j| = 2(a+1)k+1$ for some $0 \leq a < r$. 
Let
\begin{align}
I_j :=& \One{\key_a(X_j) = \psi(\base_a(X_j); \query_a(X_j))}\nonumber\\
 R_j :=& \One{\vinit = (X_j)_1 \text{ or } \exists b<a:\ \phi(\base_b(X_j); \query_b(X_j)) \neq \tilde \phi(\base_b(X_j); \query_b(X_j))}\nonumber\\
 S_j := & R_j \cdot \phi(\base_a(X_j); \query_a(X_j)) + (1-R_j) \cdot \tilde \phi(\base_a(X_j); \query_a(X_j)) \in \{0,1\}.\nonumber
\end{align}
Since the sequence $X_1, \ldots, X_j$ is ideal, there is some $j' < j$ so that $X_{j'} = (X_j)_{1:2ak+1}$. We may then write
\begin{align}
\hat \mu(X_j) = \frac{\hat \mu(X_{j'}) \cdot \hat Z_{2ak+1}}{\hat Z_{|X_j|}} \cdot I_j \cdot \left(1 + \gamma \cdot S_j \right),\label{eq:hatmu-formula}
\end{align}
To characterize the distribution of $\hat \mu(X_j)$ conditioned on $\SF_{j-1}$ under $\BP_0, \BP_1$ we observe that $[j-1]$ may be partitioned into 3 subsets:
\begin{itemize}
\item $\MJ_1$ consists of $\ell \in [j-1]$ for which $\mu_\ell' = 0$.
\item $\MJ_2$ consists of $\ell \not \in \MJ_1$ for which $\base_a(X_\ell) \neq \base_a(X_j)$.
\item $\MJ_3$ consists of $\ell \not \in \MJ_1 \cup \MJ_2$. We claim that all $\ell \in \MJ_3$ satisfy $\query_a(X_\ell)  \neq \query_a(X_j)$. This claim holds since, in order to have $\hat \mu(X_\ell) > 0$ (which must have positive probability) and $\base_a(X_\ell) = \base_a(X_j)$ and $\query_a(X_\ell) = \query_a(X_j)$, we must have $X_j \prec X_\ell$, which is impossible since the sequence $X_1, \ldots, X_j$ (as $|X_j| \pmod{2k} = 1$) is ideal. 
\end{itemize}
Since sequence $X_1, \ldots, X_j$ is ideal, under $\ME$, the values of $\hat \omega(X_\ell)$ for each $\ell \in \MJ_2 \cup \MJ_3$ are uniquely determined (under the distributions $\BP_0, \BP_1$); indeed, for such $\ell$ we have $\hat \mu(X_\ell) = \frac{1}{\hat Z_{|X_\ell|}} \cdot \prod_{b=0}^{\lfloor (|X_\ell| - 2)/(2k) \rfloor} \hat \omega(\base_b(X_\ell))$, and the values of $\hat \omega(\base_b(X_\ell))$ for all but (potentially) the largest value of $b$ in this product are determined by $\mu'_1, \ldots, \mu'_{\ell-1}$on the event $\{ \hat \mu(X_{\ell'}) = \mu_{\ell'}' \ \forall \ell' < \ell \}$. Thus, we may make the following definitions:
\begin{itemize}
\item Let $\tilde A$ denote the set of values of $\query_a(X_\ell)$, for $\ell \in \MJ_3$, for which $\hat \omega(X_\ell) = 1 + \gamma$.
\item Let $\tilde C$ denote the set of values of $\query_a(X_\ell)$, for $\ell \in \MJ_3$, for which $\hat \omega(X_\ell) = 1$.
\end{itemize}
All $\ell \in \MJ_3$ satisfy $\query_a(X_\ell) \in \tilde A \cup \tilde C$ since $\hat \omega(\base_a(X_\ell)) \neq 0$ for $\ell \in \MJ_3$. Let $\Equery$ denote the event that $\hat \omega(x) = 1+\gamma$ for all $x \in \tilde A$ and that $\hat \omega(x) = 1$ for all $x \in \tilde C$.

Finally, let $\MD$ denote the set of all values of $\base_b(X_\ell)$ not equal to $\base_a(X_j)$ for any $0 \leq b< r$ and $\ell \in \MJ_2 \cup \MJ_3$ with $2bk+1 \leq |X_\ell|$. Since the sequence $X_1, \ldots, X_j$ is ideal, each $X_\ell$ can only contribute at most one new value to $\MD$, meaning that $|\MD| \leq q-1$. For each $d \in \MD$, let $(v\^d, \tilde v\^d) \in \Vvalid_{k,\ep}$ be an arbitrary $\ep$-valid tuple; via slight abuse of notation, we let $(v,\tilde v)$ denote the collection of $(v\^d, \tilde v\^d)$ for $d \in \MD$. Let $\Ebase_{v,\tilde v}$ denote the event that $\phi(d) = (v\^d, \tilde v\^d)$ for each $d \in \MD$. 

We proceed to analyze the distribution $\BP_{(X_j)_1}(\hat \mu(X_j) = \cdot \mid \ME)$. 
For any values of $\psi' \in \Psi_{n,k,q}$ and $(v,\tilde v) \in (\Vvalid_{k,\ep})^{\MD}$ as above which are consistent with $\ME$ in the sense that $\BP_{(X_j)_1}(\ME \mid \{ \psi = \psi' \} \cap \Ebase_{v,\tilde v}) > 0$, we have
\begin{align}
& \BP_{(X_j)_1} \left(S_j = 1 \mid \ME \cap \Ebase_{v,\tilde v} \cap \Equery \cap \{\psi = \psi'\} \right) \nonumber\\
=& \BP_{(X_j)_1} \left(S_j = 1 \mid \Ebase_{v,\tilde v} \cap \Equery \cap \{\psi = \psi'\} \right) \nonumber\\
=& \BP_{(X_j)_1} \left(S_j = 1 \mid \Ebase_{v,\tilde v} \cap \Equery\right) \nonumber\\
=& \BP_{(X_j)_1} \left( \query_a(X_j) \in A(\base_a(X_j)) \cup B(\base_a(X_j))\mid \Equery\right) \nonumber\\
=& \BP_{(X_j)_1} \left( \query_a(X_j) \in A(\base_a(X_j)) \cup B(\base_a(X_j)) \mid \tilde A \subset A(\base_a(X_j)) \cup B(\base_a(X_j)),\ \tilde C \subset C(\base_a(X_j)) \right)\nonumber\\
=& \frac{2^{k-1} - |\tilde A|}{2^k - |\tilde A| - |\tilde C|}\label{eq:xj-same},
\end{align}
where the first equality uses that, for $\psi', (v,\tilde v)$ which are consistent with $\ME$, we have $\Ebase_{v,\tilde v} \cap \Equery \cap \{\psi = \psi'\} \subseteq \ME$; the second equality uses that $S_j$ and $\Ebase_{v,\tilde v} \cap \Equery$ depend only on $\phi$, together with the fact that $\phi, \psi$ are independent; and the third equality uses the fact that $\phi$ is $q$-wise independent: $\Ebase_{v,\tilde v}$ conditions on the values of $\phi(d)$ for at most $q-1$ values of $d$, and $S_j = 1$ if and only if $\phi(\base_a(X_j); \query_a(X_j)) = 1$, i.e., if $\query_a(X_j) \in A(\base_a(X_j)) \cup B(\base_a(X_j))$. The fourth equality simply rewrites the definition of $\Equery$, and the final equality follows from the following reasoning: on the event $\tilde A \subset A(\base_a(X_j)) \cup B(\base_a(X_j)),\ \tilde C \subset C(\base_a(X_j))$, the set $A(\base_a(X_j)) \cup B(\base_a(X_j)) \backslash \tilde A$ is a uniformly random set of size $2^{k-1} - |\tilde A|$ in $\{0,1\}^k \backslash (\tilde A \cup \tilde C)$, which has size $2^k - |\tilde A| - |\tilde C|$. 

We next perform a similar computation for $\BP_{1-(X_j)_1}$. To do so, first note that, under $\BP_{1-(X_j)_1}$, conditioned on $\Ebase_{v,\tilde v}$, the values of $v,\tilde v$ uniquely determine the value of $R_j$: indeed, conditioned on $\Ebase_{b,\tilde v}$, $R_j = 1$ if and only if some $b < a$ satisfies $v\^{\base_b(X_j)}_{\query_b(X_j)} \neq \tilde v\^{\base_b(X_j)}_{\query_b(X_j)}$. 
For any $(v,\tilde v) \in (\Vvalid_{k,\ep})^{\MD}$ and $\psi' \in \Psi_{n,k,q}$ which are consistent with $\ME$ under $\BP_{1-(X_j)_1}(\cdot)$, which further satisfy that $R_j = 1$ under $\Ebase_{v,\tilde v}$, we have
\begin{align}
& \BP_{1-(X_j)_1} \left(S_j = 1 \mid\ME \cap \Ebase_{v,\tilde v} \cap \Equery \cap \{\psi = \psi'\} \right)\nonumber\\
 =& \BP_{1-(X_j)_1} \left( \query_a(X_j) \in A(\base_a(X_j)) \cup B(\base_a(X_j)) \mid \tilde A \subset A(\base_a(X_j)) \cup B(\base_a(X_j)),\ \tilde C \subset C(\base_a(X_j)) \right)\nonumber\\
=& \BP_{(X_j)_1} \left( S_j = 1\mid \Ebase_{v,\tilde v} \cap \ME\right)\label{eq:xj-diff1},
\end{align}
using the same reasoning as in \cref{eq:xj-same}. 
In instead $(v,\tilde v)$ is so that $R_j = 0$ under $\Ebase_{v,\tilde v}$,  we have
\begin{align}
& \BP_{1-(X_j)_1} \left(S_j = 1 \mid\ME \cap \Ebase_{v,\tilde v} \cap \Equery \cap \{\psi = \psi'\} \right)\nonumber\\
=& \BP_{1-(X_j)_1} \left( S_j = 1 \mid \Ebase_{v,\tilde v} \cap \Equery \right)\nonumber\\
=& \BP_{1-(X_j)_1} \left( S_j = 1 \mid \Ebase_{v,\tilde v} \cap \{\tilde A \subset A(\base_a(X_j))\} \cap \{\tilde C \subset C(\base_a(X_j))\cup B(\base_a(X_j))\} \right)\nonumber\\
 =& \BP_{1-(X_j)_1} \left( \query_a(X_j) \in A(\base_a(X_j)) \mid \tilde A \subset A(\base_a(X_j)) ,\ \tilde C \subset C(\base_a(X_j))\cup B(\base_a(X_j)) \right)\nonumber\\
=& \frac{2^{k-1} - \ep 2^k - |\tilde A|}{2^k - |\tilde A| - |\tilde C|}\label{eq:xj-diff2},
\end{align}
where the first equality follows the steps as in \cref{eq:xj-same}; the second equality uses that under $\Ebase_{v,\tilde v}$) (so that $R_j = 0$), the event $\Equery$ is equivalent to the event that $\tilde A \subset A(\base_a(X_j))$ and $\tilde C \subset C(\base_a(X_j))\cup B(\base_a(X_j))$; and the third equality uses the $q$-independence of $\phi$, together with the fact that $\Ebase_{v,\tilde v}$ depends on the values of $\phi(d)$ for at most $q-1$ values of $d$, all not equal to $\base_a(X_j)$. The final equality uses the following reasoning: on the event $\tilde A \subset A(\base_a(X_j))$, $\tilde C \subset C(\base_a(X_j)) \cup B(\base_a(X_j))$, the set $A(\base_a(X_j)) \backslash \tilde A$ is a uniformly random set of size $2^{k-1} - \ep 2^k - |\tilde A|$ in $\{0,1\}^k \backslash (\tilde A \cup \tilde C)$, which has size $2^k - |\tilde A| - |\tilde C|$.

Note that $|\tilde A| \cup |\tilde C| \leq |\MJ_3| \leq j-1 \leq q-1$. Since we have assumed that $q \leq  2^k/100$, the right-hand sides of \cref{eq:xj-same,eq:xj-diff1,eq:xj-diff2} all lie in $[1/3, 2/3]$ and differ by at most $10\ep$.  
Since \cref{eq:xj-same,eq:xj-diff1,eq:xj-diff2} hold for every $v,\tilde v$, it follows that
\begin{align}
\BP_0(S_j = 1 \mid \ME), \BP_1(S_j = 1 \mid \ME) \in [1/3,2/3], \qquad \left| \BP_0(S_j = 1 \mid \ME) - \BP_1(S_j = 1 \mid \ME) \right| \leq 10\ep\nonumber,
\end{align}
and so
\begin{align}
\kld{\BP_0(S_j = \cdot \mid \ME)}{\BP_1(S_j = \cdot \mid \ME)} \leq O(\ep^2).\nonumber
\end{align}
Next, \cref{lem:conditional-independence} gives that, conditioned on $\ME$, $I_j$ is independent of $S_j$ under the measures $\BP_0, \BP_1$; and further, the distribution of $I_j$ conditioned on $\ME$ is identical under $\BP_0, \BP_1$. Thus, combining the above display with \cref{eq:hatmu-formula} gives that
\begin{align}
\kld{\BP_0(\hat \mu(X_j) = \cdot \mid \ME)}{\BP_1(\hat \mu(X_j) = \cdot \mid \ME)} \leq O(\ep^2),\nonumber
\end{align}
as desired.
\end{proof}

Next, \cref{lem:technical-random-lb} uses \cref{lem:close-consistent-query} to show that no algorithm which can make $q$ queries to $\hat \mu_\tau$ (for a random choice of $\tau$) can predict the value of $\vinit$, as long as $q$ is not too large.
\begin{lemma}
\label{lem:technical-random-lb}
Suppose $q \leq 2^k/100$, and a tuple $\tau = (\vinit, \phi, \psi) \in \MH_{n,k,q,\ep}$ is drawn uniformly at random, and let $\hat\mu = \hat \mu_{\tau}$ denote the corresponding oracle. Then any randomized algorithm which makes $q$ adaptive queries to $\hat\mu(\cdot)$ and outputs a bit $\hat B \in \{0,1\}$ satisfies
\begin{align}
\BP(\hat B = \vinit) \leq \frac 12 + O(\ep\sqrt{q}) + q\cdot 2^{-k}\nonumber.
\end{align}
\end{lemma}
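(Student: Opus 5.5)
We will follow the outline in \cref{sec:tech-lowerbound}: first replace $\Alg$ by a modified algorithm $\tAlg$ whose query sequence is always ideal (\cref{def:consistent}), then bound the advantage of $\tAlg$ via \cref{lem:close-consistent-query} and the chain rule, and finally bound the probability that $\Alg$ and $\tAlg$ behave differently. Without loss of generality $\Alg$ is deterministic given its random seed; fix the seed and average at the end. Define $\tAlg$ to simulate $\Alg$ while maintaining the set of queried strings. When $\Alg$ issues a query $X$, $\tAlg$ acts as follows. If $X$ was queried before, it reuses the stored answer. If $|X| \bmod 2k \in \{2,\ldots,k+1\}$, it answers without a query: the answer equals $\hat \mu(X_{1:2ak+1}) \hat Z_{2ak+1}/\hat Z_{|X|}$, which is computable once the base prefix has been queried (by \cref{lem:zhat-invariant}, $\hat Z_i$ is known). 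If some prefix $\base_a(X)$ with $2ak+1<|X|$ has not yet been queried, it first queries the missing base prefixes in increasing order of $a$, and stops (declares the answer $0$) as soon as one of them returns $0$. Otherwise it queries $X$. This makes every actual query sequence of $\tAlg$ ideal. It uses at most $O(q\cdot r)$ queries, which is too many, so I would instead use the cruder rule: if any base prefix of $X$ is unqueried, answer $0$ without querying. With this rule $\tAlg$ issues at most $q$ queries, all forming an ideal sequence.

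\textbf{Step 1 (advantage of $\tAlg$).} Let $\Pi_b$ denote the law of the transcript of $\tAlg$ under $\BP_b$. By the chain rule for KL divergence, $\kld{\Pi_0}{\Pi_1}$ is bounded by the sum over $j\le q$ of the expected conditional KL of the $j$th answer given the history. The history event is exactly of the form $\ME$ in \cref{lem:close-consistent-query}, so each term is $O(\ep^2)$, because $q\le 2^k/100$ holds as the lemma requires. Hence $\kld{\Pi_0}{\Pi_1}\le O(q\ep^2)$. Pinsker then gives $\tvd{\Pi_0}{\Pi_1}\le O(\ep\sqrt q)$. Since $\vinit$ is uniform, any output $\hat B$ of $\tAlg$ satisfies $\BP(\hat B=\vinit)\le \frac12+\frac12\tvd{\Pi_0}{\Pi_1}$. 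Conditioning on the case $\BP_b(\ME)=0$ is harmless, since such histories have zero probability under that $b$ and contribute nothing to the expected KL term.

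\textbf{Step 2 (coupling $\Alg$ and $\tAlg$).} Run both on the same $\tau$ and seed. They diverge only at the first query $X$ of $\Alg$ that has some unqueried base prefix $\base_a(X)$ but $\hat \mu(X)>0$. Take the smallest such $a$, so that the base $\base_a(X)$ itself has positive weight. Then positivity requires the key block $\key_a(X)$ to equal $\psi(\base_a(X);\query_a(X))$. No query so far has had a prefix of the form $\base_a(X)\circ\query_a(X)\circ(\text{key prefix})$ with a nonzero answer. For the key-block queries this holds because they require the base to have been queried under the ideal rule, and the prefix $\base_a(X)$ was not queried.

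Conditioned on the history up to this point, \cref{lem:conditional-independence} shows that $\psi$ is independent of $(\vinit,\phi)$. Together with $q$-wise independence of $\psi$, this makes $\psi(\base_a(X);\query_a(X))$ uniform on $\{0,1\}^k$ minus at most the values excluded by previous zero answers. Those excluded values form only a small set, since the history involves at most $q\le 2^k/100$ evaluations. So the collision probability is at most $O(2^{-k})$, and a union bound over $q$ queries gives divergence probability $O(q2^{-k})$. Careful bookkeeping gives the stated constant $q\cdot 2^{-k}$. Combining the two steps yields the claim.

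\textbf{Main obstacle.} The delicate point is Step 2: justifying rigorously that, conditioned on the full history (including zero responses, which reveal non-matches), the key value at an unexplored base is still nearly uniform. This needs \cref{lem:conditional-independence} and $q$-wise independence of $\psi$ applied to the events $\ME_\chi$.
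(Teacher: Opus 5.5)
Your Step 1 is essentially the paper's argument, but Step 2 has a genuine gap.

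\textbf{The gap in Step 2.} You claim that, for the smallest unqueried base $\base_a(X)$, positivity of $\hat\mu(X)$ forces the full match $\key_a(X)=\psi(\base_a(X);\query_a(X))$. This is true only when $|X|\geq 2(a+1)k+1$. If $\base_a(X)$ is the base of the very block in which $X$ lies, positivity requires agreement on only $t=|X|-(2ak+k+1)<k$ key bits. If $X$ lies in the query half, it requires no agreement at all.

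Concretely, let $\Alg$'s first query be $X=0^{k+2}$. Your crude rule answers $0$, because the base $X_1$ was never queried. But $\hat\mu(X)>0$ exactly when $\psi(0;0^k)_1=0$, which has probability $1/2$. If $\Alg$ then outputs $\hat B=\One{\hat\mu(X)>0}$, we get $\BP(\hat B\neq\tilde B)=1/2$. So no $O(q2^{-k})$ bound on the divergence probability can hold for your $\tAlg$. Under the crude rule even the query $0^2$ breaks the coupling, since $\hat\mu(0^2)>0$ deterministically.

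\textbf{The missing preprocessing.} The paper modifies $\Alg$ before defining $\tAlg$:
\begin{itemize}
\item Each query-block query is replaced by a query to its block's base, whose answer determines it.
\item Every key-block query $X$ with $|X|=2ak+k+1+t$ is preceded by a query to its own base $X_{1:2ak+1}$.
\end{itemize}
This costs only a factor $2$ in $q$. You never need all of the up to $r$ missing bases (the source of your $O(qr)$ concern), only the base of the current block.

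With this in place, consider the first divergence step and take $b$ minimal such that $\tAlg$ has not actually queried $\base_b(X)$. There are two cases.
\begin{itemize}
\item If $\Alg$ did query $\base_b(X)$ earlier, then $\tAlg$ answered it with $0$. Since the executions have not yet diverged, its true value is $0$, which forces $\hat\mu(X)=0$.
\item Otherwise, the preprocessing forces $b$ to be strictly below the block containing $X$, so a full $k$-bit key match is required.
\end{itemize}
In the second case, your uniformity argument via \cref{lem:conditional-independence} and $q$-wise independence gives probability exactly $2^{-k}$. No values are excluded, since any earlier query inside that key region would itself have required $\base_b(X)$ to be queried.

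\textbf{A smaller point in Step 1.} The chain rule needs every history of positive $\BP_0$-probability to also have positive $\BP_1$-probability. Histories with $\BP_1(\ME)=0<\BP_0(\ME)$ make the KL infinite rather than contributing nothing. This does hold for the construction: any pattern of weights $1$ versus $1+\gamma$ on at most $q\ll 2^k$ query values per base is feasible under either value of $\vinit$. But it should be argued.
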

\begin{proof}
Fix an algorithm $\Alg$, and denote the (adaptive) queries the algorithm makes by $X_1, \ldots, X_q \in \{0,1\}^{\leq n}$. %
We begin by setting up some notation. 

\paragraph{Setup.}  For any $x \in \{0,1\}^{2ak + i}$ for some $0 \leq a < r$ and $1 \leq i \leq 2k$, if $i \in \{2,3,\ldots,k+1\}$, then every edge weight on the path from $x_{1:2ak+1}$ to $x$ is equal to $1$, so $\hat\mu(x)$ is determined by $\hat\mu(x_{1:2ak+1})$ together with the deterministic normalizing constant $\hat Z_{2ak+i}$ from \cref{eq:def-zhati,lem:zhat-invariant}. Hence, it is without loss of generality to assume that for each query $j \in [q]$, $|X_j| \pmod{2k} \in \{1, k+2, k+3, \ldots, 2k\}$. 

For each $j \in [q]$, write $\SF_{j}$ for the sigma-algebra generated by the first $j$ query-answer pairs $(X_1, \hat\mu(X_1)), \ldots, (X_{j}, \hat\mu(X_{j}))$, as well as the internal randomness of $\Alg$; then for each $j \in [q]$, $X_j$ is $\SF_{j-1}$-measurable. Moreover, at a cost of a factor of at most $2$ in $q$, we may assume the following: for each query $X_j$ lying strictly  inside a ``key block'', i.e., with $|X_j| = 2ak + k + 1 + t$ for some $0 \leq a < r$ and $1 \leq t \leq k-1$, there is some $j' < j$ for which $X_{j'} = (X_j)_{1:2ak+1}$. In words, prior to making such a query, the algorithm has already queried the most recent ancestor of $X_j$ whose length is $1 \pmod{2k}$.

Recall that $\BP$ denotes the joint distribution over the random draw of $(\vinit,\phi,\psi) \sim \MH_{n,k,q,\ep}$ and the internal randomness of $\Alg$, and $\E$ denotes the corresponding expectation. For each $b \in \{0,1\}$, recall that $\BP_b(\cdot) := \BP(\cdot \mid \vinit = b)$. We will show that the probability laws of $\BP_0(X_1, \ldots, X_q, \hat \mu(X_1), \ldots, \hat \mu(X_q) = \cdot)$ and $\BP_1(X_1, \ldots, X_q, \hat \mu(X_1), \ldots, \hat \mu(X_q) = \cdot)$ are close, from which the desired result will follow using Bayes rule. %

\paragraph{Alternative algorithm: $\tAlg$.} First, we consider the following further modification of $\Alg$, which we denote by $\tAlg$. $\tAlg$ internally simulates $\Alg$, in the following manner: it maintains a sequence $Y_1, Y_2, \ldots, Y_t$ (intialized to be the empty sequence, so that $t = 0$), as follows. For each $j \in [q]$, when $\Alg$ is about to query $\hat \mu(X_j)$, $\tAlg$ checks if the sequence $Y_1, \ldots, Y_t, X_j$ is ideal (in the sense of \cref{def:consistent}); we let this event be denoted by $\ME_j$. If $\ME_j$ holds, then $\tAlg$ sets $Y_{t+1} = X_j$, increments $t$ by $1$, and continues to execute $\Alg$. Otherwise, $\tAlg$ does \emph{not} query $\hat \mu(X_j)$ and simply proceeds as if the result of the query had returned $\hat \mu(X_j) = 0$. We let $\tilde B$ denote the output bit returned by $\tAlg$. Moreover, for each $j \in [q]$, we let $\tilde \mu_j$ denote the value which $\tAlg$ uses for $\hat \mu(X_j)$ (i.e., $\tilde \mu_j = \hat \mu(X_j)$ if $Y_1, \ldots, Y_t, X_j$ is ideal, and otherwise $\tilde \mu_j = 0$). We may consider $\Alg,\tAlg$ as being defined on the same probability space, using the same internal randomness and the same sampled tuple $(\vinit,\phi,\psi) \sim \MH_{n,k,q,\ep}$. We denote the corresponding probability measure by $\BP$ (and use $\BP_0, \BP_1$ to denote conditioning on $\vinit = 0,1$, respectively). 
\begin{claim}
  \label{clm:hatb-hatb0}
It holds that $\BP(\hat B = \tilde B) \geq 1 - q 2^{-k}$.
\end{claim}

The proof of \cref{clm:hatb-hatb0} is deferred to below. Given \cref{clm:hatb-hatb0}, it suffices to upper bound $\BP(\tilde B = \vinit)$. 

\paragraph{Analyzing $\tAlg$.} Let $t_j$ be the value of $t$ maintained by $\tAlg$ at the beginning of step $j$, so that prior to step $j$, $\tAlg$ has queried $\hat \mu(\cdot)$ only at $Y_1, \ldots, Y_{t_j}$. We denote the ``simulated queries'' of $\Alg$ made by $\tAlg$ by $\tilde X_1, \ldots, \tilde X_q$. Further, we let $\tilde \SF_j$ denote the sigma-algebra generated by the internal randomness of $\tAlg$ together with $(\tilde X_\ell, \tilde \mu_\ell)_{\ell \in [j]}$. If $\ME_j$ holds, then the sequence $Y_1, \ldots, Y_{t_j}, Y_{t_j+1} = \tilde X_j$ is ideal, so we may apply \cref{lem:close-consistent-query} to this sequence to obtain that there is an absolute constant $C$ such that
\begin{align}
D_{\mathrm{KL}}\!\left(
\mathcal \BP_1\left(\tilde \mu_j = \cdot \mid \tilde \SF_{j-1}\right)\,
\middle\|\,
\mathcal \BP_0\left(\tilde \mu_j = \cdot \mid \tilde \SF_{j-1}\right)
\right)
\leq C\ep^2.\label{eq:case2-kl}
\end{align}
If $\ME_j$ does not hold, then $\tilde \mu_j = 0$ and so
\begin{align}
D_{\mathrm{KL}}\!\left(
\mathcal \BP_1\left(\tilde \mu_j = \cdot \mid \tilde \SF_{j-1}\right)\,
\middle\|\,
\mathcal \BP_0\left(\tilde \mu_j = \cdot \mid \tilde \SF_{j-1}\right)
\right) = 0.\label{eq:case2b-kl}
\end{align}

The chain rule for KL divergence yields
\begin{align}
& \kld{\BP_1(\tilde X_{1:q},\tilde \mu_{1:q} = \cdot)}{\BP_0(\tilde X_{1:q}, \tilde \mu_{1:q} = \cdot)}\nonumber\\
=& \sum_{j=1}^q D_{\mathrm{KL}}\!\left(
\mathcal \BP_1\left(\tilde\mu_j = \cdot \mid \tilde \SF_{j-1}\right)\,
\middle\|\,
\mathcal \BP_0\left(\tilde\mu_j = \cdot \mid \tilde \SF_{j-1}\right)
\right) \leq C q \ep^2\nonumber,
\end{align}
where the final inequality uses \cref{eq:case2-kl,eq:case2b-kl}.

Pinsker's inequality yields
\begin{align}
\tvd{\BP_1(\tilde X_{1:q},\tilde \mu_{1:q} = \cdot)}{\BP_0(\tilde X_{1:q}, \tilde \mu_{1:q} = \cdot)}\leq O(\ep\sqrt q).\nonumber
\end{align}
Consequently, we have
\begin{align}
  & \E\left[ \left|\BP(\vinit = 1 \mid \tilde X_{1:q}, \tilde \mu_{1:q})- \BP(\vinit = 0 \mid \tilde X_{1:q}, \tilde \mu_{1:q})\right| \right]\nonumber\\
  =& \tvd{\BP_1(\tilde X_{1:q},\tilde \mu_{1:q} = \cdot)}{\BP_0(\tilde X_{1:q}, \tilde \mu_{1:q} = \cdot)}\leq O(\ep\sqrt q)\nonumber,
\end{align}
and since $\vinit \in \{0,1\}$, it follows that
\begin{align}
\E \left[ \left|\BP(\vinit = 1 \mid \tilde X_{1:q}, \tilde \mu_{1:q}) - \frac 12\right| \right] \leq O(\ep\sqrt q),\nonumber
\end{align}
and thus, since $\tilde B$ is a function of $\tilde X_{1:q}, \tilde \mu_{1:q}$, and the internal randomness of $\Alg$, 
\begin{align}
\left|\BP(\vinit = \tilde B) - \frac 12 \right| \leq \E \left[ \left|\BP(\vinit = \tilde B \mid \tilde X_{1:q}, \tilde \mu_{1:q}) - \frac 12\right| \right] \leq O(\ep\sqrt q).\nonumber
\end{align}
Combining the above with \cref{clm:hatb-hatb0} yields the desired result. 
\end{proof}

Finally, we prove \cref{clm:hatb-hatb0}.
\begin{proof}[Proof of \cref{clm:hatb-hatb0}]
Recall that we couple $\Alg$ and $\tAlg$ using the same internal randomness and the same realization of $(\vinit,\phi, \psi)$. The two algorithms can differ only if, at some query step $j$, the event $\ME_j^c$ holds and yet the true oracle value satisfies $\hat\mu(X_j)\neq 0$; indeed, in all other cases $\tAlg$ feeds the same answer to the algorithm as $\Alg$ does.

For each $j \in [q]$, let $a_j \geq 0$ be chosen as small as possible so that $2k \cdot a_j +1 < |X_j|$ and $\base_{a_j}(X_j)$ has not been previously queried. (If no such $a_j$ exists, then we set $a_j = \perp$.) Let $J \in [q] \cup \{ \infty\}$ be a random variable denoting the first iteration for which $a_J \neq \perp$ and $\hat\mu(X_J) \neq 0$. 

Fix $j \in [q]$; in order for $J = j$ to hold, it is necessary that (a) $a_j \neq \perp$ and (b) $\key_{a_j}(X_j) = \psi(\base_{a_j}(X_j); \query_{a_j}(X_j))$. Note that $a_j$ is $\SF_{j-1}$-measurable. We now make the following claim: %
\begin{claim}
\label{clm:psi-uniform}
For each $j \in [q]$, under the execution of $\tAlg$, conditioned on $a_j \neq \perp$ and $\SF_{j-1}$, the distribution of $\psi(\base_{a_j}(X_j))$ is uniformly random on $\{0,1\}^k$.
\end{claim}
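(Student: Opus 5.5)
The plan is to show that, under the execution of $\tAlg$, the information available at step $j$ never touches the value $\psi(\base_{a_j}(X_j))$. Then $q$-wise independence of $\psi$ together with the conditional independence from \cref{lem:conditional-independence} gives uniformity.

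Fix $j$ and condition on $\SF_{j-1}$ with $a_j \neq \perp$, and write $d := \base_{a_j}(X_j)$. By the definition of $\tAlg$, the only points where $\hat\mu$ has actually been queried are $Y_1, \ldots, Y_{t_j}$, and these form an ideal sequence (\cref{def:consistent}). The answers fed to $\Alg$ at the other steps are the constant $0$, so they carry no information about $\tau$.

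I will first argue that $d$ is never equal to $\base_{a_j}(Y_\ell)$ for any $\ell \leq t_j$ with $|Y_\ell| > |d|$. Suppose some $Y_\ell$ did have $d$ as its length-$(2a_jk+1)$ prefix. Idealness of $Y_1,\ldots,Y_\ell$ would then force $d$ itself to have been queried before $Y_\ell$. That contradicts the choice of $a_j$, since $d$ has not been previously queried. The key step is the observation that $\hat\mu(Y_\ell)$ depends on $\psi$ only through the values $\psi(\base_b(Y_\ell); \query_b(Y_\ell))$ for blocks $b$ with $2bk+1 < |Y_\ell|$. This is immediate from the definitions of $\hat\omega$: the key indicators in block $b$ reference only $\psi(\base_b(\cdot))$. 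It follows that the conditioning on $\tilde\SF_{j-1}$, which equals $\SF_{j-1}$ on the relevant event, involves $\psi$ only at the set $D_j := \{\base_b(Y_\ell)\}$. Every element of $D_j$ differs from $d$, and $D_j$ has at most $q-1$ elements.

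Next I will invoke \cref{lem:conditional-independence} and the factorization \cref{eq:muhat-factor-ci}. Conditioned on the query-answer history, $\psi$ is independent of $(\vinit,\phi)$, and its conditional law is the uniform law on $\Psi_{n,k,q}$ restricted to the event $\ME_\chi$. By the previous paragraph, $\ME_\chi$ is measurable with respect to $(\psi(d'))_{d' \in D_j}$. The internal randomness of $\Alg$ is independent of $\tau$, so it can be absorbed. Since $|D_j \cup \{d\}| \leq q$, $q$-wise independence gives that $\psi(d)$ is uniform on $\MU_k$ and independent of $(\psi(d'))_{d'\in D_j}$. Conditioning on an event measurable in the latter therefore leaves $\psi(d)$ uniform. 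In particular $\psi(d;\query_{a_j}(X_j))$ is uniform on $\{0,1\}^k$. The choice of $X_j$ is $\SF_{j-1}$-measurable and so does not affect this.

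The main obstacle is the bookkeeping showing that $\psi$ is evaluated only at points of $D_j$. One subtlety is that a queried $Y_\ell$ of length exactly $2a_jk+1$ could equal $d$; this is excluded because $d$ was not previously queried. A second subtlety is that answers equal to $0$ might seem to reveal information about deeper blocks. They do not: by idealness, a zero answer can only arise from the single newest block of $Y_\ell$, whose base is itself a previously queried point, and that base lies in $D_j$ as well, not at $d$.
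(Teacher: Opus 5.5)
Your proposal is correct and follows essentially the same route as the paper. In both, the $\tAlg$ transcript depends on $\psi$ only through its values at previously queried bases (at most $j-1 \le q-1$ points, none equal to $\base_{a_j}(X_j)$), and $q$-wise independence of $\psi$ then gives uniformity. Your version is somewhat more careful than the paper's: you use the factorization \cref{eq:muhat-factor-ci} and \cref{lem:conditional-independence} to decouple $\psi$ from $(\vinit,\phi)$ before invoking $q$-wise independence, a step the paper leaves implicit.
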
 
\begin{proof}[Proof of \cref{clm:psi-uniform}]
First note that at each iteration $\ell$ (in particular, those $\ell < j$), $\hat \mu(X_\ell)$ only depends on $\psi$ through (a) the history $\SF_{\ell-1}$ and (b) potentially $\psi(\base_a(X_\ell))$ for a single value of $a$ (namely, $a = \lfloor (|X_\ell|-2)/k\rfloor$). Indeed, if the values $\base_b(X_\ell)$ for all $b = 0, 1, \ldots, a-1$ had not been queried prior to step $\ell$, then $\tAlg$ would not have actually queried $\hat \mu(X_\ell)$ and would have used the value $0$ as the response. Thus, the history of interaction $\SF_{j-1}$ only depends on $\psi$ through at most $j-1$ values of $\psi(x)$, $x \in \{0,1\}^{\leq n}$. 

Moreover, we claim that $\base_a(X_j)$ is not one of these values: by our assumption on $\Alg$ (and thus $\tAlg$), there is no $\ell < j$ for which $\base_a(X_\ell) = \base_a(X_j)$ and $|X_\ell| = 2ak + t$ for $t \in \{k+2, \ldots, 2k \}$ (as otherwise $\Alg$ would have queried $\hat \mu(\base_a(X_\ell))$ earlier). Further, if $\base_a(X_\ell) = \base_a(X_j)$ and $|X_\ell| \geq 2(a+1)k + 1$, then $\tAlg$ would not have queried $X_\ell$ (and would have instead proceeded by using $0$ as the returned value of the oracle call). 

Summarizing, since $\psi$ is drawn from a $q$-wise independent family, conditioned on $\SF_{j-1}$, the value of $\psi(\base_a(X_j))$ is uniformly random on its domain $(\{0,1\}^k)^{\{0,1\}^k}$. 
\end{proof}
If $a_j \neq \perp$ (so that $\base_{a_j}(X_j)$ has not been previously queried), then we must in fact that $|X_j| = 2(a_j+1)k + 1$ (by our assumption on $\Alg$). Hence, for any realization of the history for which $a_j \neq \perp$, we have
\begin{align}
\BP(J = j \mid \SF_{j-1}) = \BP(\key_{a_j}(X_j) = \psi(\base_{a_j}(X_j); \query_{a_j}(X_j)) \mid \SF_{j-1}) = 2^{-k}\nonumber,
\end{align}
where the final equality uses the fact that $\key_{a_j}(X_j)$, $\base_{a_j}(X_j)$, and $\query_{a_j}(X_j)$ are all $\SF_{j-1}$-measurable, and $\psi(\base_{a_j}(X_j); \query_{a_j}(X_j))$ is uniform on $\{0,1\}^k$, conditioned on $\SF_{j-1}$, by \cref{clm:psi-uniform}. A union bound then yields $\BP\left( J \leq q \right) \leq q/2^k$; in other words, with probability at least $1-q/2^k$, the execution traces of $\Alg, \tAlg$ are identical, and thus the algorithms return the same bit as output.
\end{proof}

We are now ready to prove \cref{thm:weak-lb}; it follows as a straightforward consequence of the previous lemmas, notably \cref{lem:technical-random-lb}.
\begin{proof}[Proof of \cref{thm:weak-lb}]
  Set $q = \frac{c \gamma^2 n^2}{ \log^2 (n) \cdot \log^2(2/\gamma)}$, for a sufficiently small constant $c > 0$ to be chosen later. 
Take  %
\begin{align}
k = \lceil 2 \log_2(n)\rceil, \qquad r =\left\lfloor \frac{n-1}{2k}\right\rfloor, \qquad \ep := \frac{4 \log(48/\gamma) \cdot k}{n}, \qquad n' = 2rk+1.\label{eq:weak-params}
\end{align}

Suppose to the contrary that such $\Alg$ exists, and consider the execution of $\Alg$ with respect to the distribution $\mu^\star = \mu_{\tau}$ and mapping $\hat \mu_{\tau}$, where $\tau = (\vinit, \phi, \psi) \sim \MH_{n',k,q}$ is drawn uniformly at random. First, \cref{lem:mu-muhat-ratio} ensures that we indeed have that $\frac{1}{(1+\gamma)^2} \leq \frac{\mu_{\tau}(x)}{\hat \mu_{\tau}(x)} \leq (1+\gamma)^2$ with probability $1$ for all $x \in \{0,1\}^{\leq n}$; further, by definition we have $\mu_\tau(x) = \hat \mu(x)$ for $x \in \{0,1\}^n$. Thus \cref{asm:linfty-close-gen} is satisfied. For any instantiation of $\tau$, let the distribution of the output $X$ of $\Alg$ be denoted $\hat \nu_{\tau}$. The assumption that $\tvd{\mu_\tau}{\hat \nu_\tau} \leq \frac{\gamma}{8}$ and $\gamma < 1$, as well as the fact that  $e^{-\ep r} \leq \gamma/48$ (from \cref{eq:weak-params}) yields
\begin{align}
& \BP_{X \sim \hat \nu_\tau}(X_1 = \vinit_1) \geq \BP_{X \sim \mu_\tau}(X_1 = \vinit_1) - \frac{\gamma}{8} \geq \BP_{X \sim \muideal}(X_1 = \vinit_1) - \frac{\gamma}{8} - e^{-\ep r} \nonumber\\
& \geq \frac{1+\gamma}{2+\gamma} - \frac{\gamma}{8} - e^{-\ep r} \geq \frac{1}{2} + \frac{\gamma}{24} - e^{-\ep r} \geq \frac{1}{2} + \frac{\gamma}{48}\nonumber.
\end{align}
Above the second inequality uses \cref{lem:mu-muideal}, and the third inequality uses the definition of $\muideal$.
This contradicts \cref{lem:technical-random-lb} as long as $q \leq \frac{c \gamma^2}{\ep^2}$ (which also yields $q \cdot 2^{-k} < \gamma/100$), for a sufficiently small constant $c$. 

Note that we have shown a lower bound for an instance $\tau \in \MH_{n',k,q}$, for which the corresponding measures are defined on $\{0,1\}^{n'}$; but we can extend this to $\{0,1\}^n$ by simply padding the last $n-n'$ bits deterministically.

Finally, we remark that we can take the class $\MF$ to be equal to $\MH_{n',k,q}$. By \cref{def:hash-functions} (which relies on \cref{lem:kwise-indep-existence}), its cardinality satisfies 
\begin{align}
\log |\MF| = 2 + \log |\Phi_{n',k,q}| + \log |\Psi_{n',k,q}| \leq O(q k2^k) \leq O(n^4)\nonumber.
\end{align}
\end{proof}

\subsection{Additional lower bounds}
\label{sec:add-lbs}
Below we show a variant of \cref{thm:weak-lb} where we allow the cardinality of the class $\MF$ to be a free parameter: we derive a query lower bound of $\tilde \Omega(\sqrt{\log |\MF|})$. For simplicity, we take $\gamma$ to be an absolute constant. Moreover, define the function $f : \BN \times \BN \to \BR$ by 
\begin{align}
f(n, S) :=\max \left\{ \min\{ S, n \}, \min \left\{\frac{\sqrt{\log S}}{\log^2(2 + \log S)}, \frac{n^2}{\log^2 n} \right\} \right\}\nonumber.
\end{align}
It is easier to interpret $f(n,S)$ when we condition on the size of $S$, as follows --- namely, note that
\begin{align}
f(n,S) \asymp\begin{cases}
\min\{S, n\} &: S \leq \exp(n^2 \log^4 n) \\
\min \left\{\frac{\sqrt{\log S}}{\log^2(2 + \log S)}, \frac{n^2}{\log^2 n} \right\} &: \exp(n^2 \log^4 n) < S < \exp(n^4) .
\end{cases}\nonumber
\end{align}
\begin{theorem}[Size-sensitive lower bound]
\label{thm:size-sensitive-lb}

There is a constant $c > 0$ so that the following holds for all sufficiently large
$n \in \BN$ and all integers $S\geq 2$. There exists a class of distributions
$\MF \subseteq \Delta(\{0,1\}^n)$ with $|\MF| \leq S$ such that any randomized
algorithm which enjoys the following guarantee must make at least
$
c \cdot f(n,S)
$
queries to $\hat \mu$: for every $\mu^\star \in \MF$ and every mapping
$\hat \mu : \{0,1\}^{\leq n} \to \BR_{\geq 0}$ satisfying \cref{asm:linfty-close-gen} with respect to $\mu^\star$ for $R = 9/4$, 
the algorithm outputs a distribution $\hat \nu$ satisfying
$
\tvd{\mu^\star}{\hat \nu} \leq \frac{1}{16}.
$
\end{theorem}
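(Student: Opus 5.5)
Since $f(n,S)$ is a maximum of two terms, I will prove two lower bounds and keep the larger. The first is $\Omega(\min\{S,n\})$ and holds for any $R$. The second is $\Omega\bigl(\min\{\sqrt{\log S}/\log^2(2+\log S),\, n^2/\log^2 n\}\bigr)$ and comes from rescaling the construction of \cref{thm:weak-lb}. Throughout I fix $\gamma = 1/2$, so that $R=(1+\gamma)^2 = 9/4$ matches the theorem. The accuracy $1/16 \le \gamma/8$ is also consistent with the contradiction argument in the proof of \cref{thm:weak-lb}.

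\textbf{The term $\min\{S,n\}$.} Let $m := \min\{S, n\}$ and fix $\lfloor \log_2 S\rfloor$-or-so free bits as follows. Consider the class of point masses $\delta_y$, where $y \in \{0,1\}^n$ ranges over a set $Y$ of size $\min\{S,2^n\}$. I want the strings in $Y$ to diverge from one another along a long path. For this I take $Y$ to be the strings of the form $1^{i}0^{n-i}$, $0 \le i < m$; there are $m \le S$ of them. Set $\hat\mu = \mu^\star$, which satisfies \cref{asm:linfty-close-gen} with $R=1$ and hence with $R = 9/4$.

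Outputting a sample within TV $1/16$ of $\delta_y$ means outputting $y$ itself with probability at least $15/16$. I argue that this requires learning $i$. A query $\hat\mu(x)$ is nonzero only if $x$ is a prefix of $y$, so each query answers a single threshold question of the form "is $i \ge j$?". Such questions reveal little about $i$ individually, so the argument needs a more careful adversary.

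A cleaner construction uses $Y$ to be the set of strings with a single $1$ in position $i$, plus the all-zeros string. A query $x$ then reveals whether $x$ is a prefix of $y$. By an adversary argument, a query on a prefix that ends in a $1$ at position $j$ eliminates only the candidate $i=j$. A query on an all-zero prefix of length $\ell$ rules out the candidates $i \le \ell$. This still allows binary search, so I instead hide the location using a random permutation of the candidate positions combined with a single nonzero path. The simplest version that works: take $\mu_y = \delta_y$ for $y$ ranging over $m$ strings whose prefixes of length $m-1$ are pairwise distinct, and which all agree with $0^{n}$ except in one marked coordinate. If we fold the earlier query-vs-key idea in here, the algorithm can only eliminate one candidate per query. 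Yao's principle then gives $\Omega(m)$.

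If that fails, the fallback for this term is the trivial $\Omega(n)$ bound from the introduction (unknown singleton, one coordinate per query). That bound only needs $S \ge n$ many candidates that branch off in sequence, with the adversary answering consistently. This comes down to the standard fact that locating an unknown leaf on a tree requires depth many queries along the path.

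\textbf{The $\sqrt{\log S}$ term.} Write $s := \sqrt{\log S}/\log^2(2+\log S)$. If $s \ge n^2/\log^2 n$ I apply \cref{thm:weak-lb} directly (with $\log |\MF| = O(n^4)$, and $S$ large enough that this class fits within size $S$). Otherwise I choose a smaller effective length $n_0 \le n$ with $n_0^2/\log^2 n_0 \asymp s$. I run the construction of \cref{thm:weak-lb} on $\{0,1\}^{n_0}$ with $k = \lceil 2\log_2 n_0\rceil$ and $\ep \asymp k/n_0$, and pad the remaining $n-n_0$ bits deterministically.

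The class size is $\log|\MF| = O(q k 2^k) = O(q n_0^2 \log n_0)$ with $q \asymp n_0^2/\log^2 n_0$, which gives $O(n_0^4/\log n_0)$. I choose $n_0$ so that this is at most $\log S$, i.e.\ $n_0^2 \asymp \sqrt{\log S}$ up to logarithmic factors. The resulting query bound is $q \asymp n_0^2/\log^2 n_0 \gtrsim s$, and the $\log^2(2+\log S)$ in $f$ absorbs the logarithmic losses. The rest of the proof of \cref{thm:weak-lb} (\cref{lem:mu-muideal,lem:mu-muhat-ratio,lem:technical-random-lb}) goes through verbatim, because padding preserves \cref{asm:linfty-close-gen}.

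The main obstacle I expect is the bookkeeping in the regime boundaries. I need $n_0 \ge$ a large constant, so that \cref{thm:weak-lb} applies. When $\log S$ is tiny, the first term handles things instead, which is presumably why $f$ is a max.
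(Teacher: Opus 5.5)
Your treatment of the $\sqrt{\log S}$ term is fine and is essentially the paper's Case 1. You shrink the effective length to $n_0\asymp(\log S)^{1/4}$ (capped at $n$), invoke \cref{thm:weak-lb} with $\gamma=1/2$ (so that $R=9/4$ and the accuracy is $\gamma/8=1/16$), and pad.

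The genuine gap is the $\min\{S,n\}$ term. This term is what $f(n,S)$ reduces to for all $S\le\exp(n^2\log^4 n)$; for example, for $S=n$ you must prove an $\Omega(n)$ bound using only $n$ candidate distributions. None of your constructions can do this, because they are all point masses. If $\mu^\star=\delta_y$, then \cref{asm:linfty-close-gen} forces $\hat\mu(x)>0$ exactly when $x\preceq y$, for every $R$. So each query is a prefix-membership test, and the adversary has no freedom. For any set of $m$ candidate strings, walk down their binary trie toward the heavier child. You reach a node whose subtree contains between $m/3$ and $2m/3$ of the candidates, so querying it eliminates at least a third of them. Hence $O(\log m)\le O(\log S)$ queries identify $y$, and the algorithm then outputs $y$ with zero error. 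Marked coordinates, permuted positions, or key blocks do not change this, since the algorithm knows the class: no family of $m$ point masses survives more than $O(\log m)$ queries. Your fallback fails for the same reason. The claim that ``locating an unknown leaf needs depth-many queries'' holds only when all $2^n$ leaves are candidates, which is why the introduction's $\Omega(n)$ example uses a class of size $2^n$. With about $n$ candidates branching off a spine, binary search on the spine takes $O(\log n)$ queries.

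The missing idea is to spend the approximation slack $R>1$ so that the oracle is uninformative on the part of the tree shared by all candidates. Your assertion that this term ``holds for any $R$'' is therefore unsupported; the paper's construction genuinely uses $R=2$. The paper takes $v_i=1^i0^{S-i}$ and, for $i\in[S-1]$, sets $\mu_i=\tfrac12\delta_{v_i}+\tfrac12\delta_{v_S}$. It defines $\hat\mu_i(x)=\tfrac12$ whenever $x\preceq v_i$ or $x\preceq v_S$, and $\hat\mu_i(x)=0$ otherwise. On the common prefixes $1^t$ with $t\le i$ the true marginal is $1$, so the ratio is $2\le 9/4$. Yet every spine query returns $\tfrac12$ regardless of $i$. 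Any off-spine query is a prefix of at most one $v_i$, so each query tests a single candidate. Comparing against the reference oracle $\tfrac12\One{x\preceq v_S}$ shows that a $q$-query algorithm outputs $v_I$ (for uniform $I$) with probability at most $(q+1)/(S-1)$. But TV accuracy $1/16$ forces this probability to be at least $7/16$. Padding $\{0,1\}^S$ to $\{0,1\}^n$ then gives the $\Omega(\min\{S,n\})$ bound.
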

\begin{proof}
It is without loss of generality to assume that $S \leq \exp(n^4)$ (as the function $f(n,S)$ is constant as a function of $S$ for $S > \exp(n^4)$). 

\paragraph{Case 1: large $S$.} We first show a lower bound of $\Omega\left(\min \left\{\frac{\sqrt{\log S}}{\log^2(2 + \log S)}, \frac{n^2}{\log^2 n} \right\} \right)$ on the query complexity. Write $n' =  c'\log(S)^{1/4}$, for a sufficiently small constant $c' > 0$ to be specified below. Then \cref{thm:weak-lb} with $\gamma = 1/2$ gives that there is a class of distributions $\MF' \subset \Delta(\{0,1\}^{n'})$ of size at most $\exp(O((n')^4))$ so that no randomized algorithm $\Alg$ enjoys the following guarantee: for any $\mu^\star \in \MF'$ and $\hat \mu : \{0,1\}^{\leq n'} \to \BR_{\geq 0}$ satisfying \cref{asm:linfty-close-gen} with $R = 9/4$, $\Alg$ makes $\frac{c \cdot (n')^2}{\log^2(n')}$ queries to $\hat \mu$ and outputs a random string $X' \in \{0,1\}^{n'}$ according to some distribution $\hat \nu$ which satisfies $\tvd{\mu^\star}{\hat \nu} \leq 1/16$, for some constant $c > 0$. Note that we can view $\MF'$ as a subset of $\Delta(\{0,1\}^n)$ by padding the last $n-n'$ coordinates to be $0$ with probability 1 (for all $\mu \in \MF'$). Thus, noting that $(n')^2 = \Theta(\sqrt{\log S})$ and $\log^2(n') \leq \log^2(2 + \log S)$,  as well as $|\MF'| \leq S$ as long as the constant $c'$ is sufficiently small, we obtain that any randomized algorithm satisfying the requirement of \cref{thm:size-sensitive-lb} must make $\Omega\left(\min \left\{\frac{\sqrt{\log S}}{\log^2(2 + \log S)}, \frac{n^2}{\log^2 n} \right\} \right)$ queries to $\hat \mu$. 

\paragraph{Case 2: Small $S$.} We next show a lower bound of $\Omega (\min\{S, n\})$ on the query complexity. It suffices to assume $S \leq n$. For $1 \leq i \leq S$, let $v_i \in \{0,1\}^S$ be defined by $v_{i,j} = \One{j \leq i}$. We define a class $\MF_S \subset \Delta(\{0,1\}^S)$ consisting of the $S-1$ measures $\mu_1, \ldots, \mu_{S-1}$, where for each $i \in [S-1]$,
\begin{align}
\mu_i := \frac{1}{2} \cdot \delta_{v_i} + \frac{1}{2} \cdot \delta_{v_S}.\nonumber
\end{align}
Next, for each $i \in [S-1]$, define $\hat \mu_i : \{0,1\}^{\leq S} \to \BR_{\geq 0}$ by
\begin{align}
\hat \mu_i(x) = \begin{cases}
\frac{1}{2} &: x \preceq v_i \mbox{ or } x \preceq v_S \\
0 &: \mbox{otherwise}.
\end{cases}\nonumber
\end{align}
We claim that $\hat \mu_i$ satisfies \cref{asm:linfty-close-gen} with respect to $\mu_i$, with $R = 2 \leq 9/4$. Indeed, fix any $x \in \{0,1\}^{\leq S}$. If $x \npreceq v_i$ and $x \npreceq v_S$, then $\mu_i(x) = \hat \mu_i(x) = 0$. If $x$ is a prefix of exactly one of $v_i, v_S$, then $\mu_i(x) = \hat \mu_i(x) = 1/2$. Finally, if $x$ is a prefix of both $v_i$ and $v_S$, then necessarily $x = 1^t$ for some $t \leq i$, and thus $\mu_i(x) = 1$ while $\hat \mu_i(x) = 1/2$. Therefore, whenever $\hat \mu_i(x) > 0$, we have
\begin{align}
1 \leq \frac{\mu_i(x)}{\hat \mu_i(x)} \leq 2 \leq 9/4.\nonumber
\end{align}
Moreover, for every leaf $x \in \{0,1\}^S$, the definition gives $\hat \mu_i(x) = \mu_i(x)$, as required.

It remains to show that any randomized algorithm requires $\Omega(S)$ queries. Fix any randomized algorithm making at most $q$ queries, and draw $I \sim \mathrm{Unif}([S-1])$. We will show that if $q \leq cS$ for a sufficiently small constant $c > 0$, then the algorithm cannot output a distribution within total variation distance $1/16$ of $\mu_I$.

Fix a realization $\rho$ of the algorithm's internal randomness, and consider the resulting deterministic algorithm. Let $\hat \mu_\circ$ denote the reference oracle defined by
\begin{align}
\hat \mu_\circ(x) := \frac{1}{2} \cdot \One{x \preceq v_S}.\nonumber
\end{align}
Run the deterministic algorithm against $\hat \mu_\circ$, and let $x_1^{\rho}, \ldots, x_q^{\rho}$ denote the resulting query sequence, with output $Y_{\rho} \in \{0,1\}^S$. Define
\begin{align}
T_{\rho} := \{ i \in [S-1] \mid \exists j \in [q] \mbox{ such that } x_j^{\rho} \preceq v_i \mbox{ and } x_j^{\rho} \npreceq v_S\}.\nonumber
\end{align}
For any fixed query $x$, there is at most one index $i \in [S-1]$ such that $x \preceq v_i$ and $x \npreceq v_S$: indeed, once a prefix leaves the all-ones path leading to $v_S$, it can lie on at most one of the paths to $v_1, \ldots, v_{S-1}$. Consequently, $|T_{\rho}| \leq q$.

Now fix any $i \in [S-1] \setminus T_{\rho}$. By definition of $T_{\rho}$, every query $x_j^{\rho}$ receives the same answer from $\hat \mu_i$ as from $\hat \mu_\circ$: if $x_j^{\rho} \preceq v_S$, then both oracles return $1/2$, and otherwise $x_j^{\rho} \npreceq v_i$, so both return $0$. Thus, when run against $\hat \mu_i$, the deterministic algorithm follows the exact same execution trace as it does against $\hat \mu_\circ$, and hence outputs the same string $Y_{\rho}$. Therefore,
\begin{align}
\BP_{I \sim \mathrm{Unif}([S-1])}(Y_{\rho} = v_I)
\leq \BP(I \in T_{\rho}) + \BP(Y_{\rho} = v_I,\ I \notin T_{\rho})
\leq \frac{|T_{\rho}|}{S-1} + \frac{1}{S-1}
\leq \frac{q+1}{S-1}.\nonumber
\end{align}
Since this bound holds for every realization $\rho$ of the algorithm's internal randomness, averaging over $\rho$ yields
\begin{align}
\BP(X = v_I) \leq \frac{q+1}{S-1},\label{eq:small-s-success-bound}
\end{align}
where $X$ denotes the output of the original randomized algorithm.

On the other hand, if the algorithm were to satisfy the guarantee of \cref{thm:size-sensitive-lb}, then for every $i \in [S-1]$ its output distribution $\hat \nu_i$ under oracle $\hat \mu_i$ would satisfy $\tvd{\hat \nu_i}{\mu_i} \leq 1/16$. Since $\mu_i(v_i) = 1/2$, this implies
\begin{align}
\BP(X = v_i \mid I = i) = \hat \nu_i(v_i) \geq \mu_i(v_i) - \tvd{\hat \nu_i}{\mu_i} \geq \frac{1}{2} - \frac{1}{16} = \frac{7}{16}.\nonumber
\end{align}
Averaging over $I \sim \mathrm{Unif}([S-1])$, we obtain $\BP(X = v_I) \geq 7/16$, which contradicts \cref{eq:small-s-success-bound} as long as $q \leq cS$ for a sufficiently small universal constant $c > 0$.

Finally, we may view $\MF_S$ as a subset of $\Delta(\{0,1\}^n)$ by padding the last $n-S$ coordinates with zeros deterministically. Since $|\MF_S| = S-1 \leq S$, this establishes the desired $\Omega(S) = \Omega(\min\{S,n\})$ lower bound in the small-$S$ regime, and completes the proof.
\end{proof}

We next show a variant of the construction which yields a stronger lower bound, in the following sense: any algorithm which makes strictly subquadratically many queries to $\hat \mu$ cannot sample from any distribution whose \emph{KL divergence} from $\mu$ is strictly sublinear (in $n$).

Fix positive integers $n,k,q$ as above, as well as a positive integer $s$.  Recall the definition of the class $\MH_{n,k,q,\ep}$ in \cref{def:hash-functions}, as well as, for $\tau = (\vinit, \phi, \psi) \in \MH_{n,k,q,\ep}$, the distribution $\mu_\tau \in \Delta(\{0,1\}^n)$ and mapping $\hat \mu_\tau : \{0,1\}^{\leq n} \to \BR_{\geq 0}$ defined in \cref{eq:def-muhat-zhat}. 

Write $\bar{n} := ns$. We will often interpret elements of $\{0,1\}^{\bar{n}}$ by partitioning their coordinates into $s$ blocks of size $n$, as follows: for $x \in \{0,1\}^{\bar{n}}$, we write $x = (x\^1, \ldots, x\^s)$, where each $x\^i \in \{0,1\}^n$ is the $i$th block of $x$ consisting of the coordinates $(x_{(i-1)n+1}, \ldots, x_{in})$.
Next, for $\tau\^1, \ldots, \tau\^s \in \MH_{n,k,q,\ep}$, we define a distribution $\mu_{\tau\^{1:s}} \in \Delta(\{0,1\}^{\bar{n}})$, and mapping $\hat \mu_{\tau\^{1:s}} : \{0,1\}^{\leq \bar{n}} \to \BR_{\geq 0}$ as follows: for $x = (x\^1, \ldots, x\^s) \in \{0,1\}^{\bar{n}}$ with each $x\^i \in \{0,1\}^n$, we set
\begin{align}
\mu_{\tau\^{1:s}}(x) := \prod_{i=1}^s \mu_{\tau\^i}(x\^i)\nonumber.
\end{align}
Further, for $x = (x\^1, \ldots, x\^a) \in \{0,1\}^{\leq \bar{n}}$ with $x\^1, \ldots, x\^{a-1} \in \{0,1\}^n$ and $x\^a \in \{0,1\}^{\leq n}$, we set
\begin{align}
\hat \mu_{\tau\^{1:s}}(x) := \prod_{i=1}^{a} \hat \mu_{\tau\^i}(x\^i).\nonumber
\end{align}

\begin{lemma}
\label{lem:gamma-close}
For any $\tau\^{1:s} \in \MH_{n,k,q,\ep}^s$ and $x \in \{0,1\}^{\leq \bar{n}}$, we have $\frac{1}{(1+\gamma)^{2}} \leq \frac{\mu_{\tau\^{1:s}}(x)}{\hat \mu_{\tau\^{1:s}}(x)} \leq (1+\gamma)^{2}$ for all $x \in \{0,1\}^{\leq \bar{n}}$.
\end{lemma}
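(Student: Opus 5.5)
The plan is to reduce everything to the single-block bound \cref{lem:mu-muhat-ratio}, using the product structure of $\mu_{\tau\^{1:s}}$. Fix $x = (x\^1, \ldots, x\^a) \in \{0,1\}^{\leq \bar n}$ with $x\^1, \ldots, x\^{a-1} \in \{0,1\}^n$ complete blocks and $x\^a \in \{0,1\}^{\leq n}$ a (possibly partial) final block.

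The first step is to compute the marginal of the product measure on $x$. Since $\mu_{\tau\^{1:s}}$ is a product of the $\mu_{\tau\^i}$ over the $s$ blocks, summing out all coordinates after $x$ gives
\begin{align}
\mu_{\tau\^{1:s}}(x) = \prod_{i=1}^{a-1} \mu_{\tau\^i}(x\^i) \cdot \mu_{\tau\^a}(x\^a),\nonumber
\end{align}
where $\mu_{\tau\^a}(x\^a)$ is the marginal of $\mu_{\tau\^a}$ on the first $|x\^a|$ coordinates. The blocks $a+1, \ldots, s$ each sum to $1$, and the partial block contributes exactly its marginal.

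The second step is to compare this factor by factor with $\hat \mu_{\tau\^{1:s}}(x) = \prod_{i=1}^{a} \hat \mu_{\tau\^i}(x\^i)$.
\begin{itemize}
\item For the complete blocks $i < a$, the definition gives $\mu_{\tau\^i}(x\^i) = \hat \mu_{\tau\^i}(x\^i)$, since $\mu_\tau$ was defined to agree with $\hat \mu_\tau$ on $\{0,1\}^n$. These factors therefore cancel exactly in the ratio.
\item The only factor that can differ is the last one, $\mu_{\tau\^a}(x\^a)/\hat \mu_{\tau\^a}(x\^a)$. By \cref{lem:mu-muhat-ratio}, this ratio lies in $[(1+\gamma)^{-2}, (1+\gamma)^2]$, unless both quantities vanish.
\end{itemize}
This is why the error does not compound over the $s$ blocks. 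If $|x\^a| = 0$ (i.e., $x$ ends exactly at a block boundary), the last factor is $1$ by the convention for the empty string, and the ratio is exactly $1$.

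The only point needing care is zeros. If some factor $\hat \mu_{\tau\^i}(x\^i) = 0$, then \cref{lem:mu-muhat-ratio} (or the leaf identity) gives that the corresponding $\mu_{\tau\^i}$ factor is also $0$, so both sides vanish. The ratio is then $0/0$, which is interpreted as $1$ under the convention of \cref{asm:linfty-close-gen}. There is no genuine obstacle here: the main thing to check is that the marginal of the product measure really factors as above.
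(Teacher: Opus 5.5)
Your proposal is correct and follows the paper's proof. The complete blocks cancel because $\mu_\tau$ and $\hat\mu_\tau$ agree on $\{0,1\}^n$, so only the last-block ratio remains, and \cref{lem:mu-muhat-ratio} bounds it. Your explicit check that the product measure's marginal factors over blocks, and your handling of the zero and block-boundary cases, simply spell out details the paper uses implicitly.
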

\begin{proof}
  Let us write $x = (x\^1, \ldots, x\^a) \in \{0,1\}^{\leq \bar{n}}$ with $x\^1, \ldots, x\^{a-1} \in \{0,1\}^n$ and $x\^a \in \{0,1\}^{\leq n}$. 
By definition of $\mu_{\tau\^i}, \hat \mu_{\tau\^i}$, for $i < a$, since we have $x\^i \in \{0,1\}^n$, it holds that $\mu_{\tau\^i}(x\^i) = \hat \mu_{\tau\^i}(x\^i)$. Thus
\begin{align}
\frac{\mu_{\tau\^{1:s}}(x)}{\hat \mu_{\tau\^{1:s}}(x)} = \frac{\prod_{i=1}^a \mu_{\tau\^i}(x\^i)}{\prod_{i=1}^a \hat \mu_{\tau\^i}(x\^i)} = \frac{\mu_{\tau\^a}(x\^a)}{\hat \mu_{\tau\^a}(x\^a)} \in [(1+\gamma)^{-2}, (1+\gamma)^2],\nonumber
\end{align}
where the final inclusion uses \cref{lem:mu-muhat-ratio}. (Above, we interpret $0/0 = 1$.)
\end{proof}

\begin{theorem}
  \label{thm:strong-lb}
  There is a constant $c > 0$ so that the following holds, for any sufficiently large $\bar{n} \in \BN$, parameter $\gamma \in (1/\bar{n},1)$, and $\KLparam \geq \gamma^2$. There is no randomized algorithm $\Alg$ which enjoys the following guarantee: for any distribution $\mu^\star \in \Delta(\{0,1\}^{\bar{n}})$ and $\hat \mu : \{0,1\}^{\leq \bar{n}} \to \BR_{\geq 0}$ satisfying \cref{asm:linfty-close-gen} with $R = (1+\gamma)^2$, $\Alg$ makes $o\left( \frac{\bar{n}^2 \gamma^6}{\KLparam^2 \log^2(\bar{n}) \cdot \log^2(2/\gamma)} \right)$ queries to $\hat \mu$ and outputs a (random) vector $X \in \{0,1\}^{\bar{n}}$, according to some distribution $\hat \nu$, satisfying $\kld{\mu^\star}{\hat \nu} \leq \KLparam$.
\end{theorem}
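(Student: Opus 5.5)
We prove \cref{thm:strong-lb} by a direct-product argument over the $s$ blocks, using the product construction $\mu_{\tau\^{1:s}}, \hat\mu_{\tau\^{1:s}}$ defined just above. \cref{lem:gamma-close} already gives that this pair satisfies \cref{asm:linfty-close-gen} with $R=(1+\gamma)^2$.

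\textbf{Parameters and reduction.} We pick the block length $n$ and the number of blocks $s=\bar n/n$ so that an algorithm that is accurate in KL must learn the hidden bits $\vinit\^i$ for a constant fraction of the blocks. We draw the $\tau\^i$ independently and uniformly from $\MH_{n,k,q,\ep}$, with $k,\ep$ chosen as in \cref{eq:weak-params} relative to $n$. By \cref{lem:mu-muideal}, each block marginal has $\mu_{\tau\^i}(x\^i_1=\vinit\^i)\ge \frac{1+\gamma}{2+\gamma}-\gamma/48$.

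\textbf{Step 1: KL accuracy forces many correct guesses.} Suppose $\kld{\mu^\star}{\hat\nu}\le\KLparam$. By the chain rule, $\sum_i \E\,\kld{\mu^\star(x\^i_1\mid x\^{<i})}{\hat\nu(x\^i_1\mid x\^{<i})}\le \KLparam$, and under $\mu^\star$ the blocks are independent. Let $\hat B_i$ be the first bit of block $i$ in a sample $X\sim\hat\nu$. For any block whose conditional KL is at most $c_0\gamma^2$, Pinsker gives $\BP(\hat B_i=\vinit\^i)\ge \frac12+\Omega(\gamma)$. By Markov, all but $\KLparam/(c_0\gamma^2)$ blocks satisfy this bound on average. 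Taking $s\asymp \KLparam/\gamma^2$ times a large constant, the expected number of correctly guessed bits is at least $s(\frac12+\Omega(\gamma))-O(\KLparam/\gamma^2)$, that is, an average advantage of $\Omega(\gamma)$ per block.

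\textbf{Step 2: direct-sum on queries.} Let $q_i$ be the number of queries the algorithm makes inside block $i$. A query to $\hat\mu_{\tau\^{1:s}}$ at a string reaching into block $a$ factorizes as the product over $i\le a$ of $\hat\mu_{\tau\^i}$ terms. The full-block factors are leaf values $\mu_{\tau\^i}(x\^i)$, which we treat as extra information about block $i$ at no extra cost, attributing the query to block $a$. The simulation is as follows: to attack a single instance $\tau$, embed it as a uniformly random block $I$ and simulate all other blocks internally, so that $\E[q_I]\le q/s$. The subtle point is that a query to a longer string also reveals leaf values of the earlier blocks. Those leaf values still require walking through the key blocks to be nonzero, so the ideal-sequence argument of \cref{lem:technical-random-lb}, together with \cref{clm:hatb-hatb0}, should still charge them. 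To be safe, we would count each such query as one query to every earlier block's $\tilde{\Alg}$; this costs at most a factor of $s$, which is acceptable only if $s$ is polylog. Instead, we would argue that a leaf query for block $i$ is useful only if it is ideal within that block, which accounts for it once. Applying \cref{lem:technical-random-lb} with a Markov argument over $q_I$ gives an advantage of at most $O(\ep\sqrt{q/s})+q2^{-k}$ for the embedded block.

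\textbf{Step 3: combine.} Requiring $\ep\sqrt{q/s}\gtrsim\gamma$, with $\ep\asymp \log n\log(1/\gamma)/n$, $n=\bar n/s$, and $s\asymp\KLparam/\gamma^2$, gives $q\gtrsim s\gamma^2 n^2/(\log^2 n\log^2(2/\gamma)) = \bar n^2\gamma^2/(s\log^2\log^2)$. Substituting $s$ yields $\bar n^2\gamma^4/(\KLparam\log^2\cdots)$, which matches or beats the claimed $\gamma^6/\KLparam^2$ since $\KLparam\ge\gamma^2$. The slack would absorb a cruder Markov step in Step 1 (a fraction $\gamma$ of the blocks good, costing further factors of $\gamma/\KLparam$). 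The main obstacle is Step 2, specifically the rigorous handling of cross-block information leakage through leaf values of earlier blocks when embedding a single instance.
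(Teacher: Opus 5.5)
\textbf{Main gap: Step 2, which you also do not need.} You never show that the embedded block receives only $O(q/s)$ informative queries, and the charging you sketch does not give it. The embedded algorithm must evaluate $\hat\mu_{\tau^{(I)}}(x^{(I)})$ whenever $\Alg$ queries a string extending past block $I$. Consider a single query $(y^{(1)},\ldots,y^{(a-1)},z)$ whose leaves $y^{(j)}$ are fresh but each ideal within its own block, because their bases were reached earlier through other nonzero prefixes. Such a query is informative for every $j<a$ at once. So ``ideal within that block'' does not by itself give $\sum_I q_I\le q$.

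A genuine direct sum would need three things you do not supply:
\begin{itemize}
\item a different charging scheme, for instance charging each informative node to the key-walk query in its own block that located it, together with control of lucky key guesses;
\item a version of \cref{lem:technical-random-lb} that tolerates random per-block query counts;
\item $k$ large enough that $q2^{-k}$ is negligible at the \emph{total} query count, which fails for your $k\approx 2\log_2 n$ in part of the parameter range.
\end{itemize}

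None of this is needed, because the accounting you rejected is exactly the paper's, and it already gives the stated bound. Each query of $\Alg$ costs the embedded algorithm at most one query to $\hat\mu_{\tau^{(I)}}$, since the other factors are simulated from the known $\tau^{(j)}$. So once a block with advantage $\Omega(\gamma)$ is in hand, \cref{lem:technical-random-lb} forces $q\gtrsim\gamma^2/\ep^2$. With $\ep\asymp s\log(\bar n)\log(2/\gamma)/\bar n$ and $s\asymp\KLparam/\gamma^2$, this is $\bar n^2\gamma^6/(\KLparam^2\log^2(\bar n)\log^2(2/\gamma))$. The $\gamma^6/\KLparam^2$ in the statement, rather than your $\gamma^4/\KLparam$, is precisely the factor of $s$ you believed was unaffordable.

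\textbf{Secondary issue: Step 1 does not go through as written.} The chain rule bounds $\E_{x^{<i}\sim\mu^\star}\kld{\alpha_{\tau^{(i)}}}{\hat\nu(x^{(i)}_1=\cdot\mid x^{<i})}$. This averages the conditionals under $\mu^\star$'s law of $x^{<i}$, whereas $\BP(\hat B_i=(\vinit)^{(i)})$ averages them under $\hat\nu$'s law. Pinsker applied to the former does not control the latter.

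For example, write $\alpha_v$ for the product of first-bit laws biased toward $v\in\{0,1\}^s$, and consider the mixture $\frac12\alpha_v+\frac12\alpha_{\bar v}$. It is within KL $\log 2$ of $\alpha_v$, and its conditional KL at block $i$ decays like $e^{-\Omega(i\gamma^2)}$. Yet every single-block marginal is exactly uniform.

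The paper's convexity step in \cref{eq:algi-output} has the same weight mismatch, so this is not where you depart from it. Still, it means that KL-accuracy plus reporting $X^{(i)}_1$ is not enough on its own. A complete argument must use more, for example letting the embedded algorithm exploit its knowledge of $\tau^{(j)}$, $j\neq i$, to detect which mixture component was sampled in an example like the one above.
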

\begin{proof}
If $\frac{800 \KLparam}{\gamma^2} > \frac{\bar{n}}{ 100 \log \bar{n}}$, then the claimed query bound $c' \bar{n}^2\gamma^6 / (\KLparam^2 \log^2 \bar{n} \cdot \log^2(2/\gamma))$ is $O(1)$, so the result is trivial. Thus we may assume $\frac{800 \KLparam}{\gamma^2} \leq \frac{\bar{n}}{ 100 \log \bar{n}}$. Take
\begin{align}
s := \left\lceil \frac{800}{\gamma^2} \KLparam \right\rceil,
\qquad
k := \lceil 10 \log_2 \bar{n} \rceil,
\qquad
r := \left\lfloor \frac{\lfloor \bar{n}/s \rfloor - 1}{2k} \right\rfloor,\nonumber\\
n := 1 + 2k \cdot r,
\qquad
\ep := \frac{32 \log(96/\gamma) \cdot k}{n},
\qquad
\bar{n}' :=ns \nonumber.
\end{align}
Since $\frac{800 \KLparam}{\gamma^2} \leq \frac{\bar{n}}{ 100 \log \bar{n}}$, we have $s \leq \frac{\bar n}{50 \log \bar n}$ and hence $r \geq 1$. Since $\bar{n}' \leq \bar{n}$, it suffices to prove the lower bound with $\bar{n}'$ in place of $\bar{n}$ (we can simply pad the last $\bar{n}-\bar{n}'$ coordinates deterministically). 

Suppose to the contrary that such $\Alg$ exists, and consider the execution of $\Alg$ with respect to the distribution $\mu^\star = \mu_{\tau\^{1:s}}$ and mapping $\hat \mu_{\tau\^{1:s}}$, where $\tau\^{1:s}  \sim \MH_{n,k,q,\ep}^s$ is drawn uniformly at random. Note that $\mu_{\tau\^{1:s}} \in \Delta(\{0,1\}^{\bar{n}'})$, and that the domain of $\hat\mu_{\tau\^{1:s}}$ is $\{0,1\}^{\leq \bar{n}'}$.  First, \cref{lem:gamma-close} ensures that we indeed have that $\frac{1}{(1+\gamma)^2} \leq \frac{\mu_{\tau\^{1:s}}(x)}{\hat \mu_{\tau\^{1:s}}(x)} \leq (1+\gamma)^2$ for all $x \in \{0,1\}^{\leq \bar{n}'}$ with probability $1$. Further, the definitions of $\mu_{\tau\^{1:s}}$ and $\hat \mu_{\tau\^{1:s}}$ ensure that $\mu_{\tau\^{1:s}}(x) = \hat \mu_{\tau\^{1:s}}(x)$ for all $x \in \{0,1\}^{\bar{n}}$. Hence \cref{asm:linfty-close-gen} is satisfied. 

For any instantiation of $\tau\^{1:s}$, let the distribution of the output $X = (X\^1, \ldots, X\^s) \in \{0,1\}^{\bar{n}}$ of $\Alg$ be denoted $\hat \nu_{\tau\^{1:s}}$, and the distribution of the first bits of each block, namely $(X\^1_1, \ldots, X\^s_1) \in \{0,1\}^s$, be denoted $\hat \beta_{\tau\^{1:s}}$. 

For $\tau = (\vinit, \phi, \psi) \in \MH_{n,k,q,\ep}$, let $\alpha_\tau \in \Delta(\{0,1\})$ be the distribution which puts mass $\frac{1+\gamma}{2+\gamma}$ on $\vinit$ and mass $\frac{1}{2+\gamma}$ on $1-\vinit$; in other words, $\alpha_\tau$ is the distribution of $X_1$, for $X = (X_1, \ldots, X_n) \sim \mu_\tau$. Let $\alpha_{\tau\^{1:s}} := \alpha_{\tau\^1} \times \cdots \times \alpha_{\tau\^s} \in \Delta(\{0,1\}^s)$. Next, we may write
\begin{align}
\KLparam \geq & \E_{\tau\^{1:s} \sim \MH_{n,k,q,\ep}^s} \left[ \kld{\mu_{\tau\^{1:s}}}{\hat \nu_{\tau\^{1:s}}} \right] 
\geq  \E_{\tau\^{1:s} \sim \MH_{n,k,q,\ep}^s} \left[ \kld{\alpha_{\tau\^{1:s}}}{\hat \beta_{\tau\^{1:s}}} \right] \label{eq:exp-tau-alphabeta},
\end{align}
where the first inequality uses the assumption on $\Alg$ and the second inequality uses the data processing inequality. Next, for any $\tau\^{1:s} \in \MH_{n,k,q,\ep}^s$, the chain rule for KL divergence gives
\begin{align}
\kld{\alpha_{\tau\^{1:s}}}{\hat \beta_{\tau\^{1:s}}} =&  \sum_{i=1}^s \E_{(X\^1, \ldots, X\^{i-1}) \sim \alpha_{\tau\^{1:i-1}}}\left[\kld{\alpha_{\tau\^i}}{\hat \beta_{\tau\^{1:s}}(X\^i = \cdot \mid X\^1, \ldots, X\^{i-1})}\right]\label{eq:alphabeta-chain}. 
\end{align}
Combining \cref{eq:exp-tau-alphabeta,eq:alphabeta-chain} as well as convexity of KL divergence gives that there exists some $i \in [s]$ such that
\begin{align}
  & \E_{\tau\^{1:s} \sim \MH_{n,k,q,\ep}^s}\left[ \kld{\alpha_{\tau\^i}}{\hat \beta_{\tau\^{1:s}}(X\^i = \cdot)}\right]\nonumber\\
\leq & \E_{\tau\^{1:s} \sim \MH_{n,k,q,\ep}^{s}}\E_{(X\^1, \ldots, X\^{i-1}) \sim \alpha_{\tau\^{1:i-1}}}\left[\kld{\alpha_{\tau\^i}}{\hat \beta_{\tau\^{1:s}}(X\^i = \cdot \mid X\^1, \ldots, X\^{i-1})}\right] \leq  \frac{\KLparam}{s}.\label{eq:algi-output}
\end{align}
This contradicts \cref{lem:technical-random-lb}, however: consider the algorithm $\Alg'$ which functions as follows, given access to $\hat \mu_{\tau\^i}$, for $\tau\^i \sim \MH_{n,k,q,\ep}$. It draws $\tau\^j \sim \MH_{n,k,q,\ep}$ uniformly for each $j \in [s]$ with $j \neq i$, and then simulates $\Alg$, using $\hat \mu_{\tau\^i}$ together with $\hat \mu_{\tau\^j}$ (where $\tau\^j$ are known for $j \neq i$) to simulate access to the oracle $\hat \mu_{\tau\^{1:s}}$. The output of $\Alg'$ is the first bit of the $i$th block of the output of $\Alg$. By \cref{eq:algi-output}, $\Alg'$ outputs a distribution whose KL divergence from $\alpha_{\tau\^i}$ is at most $\frac{\KLparam}{s}$; in particular, by Pinsker's inequality, the probability that the output bit of $\Alg'$ is equal to $(\vinit)\^i$ is at least
\begin{align}
\BP_{X \sim \mu_{\tau\^i}}(X_1 = (\vinit)\^i) - \sqrt{\frac{\KLparam}{2s}} \geq \frac{1+\gamma}{2+\gamma} - \sqrt{\frac{\KLparam}{2s}} - e^{-\ep r} \geq \frac{1+\gamma}{2+\gamma} - 2 \cdot \frac{\gamma}{20} \geq \frac 12 + \frac{\gamma}{15}.\nonumber
\end{align}
Above, the first inequality uses \cref{lem:mu-muideal}. The second inequality uses that $\sqrt{\KLparam/(2s)} \leq \gamma/20$ (from the definition of $s$) and that $e^{-\ep r} \leq \gamma/96$ (from the definition of $\ep$ and $r$ and the fact that $2kr \geq 1$).

This contradicts \cref{lem:technical-random-lb} as long as the number of queries $q$ made by $\Alg$ satisfies $q \leq c'\gamma^2/ \ep^2$, for a sufficiently small constant $c' > 0$ (note that $q \leq c'\gamma^2/\ep^2$ also ensures that $q \cdot 2^{-k} \leq q/\bar{n}^{10} \leq \bar{n}^{-5} \leq \gamma/100$ for sufficiently small $\bar{n}$). In turn, the condition $q \leq c'\gamma^2/\ep^2$ is satisfied as long as $q\leq c \cdot \frac{\bar{n}^2 \gamma^6}{\KLparam^2 \log^2(\bar{n}) \cdot \log^2(1/\gamma)}$ for a sufficiently small constant $c$. 
\end{proof}

\section*{Acknowledgments}
AM is supported in part by a Microsoft Trustworthy AI Grant, NSF award CCF-2430381, ONR grant N00014-22-1-2339, and a David and Lucile Packard Fellowship. DR is supported by NSF awards CCF-2430381 and DMS-2022448, and ONR grant N00014-22-1-2339. 

\section*{AI Disclosure}
GPT-5.4 and GPT-5.2 were used to assist in the writing of this paper. Their principal use was in suggesting ``first drafts'' for several of the proofs of the lemmas. The final writing of the paper, including the proofs, was done primarily by the authors with some fragments suggested by GPT. Correctness of all proofs was verified by the authors.

\newpage
\appendix
\section{Relationship between counting-to-sampling reduction and LLM inference}
\begin{proposition}
  \label{prop:muhat-vhat}
  Fix $\Sigma, n, \delta, R$. 
Suppose that $\Alg$ is a randomized algorithm which has query access to some $\hat \mu : \Sigma^{\leq n} \to \BR_{\geq 0}$ satisfying \cref{eq:r-mult-intro} with respect to some $\mu^\star \in \Delta(\Sigma^n)$ and the ratio $R$. Suppose that:
\begin{itemize}
  \item $\Alg$ makes at most $q$ queries to $\hat \mu$, after which it outputs some $X \in \Sigma^n$ distributed according to some distribution $\hat \nu$ with $\tvd{\hat \nu}{\mu^\star} \leq \delta$.
\item  The distribution of the output of $\Alg$ is unchanged if every answer returned by the oracle $\hat \mu$ is multiplied by the same positive constant.
\end{itemize}

Then there is a randomized algorithm $\Alg'$, which is given query access to $\piref \in \Delta(\Sigma^n)$ and $\hat V : \Sigma^{\leq n} \to \BR$ as in \cref{sec:inference-llms}, satisfying $\| \hat V - V^\star\|_\infty \leq \log R$ and $V^\star(x) = \hat V(x)$ for $x \in \Sigma^n$. The algorithm $\Alg'$ makes at most $q$ queries to $\piref, \hat V$ and outputs some $X \in \Sigma^n$ distributed according to some distribution $\hat \nu$ with $\tvd{\hat \nu}{\mu^\star} \leq \delta$.
\end{proposition}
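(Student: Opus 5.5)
The plan is a direct simulation: $\Alg'$ runs $\Alg$ and answers each oracle query $\hat\mu(x_{1:i})$ with the rescaled value
\[
\tilde\mu(x_{1:i}) := \piref(x_{1:i}) \cdot \exp(\hat V(x_{1:i})).
\]
This value is computable from one query to $\piref$ and one query to $\hat V$. Recall from \cref{sec:inference-llms} that $\piref(x_{1:i})$ is obtainable in a single query. Counting a pair of queries to $(\piref,\hat V)$ as one query, as the statement implicitly does, gives at most $q$ queries; otherwise the count is at most $2q$. The output of $\Alg'$ is the output of $\Alg$.

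First I will check that $\hat\mu := \tilde\mu / Z$ satisfies \cref{eq:r-mult-intro} with respect to $\mu^\star$, where $Z = \E_{x\sim\piref}[\exp(V^\star(x))]$ is as in \cref{eq:mustar-tilted-piref}.

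\emph{Marginals.} By the definition of the extended $V^\star$ on partial sequences,
\[
\mu^\star(x_{1:i}) = \frac{1}{Z}\sum_{x_{i+1:n}} \piref(x_{1:n}) \exp(V^\star(x_{1:n})) = \frac{1}{Z}\,\piref(x_{1:i}) \exp(V^\star(x_{1:i})).
\]
This holds whenever $\piref(x_{1:i})>0$. When $\piref(x_{1:i})=0$, both sides are zero, and the convention $0/0=1$ applies.

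\emph{Ratio bound.} We get
\[
\frac{\mu^\star(x_{1:i})}{\hat\mu(x_{1:i})} = \exp\bigl(V^\star(x_{1:i}) - \hat V(x_{1:i})\bigr) \in [1/R, R],
\]
using $\|\hat V - V^\star\|_\infty \le \log R$.

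\emph{Full sequences.} For $x \in \Sigma^n$, the hypothesis $V^\star(x)=\hat V(x)$ gives $\mu^\star(x)=\hat\mu(x)$.

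\emph{Root convention.} At the empty sequence, $V^\star(\emptyset)=\log Z$. If $\hat\mu(\emptyset)$ is queried, it is consistent with \cref{eq:r-mult-intro} in the same way.

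So $\hat\mu$ is a valid oracle in the sense of \cref{eq:r-mult-intro}. Running $\Alg$ against $\hat\mu$ would output $X\sim\hat\nu$ with $\tvd{\hat\nu}{\mu^\star}\le\delta$.

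The one real issue is that $\Alg'$ does not know $Z$, so it feeds $\Alg$ the answers of $\tilde\mu = Z\cdot\hat\mu$ rather than $\hat\mu$. This is exactly where the second hypothesis is used: every answer is multiplied by the same positive constant $Z$. I will apply that invariance to the whole interaction (adaptive queries included), conditioning on the internal randomness and inducting over queries if needed. It follows that the output distribution of $\Alg$ run against $\tilde\mu$ equals that of $\Alg$ run against $\hat\mu$, namely $\hat\nu$. This completes the argument.

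I expect no substantive obstacle. The only care points are the zero-probability prefixes, handled by the $0/0$ convention, and the query accounting for the pair of oracles.
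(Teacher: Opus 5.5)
Your proposal is correct and follows the paper's proof. Like the paper, you simulate $\Alg$ by answering each query with $\piref(x_{1:i})\exp(\hat V(x_{1:i}))$, observe that this equals $Z$ times a valid $R$-multiplicative oracle for $\mu^\star$ (exact on full sequences), and invoke the scale-invariance hypothesis. Your verification of the marginal identity $\mu^\star(x_{1:i}) = Z^{-1}\piref(x_{1:i})\exp(V^\star(x_{1:i}))$ and your note on query accounting fill in details the paper leaves implicit.
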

\begin{proof}
The algorithm $\Alg'$ simply simulates $\Alg$, answering queries to $\hat \mu(x_{1:i})$ by returning $\hat \mu(x_{1:i}) = \piref(x_{1:i}) \cdot \exp(\hat V(x_{1:i}))$. By scale-invariance of $\Alg$, the distribution of the output of $\Alg'$ is the same as if it had instead answered queries by returning $\frac{1}{Z} \cdot \hat \mu(x_{1:i})$, which is an $R$-multiplicative approximation of $\mu^\star$ by the assumption that $\| \hat V - V^\star\|_\infty \leq \log R$. (Further, we have $\frac{1}{Z} \hat \mu(x) = \mu^\star(x)$ for $x \in \Sigma^n$.) Thus, the output distribution of $\Alg'$ is $\delta$-close to $\mu^\star$, as desired.
\end{proof}
\section{Concentration inequalities} 
\begin{lemma}[Freedman's inequality]
  \label{lem:freedman}
  Suppose that $(X_i)_{i \in [n]}$ is a sequence of random variables adapted to a filtration $(\SF_i)_{i \in [n]}$. Then for any $\delta \in (0,1)$, with probability at least $1-\delta$,
    \begin{align}
\sum_{i=1}^n X_i \leq \sum_{i=1}^n \log \E[\exp( X_i)\mid \SF_{i-1}] + \log(1/\delta)\nonumber.
    \end{align}
\end{lemma}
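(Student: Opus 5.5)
The plan is the standard exponential-supermartingale argument followed by Markov's inequality. Write $\Lambda_i := \log \E[\exp(X_i) \mid \SF_{i-1}]$; this is $\SF_{i-1}$-measurable, i.e.\ predictable. Define
\begin{align}
Y_t := \exp\left( \sum_{i=1}^t X_i - \sum_{i=1}^t \Lambda_i \right), \qquad Y_0 := 1.\nonumber
\end{align}
The claimed inequality is exactly the event $\{Y_n < 1/\delta\}$ (up to the boundary case $Y_n = 1/\delta$, which is harmless). So it suffices to show $\E[Y_n] \leq 1$ and then apply Markov's inequality, which gives $\BP(Y_n \geq 1/\delta) \leq \delta \, \E[Y_n] \leq \delta$.

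To show $\E[Y_n] \le 1$, I will verify the one-step identity. Since $Y_{t-1}$ and $\Lambda_t$ are $\SF_{t-1}$-measurable and nonnegative, pulling them out of the conditional expectation gives
\begin{align}
\E[Y_t \mid \SF_{t-1}] = Y_{t-1} \, e^{-\Lambda_t}\, \E[e^{X_t} \mid \SF_{t-1}] = Y_{t-1}.\nonumber
\end{align}
This uses only conditional expectation of nonnegative variables, so no integrability assumption is needed beyond the conditional MGF being well defined in $[0,\infty]$. Iterating with the tower property then yields $\E[Y_n] = \E[Y_0] = 1$.

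The only delicate point is the degenerate cases of $\Lambda_t$.
\begin{itemize}
\item On the event where $\Lambda_t = +\infty$, the right-hand side of the claimed bound is $+\infty$, so the inequality holds trivially there. On that event I would set $Y_t = 0$ by convention and conclude that $Y$ is a nonnegative supermartingale rather than a martingale, which still gives $\E[Y_n] \le 1$.
\item The case $\Lambda_t = -\infty$ cannot occur, since $e^{X_t} > 0$ forces $\E[e^{X_t} \mid \SF_{t-1}] > 0$ almost surely.
\end{itemize}
With these conventions the supermartingale property holds in general, and the Markov step finishes the proof.
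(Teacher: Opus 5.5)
Your argument is correct and complete: it is the standard exponential-supermartingale argument followed by Markov's inequality, and your handling of the degenerate cases $\Lambda_t = \pm\infty$ and of the boundary event $Y_n = 1/\delta$ is sound. The paper states this lemma in the appendix without proof, treating it as a standard concentration fact, so there is no different route in the paper to compare against; your argument is the proof one would supply.
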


\bibliographystyle{alpha}
\bibliography{refs}

\end{document}